\definecolor{linkblue}{HTML}{001487}
\theoremstyle{plain}
\newtheorem{theorem}{Theorem}[section]
\newtheorem{lemma}[theorem]{Lemma}
\Crefname{claim}{Claim}{Claims}
\newtheorem*{lemma*}{Lemma}
\newtheorem{corollary}[theorem]{Corollary}
\newtheorem{definition}[theorem]{Definition}
\newtheorem{algorithm}{Algorithm}
\numberwithin{equation}{section}
\newcommand\numberthis{\addtocounter{equation}{1}\tag{\theequation}}
\DeclareMathAlphabet{\mathcal}{OMS}{ntxm}{m}{n}
\let\mathbb\relax
\let\mathbb\mathds
\DeclareSymbolFont{greekletters}{OML}{ntxmi}{m}{it}
\DeclareMathSymbol{\alpha}{\mathord}{greekletters}{"0B}
\DeclareMathSymbol{\beta}{\mathord}{greekletters}{"0C}
\DeclareMathSymbol{\gamma}{\mathord}{greekletters}{"0D}
\DeclareMathSymbol{\delta}{\mathord}{greekletters}{"0E}
\DeclareMathSymbol{\epsilon}{\mathord}{greekletters}{"0F}
\DeclareMathSymbol{\zeta}{\mathord}{greekletters}{"10}
\DeclareMathSymbol{\eta}{\mathord}{greekletters}{"11}
\DeclareMathSymbol{\theta}{\mathord}{greekletters}{"12}
\DeclareMathSymbol{\iota}{\mathord}{greekletters}{"13}
\DeclareMathSymbol{\kappa}{\mathord}{greekletters}{"14}
\DeclareMathSymbol{\lambda}{\mathord}{greekletters}{"15}
\DeclareMathSymbol{\mu}{\mathord}{greekletters}{"16}
\DeclareMathSymbol{\nu}{\mathord}{greekletters}{"17}
\DeclareMathSymbol{\xi}{\mathord}{greekletters}{"18}
\DeclareMathSymbol{\pi}{\mathord}{greekletters}{"19}
\DeclareMathSymbol{\rho}{\mathord}{greekletters}{"1A}
\DeclareMathSymbol{\sigma}{\mathord}{greekletters}{"1B}
\DeclareMathSymbol{\tau}{\mathord}{greekletters}{"1C}
\DeclareMathSymbol{\upsilon}{\mathord}{greekletters}{"1D}
\DeclareMathSymbol{\phi}{\mathord}{greekletters}{"1E}
\DeclareMathSymbol{\chi}{\mathord}{greekletters}{"1F}
\DeclareMathSymbol{\psi}{\mathord}{greekletters}{"20}
\DeclareMathSymbol{\omega}{\mathord}{greekletters}{"21}
\DeclareMathSymbol{\varepsilon}{\mathord}{greekletters}{"22}
\DeclareMathSymbol{\vartheta}{\mathord}{greekletters}{"23}
\DeclareMathSymbol{\varpi}{\mathord}{greekletters}{"24}
\DeclareMathSymbol{\varrho}{\mathord}{greekletters}{"25}
\DeclareMathSymbol{\varsigma}{\mathord}{greekletters}{"26}
\DeclareMathSymbol{\varphi}{\mathord}{greekletters}{"27}
\newcommand{\class}[1]{\mathsf{#1}}
\newcommand{\poly}{\mathrm{poly}}
\newcommand{\statePSPACE}{\class{statePSPACE}}
\newcommand{\stateQIP}{\class{stateQIP}}
\newcommand{\QIP}{\class{QIP}}
\newcommand{\IP}{\class{IP}}
\newcommand{\PSPACE}{\class{PSPACE}}
\newcommand{\pspace}{$\class{PSPACE}$\xspace}
\newcommand{\stateqip}{\class{stateQIP}}
\newcommand{\pup}{$\class{pureUnitaryPSPACE}$\xspace}
\newcommand{\unitaryPSPACE}{\class{unitaryPSPACE}}
\newcommand{\pureUnitaryPSPACE}{\class{pureUnitaryPSPACE}}
\newcommand*{\interact}{\mathord{\leftrightarrows}}
\newcommand{\eps}{\epsilon}
\newcommand{\ketbra}[2]{\ket{#1}\!\!\bra{#2}}
\renewcommand{\cal}[1]{\mathcal{#1}}
\newcommand{\R}{\mathbb{R}}
\newcommand{\C}{\mathbb{C}}
\newcommand{\N}{\mathbb{N}}
\newcommand{\Tr}{\mathrm{Tr}}
\newcommand{\reg}[1]{\mathsf{#1}}
\newcommand{\Id}{I}
\newcommand{\td}{\mathrm{td}}
\newcommand{\fidelity}{\mathrm{F}}
\newcommand{\setft}[1]{\textnormal{#1}}
\newcommand{\id}{\setft{id}}
\newcommand{\bits}{\ensuremath{\{0, 1\}}}
\newcommand{\linear}{\mathrm{L}}
\newcommand{\ot}{\ensuremath{\otimes}}
\newcommand{\deq}{\coloneqq}
\newcommand{\tr}[1]{\mathrm{Tr}\!\left[ #1 \right]}
\newcommand{\ptr}[2]{\mbox{\rm Tr}_{#1}\!\left[ #2 \right]}
\newcommand{\pr}[1]{{\rm Pr}\!\left[ #1 \right]}
\newcommand{\norm}[1]{\left\lVert#1\right\rVert}
\DeclareMathOperator{\sgn}{sgn}
\DeclareMathOperator{\erf}{erf}
\newcommand{\1}{\mathds{1}}
\newcommand{\tand}{\;\textnormal{~and~}\;}
\newcommand{\GibbsOracle}{\textsc{GibbsOracle}\xspace}
\newcommand{\TraceDistOracle}{\textsc{TraceDistanceOracle}}
\newcommand{\proj}[1]{\ket{#1}\!\!\bra{#1}}
\newcommand{\cH}{\ensuremath{\mathcal{H}}}
\newcommand{\cM}{\ensuremath{\mathcal{M}}}
\newcommand{\cR}{\ensuremath{\mathcal{R}}}
\newcommand{\cS}{\ensuremath{\mathcal{S}}}
\date{}
\title{$\class{stateQIP} = \class{statePSPACE}$}
\author[1]{Tony Metger\footnote{Email: \href{mailto:tmetger@ethz.ch}{tmetger@ethz.ch}}}
\author[2]{Henry Yuen\footnote{Email: \href{mailto:hyuen@cs.columbia.edu}{hyuen@cs.columbia.edu}}}
\affil[1]{Institute for Theoretical Physics, ETH Zurich}
\affil[2]{Department of Computer Science, Columbia University}
\begin{document}
\maketitle

\vspace{-0.7cm}
\begin{abstract}
Complexity theory traditionally studies the hardness of solving classical computational problems. 
In the quantum setting, it is also natural to consider a different notion of complexity, namely the complexity of physically preparing a certain quantum state.
We study the relation between two such \emph{state complexity classes}: $\statePSPACE$, which contains states that can be generated by space-uniform polynomial-space quantum circuits, and $\stateQIP$, which contains states that a polynomial-time quantum verifier can generate by interacting with an all-powerful untrusted quantum prover.
The latter class was recently introduced by Rosenthal and Yuen (ITCS 2022), who proved that $\statePSPACE \subseteq \stateQIP$.

Our main result is the reverse inclusion, $\stateQIP \subseteq \statePSPACE$, thereby establishing equality of the two classes and providing a natural state-complexity analogue to the celebrated $\class{QIP}=\class{PSPACE}$ theorem of Jain, et al. (J.~ACM 2011).
To prove this, we develop a polynomial-space \emph{quantum} algorithm for solving a large class of exponentially large ``\pspace-computable'' semidefinite programs (SDPs), which also prepares an optimiser \emph{encoded in a quantum state}. 
Our SDP solver relies on recent block-encoding techniques from quantum algorithms, demonstrating that these techniques are also useful for complexity theory.

Using similar techniques, we also show that optimal prover strategies for general quantum interactive protocols can be implemented in quantum polynomial space. We prove this by studying an \emph{algorithmic} version of Uhlmann's theorem and establishing an upper bound on the complexity of implementing Uhlmann transformations.

\end{abstract}

\newpage 
{
\hypersetup{linkcolor=black}
\setcounter{tocdepth}{2}
\tableofcontents
}

\newpage

\section{Introduction}

Classical complexity theory studies the hardness of computational problems on classical computers.
Quantum complexity theory so far has been mostly focused on the hardness of solving \emph{classical} computational problems, i.e.~problems with classical inputs and outputs, on \emph{quantum} computers.
However, quantum computers are not restricted to solving classical problems; they can also perform \emph{inherently quantum} tasks such as quantum state synthesis, where the goal is to physically synthesize a quantum state satisfying certain properties.
Examples of such tasks include preparing ground states of a local Hamiltonian or trying to clone a quantum money state.
Recent explorations of the complexity of such state synthesis problems have often yielded surprising results without classical analogues~\cite{aaronson2016complexity,irani2021quantum}.

Here, we focus on \emph{interactive proofs} for state synthesis, a notion recently studied by Rosenthal and Yuen~\cite{rosenthal2022interactive}. In this model, a polynomial-time quantum verifier is given some implicit description of a family of quantum states $(\ket{\psi_n})_{n \in \N}$ and, given an index $n \in \N$, has to synthesize an approximation to $\ket{\psi_n}$ with the help of an untrusted but all-powerful prover. On the one hand, the prover in principle has the ability to help the verifier synthesize the state $\ket{\psi_n}$, which may be extremely complex and require more than polynomial time to synthesize by oneself. On the other hand, the verifier now has to ensure that the prover is not maliciously misdirecting the verifier to synthesize some other state that is far from $\ket{\psi_n}$. 
The question raised by~\cite{rosenthal2022interactive} is the following: what states are synthesizable in this interactive model?

To study this question formally, we need to introduce the \emph{state complexity classes} $\stateQIP$ and $\statePSPACE$ from~\cite{rosenthal2022interactive}. 
Unlike traditional complexity classes, which are sets of decision languages (each of which is a  set of binary strings), state complexity classes are collections of infinite sequences of quantum states. We give somewhat informal definitions of these classes here; for formal (and slightly more general) definitions, see \Cref{sec:pspace} and \Cref{sec:stateqip}. 

For a function $\delta: \N \to \R$, we define $\statePSPACE_{\delta(n)}$ to denote the class of state sequences $(\ket{\psi_n})_{n \in \N}$ such that there is a space-uniform family of polynomial-space quantum circuits $C_n$ (a ``synthesis algorithm'') where for sufficiently large $n$, the output of $C_n$ is a state that is $\delta(n)$-close in trace distance to $\ket{\psi_n}$. We call $\delta(n)$ the \emph{closeness guarantee} of the synthesis algorithm.\footnote{Closeness guarantees of state complexity classes are an important aspect of their definitions. Tolerating some error in the output state allows for different state complexity classes to be meaningfully compared with each other. Furthermore, unlike with many models of randomized computation, it is unclear whether the closeness guarantee can be generically amplified. }

Similarly, the class $\stateQIP_{\delta(n)}$ is defined as the class of state sequences $(\ket{\psi_n})_{n \in \N}$ such that there is a quantum interactive protocol between a (quantum) prover and (quantum) verifier satisfying the following for all sufficiently large $n$: if the verifier on input $1^n$ accepts with probability at least $\frac{1}{2}$, its output state conditioned on accepting is guaranteed to be $\delta(n)$-close to $\ket{\psi_n}$ (this is the \emph{soundness} condition). Furthermore, there exists a prover that is accepted by the protocol with probability $1$ (this is the \emph{completeness} condition).

The main result of \cite{rosenthal2022interactive}
is the following:

\begin{theorem}[\cite{rosenthal2022interactive}]
\label{thm:ry-informal}
For all functions $\delta(n)$ and polynomials $q(n)$, it holds that $\statePSPACE_{\delta(n)} \subseteq \stateQIP_{\delta(n) + 1/q(n)}$.
\end{theorem}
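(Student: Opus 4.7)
The plan is to design a quantum interactive protocol in which the prover directly supplies the state $\ket{\psi_n}$ and then helps the polynomial-time verifier certify correctness, with the central tool being $\QIP = \PSPACE$ of Jain et al. This theorem allows the verifier to delegate $\PSPACE$-computable decisions to the untrusted prover, giving the verifier effectively much more computational power than its own polynomial runtime.

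First I would fix a $\PSPACE$-uniform polynomial-space quantum circuit $C_n = U_T \cdots U_1$ with $T = 2^{\poly(n)}$ gates that prepares a state $\delta(n)$-close to $\ket{\psi_n}$. The protocol uses a ``cut-and-choose'' structure: the prover sends a claimed output state $\rho$ together with any necessary auxiliary registers, and the verifier tosses a coin to decide whether to output $\rho$ or to run a verification subprotocol that rejects on failure. A standard rewinding argument then shows that, conditioned on acceptance with probability at least $\tfrac{1}{2}$, the state $\rho$ is close to $\ket{\psi_n}$, provided the verification subprotocol is sound.

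The verification subprotocol aims to estimate $\bra{\psi_n}\rho\ket{\psi_n}$ and to accept iff it is at least $1 - 1/q(n)$. Since $\ket{\psi_n}$ is $\PSPACE$-preparable, the rank-one projector $\proj{\psi_n}$ is a $\PSPACE$-implementable measurement: in principle, one could apply $C_n^\dagger$ to $\rho$ and check whether the result is $\ket{0\cdots 0}$. The verifier cannot perform $C_n^\dagger$ directly since $T$ is exponential, so it asks the prover to perform the uncomputation. To prevent cheating on this exponentially long computation, I would use a recursive ``Savitch-style'' consistency check: split $C_n$ in half, have the prover commit to the midpoint state, and recursively verify each half, with any classical claims made along the way decided via a $\QIP = \PSPACE$-style subprotocol. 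The recursion has depth $\log T = \poly(n)$, which keeps the total verifier runtime polynomial.

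The main obstacle is tightly controlling error propagation across the $\log T$ recursion levels and amplifying soundness so that a cheating prover is caught with overwhelming probability at every level: each level can introduce at most $O(1/(q(n)\log T))$ additional error for the final closeness to remain $\delta(n) + 1/q(n)$. A related technical challenge is that the $\QIP = \PSPACE$ machinery natively handles classical inputs and decisions, whereas the verification task here involves quantum inputs (the prover's state and its intermediate registers); bridging this gap requires inserting quantum consistency checks, e.g.\ SWAP tests on multiple prover-supplied copies of intermediate states, into the recursive framework, and arguing that passing all checks implies closeness of the final output to $\ket{\psi_n}$ in trace distance.
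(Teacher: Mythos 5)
The statement you are proving is \Cref{thm:ry-informal}, which the paper only \emph{cites} from Rosenthal and Yuen~\cite{rosenthal2022interactive}; the paper sketches their construction as ``running exponentially many different invocations of the $\IP=\PSPACE$ protocol in superposition,'' exploiting the fact that the individual amplitudes $\bra{x}\psi_n\rangle$ are $\PSPACE$-computable scalars that can be certified by a (classical) sum-check-style protocol, and then carefully ruling out the prover entangling itself with the output register. Your proposal takes an entirely different route and, as written, has a concrete soundness gap.

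The core problem is in the step where you ``ask the prover to perform the uncomputation $C_n^\dagger$'' and then try to enforce honesty with a Savitch-style recursion plus SWAP tests. SWAP tests on prover-supplied states only certify \emph{consistency} among the copies the prover chose to send, never \emph{correctness} against the intended intermediate state $U_{T/2}\cdots U_1\ket{0\cdots 0}$, which the verifier cannot itself prepare. A cheating prover can send multiple identical copies of a wrong midpoint state $\sigma$ and pass every SWAP test with probability $1$; nothing ever ties $\sigma$ to the true circuit trajectory. Worse, the prover can keep a purifying register entangled with the states it sends, so even the tests you do run are over states that are not product across recursion levels, which is exactly the ``surreptitious entanglement'' obstacle the paper singles out. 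The $\QIP=\PSPACE$ machinery you want to invoke certifies a \emph{classical} acceptance probability of a fixed quantum circuit; it does not yield a way to certify that an untrusted quantum register equals a $\PSPACE$-uniform but exponentially deep state, and there is no known generic search-to-decision reduction here (the paper explicitly cites this lack). You acknowledge all of this under ``main obstacle'' and ``bridging this gap,'' but acknowledging the gap is not closing it: as proposed, the recursive verifier accepts cheating provers, so the soundness condition of $\stateQIP$ fails. By contrast, the RY argument never asks the prover to \emph{certify} an arbitrary quantum checkpoint; it coherently synthesizes each amplitude via the sum-check protocol so that soundness reduces to a classical statement about polynomials, plus an additional check that neutralizes prover entanglement, which is precisely the piece your scheme is missing.
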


Their protocol takes advantage of the $\IP = \PSPACE$ protocol of~\cite{lund1992algebraic,shamir1992ip} by running \emph{exponentially} many different invocations of it in superposition, and carefully performing checks to ensure that a malicious prover is not surreptitiously entangling himself with the desired output state. The protocol and its analysis highlight many of the challenges that one encounters in the state synthesis setting, such as the unclonability of quantum states, the difficulty of testing the equality of quantum states, and the general lack of search-to-decision reductions~\cite{irani2021quantum}.

\subsection{Our results}

Our main result is the reverse containment $\stateQIP \subseteq \statePSPACE$ -- in other words, the statement that all states that can be verifiably synthesized in an interactive proof can also be synthesized in polynomial space.
More precisely, we prove the following.

\begin{theorem}[Main theorem]
\label{thm:main-informal}
 For all functions $\delta(n)$ and polynomials $q(n)$, it holds that $\stateQIP_{\delta(n)} \subseteq \statePSPACE_{\delta(n) + 1/q(n)}$.
\end{theorem}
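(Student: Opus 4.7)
The plan is, given a family $(\ket{\psi_n})_n \in \stateQIP_{\delta(n)}$ with verifier $V$, to construct a polynomial-space quantum circuit that simulates the interaction of $V$ with an \emph{optimal} prover $\hat P$. Completeness guarantees an honest prover accepted with probability $1$; hence $\hat P$ is also accepted with probability $1$, and by soundness the verifier's post-protocol state is within $\delta(n)$ trace distance of $\ket{\psi_n}$. It will therefore suffice to implement $\hat P$ to inverse-polynomial precision in polynomial space and compose with the polynomial-time (hence polynomial-space) verifier; acceptance being near-deterministic, the state produced by the simulated verifier is itself already close to $\ket{\psi_n}$, so no nontrivial postselection is needed.

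The first step is to identify $\hat P$'s round-by-round strategy. Writing the protocol as an alternating sequence of verifier unitaries $V_1, \ldots, V_m$ and prover unitaries $P_1, \ldots, P_m$ acting on verifier, message, and prover registers, a backward-induction argument shows that the optimal $P_i$ is the one that maximizes $\norm{W_i (I \otimes P_i)\ket{\phi_i}}$, where $\ket{\phi_i}$ is the joint state right before round $i$ and $W_i$ encodes the remaining verifier operations composed with the accept projector and with projection onto the ideal output $\ket{\psi_n}$. Since the maximization is over prover-side unitaries only, this is exactly a fidelity-maximization problem whose optimizer is given by an \emph{Uhlmann isometry}, namely the partial isometry in the polar decomposition of an appropriate cross-register reduced operator.

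The second and main step is implementing such Uhlmann unitaries in polynomial space. The relevant operators act on spaces of dimension $2^{\poly(n)}$ but their entries are computable by classical polynomial-space machines from the descriptions of $V$ and $\ket{\psi_n}$. My plan is to translate this $\PSPACE$-computability into quantum block-encodings of the operators, then apply quantum singular value transformation (and a sign-function approximation) to extract the Uhlmann partial isometry from the polar decomposition, yielding an implementable polynomial-space quantum circuit. Equivalently, the optimal-prover problem can be cast as a semidefinite program whose data is $\PSPACE$-computable and whose optimum is attained; I will develop a polynomial-space quantum SDP solver, along the lines of a space-bounded matrix-multiplicative-weights method built on top of block-encodings, that outputs an optimizer \emph{encoded as a quantum state} (rather than merely as a classical value). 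Either route has to carefully control block-encoding and truncation errors, since each of the polynomially many Uhlmann invocations must be accurate to $1/\poly(n)$ in operator norm.

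With $\hat P$ in hand, the finish is routine: I run the circuit for $V$ while substituting the approximate Uhlmann circuits for $\hat P_1, \ldots, \hat P_m$. Completeness keeps the acceptance probability close to $1$, and soundness together with an accumulated-error bound over the $m$ rounds gives output trace distance at most $\delta(n) + 1/q(n)$ from $\ket{\psi_n}$, as required. The main obstacle will be the algorithmic Uhlmann / polynomial-space SDP-solver step: converting a $\PSPACE$-computable description of an exponentially large matrix into an \emph{implementable} polynomial-space quantum unitary has no direct classical analogue, and this is where the block-encoding and singular-value-transformation machinery highlighted in the abstract must do all the work, with delicate bookkeeping of error accumulation and of the space used by intermediate classical subroutines.
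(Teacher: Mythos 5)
Your high-level engine---a space-bounded quantum SDP solver built on the matrix multiplicative weights method, block encodings, and singular-value transformations with sign-function approximations applied to $\PSPACE$-computable matrices---is exactly the paper's. But the detour through implementing an optimal prover via Uhlmann isometries is both unnecessary for $\stateQIP \subseteq \statePSPACE$ and circular. Your assertion that the Uhlmann operators' entries are ``computable by classical polynomial-space machines from the descriptions of $V$ and $\ket{\psi_n}$'' is not justified a priori: the cross-operators $\Tr_{\reg{A}}(\ketbra{\psi}{\varphi})$ feeding into the Uhlmann isometry depend on the protocol's intermediate states under an optimal prover, and establishing that those states (and their purifications) are $\statePSPACE$-computable is precisely the hard content of the theorem. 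Your backward-induction framing has the same issue: the operator $W_i$ you want to optimize against already encodes the future optimal prover actions, which are themselves being determined. The two ``routes'' you offer are therefore not alternatives; the Uhlmann route presupposes the SDP route has already succeeded.

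Once the SDP is solved, the path is also shorter than your plan: the MMWU solver's output block-encodes a state whose tensor factors are purifications of the intermediate protocol states, and the last of these already contains the verifier's output conditioned on acceptance; there is no need to implement the prover's unitaries or re-run the protocol. What you \emph{do} still need, and omit, is a rounding argument: the solver produces only an $\eps$-feasible $\rho'$, which need not correspond to any honest protocol execution, so the $\stateQIP$ soundness guarantee does not apply to it verbatim. The paper's \cref{lem:approx_feasible_rounding} shows $\rho'$ is trace-distance-close to an exactly feasible solution (with acceptance probability above the soundness threshold), on which soundness can then be invoked; even if you insist on the prover-implementation route, some version of this rounding is needed to bound the error in each Uhlmann step, since the states you feed Uhlmann come from the same approximate SDP solution. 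The algorithmic Uhlmann theorem (\Cref{thm:uhlmann}) and the $\unitaryPSPACE$ implementation of optimal provers (\Cref{thm:prover-complexity}) are obtained in the paper as corollaries of $\stateQIP \subseteq \statePSPACE$, not steps in its proof.
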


We note that both inclusions of \Cref{thm:ry-informal,thm:main-informal} incur an inverse polynomial loss in the closeness parameters. This appears to be inherent to comparing state complexity classes; simulating one computational model with another to perform state synthesis is likely to introduce additional error. However \Cref{thm:ry-informal,thm:main-informal} show that the additional error can be made to be an arbitrarily small inverse polynomial. 

By defining $\stateQIP$ and $\statePSPACE$ as the intersections over all polynomials $p(n)$ of $\stateQIP_{1/p(n)}$ and $\statePSPACE_{1/p(n)}$, respectively, we can combine \cref{thm:ry-informal} and \cref{thm:main-informal} to establish the equality $\stateQIP = \statePSPACE$.\footnote{However, one could also consider taking the intersection over all \emph{exponentially small} closeness guarantees (in fact, this is how $\statePSPACE$ is defined in~\cite{rosenthal2022interactive}) -- but then it is not clear whether there are nontrivial state sequences in $\stateQIP_{\exp(-n)}$, for example. This suggests that allowing inverse polynomial error is the more natural notion in state complexity. We leave determining the properties of $\stateQIP_{\delta(n)}$ for negligibly-small $\delta(n)$ for future work.}
This can be viewed as the state synthesis analogue to the celebrated $\QIP = \PSPACE$-theorem of Jain, Ji, Upadhyay, and Watrous~\cite{jain2011qip}, which showed that the class of polynomial-space computable \emph{decision} problems is the same as the class of \emph{decision} problems decidable by a quantum interactive proof.

The difficult direction in~\cite{jain2011qip} is showing the inclusion $\QIP \subseteq \PSPACE$, i.e.~showing that a decision problem that can be decided with a quantum interactive proof can also be decided in \pspace.
This is the same direction as our~\cref{thm:main-informal}.
To prove $\QIP \subseteq \PSPACE$,~\cite{jain2011qip} first show that the acceptance probability of a $\QIP$ protocol can be computed by a particular exponentially large semidefinite program (SDP).
Given its exponential size, this SDP cannot even be written down in \pspace.
However,~\cite{jain2011qip} show that its value can be approximated by an algorithm using the \emph{matrix multiplicative weights update (MMWU)} framework~\cite{arora2016combinatorial,kale2007efficient}, and that the output of this algorithm can be approximated using polynomial-depth (but exponentially wide) classical circuits, described by the complexity class $\class{NC(\poly)}$.
This analysis is quite involved because it requires careful treatment of the approximation error and leverages several highly non-trivial results on implementing matrix functions with depth-bounded classical circuits~\cite{csanky1975fast,borodin1982fast,borodin1983parallel,ben1986fast,neff1994specified}.
Having established an $\class{NC(\poly)}$-algorithm for computing the particular SDP arising from QIP protocols, the result of~\cite{jain2011qip} follows because $\PSPACE = \class{NC(\poly)}$~\cite{borodin1977relating}.

Similarly to~\cite{jain2011qip}, we can express a given $\stateQIP$-protocol as an exponentially large SDP.
However, when trying to adapt the SDP algorithm of~\cite{jain2011qip} to this problem, one faces several obstacles.
The main difficulty is that the SDP from~\cite{jain2011qip} is meant to approximate the maximum acceptance probability of a quantum verifier, meaning that the SDP algorithm is solving a classical decision problem (``is the value of this SDP higher or lower than some threshold?'').
This makes it natural to use tools from classical complexity theory such as $\class{NC(\poly)} = \PSPACE$.
In contrast, in the state synthesis setting we want to synthesize the output state of the verifier, which is generally a highly entangled state on many qubits. 
Such a state corresponds to an (exponentially large) feasible solution of an SDP, but it is not sufficient to approximate the value of that SDP or compute individual entries of this feasible solution: instead we want to physically generate a quantum state whose (exponentially large) density matrix corresponds to a feasible solution.

A secondary difficulty is that the SDP algorithm from~\cite{jain2011qip} requires the SDP to have a special form.
For $\QIP$-protocols, this special form can always be achieved by compiling the protocol into a \emph{three-round} format. 
It is an open question whether the same holds for $\stateQIP$, i.e.~whether all $\stateQIP$ protocols can be parallelized to three rounds (or even any constant) -- straightforward applications of the parallelization strategy of~\cite{kitaev2000parallelization} do not seem to work. 
Hence, the SDP of a $\stateQIP$ protocol does not have the special form required in~\cite{jain2011qip}.

It seems plausible that the $\class{NC(\poly)}$-based approach of~\cite{jain2011qip} could be adapted to the $\stateQIP$-setting, too, e.g.~by using a generalization of their approach presented in~\cite{gutoski2012parallel} and finding a way of converting that algorithm into one that physically prepares a quantum state corresponding to a feasible solution.
However, faced with the above difficulties, we use an entirely different approach and develop a \emph{quantum} polynomial space algorithm for solving exponentially large SDPs.
This provides a much more natural approach in the state complexity setting because our algorithm solves the SDP by actually \emph{constructing} a quantum state corresponding to a feasible solution instead of using classical computation to approximate the SDP value, and we give a direct analysis that does not use classical circuit complexity results.

Using this approach, we prove \Cref{thm:main-informal} as a consequence of a more general result: a general polynomial-space \emph{quantum} algorithm for solving a large class of exponentially-sized SDPs and preparing a nearly optimal solution to the SDP as a quantum state.
Most of this paper is devoted to the construction and analysis of this algorithm and we apply it to prove \cref{thm:main-informal} in \cref{sec:state_classes}.
As a bonus, this algorithm also gives an alternative proof of $\QIP = \PSPACE$ that requires no parallelization of the $\QIP$ protocol and no direct use of classical circuit complexity results. We now describe this quantum algorithm and its analysis in more detail.

\paragraph{Quantum algorithms for exponentially large SDPs.} 
It is a well-known fact that optimizing an SDP can be efficiently reduced to finding a feasible solution to an SDP. Thus in this paper we focus solely on feasibility SDPs, for which we use the following standard form: a (feasibility) SDP instance is a pair $(\Phi,B)$, where $\Phi: \linear(\C^D) \to \linear(\C^D)$ is a superoperator that maps $D \times D$ Hermitian matrices to $D \times D$ Hermitian matrices and $B$ is a $D \times D$ Hermitian matrix.\footnote{Of course, more generally $\Phi$ could have differing input and output dimensions. However, by adding spurious dimensions, we can always ensure that the input and output have the same dimension, so this assumption is without loss of generality.}
A feasible solution for $(\Phi,B)$ is a positive semidefinite matrix $X \in \linear(\C^D)$ that satisfies $\Phi(X) = B$. An $\eps$-feasible solution $X$ is one where $\| \Phi(X) - B \|_1 \leq \eps$, where $\| \cdot \|_1$ denotes the trace norm. We assume for normalization that $\Tr(X) = 1$; this can be done by rescaling the instance and rescaling the error parameter $\eps$. 
As usual, in order to study the asymptotic complexity we implicitly consider families of such SDPs indexed by a size parameter $n$ and let the dimension $D = 2^{\poly(n)}$.\footnote{When applying this algorithm to SDPs derived from $\stateQIP$ protocols, this parameter $n$ corresponds to the number of qubits used by the verifier in the protocol.}
Our goal is to design a quantum algorithm that solves such SDPs using $\poly(n)$ qubits.

Our algorithm can solve the class of feasibility SDP instances $(\Phi,B)$ that are both (a) \emph{small-width} and (b) \emph{\pspace-computable}. The small-width property means that $\|B\|_\infty \leq 1$ and the adjoint map $\Phi^*$ is \emph{contracting}, i.e., $\| \Phi^*(Y) \|_\infty \leq 1$ whenever $\|Y \|_\infty \leq 1$. The \pspace-computability of $(\Phi,B)$ means that each matrix entry of $B$ is computable in polynomial-space (i.e.~there exists a polynomial-space algorithm that, given $(\log D = \poly(n))$-bit indices $i, j$, outputs $B_{ij}$), and for all inputs $X$ the matrix entries of $\Phi(X)$ are computable by a polynomial-space algorithm that has oracle access to matrix entries of the input $X$.
When we say that an algorithm is provided a \pspace-computable SDP $(\Phi, B)$ as input, we mean that it has access to the \pspace-procedures for computing these entries.
We now state our main technical result:

\begin{theorem}[Quantum solver for small-width, $\PSPACE$-computable SDPs]
\label{thm:sdp-informal}
For all polynomials $q(n)$, there exists a polynomial-space quantum algorithm that, given as input a small-width $\PSPACE$-computable SDP instance $(\Phi,B)$, outputs a $1/q(n)$-feasible solution $X$ as a quantum state.
\end{theorem}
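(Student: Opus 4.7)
My plan is to implement the Matrix Multiplicative Weights Update (MMWU) framework for feasibility SDPs directly in quantum polynomial space, using block encodings and quantum singular value transformation (QSVT) to realize each iteration on $\poly(n)$ qubits. In the classical MMWU applied to $(\Phi, B)$, one maintains $\rho_t = \exp(-\eta H_t)/\Tr \exp(-\eta H_t)$ where $H_t = \sum_{s<t}\Phi^*(M_s)$, and at iteration $s$ one plays the ``violation observable'' $M_s = \sgn(\Phi(\rho_s) - B)$. Because $(\Phi, B)$ is small-width, each $\Phi^*(M_s)$ has operator norm at most $1$, so the standard MMWU analysis converges in $T = O(\log D/\eps^2) = \poly(n)$ iterations: the time-averaged $\bar\rho = \tfrac{1}{T}\sum_t \rho_t$ is $\eps$-feasible for $\eps = 1/q(n)$. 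Since $\bar\rho$ is the mixture obtained by sampling $t\in[T]$ uniformly and outputting $\rho_t$, it suffices to prepare one such $\rho_t$.

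The construction proceeds by recursively assembling a collection of block-encoded subroutines. Using the \PSPACE-computability of $B$ and $\Phi$, I would first build a block encoding of $B$ and a ``block-encoding transformer'' that converts any block encoding of $X$ into block encodings of $\Phi(X)$ and $\Phi^*(X)$, via entry-wise LCU constructions that invoke the \PSPACE oracles as classical subroutines. Given any Hermitian block encoding, QSVT with a polynomial approximation of $\sgn(\cdot)$ produces its sign matrix, while QSVT with an approximation of $e^{-\eta x}$ produces its matrix exponential, which I can convert into a purified Gibbs state via amplitude amplification on the purification register. Iteration $t$ then recursively invokes the iteration-$s$ subroutines (for $s<t$) to build block encodings of each $M_s$ and, via LCU, of $H_t = \sum_{s<t}\Phi^*(M_s)$; a fresh copy of $\rho_t$ is then prepared from this block encoding on demand.

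The main obstacle I anticipate is controlling space and error simultaneously across a depth-$T$ recursion. A naive unrolling would use $T\cdot\poly(n)$ qubits, which is fatal, so I would exploit the fact that each recursive call only needs to implement a unitary on $\poly(n)$ fixed registers: workspaces can be reused across recursion levels via careful uncomputation, yielding total qubit cost $\poly(n)$ at the expense of an exponential number of gate operations --- which is exactly what quantum polynomial space affords. For the error analysis, the polynomial approximations to $\sgn(\cdot)$ and $e^{-\eta x}$ used inside QSVT must each be taken to inverse-exponential precision in $n$, because approximation errors compose multiplicatively across $T = \poly(n)$ recursion levels; this is affordable because both the polynomial degree and the ancilla count in QSVT scale only polylogarithmically in the inverse error. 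Plugging these approximate block encodings into the MMWU convergence bound produces a state within trace distance $1/q(n)$ of a genuine $\eps$-feasible solution, establishing \Cref{thm:sdp-informal}.
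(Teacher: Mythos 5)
Your approach is essentially the paper's (MMWU via block encodings, sign-polynomial trace-distance oracle, Gibbs oracle via a polynomial approximation of the exponential, amplitude amplification, recursion over $T=\poly(n)$ iterations), but it contains a genuine gap in the error analysis that the paper has to work to close.

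The claim that ``both the polynomial degree and the ancilla count in QSVT scale only polylogarithmically in the inverse error'' is wrong, and this is where the real technical content of the argument lives. The nonzero eigenvalues of $\Phi(\rho_t)-B$ can be exponentially small, so a trace-distance oracle with error $\delta$ must approximate $\sgn$ on all of $[-1,1]\setminus[-\kappa,\kappa]$ with $\kappa \leq \delta/(6D)$, $D=2^{\poly(n)}$; the degree of such an approximation scales as $O(\log(1/\kappa)/\kappa) = 2^{\Theta(\poly(n))}$, \emph{linearly} in $1/\kappa$, not polylogarithmically. (If the degree were truly polylogarithmic your construction would in fact run in polynomial time, which is a far stronger claim than the theorem makes.) Two ingredients that your proposal omits are needed to rescue the approach: (a) one must show that an exponential-degree Chebyshev expansion can be applied to a block encoding in polynomial \emph{space} (the QSVT ancilla cost depends only on the number of terms in the linear combination, while the degree merely multiplies the number of reuses of the same unitary); and (b) one must show the coefficients of that expansion are themselves $\PSPACE$-computable at exponential degree --- the standard approximating polynomials in the QSVT literature are not obviously so, which is why the paper replaces them with the orthogonal projection of $\sgn$ (resp.\ $\sqrt{\cdot}$) onto the Chebyshev basis, whose coefficients can be estimated by numerical integration, and proves the uniform error bound for this projection directly. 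Two smaller points: the worry that $T\cdot\poly(n)$ ancillas is ``fatal'' is misplaced, since $T=\poly(n)$ makes this $\poly(n)$ already --- the load-bearing observation is that each recursion level adds \emph{additively} (not multiplicatively) many ancillas, so no ad-hoc cross-level uncomputation is required; and because the Gibbs normalization $\Tr(e^{\beta H})$ is unknown, the post-selection factor must be reset with \emph{fixed-point} amplitude amplification rather than the standard variant, which would overshoot.
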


We make some remarks about the ``small-width'' property. This is very likely necessary as the class of \pspace-computable SDPs are expressive enough to capture $\mathsf{EXP}$-hard computations -- this is a consequence of the fact that solving LPs and SDPs is complete for $\mathsf{P}$ under logspace reductions~\cite{serna1991approximating}. It is widely believed that $\mathsf{PSPACE} \neq \mathsf{EXP}$, so therefore we do not expect general \pspace-computable SDPs to be solved in polynomial space (even with a quantum algorithm). Furthermore, many works on fast parallel algorithms for solving SDPs assume that some notion of ``width'' associated with the SDP is small~\cite{arora2005fast,klein1996efficient,jain2009two,jain2011qip}. While the definition of width differs slightly from paper to paper they all roughly measure the extent to which the constraints can be violated. The small-width property as we have defined it is also a measure of how far the primal constraints can be violated by a density matrix $X$: using the variational characterization of the spectral norm and H\"older's inequality, we have that $\| \Phi(X) - B \|_\infty \leq \| B \|_\infty +\max_{Y : \|Y \|_1 \leq 1} \langle \Phi^*(Y), X\rangle \leq \| B \|_\infty +\max_{Y : \|Y \|_1 \leq 1} \norm{\Phi^*(Y)}_\infty \norm{X}_1$, which is at most $2$ since we assumed $X$ is a density matrix. It turns out that this bound controls the number of iterations required by the algorithm.

Our SDP solver is inspired by recent works on solving SDPs on quantum computers~\cite{brandao2017quantum,brandao2017quantum2,van2020quantum}, which implement the Arora-Kale MMWU-based SDP solver~\cite{arora2016combinatorial} using quantum techniques. However, we cannot directly use their SDP solvers because they target SDPs with polynomially-many \emph{trace constraints} (i.e.~constraints of the form $\Tr(A_j X) = b_j$ for some collection of matrices $\{ A_j \}_j$ and scalars $\{ b_j \}_j$). Our notion of $\PSPACE$-computable SDPs, when expressed in terms of trace constraints, can generally involve \emph{exponentially} many trace constraints (because $\Phi$ is an exponentially-large superoperator).
The solvers of ~\cite{brandao2017quantum,brandao2017quantum2,van2020quantum} find an approximate solution $X$ that can violate each of the exponentially many constraints by an additive $\eps$. This is weaker than what we demand, which is that the \emph{trace distance} between $\Phi(X)$ and $B$ is at most $\eps$.\footnote{A simple illustration of the gap between the two notions of approximate feasibility is the following: consider an SDP instance $(\Phi,B)$ that enforces the constraint that $X = B$. This corresponds to exponentially many trace constraints, e.g., forcing equality of each matrix entry of $X$ and $B$; however deviating by $\eps$ in every entry is a far more lenient condition than deviating by $\eps$ in trace distance.}

We instead design a solver, also based on the MMWU framework, for solving feasibility SDPs of the form $\Phi(X) = B$ for density matrices $X$. If the dimension of the SDP is $D = 2^{\poly(n)}$ and the SDP has the small-width property, then the algorithm runs in $\poly(n)$ iterations and uses $\poly(n)$ qubits, and constructs a potential solution $X$ at each iteration.\footnote{However, implementing each iteration as a quantum procedure requires exponential time, so the algorithms as a whole requires exponential time (but, importantly, only a polynomial number of qubits).} The solution is encoded in the form of a \emph{block encoding}~\cite{gilyen2019qsvt,gilyen2019quantum,low2019hamiltonian}, which is a $\poly(n)$-qubit unitary $U$ whose top left corner approximates $X$.
Importantly this unitary $U$ is computed by an explicitly-described polynomial-space quantum algorithm. 
To implement each step of the MMWU algorithm, we make use of recent techniques for transforming block encodings in quantum algorithms (see~\cite{martyn2021grand} for a survey). 
In this way, our SDP solver is fully ``quantized'' in that every step of the algorithm relies heavily on quantum algorithmic techniques, providing an arguably more direct and intuitive approach than the $\class{NC(\poly)}$-based method of~\cite{jain2011qip}.
We give a more detailed overview of our SDP solver in \cref{sec:overview}.

\paragraph{Closure of $\statePSPACE$ under purification.} Rosenthal and Yuen~\cite{rosenthal2022interactive} defined the state complexity classes $\statePSPACE$ and $\stateQIP$ to be classes of sequences of \emph{pure} states. Their construction of interactive proofs for state synthesis to prove \Cref{thm:ry-informal} required that the states being synthesized are pure, and left open the question of whether the result could be extended to families of \emph{mixed states}.

More precisely, define a sequence of mixed states $(\rho_n)_{n\in \N}$ to be polynomial-space computable if there exists a space-uniform family of general quantum circuits $C_n$ (which may have mid-circuit measurements as well as the ability to reset qubits) that output $\rho_n$. Can such a sequence also be interactively and verifiably synthesized? 

One might be tempted to ``purify'' the output of the general quantum circuits $C_n$ by appealing to the principle of deferred measurement. However, the circuit $C_n$ may in general make an exponential number of intermediate measurements, and the standard way of deferring the measurements is to add an additional ancilla qubit for each intermediate measurement, yielding an exponential blow up of the space complexity. The recent results of~\cite{fefferman2021eliminating,girish2021eliminating} on eliminating intermediate measurements do not immediately apply here (as far as we can tell) because~\cite{fefferman2021eliminating} only deals with \emph{decision problems} and~\cite{girish2021eliminating} does not eliminate qubit reset operations. 

Our next result is that $\statePSPACE$ is closed under purification, where now we allow $\statePSPACE$ to contain not just sequences of pure states but also mixed states that are computable in polynomial space. 

\begin{theorem}[$\statePSPACE$ is closed under purification]
\label{thm:purification-informal}
Let $(\rho_n)_{n \in \N} \in \statePSPACE_{\delta(n)}$ denote a sequence of mixed states for some error function $\delta(n)$. Then there exists a sequence $(\ket{\psi}_n)_{n \in \N} \in \statePSPACE_{2 \sqrt{\delta(n)}}$ of pure states such that each $\ket{\psi_n}$ is a purification of $\rho_n$.
\end{theorem}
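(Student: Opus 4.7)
The plan is to prepare, in polynomial space, an approximate canonical purification $\ket{\sqrt{\tilde{\rho}_n}} \deq (\sqrt{\tilde{\rho}_n} \otimes I)\sum_i \ket{i}\ket{i}$ of the polynomial-space-computable mixed state $\tilde{\rho}_n$ that the $\statePSPACE_{\delta(n)}$ algorithm actually produces. Uhlmann's theorem together with the Fuchs--van de Graaf inequalities then imply that this state is at most $\sqrt{2\delta(n)}$-far in trace distance from \emph{some} purification $\ket{\psi_n}$ of the target state $\rho_n$, leaving slack for additional algorithmic approximation error so that we land inside $\statePSPACE_{2\sqrt{\delta(n)}}$.

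The first step is to establish polynomial-space oracle access to the matrix entries of $\tilde{\rho}_n$. Writing the general (measurement-and-reset) circuit $C_n$ as producing $\tilde{\rho}_n = \sum_m K_m \ketbra{0^n}{0^n} K_m^\dagger$, where $m$ ranges over the exponentially many measurement-outcome sequences, each amplitude $\bra{i} K_m \ket{0^n}$ is computable in polynomial space by the standard Feynman path sum; since $\PSPACE$ is closed under exponential-sized sums of $\PSPACE$-computable quantities, the entries $\bra{i} \tilde{\rho}_n \ket{j}$ are computable in $\PSPACE$.

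The second step is to build, in polynomial space, a block encoding of $\tilde{\rho}_n$ from this entry-query access using the standard constructions employed throughout this paper, and then apply quantum singular value transformation with a polynomial approximation of $\sqrt{x}$ to obtain a block encoding of $\sqrt{\tilde{\rho}_n}$. Applying this block encoding (with an ancilla register initialised to $\ket{0^a}$) to one half of the maximally entangled state on $AB$ yields a state close to $\ket{0^a} \otimes \ket{\sqrt{\tilde{\rho}_n}}_{AB}$, from which the canonical purification of $\tilde\rho_n$ is read off on $AB$.

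The main technical obstacle is the QSVT step, since $\sqrt{x}$ is non-smooth at $0$ and no low-degree polynomial approximates it uniformly well on $[0,1]$. I would handle this by using a polynomial that approximates $\sqrt{x}$ to within $\eps$ on $[\gamma,1]$ and has magnitude at most $\sqrt{\gamma}$ on $[0,\gamma]$, then bounding the contribution to the prepared state from the subspace of $\tilde{\rho}_n$ with eigenvalues below $\gamma$ by $O(\sqrt{D \gamma})$ in norm. Choosing $\gamma = 2^{-\poly(n)}$ makes this contribution inverse-polynomially small while only pushing the polynomial degree up to $2^{\poly(n)}$, which is still fine for a polynomial-space (exponential-time) quantum algorithm. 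The resulting total QSVT approximation error can then be absorbed into the $(2-\sqrt{2})\sqrt{\delta(n)}$ slack left over from the Uhlmann step, giving the desired $2\sqrt{\delta(n)}$-closeness guarantee.
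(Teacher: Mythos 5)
Your plan tracks the paper's proof of \Cref{thm:purification} closely at every stage after the first: build a block encoding of the actually-synthesized state $\tilde\rho_n$ from $\PSPACE$-computable entry access; apply QSVT with an exponential-degree, regularized approximation of $\sqrt{\cdot}$; hit half of a maximally entangled state with the result and amplify to produce the canonical purification of $\tilde\rho_n$; then use Uhlmann plus Fuchs--van de Graaf to convert $\td(\rho_n,\tilde\rho_n)\le\delta$ into a $\sqrt{2\delta}$ bound between purifications, leaving $(2-\sqrt2)\sqrt{\delta}$ slack for the algorithmic error. This is the architecture of \Cref{thm:purification} together with Lemmas~\ref{lem:sqrt-approx}, \ref{lem:be-entry}, \ref{lem:be-purification}, and \ref{lem:sqrt_be_simplified}, and your error bookkeeping matches.

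The genuine gap is in your first step. A $\statePSPACE$ synthesis circuit is only space-bounded; its number of gates, and hence of intermediate measurements, can be $2^{\poly(n)}$. Consequently the Kraus index $m$ in $\tilde\rho_n = \sum_m K_m \ketbra{0^n}{0^n} K_m^\dagger$ ranges over as many as $2^{2^{\poly(n)}}$ outcome strings; already storing $m$ requires exponentially many bits, so ``$\PSPACE$ is closed under exponential-sized sums'' does not apply and the outer $\sum_m$ cannot run as a separate polynomial-space loop. To rescue a direct Feynman-style argument you would need to merge the outcome enumeration into the path-sum recursion itself: compute $\bra{i}\tilde\rho_n\ket{j}$ by divide-and-conquer over the product of the vectorized gate superoperators, with a \emph{pair} of $\poly(n)$-bit basis states as the intermediate index and the measurement branch summed locally at the gate where it occurs --- at which point you are essentially re-deriving Watrous's theorem that quantum polynomial space coincides with classical $\PSPACE$. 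The paper's \Cref{lem:tomography} sidesteps rebuilding that machinery: it performs Pauli tomography of $\tilde\rho_n$ inside a quantum polynomial-space algorithm (repeated preparation and single-copy measurement, then binary search on the bits of the estimate) and then cites Watrous's classical simulation result as a black box.
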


This theorem justifies broadening the definition of $\statePSPACE$ and $\stateQIP$ to include mixed states. The result $\statePSPACE = \stateQIP$ still holds: to interactively synthesize a sequence $(\rho_n)_n \in \statePSPACE$, the verifier can instead interactively synthesize a purification $(\ket{\psi_n})_n$ and then trace out a part of the output state to get $\rho_n$.\footnote{Technically, \cref{thm:purification-informal} is not quite sufficient for this argument: we also need that the verifier can efficiently compute a description of the classical Turing machine that  outputs the circuits for synthesising the purifications $(\ket{\psi_n})_n$ from the description of the Turing machine that outputs the circuits for synthesising $(\rho_n)_n$. We show that this is the case in the full version of the theorem (\cref{thm:purification}).} Conversely, the main result of this paper, that $\stateQIP \subseteq \statePSPACE$, holds irrespective of whether the desired state to be synthesized is pure or not. 

We prove \Cref{thm:purification-informal} in \Cref{sec:purification}.
The proof leverages the same algorithmic techniques that we use for our SDP solver, namely performing (space-efficient) transformations on block encodings in order to go from a block encoding of a mixed state to a block encoding of its purification. 

\paragraph{Complexity of optimal provers.} 
So far, we have focused on which quantum states a polynomial-time verifier can prepare by interacting with an all-powerful quantum prover.
Switching our focus to the prover, it is natural to ask what computational resources are actually needed to \emph{implement} the actions of an optimal prover in a quantum interactive protocol.
Jain, Ji, Upadhyay, and Watrous~\cite{jain2011qip} showed that estimating the acceptance probability of optimal provers for a family of interactive protocols is complete for $\PSPACE$. However, this does not immediately tell us whether the optimal provers' unitary operations can be uniformly computed by a family of polynomial-space quantum circuits.\footnote{Here, it is important that the question is about uniform computation of the provers' actions; otherwise every unitary on $n$ qubits can be implemented via an $n$-qubit circuit of size $2^{O(n)}$, but there is no guarantee \emph{a priori} that those circuits for a family of unitaries can be specified in a space-uniform manner.} 

The main difficulty is in translating a statement about the complexity of a \emph{decision} problem (i.e.~``is there a prover that makes this verifier accept with high probability?'') to a statement about the complexity of an associated \emph{unitary synthesis} problem (i.e.~``implement the unitary operations of a prover that is accepted with high probability''). In general, we do not have a very clear understanding of how these complexities relate to each other -- in fact, this is the essence of the Unitary Synthesis Problem posed by Aaronson and Kuperberg~\cite{aaronson2007quantum,aaronson2016complexity} and explored in the interactive setting  by Rosenthal and Yuen~\cite{rosenthal2022interactive}. 

We show that indeed the optimal provers can be implemented in uniform quantum polynomial space.

\begin{theorem}[Optimal provers in $\unitaryPSPACE$, informal]
\label{thm:prover-informal}
Let $(V_n)_{n \in \N}$ denote a family of polynomial-time quantum verifiers. Let $P_n$ denote a prover that is accepted with (near) optimal probability by verifier $V_n$, and let $U_{n,j}$ denote a unitary describing $P_n$'s action in the $j$'th round of the interaction. Then for any sequence $j(n)$ of round choices, the family of unitaries $(U_{n,j(n)})_{n}$ is in $\unitaryPSPACE_{1/\poly(n)}$.
\end{theorem}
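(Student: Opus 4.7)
The plan is to express the optimal prover's round-$j$ unitary $U_{n,j}$ as an Uhlmann transformation between two pure quantum states that each admit a $\statePSPACE$ preparation, and then to invoke the algorithmic version of Uhlmann's theorem (developed elsewhere in this paper, as advertised in the abstract) to implement this transformation in $\unitaryPSPACE_{1/\poly(n)}$.

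First, I would recall the standard SDP characterization of optimal prover strategies in quantum interactive proofs, in the style of Kitaev--Watrous: for a $k$-round verifier $V_n$, the maximum acceptance probability is the optimum of an exponentially large SDP whose variables are the joint verifier-prover states at each round, and whose constraints encode the action of the verifier's circuits together with trace and positivity conditions. Because $V_n$ is a uniformly polynomial-time quantum circuit, the associated constraint superoperators and right-hand sides are $\PSPACE$-computable; moreover the small-width property holds since all constraints are induced by quantum channels acting on density matrices. Phrasing ``find a density matrix whose induced acceptance probability is at least $\omega^\star - 1/q(n)$'' as a feasibility SDP brings it into the scope of \cref{thm:sdp-informal}, so the sequence of near-optimal joint intermediate states $\ket{\psi^\star_j}_n$ is in $\statePSPACE_{1/\poly(n)}$, with block encodings constructed explicitly by polynomial-space quantum circuits.

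Second, I would use the observation that an optimal $U_{n,j}$ must take the post-verifier state $(V_{n,j} \otimes \1_P)\ket{\psi^\star_{j-1}}$ to $\ket{\psi^\star_j}$, and that these two pure states necessarily share the same reduced density matrix on the verifier's register since $U_{n,j}$ acts only on the prover side. Uhlmann's theorem then guarantees existence of such a prover-side unitary, and the correct $U_{n,j}$ is (up to an irrelevant isometry on an unused ancilla subspace) precisely the Uhlmann transformation between these two purifications. Implementing $U_{n,j}$ has thus been reduced to implementing an Uhlmann transformation between two sequences of states in $\statePSPACE_{1/\poly(n)}$. Invoking the algorithmic Uhlmann theorem, which bounds the complexity of such transformations in terms of the complexity of block-encoding the endpoint purifications, then yields $U_{n,j} \in \unitaryPSPACE_{1/\poly(n)}$ for any polynomial-time computable round index $j(n)$, completing the argument.

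The main obstacle is the algorithmic Uhlmann theorem itself: Uhlmann's theorem is non-constructive, expressing the desired unitary as a polar-decomposition factor of an operator built from the two purifications, and turning this into a space-efficient unitary-synthesis statement requires realizing that polar decomposition via block encodings and singular-value transformations that run on only $\poly(n)$ qubits without unfolding the exponentially large density matrices. A secondary difficulty is controlling error propagation across the $\poly(n)$ rounds: since we only have access to $1/\poly(n)$-approximate optimal intermediate states, we must use a sufficiently robust form of the algorithmic Uhlmann theorem and sharpen the SDP solver's accuracy parameter so that the accumulated error remains an inverse polynomial at the chosen round $j(n)$.
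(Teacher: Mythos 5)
Your proposal matches the paper's argument essentially step for step: express the protocol as a $\PSPACE$-computable feasibility SDP, apply the quantum polynomial-space SDP solver (plus purification) to place purifications of the round-by-round verifier states in $\statePSPACE$, observe that consecutive snapshots share the same reduced state on the verifier's workspace, and then invoke the Algorithmic Uhlmann Theorem (\cref{thm:uhlmann}) to realize a prover-side Uhlmann unitary in $\unitaryPSPACE$ with inverse-polynomial error accumulating over the $\poly(n)$ rounds. You also correctly pinpoint the two genuine technical burdens — making Uhlmann's theorem algorithmic via a sign/polar transformation on block encodings (\cref{lem:uhlmann,lem:uhlmann-robust}) and controlling the round-by-round error, including the fact that $\omega_n^*$ is itself $\PSPACE$-computable so the SDP can target $c(n)=\omega_n^*$ — so this is the same route as the paper.
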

Here, $\unitaryPSPACE_{\delta(n)}$ is a \emph{unitary complexity class}, consisting of sequences of unitaries $(U_n)_{n \in \N}$ such that there is a space-uniform family $(C_n)_{n \in \N}$ of polynomial-space quantum circuits (which may involve intermediate measurements and other non-unitary operations) where $C_n(\ket{\psi})$ is $\delta(n)$-close to $U_n \ket{\psi}$ for all input states $\ket{\psi}$. The notion of $\unitaryPSPACE$ and other unitary complexity classes were introduced by Rosenthal and Yuen~\cite{rosenthal2022interactive} to study the complexity of implementing unitary transformations.
The formal version of this theorem is presented as \Cref{thm:prover-complexity}. We prove this as a corollary of a more general result about implementing \emph{Uhlmann transformations}, which we describe next.

\paragraph{Uhlmann Transformation Problem.} The well-known Uhlmann's theorem~\cite{uhlmann1976transition} states that for two bipartite pure states $\ket{\psi},\ket{\varphi}$ on registers $\reg{A}$ and $\reg{B}$, there exists a unitary operator $U$ (that we call an \emph{Uhlmann transformation for the pair $\ket{\psi},\ket{\varphi}$}) acting only on register $\reg{B}$ such that
\[
    \bra{\psi} I_{\reg{A}} \otimes U \ket{\varphi} = \fidelity(\rho,\sigma) \,,
\]
where $\fidelity(\rho,\sigma) = \norm{\sqrt{\rho}\sqrt{\sigma}}_1$ denotes the (square root) fidelity between $\rho = \Tr_{\reg{B}}(\ketbra{\psi}{\psi})$ and $\sigma = \Tr_{\reg{B}}(\ketbra{\varphi}{\varphi})$. 
In other words, if two pure states have reduced subsystems on which the states are close, then to map one pure state close to the other it suffices to apply a unitary on the complement of the subsystem. 

We consider an \emph{algorithmic} version of Uhlmann's theorem, which we call the \emph{Uhlmann Transformation Problem}: given circuits (or perhaps succinct descriptions of them) that output $\ket{\psi}$ and $\ket{\varphi}$, implement an Uhlmann transformation for the pair $\ket{\psi},\ket{\varphi}$. 

We show that this problem is solvable in quantum polynomial space if the states $\ket{\psi}$ and $\ket{\varphi}$ are in $\statePSPACE$ (whereas Uhlmann's theorem as an information-theoretic statement of course holds for arbitrary states). Concretely, we prove the following (see \Cref{thm:uhlmann} for the formal statement):

\begin{theorem}[Algorithmic Uhlmann's Theorem, informal]
\label{thm:uhlmann-informal}
Let $(\ket{\psi_n})_n, (\ket{\varphi_n})_n$ be pure state families in $\statePSPACE$ where for each $n$ the states $\ket{\psi_n},\ket{\varphi_n}$ have the same number of qubits and the qubits can be divided into two registers $\reg{A}_n \reg{B}_n$. Then there exists a sequence of unitaries $\{ K_n \}_n \in \unitaryPSPACE_{1/\poly(n)}$ such that $K_n$ acts on register $\reg{B}_n$ and satisfies
\[
    \Big \| (I_{\reg{A}_n} \otimes K_n) \ket{\varphi_n} - \ket{\psi_n} \Big \|^2 \leq 2(1 - \fidelity(\rho_n,\sigma_n)) + \frac{1}{\poly(n)}\,,
\]
where $\rho_n,\sigma_n$ are the reduced density matrices of $\ket{\psi_n},\ket{\varphi_n}$ respectively on register $\reg{A}_n$. 
\end{theorem}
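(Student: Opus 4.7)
The plan is to give an algorithmic version of the standard constructive proof of Uhlmann's theorem, implemented using the block-encoding and QSVT machinery developed earlier in the paper. Expanding $\ket{\psi_n} = \sum_{i,j} (P_n)_{ij}\ket{i}_{\reg{A}_n}\ket{j}_{\reg{B}_n}$ and $\ket{\varphi_n} = \sum_{i,j} (Q_n)_{ij}\ket{i}_{\reg{A}_n}\ket{j}_{\reg{B}_n}$ yields $P_n P_n^\dagger = \rho_n$, $Q_n Q_n^\dagger = \sigma_n$, and a short index manipulation gives $\bra{\psi_n}(I_{\reg{A}_n} \otimes K)\ket{\varphi_n} = \Tr(P_n^\dagger Q_n K^T)$. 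Over unitary $K$ the real part of this quantity is maximised, at the value $\|Q_n^\dagger P_n\|_1 = \fidelity(\rho_n,\sigma_n)$, precisely when $K^T$ is (the adjoint of) the unitary part $W_n$ of the polar decomposition $Q_n^\dagger P_n = W_n|Q_n^\dagger P_n|$. Combined with the elementary identity $\|\ket{a}-\ket{b}\|^2 = 2 - 2\,\mathrm{Re}\braket{a|b}$, this accounts for the $2(1-\fidelity(\rho_n,\sigma_n))$ term in the claimed bound.

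The algorithmic content is to implement (an approximation to) $W_n$ in $\unitaryPSPACE_{1/\poly(n)}$. I would proceed in three stages. First, from the polynomial-space state-preparation circuits furnished by $(\ket{\psi_n})_n, (\ket{\varphi_n})_n \in \statePSPACE$, build polynomial-space block encodings of the amplitude matrices $P_n$ and $Q_n$, reusing the block-encoding toolkit assembled for the SDP solver (\cref{thm:sdp-informal}) and \cref{thm:purification-informal}. Second, compose these via the standard block-encoding product rule to obtain a block encoding of $Q_n^\dagger P_n$. Third, apply QSVT with an odd polynomial of degree $\poly(n)$ that approximates $\sgn$ to inverse-polynomial precision on $[\eta,1]$ for a threshold $\eta = 1/\poly(n)$, and convert the resulting block encoding of the approximate polar-decomposition unitary into a bona fide unitary $K_n$ on $\reg{B}_n$ via a standard unitary-dilation step. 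Space-uniform composition of space-uniform polynomial-space circuits stays in polynomial space, so the resulting sequence of unitaries lies in $\unitaryPSPACE_{1/\poly(n)}$.

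The main obstacle is controlling the contribution of the sub-threshold singular values of $Q_n^\dagger P_n$. A direct bound reads $\|(I\otimes K_n)\ket{\varphi_n}-\ket{\psi_n}\|^2 \le 2(1-\fidelity(\rho_n,\sigma_n)) + O(\sum_{\sigma_i<\eta}\sigma_i) + 1/\poly(n)$, and the sub-threshold tail is not automatically $1/\poly(n)$: in the worst case, when the singular spectrum of $Q_n^\dagger P_n$ is concentrated below $\eta$, the tail can be as large as $\fidelity(\rho_n,\sigma_n)$ itself. The technical heart of the proof is therefore to choose the QSVT polynomial so that the small singular values contribute nonnegatively rather than destructively to $\mathrm{Re}\,\Tr(P_n^\dagger Q_n K_n^T)$, and to argue that the residual sub-threshold loss can be absorbed into the inverse-polynomial slack that the definitions of $\statePSPACE$ and $\unitaryPSPACE_{1/\poly(n)}$ already permit. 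Once \cref{thm:uhlmann-informal} is in place, \cref{thm:prover-informal} follows by applying it to the pair of $\statePSPACE$ states describing the prover's share of the protocol transcript immediately before and immediately after round $j$ of an approximately optimal prover strategy; the latter state is placed in $\statePSPACE$ by invoking the SDP solver of \cref{thm:sdp-informal} on the natural SDP whose optimiser encodes the prover's optimal reduced state at round $j$.
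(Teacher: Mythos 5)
Your reduction of the problem to the polar decomposition of $Q_n^\dagger P_n$ (equivalently, of the paper's $\Tr_{\reg{A}}(\ketbra{\psi_n}{\varphi_n}) = (Q_n^\dagger P_n)^\top$, cf.\ \cref{lem:uhlmann}), the identity $\|\ket a-\ket b\|^2 = 2 - 2\,\mathrm{Re}\langle a|b\rangle$, and the overall plan (block-encode the relevant operator and apply a sign-function polynomial to its singular values) all match the paper's proof. The gap is in how you propose to control the sub-threshold tail, and it is fatal to the approach as written. You take the QSVT polynomial to have degree $\poly(n)$ and threshold $\eta = 1/\poly(n)$, and then hope to ``choose the QSVT polynomial so that the small singular values contribute nonnegatively'' and absorb the loss into the inverse-polynomial slack. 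But this cannot work: even if the sub-threshold contribution is nonnegative, it is simply \emph{missing} from the overlap $\mathrm{Re}\,\Tr(P_n^\dagger Q_n K_n^\top)$, and the resulting loss $\sum_{\sigma_i<\eta}\sigma_i$ is not controlled by any inverse polynomial. A concrete counterexample: take $\ket{\psi_n}=\ket{\varphi_n}$ to be a maximally entangled state on $n+n$ qubits, so $Q_n^\dagger P_n = I/2^n$, $\fidelity(\rho_n,\sigma_n)=1$, and \emph{every} singular value is $2^{-n} \ll \eta$. Then your $K_n$ has essentially zero overlap with $\ket{\psi_n}$ even though the Uhlmann unitary is the identity, and your claimed bound is off by $\Theta(1)$. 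The paper's key technical idea, which you are missing, is that because the computation lives in polynomial \emph{space} rather than polynomial time, one may use a sign-function approximation of \emph{exponential} degree with an \emph{exponentially small} threshold $\kappa = 2^{-\poly(n)}$ (\cref{lem:sign-approx,lem:be-sign-poly}; this is exactly why Section~\ref{sec:poly_approx} constructs \pspace-computable Chebyshev coefficients rather than reusing the off-the-shelf poly-degree approximations). The sub-threshold tail is then bounded by $(\text{number of singular values}) \cdot \kappa \le 2^{\poly(n)} \cdot 2^{-\poly(n)}$, which can be made exponentially small, as in the paper's proof of \cref{thm:uhlmann}.

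A second, smaller gap: the step you describe as ``convert the resulting block encoding \ldots\ into a bona fide unitary $K_n$ via a standard unitary-dilation step'' hides a genuine difficulty. The block encoding of the (approximate) partial isometry $\widetilde W_n$ carries a post-selection factor $\alpha > 1$, so the dilation unitary applied to $\ket{0}\ket{\phi}$ yields $\frac{1}{\alpha}\ket{0}\widetilde W_n\ket{\phi} + \ket{\bot}$, not $\widetilde W_n\ket{\phi}$. To actually obtain a unitary whose action on $\ket{\phi}$ is close to $\widetilde W_n\ket{\phi}$ one needs (oblivious) amplitude amplification, and because $\widetilde W_n$ is only an \emph{approximate} isometry on a subspace, the paper has to prove a robust version of oblivious amplitude amplification (\cref{lem:oaa_for_approx_iso}, proved in \cref{app:oaa}); this is not a standard off-the-shelf step. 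The application to \cref{thm:prover-informal} that you sketch is essentially correct, though the paper applies Uhlmann to purifications of the \emph{verifier-side} intermediate states (on $\reg{W}$, $\reg{M}$) rather than the prover's share; these are put in $\statePSPACE$ via the SDP machinery of \cref{sec:exp_sdp} plus the purification closure of \cref{thm:purification}, much as you describe.
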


The relation between implementing an optimal prover strategy for a quantum interactive protocol and the Uhlmann Transformation Problem is as follows: consider a quantum interactive protocol between a verifier and prover, fix a round $j$, and suppose $\ket{\psi}_{\reg{W} \reg{M} \reg{Q}}$ denotes the global pure state of the protocol right after the $j$'th message has been sent to the prover and $\ket{\varphi}_{\reg{W} \reg{M} \reg{Q}}$ denotes the global protocol state after the prover has responded with the $(j+1)$st message. Here $\reg{W}, \reg{M}, \reg{Q}$ denote the verifier's private workspace, the message register that is passed between verifier and prover, and the prover's private workspace, respectively. 
Since the state of the verifier's private workspace register $\reg{W}$ has not changed between it sending out and receiving the prover's message, we have that $\Tr_{\reg{MQ}}(\psi) = \Tr_{\reg{MQ}}(\varphi)$. 
Therefore, by Uhlmann's theorem there exists a unitary $U$ acting only on registers $\reg{M} \reg{Q}$ such that $(I_{\reg{W}} \otimes U) \ket{\psi} = \ket{\varphi}$. 
Thus, to implement the prover strategy it suffices to implement the Uhlmann transformations corresponding to purifications of consecutive ``snapshots'' of the reduced state on the verifier and message registers $\reg{W}\reg{M}$. 
A consequence of our proof of $\stateQIP \subseteq \statePSPACE$ is that these purifications of the intermediate states are in $\statePSPACE$. 
This, combined with our Algorithmic Uhlmann's Theorem (\Cref{thm:uhlmann-informal}), implies that there exists a successful honest prover strategy that can be computed in $\unitaryPSPACE$ (\Cref{thm:prover-informal}).

Our proof of \Cref{thm:uhlmann-informal} also uses the same quantum algorithmic techniques as used in our SDP solver: given circuits for the states $\ket{\psi_n}, \ket{\varphi_n}$, we constructively build a block encoding of the Uhlmann transformation for the pair $\ket{\psi_n},\ket{\varphi_n}$. This requires a number of transformations, including the oblivious amplitude amplification procedure of~\cite{berry2014exponential}; we provide an analysis of it in the approximate setting, which to our knowledge is novel.

Finally, we mention a broader motivation for considering the Uhlmann Transformation Problem. Although abstractly defined, it turns out to be a common computational task occurring in a variety of unitary synthesis problems, ranging from decoding black hole radiation~\cite{hayden2007black,harlow2013quantum,aaronson2016complexity} to quantum state merging~\cite{horodecki2007quantum},  entanglement distillation~\cite{abeyesinghe2009mother}, and attacks on quantum cryptography~\cite{lo1997quantum}. The recurrence of the Uhlmann Transformation Problem in these seemingly unrelated settings suggests that it may play a fundamental role in a complexity theory of unitary synthesis tasks.

\subsection{Technical overview of the SDP solver}
\label{sec:overview}

As mentioned above, the technical tool underlying our main result $\stateqip \subseteq \statePSPACE$ is an algorithm for solving exponentially large \pspace-computable SDPs.
Here we provide a brief overview of the algorithm and its implementation with space-bounded quantum circuits. Instead of using the primal-dual method of Arora and Kale~\cite{arora2016combinatorial}, which was used in the original proof of $\QIP = \PSPACE$ as well as works on quantum SDP solvers~\cite{brandao2017quantum,van2020quantum}, we instead adapt the zero-sum game approach to solving SDPs; this was presented in the classical setting in~\cite{kale2007efficient} and used in the quantum setting in~\cite{wu2010equilibrium,gutoski2012parallel,brandao2017quantum2}. 

At a high level, the algorithm works as follows. Let $(\Phi,B)$ be a small-width, $\PSPACE$-computable SDP instance with dimension $D = 2^{\poly(n)}$ (meaning that $\Phi$ maps $D \times D$ Hermitian matrices to $D \times D$ Hermitian matrices). Let $\eps = 1/\poly(n)$. 
Then, for $T = \frac{\ln D}{\eps^2}$ iterations, the algorithm generates a sequence of $D$-dimensional density matrices $\rho_1,\rho_2,\ldots,\rho_T$ as follows: 
\begin{enumerate}[label=\arabic*.]
    \item Set $\rho_1 = I/D$, the maximally mixed state.
    \item For $t = 1,\ldots,T-1$:
    \begin{enumerate}
        \item Compute a Hermitian matrix $H_t$ such that $\langle H_t, \Phi(\rho_t) - B \rangle = \| \Phi(\rho_t) - B \|_1$.
        \item Compute the density matrix $\rho_{t+1} = \frac{\exp \Big( - \eps \Phi^*(H_1 + \cdots + H_t) \Big)}{\Tr \Big( \exp \Big( - \eps \Phi^*(H_1 + \cdots + H_t) \Big) \Big)}$.
    \end{enumerate}
    \item Output $\rho = \frac{1}{T} \sum_{t = 1}^T \rho_t$.
\end{enumerate}
As before, the map $\Phi^*$ is the adjoint of $\Phi$ (which is also a $\PSPACE$-computable superoperator). In each iteration, the matrix $H_t$ can be thought of as identifying the directions in which the constraint $\Phi(\rho_t) = B$ is violated by the current ``hypothesis'' $\rho_t$. 
We can show that after $T$ iterations, the averaged hypothesis state $\rho$ will be $O(\eps)$-close to minimizing the trace norm $\| \Phi(X) - B \|_1$. If the SDP is indeed feasible, then $\rho$ is an $O(\eps)$-feasible solution. 
We also show that this algorithm is robust in the sense that if we only use an approximation to $H_t$ and compute $\rho_{t+1}$ up to some error, the output state $\rho$ is still approximately feasible (for a suitable choice of parameters).

We give a quantum implementation of this algorithm that uses $\poly(n)$ qubits of memory (but takes up to $\poly(D)$ time steps). Instead of maintaining the actual density matrices $\{ \rho_t \}_t$, the algorithm maintains in each iteration a block encoding of $\rho_t$, which is a unitary $U$ such that
\[
    \alpha (I \otimes \bra{0^a}) U (I \otimes \ket{0^a}) \approx \rho_t \,,
\]
where $\alpha >0$ is a scaling factor that we call the \emph{post-selection factor}, and $a$ is the number of ancilla qubits. Importantly, the unitary $U$ will be computable by a polynomial-space quantum circuit. 

We now want to apply operations to this block encoding to turn it into a block encoding of $\rho_{t+1}$.
For this, we first compute a block encoding of $\Phi(\rho_t)$ and $B$; this uses the $\PSPACE$-computability of the SDP instance.
We can combine this into a block encoding of $\Phi(\rho_t) - B$ by the linear combinations of unitaries (LCU) technique~\cite{berry2015simulating,gilyen2019qsvt}.

To get the Hermitian matrix $H_t$, ideally one would like to compute the \emph{sign function} of $\Phi(\rho_t) - B$ (i.e.~the matrix function that maps every positive eigenvalue to $1$ and negative eigenvalue to $-1$). 
We instead compute a \emph{polynomial approximation} of the sign function and apply it to $\Phi(\rho_t) - B$ to get (a block encoding of) an approximation of $H_t$. 
Because the eigenvalues of $\Phi(\rho_t) - B$ can be exponentially small, we require an exponentially precise approximation to the sign function, necessitating an exponential-degree polynomial.
Unfortunately, existing approximations used in the block encoding literature~\cite{low2017hamiltonian,gilyen2019quantum} are only designed for polynomial degree and it is not even clear whether their coefficients can be computed in \pspace if the degree is exponential.
This requires us to use a different approximation of the sign function, which is simply its orthogonal projection onto the Chebyshev polynomials.
The coefficients of this expansion can be computed explicitly (up to exponential degree) and we can analyse its error by relating it to the polynomial approximation from~\cite{low2017hamiltonian}.
We can apply this polynomial approximation (with exponential degree) to the block encoding of $\Phi(\rho_t) - B$ using polynomially many qubits and obtain a block encoding of an exponentially good approximation to $H_t$.
Similarly, instead of computing the exponential function exactly, we can use a polynomial approximation to construct the block encoding of (an approximation of) the state $\rho_{t+1}$ (similarly to e.g.,~\cite{gilyen2019qsvt}); this step requires that $\Phi^*$ is contracting, as the polynomial approximation of the exponential function can only be applied to matrices of bounded norm.

In this manner, we can transform a block encoding of $\rho_t$ into a block encoding of $\rho_{t+1}$.
However, because the transformation applies exponential-degree polynomials, the post-selection factor $\alpha$ grows exponentially.
Therefore, we need to use a fixed-point amplitude amplification procedure~\cite{grover2005fixed} (which again only requires polynomial space, but exponential time) to restore the parameters of the block encoding of $\rho_{t+1}$.
As a result, we can turn a ``good'' block encoding of $\rho_{t}$ into another ``good'' block encoding of $\rho_{t+1}$.
With a somewhat tedious error analysis and using the robustness of our SDP algorithm, we can show that the final output of this procedure is a block encoding of an approximately feasible state $\rho$.

Unrolling the MMWU loop, we see that this block encoding is a recursive composition of block encodings, where the recursion depth is polynomial. Since each recursion level only adds a polynomial \emph{additive} number of qubits to the required space, the overall space usage of the block encoding is polynomial.

\paragraph{From solving SDPs to $\stateQIP \subseteq \statePSPACE$.}
Having developed our SDP solver, we can use it to show $\stateQIP \subseteq \statePSPACE$ (allowing for inverse polynomial error as noted above).
For this, we express the $\stateQIP$-protocol as a feasibility SDP as described earlier.
More precisely, this feasibility problem has the property that any feasible solution corresponds to the intermediate states of running the $\stateqip$-protocol with a successful prover; the last of these intermediate states is the output of the $\stateqip$-protocol, which is what we want to synthesize.
We can use our SDP solver to compute a block-encoding of an approximately feasible solution to this SDP.
Then, we can extract this approximately feasible solution from the block encoding, so we obtain a $\statePSPACE$-preparation procedure for a state whose density matrix is an approximately feasible solution.

Unfortunately, unlike for $\QIP = \PSPACE$, an approximately feasible solution (i.e.~a solution that only violates the SDP by a little) is not sufficient.
Additionally, we need to show that this approximately feasible solution is close in trace distance to an exactly feasible solution, i.e.~that the approximate solution can be ``rounded'' to an exact solution.
This is required because the definition of $\statePSPACE$ requires generating the desired state up to some error in trace distance.
We show in \cref{lem:approx_feasible_rounding} that this rounding property does indeed hold for any SDP derived from a $\stateQIP$ protocol.
As a result, we obtain a $\statePSPACE$ algorithm for preparing the output state of a given $\stateqip$ protocol (up to arbitrary inverse polynomial error), completing the proof.

\subsection{Open problems}

We end the introduction by listing some open problems. 
\begin{enumerate}
    \item Can the completeness-soundness gap of $\stateQIP$ protocols be amplified without increasing the number of rounds? 
    \item Can the number of rounds in any $\stateQIP$ protocol be reduced to 3 (or any other constant)? This would match the corresponding result for $\QIP$ protocols~\cite{vidick2016quantum}.
    \item How do the exponential-precision versions of $\stateQIP$ and $\statePSPACE$ relate to each other?

    \item Delavenne, et al.~\cite{gall_stateqma} introduced the model of \emph{Merlin-Arthur proof systems} for state synthesis (see also~\cite{gall2022distributed}), in which there is a single message from the prover to the verifier. They showed that $\mathsf{statePreciseQMA}$, in which the completeness-soundness gap can be inverse exponential, is contained in $\statePSPACE$. Does the converse hold? This would be an interesting analogue of the $\mathsf{PreciseQMA} = \mathsf{PSPACE}$ result of Fefferman and Lin~\cite{fefferman2016quantum}. 
    
    \item We proved that optimal prover strategies can be implemented in $\unitaryPSPACE$. If we make a complexity assumption, such as $\mathsf{P} = \PSPACE$, can optimal prover strategies be implemented in $\class{unitaryBQP}$ (i.e.~implemented via polynomial-sized quantum circuits)?\footnote{We thank William Kretschmer for suggesting this question to us.}
    \item Rosenthal and Yuen~\cite{rosenthal2022interactive} also defined unitary complexity classes $\class{unitaryQIP}$ and $\unitaryPSPACE$. Does the analogous equality $\class{unitaryQIP} = \unitaryPSPACE$ hold? Neither $\unitaryPSPACE \subseteq \class{unitaryQIP}$ nor $\class{unitaryQIP} \subseteq \class{unitaryPSPACE}$ is yet known.
\end{enumerate}

\paragraph{Organisation.} The rest of the paper is organized as follows. \Cref{sec:prelims} establishes the notation and conventions used for quantum information theory, quantum circuits, and quantum states. In \Cref{sec:block-encodings} we develop primitives for transforming block encodings with exponential precision in polynomial (quantum) space.
In \Cref{sec:exp_sdp} we present our general quantum algorithm for solving $\PSPACE$-computable SDPs and show that by using a block-encoding based implementation, it can be solved with a polynomial number of qubits. In \Cref{sec:state_classes} we apply our SDP solver to $\stateQIP$ protocols and prove our main result, \Cref{thm:main-informal}. In \Cref{sec:purification} we prove that $\statePSPACE$ is closed under purification (\Cref{thm:purification-informal}). In \Cref{sec:strategy}, we show that optimal prover strategies can be implemented in quantum polynomial space (\Cref{thm:prover-informal}) by studying the more general Uhlmann Transformation Problem (\cref{thm:uhlmann-informal}).

\paragraph{Acknowledgments.} We thank Omar Fawzi, Andr\'{a}s Gily\'{e}n, William Krestchmer, Joe Renes, Gregory Rosenthal, and David Sutter for helpful discussions. We thank anonymous reviewers for their helpful feedback. 
 This work was done in part while the authors were visiting the Simons Institute for the Theory of Computing. TM acknowledges support from the ETH Z\"{u}rich Quantum Center. 
 HY is supported by AFOSR award FA9550-21-1-0040, NSF CAREER award CCF-2144219, and the Sloan Foundation.

\section{Preliminaries} \label{sec:prelims}

\paragraph{Quantum information theory.}
A \emph{register} $\reg{R}$ is a named finite-dimensional complex Hilbert space. If $\reg{A}, \reg{B}, \reg{C}$ are registers, for example, then the concatenation $\reg{A} \reg{B} \reg{C}$ denotes the tensor product of the associated Hilbert spaces. We abbreviate the tensor product state $\ket{0}^{\ot n}$ as $\ket{0^n}$. For a linear transformation $L$ and register $\reg R$, we write $L_{\reg R}$ to indicate that $L$ acts on $\reg R$, and similarly we write $\rho_{\reg R}$ to indicate that a state $\rho$ is in the register $\reg R$. We write $\Tr(\cdot)$ to denote trace, and $\Tr_{\reg R}(\cdot)$ to denote the partial trace over a register $\reg R$.
We denote the set of linear transformations on $\reg R$ by $\linear(\reg R)$.
For a pure state $\ket\varphi$, we write $\varphi$ to denote the density matrix $\ketbra{\varphi}{\varphi}$. We denote the identity transformation by $I$.
For an operator $X \in \linear(R)$, we define $\| X \|_\infty$ to be its operator norm, and $\| X\|_1 = \Tr(|X|)$ to denote its trace norm. We write $\td(\rho,\sigma) = \frac{1}{2} \| \rho - \sigma \|_1$ to denote the trace distance between two density matrices $\rho,\sigma$, and $\fidelity(\rho,\sigma) = \| \sqrt{\rho} \sqrt{\sigma} \|_1$ for the (square root) fidelity between $\rho,\sigma$. 

\paragraph{Families of quantum circuits and states.}
For convenience we assume that all quantum circuits use gates from the universal gate set $\{ H, \mathit{CNOT}, T \}$~\cite[Chapter 4]{nielsen2000quantum} (although our results hold for any universal gate set consisting of gates with algebraic entries). A \emph{unitary quantum circuit} is one that consists only of gates from this gate set. A \emph{general quantum circuit} is a quantum circuit that can additionally have non-unitary gates that (a) introduce new qubits initialized in the zero state, (b) trace them out, or (c) measure them in the standard basis. We say that a general quantum circuit uses space $s$ if the total number of qubits involved at any time step of the computation is at most $s$. The description of a general quantum circuit is a sequence of gates (unitary or non-unitary) along with a specification of which qubits they act on.
A general quantum circuit $C$ implements a quantum channel $\Phi_C: \linear(\reg R) \to \linear(\reg R')$ from some input register $\reg R$ to some output register $\reg R'$.

\begin{definition}[Polynomial size and space circuit families]
We say that $(C_n)_{n \in \N}$ is a family of \emph{polynomial-size general quantum circuits} if there exists a polynomial $p$ such that $C_n$ has size (i.e.~number of gates) at most $p(n)$. 
We say that $(C_n)_{n \in \N}$ is a family of \emph{polynomial-space general quantum circuits} if there exists a polynomial $p$ such that $C_n$ uses at most $p(n)$ space.
\end{definition}

\begin{definition}[Uniform circuit families]
A family of general quantum circuits $(C_n)_{n \in \N}$ is called \emph{time-uniform} (or simply \emph{uniform}) if $(C_n)_{n \in \N}$ is polynomial-size and there exists a classical polynomial-time Turing machine that on input $1^n$ outputs the description of $C_n$. Similarly, a family of general quantum circuits $(C_n)_{n \in \N}$ is called \emph{space-uniform} if $(C_n)_{n \in \N}$ is polynomial-space and there exists a classical polynomial-space Turing machine that on input $(1^n,i)$ outputs the $i$'th gate of $C_n$.
\end{definition}

\begin{definition}[\pspace-computability] \label{def:pspace_comp}
Let $\eps: \N \to [0,1]$ denote a function. Let $(\ket{\psi_n})_{n \in \N}$, $(B_n)_{n \in \N}$ , and $(\Phi_n)_{n \in \N}$ be a family of vectors, square matrices, and superoperators respectively whose dimensions are bounded by $2^{p(n)}$ for some polynomial $p(n)$. We let $d_n,d_n'$ be such that
$\ket{\psi_n} \in \C^{d_n}$, $B_n \in \C^{d_n \times d_n}$, and $\Phi_n: \linear(\C^{d_n}) \to \linear(\C^{d_n'})$. 
\begin{itemize}
\item We say that $(\ket{\psi_n})_n$ is \emph{$\eps$-$\PSPACE$-computable} if there exists a polynomial-space Turing machine $A$ that on input $(1^n,i)$, with $i \in [d_n]$ outputs a complex number $\alpha_i$ such that $\| (\alpha_i)_{i \in [d_n]} - \ket{\psi_n} \|_2 \leq \eps(n)$. 

\item We say that $(B_n)_n$ is \emph{$\eps$-$\PSPACE$-computable} if there exists a polynomial-space Turing machine $A$ that on input $(1^n,i,j)$, with $i,j \in [d_n]$ outputs a complex number $\alpha_{ij}$ such that $\| (\alpha_{ij})_{(i,j) \in [d_n] \times [d_n]} - B_n \|_1 \leq \eps(n)$. 

\item We say that $(\Phi_n)_n$ is \emph{$\eps$-$\PSPACE$-computable} if there exists a polynomial-space Turing machine $A$ that on input $(1^n,i,j)$ for $i,j \in [d_n']$, given oracle access to the entries of a matrix $X \in \C^{d_n \times d_n}$, outputs a value $\alpha_{ij}$ such that $\| (\alpha_{ij})_{(i,j) \in [d_n]\times [d_n]} - \Phi_n(X) \|_1 \leq \eps(n) \cdot \|X\|_1$.
\end{itemize}
If for every polynomial $q(n)$, an object is $2^{-q(n)}$-\pspace-computable, we drop the explicit $\eps$-dependence and simply call the object \pspace-computable.
\end{definition}

We will usually leave the size parameter $n$ implicit and e.g.~call a matrix \pspace-computable without explicitly specifying the family to which it belongs.

\subsection{Polynomial-space states and unitaries}
\label{sec:pspace}

As mentioned in the introduction, state complexity classes are sequences of quantum states that require certain resources (e.g.~a polynomial number of qubits) to be synthesized.
The first state complexity class we need to introduce is $\statePSPACE$.
We use the following definition of $\statePSPACE$, which generalizes the definition presented in~\cite{rosenthal2022interactive} to include sequences of mixed states.

\begin{definition}[$\statePSPACE$]
	Let $\delta: \N \to [0,1]$ be a function. Then $\statePSPACE_\delta$ is the class of all sequences of density matrices $(\rho_n)_{n \in \N}$ such that each $\rho_n$ is a state on $n$ qubits, and there exists a space-uniform family of general quantum circuits $(C_n)_{n \in \N}$ such that for all sufficiently large $n \in \N$, the circuit $C_n$ takes no inputs and $C_n$ outputs a density matrix $\sigma_n$ such that 
	\[
	\td(\sigma_n, \rho_n) \leq \delta(n)~.
	\]
	We define $\class{pureStatePSPACE}_{\delta}$ to be the subset of $\statePSPACE_\delta$ consisting of families of \emph{pure} states. We define the class $\statePSPACE$ to be
	\[
	    \statePSPACE = \bigcap_{q} \statePSPACE_{1/q(n)}
	\]
	where the intersection is over all polynomials $q:\N \to \R$, and similarly define $\class{pureStatePSPACE}$.
\end{definition}

The following lemma shows that the class $\statePSPACE_\delta$ is robust under perturbation.

\begin{lemma}
\label{lem:state-pspace-robust}
Let $(\rho_n)_{n \in \N} \in \statePSPACE_{\delta(n)}$ for some function $\delta(n)$. Suppose $(\tilde{\rho}_n)_{n \in \N}$ is a state sequence satisfying $\td(\psi_n,\tilde{\rho}_n) \leq \eps(n)$ for another function $\eps(n)$. Then $(\tilde{\rho}_n)_{n \in \N} \in \statePSPACE_{\delta(n) + \eps(n)}$. 
\end{lemma}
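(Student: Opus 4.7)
The plan is to use the exact same family of synthesis circuits that witnesses $(\rho_n)_{n \in \N} \in \statePSPACE_{\delta(n)}$ and argue that it also witnesses $(\tilde{\rho}_n)_{n \in \N} \in \statePSPACE_{\delta(n)+\eps(n)}$, via the triangle inequality for trace distance.

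More concretely, first I would unpack the hypothesis: by definition of $\statePSPACE_{\delta(n)}$, there is a space-uniform family of general quantum circuits $(C_n)_{n \in \N}$ such that for all sufficiently large $n$, the circuit $C_n$ takes no input and outputs a density matrix $\sigma_n$ satisfying $\td(\sigma_n,\rho_n) \leq \delta(n)$. Note that this family does not depend on $(\tilde\rho_n)$ in any way, so we may re-use it verbatim.

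Next I would apply the triangle inequality to the trace distance. For all sufficiently large $n$,
\[
\td(\sigma_n, \tilde{\rho}_n) \;\leq\; \td(\sigma_n, \rho_n) + \td(\rho_n, \tilde{\rho}_n) \;\leq\; \delta(n) + \eps(n),
\]
where the first term is bounded using the guarantee for $(C_n)$ and the second using the hypothesis on $(\tilde\rho_n)$ (noting that the ``$\psi_n$'' appearing in the statement is a typo for $\rho_n$). This directly verifies the defining condition of $\statePSPACE_{\delta(n)+\eps(n)}$ using the same space-uniform circuit family $(C_n)$, completing the proof.

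There is no real obstacle here: the argument is purely information-theoretic and the only nontrivial fact it invokes is the triangle inequality for the trace norm, while space-uniformity of the synthesis procedure is inherited for free because the circuit family is unchanged. The lemma should be stated and proved in a few lines.
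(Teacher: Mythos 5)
Your proof is correct and takes exactly the same approach as the paper: reuse the same synthesis circuit family and invoke the triangle inequality for trace distance. You also correctly note that the ``$\psi_n$'' in the statement is a typo for $\rho_n$, which matches the intended meaning.
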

\begin{proof}
Let $A$ denote a $\statePSPACE_{\delta(n)}$ algorithm that synthesizes the sequence $(\rho_n)_{n \in \N}$ up to $\delta(n)$ error. Then by the triangle inequality, it also synthesizes the sequence $(\tilde{\rho}_n)_{n \in \N}$ up to $\delta(n) + \eps(n)$ error.
\end{proof}

In addition to state complexity classes, we also need to consider unitary complexity classes, which are sequences of unitaries that require certain resources (e.g.~a polynomial number of qubits acted upon by a space-uniform circuit) to implement (i.e.~to apply the unitary to any given input state).
\begin{definition}[$\unitaryPSPACE$]
	Let $\delta: \N \to [0,1]$ be a function. Then $\unitaryPSPACE_\delta$ is the class of all sequences $(U_n)_{n \in \N}$ such that each $U_n$ is a unitary acting on $n$ qubits, and there exists a space-uniform family of general quantum circuits $(C_n)_{n \in \N}$ such that for all sufficiently large $n \in \N$, for all $n$-qubit states $\ket{\psi}$ and
	\[
	    \td( C_n(\psi), U\psi U^\dagger) \leq \delta(n)~.
	\]
	We define the class $\unitaryPSPACE$ to be
 	\[
 	    \unitaryPSPACE = \bigcap_{q} \unitaryPSPACE_{\exp(-q(n))}
 	\]
 	where the intersection is over all polynomials $q:\N \to \R$.
\end{definition}

We will also need a version of $\unitaryPSPACE$ that does not allow mid-circuit measurements, which we call $\pureUnitaryPSPACE$.

\begin{definition}[$\pureUnitaryPSPACE$]
	Let $\delta: \N \to [0,1]$ be a function. Then $\pureUnitaryPSPACE_\delta$ is the class of all sequences $(U_n)_{n \in \N}$ such that each $U_n$ is a unitary acting on $n$ qubits, and there exists a space-uniform family of \emph{unitary} quantum circuits (i.e.~there are no measurements or tracing out) $(C_n)_{n \in \N}$ such that for all sufficiently large $n \in \N$ and  for all $n$-qubit states $\ket{\psi}$, 
	\[
	    \Big \| C_n \ket{\psi} \ket{0 \cdots 0} - (U \ket{\psi}) \ket{0\cdots 0} \Big \|_2 \leq \delta(n)~.
	\]
	We define the class $\pureUnitaryPSPACE$ to be
 	\[
 	    \pureUnitaryPSPACE = \bigcap_{q} \pureUnitaryPSPACE_{\exp(-q(n))}
 	\]
 	where the intersection is over all polynomials $q:\N \to \R$.
\end{definition}
Since the definition of $\pureUnitaryPSPACE$ requires a unitary circuit that returns any ancilla qubits to their original state, such unitaries can also be run coherently to simulate the controlled-$U$ operation. 
Also note that while our definition of $\statePSPACE$ allows inverse polynomial error, our definitions of $\unitaryPSPACE$ and \pup require inverse exponential error.
The reason for this will become clear later, but we briefly describe it here: we will show that we can approximate the output of \cref{algo:mmwu} to within inverse exponential error with polynomial-space unitaries, i.e.~morally speaking, \cref{algo:mmwu} is a \pup-algorithm for this exponentially precise definition of \pup.
However, \cref{algo:mmwu} itself (even for an exact implementation) can only produce a feasible density matrix to an SDP up to inverse polynomial error, so if we use this algorithm for a $\statePSPACE$ procedure, we need to allow inverse polynomial error in the $\statePSPACE$-state preparation.

\subsection{Exponentially precise \pspace-computable polynomial approximations}
\label{sec:poly_approx}
We will make extensive use of block encodings, which we will introduce in \cref{sec:block-encodings}, and will frequently want to apply functions to such block encodings.
However, we cannot apply general functions to block encodings.
Instead, we will need to approximate the function we want to apply as a linear combination of Chebyshev polynomials.
Many prior works on the block encoding framework (e.g.~\cite{gilyen2019qsvt,gilyen2019quantum,low2017hamiltonian}) construct polynomial approximations to functions of interest.
However, these approximations are designed to be used in a regime where the degree of the approximation is polynomial, which is required if one is restricted to polynomial time.
This will not be sufficient for our purposes: we will require approximations with exponential degree.
Such approximations cannot be applied in polynomial time, but as we will show in \cref{lem:be-polys}, they can be applied in quantum polynomial space.
However, simply using existing approximations and taking their degree to be exponential does not work: it is often not clear whether the coefficients of the approximation are computable in \pspace if the degree is exponential, e.g.~because these coefficients are expressed as products of doubly-exponentially small and large quantities.
In this section we construct exponentially good approximations to the sign function and the square root function with \pspace-computable coefficients using the orthogonal projection onto Chebyshev polynomials, which can be computed in \pspace by numerical integration.
We begin by recalling the definition of Chebyshev polynomials.
\begin{definition}[Chebyshev polynomials] \label{def:cheby}
The Chebyshev polynomials (of the first kind) $T_k(x)$ are defined via the following recursion relation: $T_0(x) = 1$, $T_1(x) = x$, and $T_{k+1}(x) = 2k T_k(x) - T_{k-1}(x)$.
For $x \in [-1,1]$, an equivalent definition is $T_k(\cos \theta) = \cos(k \theta)$.
\end{definition}

We denote $\langle f, g\rangle \deq \frac{2}{\pi} \int_{-1}^1 f(x) g(x) \frac{dx}{\sqrt{1-x^2}}$ for functions $f$ and $g$ for which this integral exists.
It is a standard property that $\langle \cdot , \cdot \rangle$ is an inner product on the space of polynomials on $[-1,1]$ of some fixed degree.
Furthermore, the Chebyshev polynomials are an orthogonal basis on this space.
More specifically, we can express any degree-$d$ polynomial $P_d$ as the following linear combination of Chebyshev polynomials:
\begin{align}
P_d = \frac{\langle T_0, P_d \rangle}{2} + \sum_{k = 1}^d \langle T_k, P_d \rangle T_k \,. \label{eqn:gen_cheby_expansion}
\end{align}
(The 1/2 factor for the $k=0$ term is necessary because $\langle T_0, T_0\rangle = 2$.)
We will make use of the following result from~\cite{powell1967maximum}.
\begin{lemma} \label{lem:unif_vs_proj}
Suppose that a function $f: [-1,1]\to \R$ has an $\eps$-good \emph{uniform} degree-$d$ polynomial approximation $P^*_d(x)$, i.e.~$\max_{x \in [-1,1]}|f(x) - P^*_d(x)| \leq \eps$.
Then the orthogonal projection of $f$ onto degree-$d$ polynomials given by $P_d = \frac{\langle T_0, f \rangle}{2} + \sum_{k = 1}^d \langle T_k, f \rangle T_k$ satisfies $\max_{x \in [-1,1]}|f(x) - P_d(x)| \leq O(\eps \log d)$.
\end{lemma}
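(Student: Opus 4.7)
The plan is to interpret $P_d$ as the image of $f$ under the orthogonal projection operator $\Pi_d$ onto degree-$d$ polynomials with respect to the Chebyshev inner product $\langle \cdot, \cdot \rangle$, and then bound the operator norm of $\Pi_d$ as a map from $C[-1,1]$ (equipped with the sup norm) to itself. This is essentially the classical Lebesgue constant bound for Chebyshev expansions, due to which the claim is a textbook consequence of the near-best-approximation property of orthogonal projections in weighted $L^2$.

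First I would observe that $\Pi_d$ is a linear operator that fixes every polynomial of degree at most $d$; in particular $\Pi_d(P_d^*) = P_d^*$. Writing $f = P_d^* + r$ with $\max_{x \in [-1,1]}|r(x)| \leq \eps$, linearity gives $P_d = \Pi_d(f) = P_d^* + \Pi_d(r)$. Hence by the triangle inequality
\[
    \max_{x \in [-1,1]}|f(x) - P_d(x)| \leq \max_{x \in [-1,1]}|f(x) - P_d^*(x)| + \max_{x \in [-1,1]}|\Pi_d(r)(x)| \leq \eps + \|\Pi_d\|_{\infty \to \infty} \cdot \eps,
\]
so the task reduces to showing $\|\Pi_d\|_{\infty \to \infty} = O(\log d)$.

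The second, and main, step is to establish this Lebesgue-constant bound. I would write $\Pi_d$ explicitly as an integral kernel: substituting $x = \cos\theta$ and using $T_k(\cos\theta) = \cos(k\theta)$, the Chebyshev coefficients become Fourier cosine coefficients of the $2\pi$-periodic even function $g(\theta) \deq r(\cos\theta)$, and $\Pi_d(r)(\cos\theta)$ becomes the partial Fourier sum $S_d g(\theta) = \frac{1}{\pi} \int_{-\pi}^{\pi} g(\phi)\, D_d(\theta - \phi)\, d\phi$, where $D_d$ is the Dirichlet kernel of order $d$. The norm of this operator from $L^\infty$ to $L^\infty$ is exactly the $L^1$-norm of the Dirichlet kernel, i.e.~the classical Lebesgue constant, which is known to be $O(\log d)$. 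Plugging this bound into the inequality above yields $\max_{x \in [-1,1]}|f(x) - P_d(x)| \leq O(\eps \log d)$, as desired.

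I expect the main obstacle (and the only non-trivial step) to be the Lebesgue-constant estimate: reducing the Chebyshev projection to a truncated Fourier series via the $x = \cos\theta$ substitution and then bounding the $L^1$ norm of the Dirichlet kernel. Everything else is a one-line application of linearity and the triangle inequality. Since this is a classical fact from approximation theory, I would likely just cite the Powell reference \cite{powell1967maximum} for the Lebesgue-constant bound rather than reproducing the kernel estimate in detail, given that the lemma is invoked as a black-box tool in the paper.
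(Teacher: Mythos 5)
Your proof is correct and is the standard argument: write $f = P_d^* + r$, use that the Chebyshev truncation $\Pi_d$ fixes degree-$d$ polynomials, and bound $\|\Pi_d\|_{\infty\to\infty}$ by the Lebesgue constant, which reduces to the $L^1$-norm of the Dirichlet kernel via $x = \cos\theta$. The paper does not give a proof and simply cites Powell for this fact, and what you wrote is essentially Powell's argument, so the approaches coincide.
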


We can use this and the approximations constructed in~\cite{gilyen2019quantum,low2017hamiltonian} to construct \pspace-computable approximations to the sign and square root function.
We treat each function in turn.

\begin{lemma}[Exponentially good approximation to the sign function] \label{lem:sign-approx}
For any $\kappa \geq 2^{-\poly(n)}$, there exists a $d = O \left( \frac{\log1/\kappa}{\kappa} \right) = O(2^{\poly(n)})$ and \pspace-computable coefficients $c_0, \dots, c_d$ such that the polynomial 
\begin{align*}
P_{d}^{\sgn} = \sum_{i = 0}^d c_i T_i
\end{align*}
is odd and satisfies $|\sgn(x) - P_{d}^{\sgn}(x)| \leq \kappa$ for all $x \in [-1, 1] \setminus [-\kappa, \kappa]$, and $|P_{d}^{\sgn}(x)| \leq 1 + \kappa$ for all $x \in [-1, 1]$.
Furthermore, the coefficient vector $c = (c_1, \dots, c_d)$ has norm bounded by $\norm{c}_1 \leq O(\log d)$.
\end{lemma}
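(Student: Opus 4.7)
The plan is to define $P_d^{\sgn}$ as the truncated Chebyshev expansion (i.e.~the degree-$d$ orthogonal projection) of a smoothed version of the sign function, and deduce each claimed property from \Cref{lem:unif_vs_proj} combined with the uniform polynomial approximations to $\sgn$ constructed in~\cite{low2017hamiltonian,gilyen2019quantum}. Concretely, I would fix an odd continuous function $\widetilde{\sgn}:[-1,1]\to[-1,1]$ that agrees with $\sgn$ on $[-1,1]\setminus[-\kappa/2,\kappa/2]$ and interpolates smoothly between $-1$ and $+1$ inside $[-\kappa/2,\kappa/2]$. The existing results give a polynomial of degree $d = O(\log(1/\kappa')/\kappa')$ that approximates $\widetilde{\sgn}$ uniformly to within $\kappa'$ on $[-1,1]$. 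Taking $\kappa' = \Theta(\kappa/\log d)$ (which only inflates $d$ by $O(\log\log(1/\kappa))$ since $\log d = O(\log(1/\kappa))$) will make the final error bound work out. I then set $c_0 \deq \tfrac{1}{2}\langle T_0,\widetilde{\sgn}\rangle$ and $c_k \deq \langle T_k,\widetilde{\sgn}\rangle$ for $k\ge 1$, so that $P_d^{\sgn} = \sum_{i=0}^d c_i T_i$ is precisely the orthogonal projection of $\widetilde{\sgn}$.

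Applying \Cref{lem:unif_vs_proj} to $\widetilde{\sgn}$ yields $\|P_d^{\sgn}-\widetilde{\sgn}\|_\infty \leq O(\kappa'\log d) \leq \kappa$. The two error inequalities of the lemma then follow immediately: the first because $\widetilde{\sgn}$ agrees with $\sgn$ outside $[-\kappa/2,\kappa/2]\subseteq[-\kappa,\kappa]$, and the second because $\|\widetilde{\sgn}\|_\infty \leq 1$. Oddness of $P_d^{\sgn}$ holds because $\widetilde{\sgn}$ is odd and $T_k$ has the parity of $k$, so $c_k = 0$ whenever $k$ is even (and in particular $c_0 = 0$).

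For \pspace-computability, I would use the substitution $x=\cos\theta$ to rewrite
\[
c_k = \frac{2}{\pi}\int_0^\pi \cos(k\theta)\,\widetilde{\sgn}(\cos\theta)\,d\theta
\]
and approximate this bounded integral by the rectangle rule on a $2^{-\poly(n)}$-spaced grid in $\theta$. Since $\cos(k\theta)$ for $k\leq 2^{\poly(n)}$ and $\widetilde{\sgn}(\cos\theta)$ are both evaluable in \pspace to inverse-exponential precision (the former via Taylor series or repeated doubling of the angle-addition formula, the latter directly from the piecewise definition), each $c_k$ can be computed to any desired inverse-exponential precision in \pspace.

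Finally, for the bound $\|c\|_1 \leq O(\log d)$, I would compare to the Chebyshev coefficients of the pure sign function: a direct computation using $T_k(\cos\theta)=\cos(k\theta)$ gives $c_k(\sgn) = \pm\tfrac{4}{\pi k}$ for odd $k\ge 1$ and $0$ otherwise, so $\sum_{k\leq d}|c_k(\sgn)| = O(\log d)$. The perturbation $|c_k - c_k(\sgn)|$ equals $|\langle T_k,\widetilde{\sgn}-\sgn\rangle|$, which is bounded by the integral of a bounded integrand over the smoothing region $[-\kappa/2,\kappa/2]$, giving $O(\kappa)$ per coefficient; summing over $k\leq d$ contributes only $O(d\kappa) = O(\log(1/\kappa)) = O(\log d)$. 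The main obstacle is exactly this $\|c\|_1$ bound: a crude per-coefficient estimate only yields $O(d)$, so one really needs the explicit $1/k$ decay of the Chebyshev coefficients of $\sgn$ together with a careful estimate of how much the smoothing perturbs them. A secondary, largely mechanical issue is propagating the inverse-exponential numerical error from the grid integration through the statement of the lemma without destroying any of the bounds.
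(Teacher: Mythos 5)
Your overall strategy matches the paper's: define $P_d^{\sgn}$ as the orthogonal projection onto Chebyshev polynomials of a smoothed sign function, invoke \Cref{lem:unif_vs_proj} to transfer a uniform approximation bound to the projection, compute the coefficients by numerical integration for $\PSPACE$-computability, and control $\norm{c}_1$ by comparing to the exact Chebyshev coefficients of $\sgn$ (which decay like $1/k$) plus a perturbation bound. The oddness and $\|c\|_1$ arguments are essentially verbatim what the paper does.

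There is, however, a genuine gap in your choice of smoothing function. You fix a $\widetilde{\sgn}$ that equals $\sgn$ \emph{exactly} on $[-1,1]\setminus[-\kappa/2,\kappa/2]$ and is smooth inside. Such a function is $C^\infty$ but necessarily \emph{not analytic} (it is locally constant on two intervals), and the approximation results you cite from~\cite{low2017hamiltonian,gilyen2019quantum} do not apply to it: they concern $g_k(x)=\erf(kx)$, which is entire, and it is precisely this analyticity that delivers the $d = O(\log(1/\kappa)/\kappa)$ degree bound via exponential decay of Chebyshev coefficients. For a non-analytic $C^\infty$ interpolant with a width-$\kappa$ transition, a Jackson-type argument only gives $d=O(1/\kappa^2)$ (since the Lipschitz constant is $\Theta(1/\kappa)$), and more refined Gevrey-class estimates still fall short of the claimed $O(\log(1/\kappa)/\kappa)$. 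The paper sidesteps this by working directly with $g_k(x)=\erf(kx)$, which is only $O(\kappa)$-close to $\sgn$ outside $[-\kappa,\kappa]$ rather than exactly equal to it; the small mismatch is absorbed into the same $O(\kappa)$-per-coefficient perturbation bound you already set up in your $\|c\|_1$ argument. So the repair is easy — replace your $\widetilde{\sgn}$ with $\erf(kx)$ and chain through~\cite[Lemma~10, Cor.~4]{low2017hamiltonian} — but as written, the uniform approximation premise you feed into \Cref{lem:unif_vs_proj} is unsupported.

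A secondary, more cosmetic issue: your parameter accounting claims that taking $\kappa' = \Theta(\kappa/\log d)$ "only inflates $d$ by $O(\log\log(1/\kappa))$," but with your stated formula $d = O(\log(1/\kappa')/\kappa')$ this actually gives $d = O(\log^2(1/\kappa)/\kappa)$, a full $\log(1/\kappa)$ factor worse than the lemma claims. The correct degree bound from~\cite{low2017hamiltonian} has the form $O\big(\sqrt{(k^2+\log(1/\delta))\log(1/\delta)}\big)$ with $k = O(\sqrt{\log(1/\kappa)}/\kappa)$ and $\delta$ the uniform-approximation target; the $k^2$ term dominates, so the degree remains $O(\log(1/\kappa)/\kappa)$ for any $\delta$ down to $\kappa^2$. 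The paper takes $\delta = \kappa^2$, after which $\delta\log d = O(\kappa)$ directly, avoiding the slightly awkward $\kappa' = \kappa/\log d$ bootstrap.
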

\begin{proof}
We write $f \approx_{\eps, \kappa} g$ if $|f(x) - g(x)| \leq \eps$ for all $x \in [-1,1] \setminus [-\kappa, \kappa]$, and $f \approx_\eps g$ if this holds for $\kappa = 0$.
\cite[Lemma 10]{low2017hamiltonian} shows that $\sgn \approx_{O(\kappa), \kappa} g_k$ for $g_k(x) \deq \erf(kx)$ the rescaled error function and $k = O(\log(1/\kappa)^{1/2}/\kappa)$.
Furthermore, \cite[Corollary 4]{low2017hamiltonian} shows that for $d = O(\sqrt{(k^2 + \log(1/\kappa^2))\log(1/\kappa^2)})$ there exists a polynomial $P^*_d$ such that $g_k \approx_{O(\kappa^2)} P^*_d$.

Unfortunately, it is not clear whether the polynomial $P^*_d$ has \pspace-computable coefficients with respect to the basis of Chebychev polynomials when the degree is allowed to be exponential.
However, we can use the existence of $P^*_d$ combined with \cref{lem:unif_vs_proj} to show that the orthogonal projection of $g_k$ onto the Chebychev polynomials is also a good polynomial approximation, and the coefficients of this projection will be \pspace-computable.
Concretely, define $c_0 = \frac{\langle T_0, g_k \rangle}{2}$ and $c_i = \langle T_i, g_k \rangle$ for $i = 1, \dots, d$ and consider the polynomial $P_{d}^{\sgn} = \sum_{i = 0}^d c_i T_i$ as in the lemma statement.
Then, by \cref{lem:unif_vs_proj}, $g_k \approx_{O(\kappa^2 \log d)} P_{d}^{\sgn}$.
For our choice of $k$ and $d$, $O(\kappa^2 \log d) = O(\kappa)$, so $g_k \approx_{O(\kappa)} P_{d}^{\sgn}$.
Combining this with $\sgn \approx_{O(\kappa), \kappa} g_k$, we get that $\sgn \approx_{O(\kappa),\kappa} P_{d}^{\sgn}$ for $d = O(\log(1/\kappa)/\kappa) = O(2^{\poly(n)})$.
Choosing the implicit constant in $d = O(\log(1/\kappa)/\kappa)$ large enough, we can ensure that $\sgn \approx_{\kappa,\kappa} P_{d}^{\sgn}$.

Additionally, since $g_k$ is an odd function and $T_i$ is even if $i$ is even, $c_i = 0$ for even $i$.
As a result, $P_d^{\sgn}$ is a linear combination of the odd Chebyshev polynomials $T_i$ for odd $i$, so $P_d^{\sgn}$ is itself also an odd function as claimed.
Furthermore, with the implicit constant in $d = O(\log(1/\kappa)/\kappa)$ chosen large enough, $g_k \approx_{\kappa} P_{d}^{\sgn}$; since $|g_k(x)| \leq 1$ for $x \in [-1, 1]$, this means that $|P_{d}^{\sgn}(x)| \leq 1 + O(\kappa)$ for $x \in [-1, 1]$ as claimed.

It remains to argue that the coefficients $c_i$ are \pspace-computable.
For this, we observe that since $\frac{d}{dx}g_k(x) \leq k$ and $\frac{d}{dx} T_i(x) \leq O(i^2)$ for all $x \in \bits$, the integrand in 
\begin{align*}
\langle T_i, g_k \rangle \deq \frac{2}{\pi} \int_{-1}^1 T_i(x) g_k(x) \frac{dx}{\sqrt{1-x^2}}
\end{align*}
has derivative at most $O(2^{\poly(n)})$ for any $i \leq d = O(2^{\poly(n)})$.
Therefore, we can perform numerical integration with exponentially many integration nodes to estimate $\langle T_i, g_k \rangle$ to within accuracy $2^{-\poly(n)}$.
Since the integrand can also be evaluated to arbitrary accuracy in \pspace, this means that we can compute $c_i$ in \pspace up to accuracy $2^{-\poly(n)}$.

Finally we need to bound the norm of the coefficient vector $c = (c_1, \dots, c_d)$.
For this, we define coefficients $\tilde c_i \deq \langle T_i, \sgn \rangle$.
Direct integration shows that $\tilde c_i = (-1)^{(i-1)/2}\frac{4}{\pi \cdot i}$, so $\norm{\tilde c}_1 = \sum_{i = 1}^d |\tilde c_i| = O(\log d)$ by the formula for partial sums of the harmonic series.
We can now relate $\norm{c}$ and $\norm{\tilde c}$ by noting that since $g_k$ and $\sgn$ differ by at most 1 on the interval $[-\kappa, \kappa]$ and by at most $O(\kappa)$ on the rest of the interval, $|c_i - \tilde c_i| = O(\kappa)$.
Therefore, $|\norm{c}_1 - \norm{\tilde c}_1| \leq O(d \kappa) = O(\log (1/\kappa)) = O(\log(d))$.
As a result, $\norm{c}_1 = O(\log d)$ as claimed.
\end{proof}

\begin{lemma}[Exponentially good approximation to the square root function] \label{lem:sqrt-approx}
For any $\kappa \geq 2^{-\poly(n)}$, there exists a $d = O \left( \frac{\log1/\kappa}{\kappa^2} \right) = O(2^{\poly(n)})$ and \pspace-computable coefficients $c_0, \dots, c_d$ such that the polynomial 
\begin{align*}
P_{d}^{\sqrt{~}} = \sum_{i = 0}^d c_i T_i
\end{align*}
satisfies $\left|\sqrt{\frac{x+1}{2}} - P_{d}^{\sqrt{~}}(x)\right| \leq \kappa$ for all $x \in [-1, 1]$.
\end{lemma}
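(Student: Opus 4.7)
} The plan is to mirror the three-step template used in the proof of \cref{lem:sign-approx}: (i) invoke a known uniform polynomial approximation of the target function with the right degree/error tradeoff; (ii) pass to the orthogonal Chebyshev projection via \cref{lem:unif_vs_proj}; (iii) verify that the resulting Chebyshev coefficients can be computed in $\PSPACE$ by numerical quadrature.

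For step (i), I would cite a standard uniform polynomial approximation to $\sqrt{u}$ on $[0,1]$ from the block-encoding literature (see, e.g., \cite{gilyen2019quantum}), which yields for any target error $\kappa' \leq \kappa$ a polynomial $Q$ of degree $O(\log(1/\kappa')/\kappa'^2)$ with $\max_{u \in [0,1]} |\sqrt{u} - Q(u)| \leq \kappa'$. Composing with the linear rescaling $u = (x+1)/2$ produces a polynomial $P_d^*$ of the same degree with $\max_{x \in [-1,1]} |\sqrt{(x+1)/2} - P_d^*(x)| \leq \kappa'$. The square-root singularity of $f(x) \deq \sqrt{(x+1)/2}$ at $x = -1$ is not a problem here because $f(-1) = 0$ and $P_d^*$ is only required to be uniformly close, not to match derivatives.

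For step (ii), applying \cref{lem:unif_vs_proj} to $P_d^*$ shows that the orthogonal projection $P_d = \frac{\langle T_0, f\rangle}{2} + \sum_{k=1}^{d} \langle T_k, f \rangle T_k$ of $f$ onto the span of $\{T_0, \ldots, T_d\}$ satisfies $\max_{x \in [-1,1]}|f(x) - P_d(x)| = O(\kappa' \log d)$. Since $\kappa \geq 2^{-\poly(n)}$, we have $\log d = O(\poly(n))$, so choosing $\kappa'$ a $\poly(n)$-factor smaller than $\kappa$ (which only enlarges $d$ by another $\poly(n)$ factor, keeping $d = O(\log(1/\kappa)/\kappa^2)$) yields the required uniform bound $\max_{x \in [-1,1]}|f(x) - P_d(x)| \leq \kappa$. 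I would then define $c_0 = \langle T_0, f\rangle / 2$ and $c_i = \langle T_i, f\rangle$ for $i = 1,\ldots,d$.

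For step (iii), the main subtlety is that the defining integral $\langle T_i, f\rangle = \frac{2}{\pi}\int_{-1}^{1} T_i(x)\sqrt{(x+1)/2}\frac{dx}{\sqrt{1-x^2}}$ has both an inverse-square-root weight at $x = \pm 1$ and a square-root singularity of $f$ at $x = -1$, making a direct numerical quadrature in the $x$-variable awkward. I would resolve this with the substitution $x = \cos\theta$, which simultaneously cancels the weight and trivializes the singularity, yielding
\[
c_i = \frac{2}{\pi}\int_0^{\pi} \cos(i\theta)\cos(\theta/2)\,d\theta.
\]
The integrand is smooth with derivative bounded by $O(i) = O(2^{\poly(n)})$, so numerical integration with exponentially many nodes (doable in $\PSPACE$) computes each $c_i$ to additive accuracy $2^{-\poly(n)}$, yielding $\PSPACE$-computability of the coefficient family. (One could alternatively evaluate the trigonometric integral in closed form, giving $c_i = \frac{(-1)^{i+1}}{\pi(i^2 - 1/4)}$ for $i \geq 1$ and $c_0 = 2/\pi$, which is manifestly $\PSPACE$-computable; this closed form is a consistency check but is not needed for the proof.) The only place any care is required is handling the endpoint singularity of $f$ at $x=-1$ in computing the integral, which is precisely what the $x=\cos\theta$ substitution addresses, so I do not anticipate any serious obstacle.
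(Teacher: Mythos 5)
Your argument is a legitimate alternative to the paper's and, once one citation is repaired, would work. The structural difference is that you apply \cref{lem:unif_vs_proj} directly to $f(x)=\sqrt{(x+1)/2}$, whereas the paper first replaces $f$ by the smoothed $g(x)=\sqrt{(1-c)(x+1)/2+c}$ with $c=\kappa^2/8$, establishes $|g-f|\le\kappa/2$ pointwise, and only then invokes the approximation theorem on $g$. That offset is not cosmetic: [Corollary 3.4.14 of \cite{gilyen2019quantum}], used both here and in \cref{lem:sign-approx}, is a Taylor-expansion-based result that needs the target function to be analytic on a neighbourhood of the interval, and $u\mapsto\sqrt{u}$ has no Taylor series at $u=0$. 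Your remark that ``the square-root singularity of $f$ at $x=-1$ is not a problem because $P_d^*$ only needs to be uniformly close'' is true of the \emph{existence} of $P_d^*$, but it does not license the citation you give. The existence claim itself is classical (Bernstein: $E_d(\sqrt{u})=\Theta(1/d)$ uniformly on $[0,1]$, which is in fact better than the degree you quote), so the fix is either to cite a classical source (Jackson/Bernstein, or Powell, who is already cited for \cref{lem:unif_vs_proj}) or to simply reuse the paper's $\kappa^2/8$ offset. Steps (ii) and (iii) are sound, and your $x=\cos\theta$ substitution together with the closed forms $c_0 = 2/\pi$ and $c_i = (-1)^{i+1}/(\pi(i^2-1/4))$ for $i\ge1$ (which I verified) makes the \pspace-computability of the coefficients cleaner and more explicit than the paper's numerical-quadrature sketch, which leaves the endpoint weight implicit. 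One small parameter quibble: $\kappa'=\kappa/\poly(n)$ would inflate $d$ by a spurious $\poly(n)$ factor beyond the lemma's stated bound; the intended choice, mirroring the paper's use of $\kappa^2$ as the uniform-approximation target, is $\kappa'\approx\kappa^2$ (or $\kappa'=\kappa/\Theta(\log(1/\kappa))$), which absorbs the $\log d$ loss from \cref{lem:unif_vs_proj} with no $n$-dependence.
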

\begin{proof}
Define the function $g(x) = \sqrt{(1-c)\frac{x + 1}{2} + c}$ for $c = \kappa^2/8$.
Then for any $x \in [-1,1]$, $|\sqrt{(x+1)/2} - g(x)| \leq \kappa/2$.
To see that this is the case, denote $y = (x+1)/2$.
Then, using that $g(x) \geq \sqrt{(x+1)/2}$ on the interval $x \in [-1,1]$,
\begin{align*}
|g(x) - \sqrt{(x+1)/2}| = \sqrt{(1-c)y + c} - \sqrt{y} \,. 
\end{align*}
If $y \leq c$, this is trivially upper-bounded by $\sqrt{2 c} = \kappa/2$.
On the other hand, if $y > c$, we find that
\begin{align*}
\sqrt{(1-c)y + c} - \sqrt{y} &\leq \sqrt{(1-c)y} \left( \sqrt{1 + \frac{c}{(1-c)y}} - \sqrt{\frac{1  }{1-c}} \right) \\
&\leq \sqrt{(1-c)y} \frac{c}{2(1-c) y} \\
&\leq \frac{c}{2 \sqrt{(1 - c) y}} \leq \sqrt{c} \leq \kappa/2 \,.
\end{align*}
In the last line, we used $1 - c \geq 1/4$ and $y \geq c$.

The proof now concludes in the same way as for \cref{lem:sign-approx}: we define $c_0 = \frac{\langle T_0, g \rangle}{2}$ and $c_i = \langle T_i, g \rangle$ for $i = 1, \dots, d$ and consider the polynomial $P_{d}^{\sqrt{~}} = \sum_{i = 0}^d c_i T_i$ as in the lemma statement.
By \cite[Corollary 3.4.14]{gilyen2019quantum}, there exists a degree-$d$ polynomial $P^*_d$ satisfying $|g(x) - P^*_d(x)| \leq \kappa^2$ for all $x \in [-1,1]$ for some $d = O(\log(1/\kappa)/\kappa^2)$.
Then, taking the implicit constants in the degree large enough and applying \cref{lem:unif_vs_proj}, we find that $|\sqrt{(x+1)/2} - P_d^{\sqrt{~}}| \leq \kappa$ for all $x \in [-1,1]$.
The coefficients $c_i$ are \pspace-computable by the same argument as in \cref{lem:sign-approx}.
\end{proof}

\section{Block encodings}
\label{sec:block-encodings}

\begin{definition} \label{def:be}
Let $A \in \linear(\C^D)$. Then a unitary $U$ acting on $\C^{D} \otimes \cal{H}$ for some Hilbert space $\cal{H}$ of dimension $\dim(\cH) = 2^a$ is an \emph{$(\alpha, \eps, a)$-block encoding} of $A$ for some $\alpha \geq 1, \eps > 0$ if:
\[
    \norm{A - \alpha (I \otimes \bra{0^a}) \, U \, (I \otimes \ket{0^a})}_\infty \leq \eps~.
\]
\end{definition}
\medskip 

We call $\alpha$ the \emph{post-selection factor} and $\eps$ the \emph{error} of the block encoding.
The parameter $a$ denotes the number of ancilla qubits in the block encoding.

In the following, we present a number of lemmas that have the following form: given $\pureUnitaryPSPACE$-computable block encoding $U$ with property $X$, there exist a $\pureUnitaryPSPACE$-computable block encoding $V$ with property $Y$. Technically speaking these lemmas should refer to sequences of block encodings $(U_n)_{n \in \N}$ and $(V_n)_{n \in \N}$. However for clarity we omit the sequence notation, and implicitly assume that the block encodings being discussed are part of a uniformly-specified family of unitaries. Throughout the paper, all parameters (such as the dimension $D$, the normalization $\alpha$, the error $\eps$) except for the ancilla size are functions of a parameter $n$ that grows to infinity.
The dimension $D$ is always $2^{p(n)}$ for some polynomial $p$, i.e.~up to polynomial factors we can think of $n$ as the number of qubits on which the matrix $A$ acts. 

In the following lemma statements, we write ``$\poly(n)$'' as shorthand for some polynomial $p(n)$ which may depend on other polynomials specified earlier. For example, if we write ``Let $A \in \linear(\C^D)$ for some $D = 2^{\poly(n)}$. Then there exists an $\eps \leq 2^{-\poly(n)}$ such that...'' then we mean ``Let $p(n)$ be a polynomial and let $A \in \linear(\C^D)$ for some $D = 2^{p(n)}$. Then there exists a polynomial $q(n)$, depending on $p(n)$, and an $\eps \leq 2^{-q(n)}$ such that...''.

Importantly, we need to keep the ancilla size as a separate parameter in order to argue that the ancilla size increases by at most an \emph{additive} polynomial in $n$ in each transformation to the block encoding (instead of e.g.~being squared). For this, in each of the lemmas below, we will argue that the transformation maps a block encoding with  $a$ ancilla qubits to a block encoding with $a + \poly(n)$ ancilla qubits, where the $\poly(n)$-term is understood to be independent of $a$. If we were to set $a = \poly(n)$ and simply argue that each transformation maps from $\poly(n)$ to $\poly(n)$ number of ancillas, then the distinction between an additive and multiplicative polynomial increase in the ancilla number would be lost. This distinction will become important when we implement the MMWU algorithm with block encodings in \cref{sec:solving}.

\subsection{Preparing block encodings}
We recall the following results from \cite{gilyen2019quantum} that enable us to prepare block encodings of \pspace-computable matrices and reduced density matrices of quantum states.
\begin{lemma}[Block encodings for entry-computable matrices]
\label{lem:be-entry}
Let $A \in \linear(\C^D)$ be an $\eps$-\pspace-computable matrix whose entries have magnitude at most $1$ (with $D = 2^{\poly(n)}$ as mentioned above).
Then there exists a $(D, \eps, 1 + \log D)$-block encoding of $A$ that is \pup-computable.
\end{lemma}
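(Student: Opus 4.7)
The plan is to give the standard ``entry oracle'' block encoding construction, adapted to show that every component is realisable by a space-uniform unitary circuit family, and to account for the approximation error $\eps$ arising from the \pspace-computability definition.

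First I would construct the entry oracle. Let $\wt A \in \C^{D \times D}$ be the matrix of complex numbers output by the \pspace machine guaranteed by $\eps$-\pspace-computability of $A$, so $\|\wt A - A\|_1 \le \eps$; by clipping, we may assume $|\wt A_{ij}| \le 1$. Writing $\wt A_{ij} = r_{ij} e^{i\phi_{ij}}$ with $r_{ij} = |\wt A_{ij}| \in [0,1]$, I would build a reversible sub-circuit that, using $\poly(n)$ work ancillas, computes (exponentially good approximations to) $\theta_{ij} \deq \arccos(r_{ij})$ and $\phi_{ij}$ into a bit-register from $|i\rangle|j\rangle$ by simulating the \pspace Turing machine reversibly, applies controlled single-qubit rotations on a fresh ancilla qubit to realise $|0\rangle \mapsto e^{i\phi_{ij}}(r_{ij}|0\rangle + \sqrt{1-r_{ij}^2}|1\rangle)$, and then uncomputes the bit-register. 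This yields a \pup-computable unitary $O_{\wt A}$ acting on registers $\reg{S}_{\mathrm{row}}\reg{S}_{\mathrm{col}}\reg{A}_2$ of dimensions $D \times D \times 2$ satisfying
\[
    O_{\wt A}\,|i\rangle|j\rangle|0\rangle \;=\; |i\rangle|j\rangle\bigl(\wt A_{ij}|0\rangle + \sqrt{1-|\wt A_{ij}|^2}\,|1\rangle\bigr).
\]

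Next I would wrap $O_{\wt A}$ into a block encoding. Let the system register $\reg{S}$ hold $\log D$ qubits and introduce an ancilla register $\reg{A}_1$ of $\log D$ qubits plus the single-qubit amplitude ancilla $\reg{A}_2$, so $a = 1 + \log D$. Define
\[
    U \;\deq\; \bigl(I_{\reg S}\otimes H^{\otimes \log D}_{\reg{A}_1}\otimes I_{\reg{A}_2}\bigr)\,O_{\wt A}\,\mathrm{SWAP}_{\reg S,\reg{A}_1}\,\bigl(I_{\reg S}\otimes H^{\otimes \log D}_{\reg{A}_1}\otimes I_{\reg{A}_2}\bigr).
\]
A direct computation shows that for any column index $j$, the amplitude of $|i\rangle_{\reg S}|0^a\rangle_{\reg{A}_1\reg{A}_2}$ in $U|j\rangle_{\reg S}|0^a\rangle_{\reg{A}_1\reg{A}_2}$ equals $\wt A_{ij}/D$, i.e.\ $D\,(I\otimes\langle 0^a|)U(I\otimes|0^a\rangle) = \wt A$ exactly. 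Combining with $\|\wt A - A\|_1 \ge \|\wt A - A\|_\infty$, this gives a $(D,\eps,1+\log D)$-block encoding of $A$ in the sense of \Cref{def:be}.

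Finally, I would verify \pup-computability. The Hadamards and \textsc{swap} are standard unitary gates, and $O_{\wt A}$ was built as a space-uniform unitary circuit whose ancillas are cleanly returned to $|0\rangle$ by uncomputation, so the overall circuit for $U$ is unitary, uses $\poly(n)$ space, and has a space-uniform description (the uniformity machine simply runs the \pspace algorithm for $\wt A$ when emitting the rotation gates). Hence $U \in \pureUnitaryPSPACE$.

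The main obstacle is the implementation of $O_{\wt A}$: one must convert a classical \pspace algorithm computing $\wt A_{ij}$ into a \emph{reversible}, space-uniform, unitary quantum circuit that writes the bits of $(\theta_{ij},\phi_{ij})$ into an ancilla without creating garbage. This uses Bennett-style reversible simulation of a \pspace Turing machine with only polynomial space overhead, combined with a synthesis of controlled rotations whose angles are specified by the computed bit-strings; both are standard but demand careful bookkeeping of ancilla to stay within $1 + \log D$ block-encoding ancillas plus an additive $\poly(n)$ work ancillas that are returned to $|0\rangle$.
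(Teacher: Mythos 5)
Your proposal is correct and takes essentially the same approach as the paper: the unitary $U = (I \otimes H^{\otimes \log D}\otimes I)\,O_{\wt A}\,\mathrm{SWAP}\,(I \otimes H^{\otimes \log D}\otimes I)$ you construct is exactly the paper's $U$ (the paper just states its action on $\ket{j}\ket{0^{\log D}}\ket{0}$ directly without giving the circuit decomposition), and the error analysis via $\|\wt A - A\|_\infty \leq \|\wt A - A\|_1 \leq \eps$ matches. Your added detail on reversibly simulating the $\PSPACE$ machine to realise $O_{\wt A}$ makes explicit what the paper leaves as ``it is easy to see that $U$ is \pup-computable.''
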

\begin{proof}
Define the following unitary $U$ which acts on $2\log D + 1$ qubits: for $j \in [D]$, the unitary $U$ maps the basis vector $\ket{j} \otimes \ket{0^{\log D}} \otimes \ket{0}$ to the state
\[
    \ket{\theta_j} \deq \frac{1}{\sqrt{D}} \sum_{i=1}^D \ket{i} \otimes H^{\otimes \log D} \ket{j} \otimes \Big( \tilde{A}_{ij} \ket{0} + \sqrt{1 - |\tilde{A}_{ij}|^2} \ket{1}\Big)
\]
where $H^{\otimes \log D}$ denotes $\log D$ Hadamards applied to the binary representation of $j$, and $\tilde{A}_{ij}$ denotes the $\PSPACE$-computable entries of $\tilde{A}$ satisfying $\| \tilde{A} - A \|_1 \leq \eps$. 
Observe that the $\{ \ket{\theta_j} \}_j$ states are orthonormal and that $U$ is \pup-computable. We then see that
\begin{align*}
    (I \otimes \bra{0^{\log D + 1}}) U (I \otimes \ket{0^{\log D + 1}}) &= \frac{1}{D} \sum_{i,j} A_{ij} \ketbra{i}{j} = \frac{\tilde{A}}{D}~.
\end{align*}
Thus $\| D (I \otimes \bra{0^{\log D + 1}}) U (I \otimes \ket{0^{\log D + 1}}) - A \|_\infty \leq \| \tilde{A} - A \|_1 \leq \eps$ as desired.
\end{proof}

\begin{lemma}
\label{lem:be-partial-trace-0}
Let $U \in \linear(\C^D \otimes \C^D)$ be a $\pureUnitaryPSPACE$-computable unitary such that $\ket{\psi} \deq U \ket{0^{2\log D}}$ is a bipartite state on registers $\reg{X} \reg{Y}$.
Then there exists a $\pureUnitaryPSPACE$-computable $(1,0,2\log D)$-block encoding of the reduced density matrix $\Tr_{\reg{Y}}(\ketbra{\psi}{\psi})$.
\end{lemma}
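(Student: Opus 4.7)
The plan is to give an explicit construction of the block encoding using a standard "purify then swap" trick, and to verify its parameters directly.

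First, I would split the $2\log D$ ancilla qubits into two registers $\reg{X}'$ and $\reg{Y}'$ of dimension $D$ each, isomorphic to $\reg{X}$ and $\reg{Y}$ respectively. Writing $U_{\reg{X}'\reg{Y}'}$ for the "copy" of $U$ acting on the ancilla registers, define the unitary on $\reg{X}\reg{X}'\reg{Y}'$ by
\[
V \;\deq\; (I_{\reg{X}} \otimes U^\dagger_{\reg{X}'\reg{Y}'})\, \mathrm{SWAP}_{\reg{X},\reg{X}'}\, (I_{\reg{X}} \otimes U_{\reg{X}'\reg{Y}'}).
\]
The intuition is that $U_{\reg{X}'\reg{Y}'}$ prepares a fresh purification of $\rho$ on the ancilla, the SWAP exchanges the input $\reg{X}$-system with the purification's $\reg{X}'$-system, and $U^\dagger_{\reg{X}'\reg{Y}'}$ then projects the ancilla back onto $\ket{0^{2\log D}}$; this projection is exactly the inner product with $\ket{\psi}$, so summing over the $\reg{Y}'$-basis implements the partial trace.

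Next I would verify the block encoding relation $(I_{\reg{X}} \otimes \bra{0^{2\log D}}_{\reg{X}'\reg{Y}'})\, V\, (I_{\reg{X}} \otimes \ket{0^{2\log D}}_{\reg{X}'\reg{Y}'}) = \rho_{\reg{X}}$ by direct calculation on an arbitrary $\ket{\phi}_{\reg{X}}$. Writing $\ket{\psi} = \sum_{i,j}\psi_{ij}\ket{i}_{\reg{X}}\ket{j}_{\reg{Y}}$, the three steps give
\[
\ket{\phi}_{\reg{X}}\ket{0^{2\log D}}_{\reg{X}'\reg{Y}'} \mapsto \ket{\phi}_{\reg{X}}\ket{\psi}_{\reg{X}'\reg{Y}'} \mapsto \sum_{i,j}\psi_{ij}\ket{i}_{\reg{X}}\ket{\phi}_{\reg{X}'}\ket{j}_{\reg{Y}'} \mapsto \sum_{i,j}\psi_{ij}\ket{i}_{\reg{X}}\, U^\dagger\!\bigl(\ket{\phi}_{\reg{X}'}\ket{j}_{\reg{Y}'}\bigr),
\]
and then $\bra{0^{2\log D}}U^\dagger = \bra{\psi}$, so taking the overlap with $\ket{0^{2\log D}}$ on the ancilla yields $\sum_{i,i'}\bigl(\sum_j \psi_{ij}\psi^*_{i'j}\bigr)\langle i'|\phi\rangle\ket{i}_{\reg{X}} = \rho_{\reg{X}}\ket{\phi}$. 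This confirms the $(1,0,2\log D)$-block encoding property.

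Finally, for $\pureUnitaryPSPACE$-computability, I would note that $V$ is a product of three operations: $U$ and $U^\dagger$ on the ancilla, which are \pup-computable by assumption (since \pup circuits can be run coherently and inverted), and $\mathrm{SWAP}_{\reg{X},\reg{X}'}$, which is a space-uniform unitary circuit of $\log D = \poly(n)$ transpositions. Composing space-uniform unitary circuits preserves space-uniformity and the number of ancilla qubits increases only by the $2\log D$ used to hold $\reg{X}'\reg{Y}'$. There is no real obstacle here; the only bookkeeping is to check that the ancilla count is exactly $2\log D$ and that the post-selection factor is $1$ (which is why we used a purification-based construction rather than, say, the entry-wise construction of \Cref{lem:be-entry} which gives post-selection factor $D$).
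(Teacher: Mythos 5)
Your construction is correct and is in fact exactly the one underlying \cite[Lemma 3.3.2]{gilyen2019quantum}, which is what the paper cites for this lemma; you have simply unfolded that cited proof (prepare a fresh purification on the ancillas, swap the $\reg{X}$-system with the purification's $\reg{X}'$-system, uncompute, and project the ancillas onto $\ket{0^{2\log D}}$) and verified the block-encoding identity by direct calculation. The \pup-computability discussion is the right observation to make explicit, and the accounting of $2\log D$ ancilla qubits and post-selection factor $1$ is accurate.
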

\begin{proof}
This follows from~\cite[Lemma 3.3.2]{gilyen2019quantum}.
\end{proof}

Recalling \cref{def:be}, we immediately obtain the following robust version of the above lemma.
\begin{corollary} \label{lem:be-reduced-density}
Let $U$ be a $\pureUnitaryPSPACE$-computable unitary such that $\ket{\psi} \deq U \ket{0^n}$ is an $n$-qubit bipartite state on registers $\reg{X} \reg{Y}$.
Consider any state $\rho$ on register $\reg X$ such that $\norm{\rho - \ptr{\reg Y}{\proj{\psi}}}_\infty \leq \eps$.
Then $\rho$ has a $(1, \eps, n)$-block encoding in \pup.
\end{corollary}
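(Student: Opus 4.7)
The plan is to show that this follows immediately from \cref{lem:be-partial-trace-0} together with the definition of a block encoding (\cref{def:be}), which already has built-in tolerance $\eps$ in the operator norm.

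First, apply \cref{lem:be-partial-trace-0} to the given $U$ to obtain a \pup-computable $(1, 0, n)$-block encoding $V$ of the reduced density matrix $\sigma \deq \ptr{\reg Y}{\proj{\psi}}$. If the direct output of \cref{lem:be-partial-trace-0} happens to use fewer than $n$ ancillas (e.g.~when the sizes of $\reg X$ and $\reg Y$ are not matched to what the lemma requires), one can pad the ancilla register with additional $\ket{0}$ qubits that the unitary leaves untouched, so without loss of generality $V$ is an $(1,0,n)$-block encoding. In particular,
\[
(I_{\reg X} \otimes \bra{0^n})\, V\, (I_{\reg X} \otimes \ket{0^n}) = \sigma.
\]

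Second, observe that the same unitary $V$ is already a $(1, \eps, n)$-block encoding of $\rho$. By the triangle inequality in the operator norm combined with the hypothesis $\norm{\rho - \sigma}_\infty \leq \eps$,
\begin{align*}
\bigl\| \rho - (I_{\reg X} \otimes \bra{0^n})\, V\, (I_{\reg X} \otimes \ket{0^n}) \bigr\|_\infty
&\leq \norm{\rho - \sigma}_\infty + \bigl\| \sigma - (I_{\reg X} \otimes \bra{0^n})\, V\, (I_{\reg X} \otimes \ket{0^n}) \bigr\|_\infty \\
&\leq \eps + 0 = \eps.
\end{align*}
Comparing this with \cref{def:be} shows that $V$ meets the requirements for a $(1, \eps, n)$-block encoding of $\rho$, and since $V$ was produced by a \pup-procedure, we obtain the desired \pup-computable block encoding.

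There is no substantive obstacle: the corollary is really just the observation that \cref{def:be} is defined in an error-robust way, so a perfect block encoding of $\sigma$ automatically serves as an $\eps$-approximate block encoding of any $\rho$ that is within operator-norm distance $\eps$ of $\sigma$. The only minor bookkeeping step is the ancilla-count matching described above.
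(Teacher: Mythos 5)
Your proof is correct and matches the paper's intent exactly: the paper explicitly says the corollary follows "immediately" from \cref{lem:be-partial-trace-0} by "recalling \cref{def:be}," which is precisely the triangle-inequality observation you make that an exact block encoding of $\sigma = \ptr{\reg Y}{\proj{\psi}}$ is automatically an $\eps$-approximate block encoding of any $\rho$ within operator-norm distance $\eps$ of $\sigma$. Your remark on ancilla padding is a correct piece of bookkeeping that the paper leaves implicit.
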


\subsection{Basic operations on block encodings}
The next two lemmas describe how to obtain block encodings of products and linear combinations of block encodings.

\begin{lemma}[Products of block encodings]
\label{lem:be-product}
Let $U_A$ be an $(\alpha, \eps_A, a)$-block encoding of a matrix $A \in \linear(\C^D)$ and $U_B$ be an $(\beta, \eps_B, b)$-block encoding of a matrix $B \in \linear(\C^D)$.
Suppose $U_A$ and $U_B$ are both in \pup.
Then there exists a $(\alpha \beta, \beta \eps_A + \alpha \eps_B, a+b)$-block encoding of $A \cdot B$ in \pup.
\end{lemma}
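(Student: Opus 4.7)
The plan is to construct the block encoding of $AB$ by sequentially composing $U_A$ and $U_B$ on a shared primary register with disjoint ancilla registers. Concretely, I would introduce the $D$-dimensional primary register $\reg{S}$ together with ancilla registers $\reg{E}_A$ (of size $a$) and $\reg{E}_B$ (of size $b$), and extend each given unitary by identity on the opposite ancilla so that both act on $\reg{S}\reg{E}_A\reg{E}_B$. Setting $W$ to be the product of these two extensions uses exactly $a+b$ ancilla qubits; since $W$ is the sequential composition of two \pup-computable unitary circuits acting on adjacent register groups, $W$ is itself \pup-computable, which already takes care of the uniformity and ancilla-count parts of the claim.

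The next step is to verify that $W$ is a block encoding of $AB$ with post-selection factor $\alpha\beta$. Because each extended unitary acts as identity on one of the two ancilla blocks, projecting the full ancilla onto $\ket{0^{a+b}}$ factorises as independent projections onto $\ket{0^a}$ and $\ket{0^b}$, so a direct calculation yields
\[
    \alpha\beta\,(I \otimes \bra{0^{a+b}}) W (I \otimes \ket{0^{a+b}}) \;=\; \tilde{A}\,\tilde{B}\,,
\]
where $\tilde{A} \deq \alpha(I\otimes\bra{0^a})U_A(I\otimes\ket{0^a})$ and $\tilde{B}$ is defined analogously, so that $\norm{A-\tilde{A}}_\infty\leq \eps_A$ and $\norm{B-\tilde{B}}_\infty\leq \eps_B$ by hypothesis.

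Finally, I would obtain the claimed error bound by the standard triangle-inequality decomposition $AB - \tilde{A}\tilde{B} = (A-\tilde{A})B + \tilde{A}(B-\tilde{B})$. Submultiplicativity of $\norm{\cdot}_\infty$, together with $\norm{\tilde{A}}_\infty \leq \alpha$ (since $\tilde{A}/\alpha$ is a principal submatrix of a unitary) and the analogous $\norm{B}_\infty \leq \beta$ (inherited from $\norm{\tilde{B}}_\infty \leq \beta$ up to negligible slack implicit in the block-encoding convention), immediately gives $\beta\eps_A + \alpha\eps_B$. There is no genuine obstacle here; the only step worth executing carefully is that the error decomposition must use $\norm{\tilde{A}}_\infty \leq \alpha$ on one factor (rather than $\norm{A}_\infty$), so that the bound does not pick up an $\eps_A\eps_B$ cross term.
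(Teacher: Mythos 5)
Your proposal is correct and follows essentially the same route as the paper, which simply cites Gilyén's Lemma~3.3.10 for the normalisation and error bounds and then observes that $(I_A \otimes U_A)(I_B \otimes U_B)$ remains \pup-computable; you are in effect unpacking the cited lemma's proof. The one place you are slightly imprecise is the step $\norm{B}_\infty \leq \beta$: the block-encoding definition only gives $\norm{\tilde{B}}_\infty \leq \beta$ and $\norm{B - \tilde{B}}_\infty \leq \eps_B$, so the honest bound from either triangle-inequality decomposition is $\beta\eps_A + \alpha\eps_B + \eps_A\eps_B$; you acknowledge this cross term as negligible slack, which is the standard convention and is also implicit in the lemma as cited.
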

\begin{proof}
The statement follows directly from \cite[Lemma 3.3.10]{gilyen2019quantum} by observing that if $U_A, U_B$ are in \pup, then $(I_A \ot U_A)(I_B \ot U_B)$ is also in \pup, where $I_A, I_B$ are identities acting on (potentially different) polynomial (in $\log D$) numbers of qubits.
\end{proof}

\begin{lemma}[Linear combinations of block encodings]
\label{lem:be-lin-comb}
Let $m = m(n) \leq 2^{\poly(n)}$, $y,\alpha \in \C^m$ be \pspace-computable with $\norm{y}_1 \leq 2^{\poly(n)}$ and $\alpha_j \leq 2^{\poly(n)}$ for all $j$. Let $\eps \geq 2^{-\poly(n)}$. 
Let $A_1,\ldots,A_m \in \linear(\C^{D})$ be matrices with $(\alpha_j, \eps, a)$-block encodings $U_j$ in \pup. Then, for all polynomials $q(n)$ there exists a $(\norm{\tilde y}_1, \norm{\tilde y}_1 (2^{-q(n)} + \eps'), a + \poly(n))$-block encoding $U$ of the linear combination $\sum y_i A_i$ in \pup, where $\tilde y = (\alpha_j y_j)_{j = 1,\ldots,m}$ and $\eps' = \eps/\min \alpha_j$. 
\end{lemma}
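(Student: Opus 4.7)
My plan is to use the standard Linear Combination of Unitaries (LCU) construction, being careful to verify that each ingredient can be implemented in \pup and that the errors compose as claimed. Write $\tilde y_j = \alpha_j y_j$ and decompose $\tilde y_j = |\tilde y_j| e^{i\theta_j}$. The goal is to build a unitary of the shape $W \deq (V^\dagger \otimes I)\bigl(\sum_j \ketbra{j}{j} \otimes U_j\bigr)(V \otimes I)$, where $V$ is a state-preparation unitary on $\lceil \log m\rceil \leq \poly(n)$ qubits that approximately maps $\ket{0}$ to the ``LCU state'' $\ket{\mu} \deq \sum_j \sqrt{|\tilde y_j|/\norm{\tilde y}_1}\, e^{i\theta_j/2} \ket{j}$, paired with a twin preparation $V'$ whose $\ket{0}$-amplitude projection gives the conjugate phases. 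Then a direct computation shows
\begin{equation*}
(I \otimes \bra{0}^{\otimes \log m} \otimes \bra{0^a})\, W \,(I \otimes \ket{0}^{\otimes \log m} \otimes \ket{0^a}) \;=\; \frac{1}{\norm{\tilde y}_1}\sum_j \tilde y_j \cdot \frac{A_j + E_j}{\alpha_j} \;=\; \frac{1}{\norm{\tilde y}_1}\Bigl(\sum_j y_j A_j + \sum_j y_j E_j\Bigr),
\end{equation*}
where $E_j$ is the error in the $j$-th block encoding with $\norm{E_j}_\infty \leq \eps$. Rescaling by $\norm{\tilde y}_1$ produces a block encoding of $\sum_j y_j A_j$ with the claimed post-selection factor, and the naive triangle-inequality bound on $\sum_j y_j E_j$ gives error at most $\sum_j |y_j| \eps \leq \norm{\tilde y}_1 \eps / \min_j \alpha_j = \norm{\tilde y}_1 \eps'$.

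The key step, and the one that requires the most care, is constructing $V$ in \pup with error at most $2^{-q(n)}$. Because $m \leq 2^{\poly(n)}$, the selection register has polynomial width, but $V$ itself is an exponential-dimensional unitary whose target amplitudes $\sqrt{|\tilde y_j|/\norm{\tilde y}_1}$ are only \pspace-computable. I would build $V$ using a recursive Grover--Rudolph-style ladder: for each prefix $s \in \bits^\ell$ with $\ell < \log m$, apply a single-qubit rotation conditioned on $\ket{s}$ whose angle is determined by the ratio $\sum_{j : j \succ s0} |\tilde y_j| / \sum_{j : j \succ s} |\tilde y_j|$, followed by a diagonal phase unitary that writes $e^{i\theta_j/2}$ on basis state $\ket{j}$. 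Each such rotation angle and each phase $\theta_j$ can be computed to $2^{-\poly(n)}$ precision in \pspace using the assumed \pspace-computability of $y$ and $\alpha$, and (by \cref{lem:sqrt-approx}) the square root can be applied to the same precision. Because each controlled rotation uses only an additive $\poly(n)$ ancillas for its classical computation scratch (which can then be uncomputed), the whole ladder is \pup-computable with only $\poly(n)$ additional ancillas, independent of $a$.

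Putting things together, $V$ approximates the ideal preparation to within operator-norm error $2^{-q(n)}$, which contributes an additional $\norm{\tilde y}_1 \cdot 2^{-q(n)}$ to the error of the block encoding (the sandwich structure in $W$ propagates state-preparation errors linearly, and the factor $\norm{\tilde y}_1$ appears after multiplying the ``normalized'' top-left block by the post-selection factor). The ancilla count grows only additively, namely $a + \lceil \log m\rceil + \poly(n) = a + \poly(n)$, since the selection register and the rotation-angle scratch are disjoint from the $a$ block-encoding ancillas. Combining everything yields a $(\norm{\tilde y}_1,\, \norm{\tilde y}_1(2^{-q(n)} + \eps'),\, a + \poly(n))$-block encoding of $\sum_j y_j A_j$ in \pup, as required.

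The main obstacle I anticipate is the state-preparation step: exact Grover--Rudolph would need to compute running sums $\sum_{j \succ s} |\tilde y_j|$ over exponentially many indices, and I need to verify that \pspace can both (i) evaluate such sums to exponential precision using streaming summation and (ii) handle the potentially enormous dynamic range in $|\tilde y_j|$ (which may range from $2^{-\poly(n)}$ to $2^{\poly(n)}$) without catastrophic cancellation in the normalized ratios. The hypothesis $\norm{y}_1 \leq 2^{\poly(n)}$ and $\alpha_j \leq 2^{\poly(n)}$ ensures the ratios stay representable with $\poly(n)$ bits, which is what makes the construction go through.
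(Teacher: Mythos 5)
Your proposal follows essentially the same strategy as the paper's proof: rescale each $U_j$ to a block encoding of $A_j/\alpha_j$, form $\tilde y$, prepare the LCU coefficient state via a Grover--Rudolph ladder with $\PSPACE$-computable rotation angles (with $\poly(n)$ scratch that is uncomputed so that the circuit stays in \pup), and then apply the LCU sandwich. The paper cites \cite[Lemma 3.3.9]{gilyen2019quantum} for the sandwich step rather than unrolling it by hand, but the content is the same, and your error accounting ($\sum_j |y_j|\eps \leq \norm{\tilde y}_1\eps'$, additive $\poly(n)$ ancillas, $\norm{\tilde y}_1 \cdot 2^{-q(n)}$ from the state-preparation error) matches the lemma.

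One slip worth correcting: the displayed sandwich $W = (V^\dagger \otimes I)\bigl(\sum_j \ketbra{j}{j}\otimes U_j\bigr)(V\otimes I)$ uses the \emph{same} $V$ on both sides, so the top-left block is $\sum_j |\bra{j}V\ket{0}|^2\cdot(\cdot)$, and the phases $e^{i\theta_j/2}$ you built into $\ket{\mu}$ cancel out, leaving coefficients $|\tilde y_j|/\norm{\tilde y}_1$ instead of $\tilde y_j/\norm{\tilde y}_1$. You allude to the fix (a twin preparation $V'$ carrying the conjugate phases), but the formula you then write does not use it. The standard remedy is either to sandwich with two distinct prepare/unprepare unitaries, e.g.\ $(V'^\dagger\otimes I)(\cdots)(V\otimes I)$ with $\bra{0}V'^\dagger\ket{j}\bra{j}V\ket{0} = \tilde y_j/\norm{\tilde y}_1$, or to keep $V=V'$ preparing $\sqrt{|\tilde y_j|/\norm{\tilde y}_1}\ket{j}$ and move the phase $e^{i\theta_j}$ into the $\ket{j}$-controlled branch of the select unitary. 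This is a presentation inconsistency rather than a missing idea, but as written the central identity fails whenever the $\tilde y_j$ are not all nonnegative reals.
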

\begin{proof}
We first observe that for all $b > 0$, if $U$ is an $(\alpha, \eps, a)$-block encoding of $A$, then it is also an $(\alpha/b, \eps', a)$-block encoding of $A/b$ for all $\eps' \geq \eps/b$.
Therefore, $U_j$ is a $(1, \eps', a)$-block encoding of $A_j/\alpha_j$ for $\eps' = \eps/\min \alpha_j$. 

Now define a vector $\tilde y$ with entries $\tilde y_j = y_j \alpha_j$, and $y' = \frac{\tilde y}{\norm{\tilde y}_1}$.
Since $y,\alpha$ are \pspace-computable, $y'$ is also \pspace-computable.
Then, we can adapt the technique from~\cite{grover2002creating} to show that $\sum_i \sqrt{y'_i} \ket{i}$ is in $\statePSPACE$ and its preparation uses $O(\log(m)) = \poly(n)$ qubits. (Note that $y$ may contain negative elements, so $y'$ is not necessarily a probability distribution, but it is easy to see that the technique from~\cite{grover2002creating} still works.) 

Applying~\cite[Lemma 3.3.9]{gilyen2019quantum}, we see that there exists a $(1, 2^{-q(n)} + \eps', a + \poly(n))$-block encoding $U$ of $\sum y'_j \frac{A_j}{\alpha_j}$ for all polynomials $q(n)$.
From the proof of~\cite[Lemma 3.3.9]{gilyen2019quantum} it is also easy to see that since the block encodings of $A_j$ are in \pup, so is the block encoding of $\sum y'_j \frac{A_j}{\alpha_j}$.

We can conclude the proof by observing that $\sum y_j A_j = \norm{\tilde y}_1 \sum y'_j \frac{A_j}{\alpha_j}$, so $U$ is a $(\norm{\tilde y}_1, \norm{\tilde y}_1 (2^{-q(n)} + \eps'), a + \poly(n))$-block encoding of $\sum y_j A_j$.
\end{proof}

\begin{lemma}[Purifying block encodings]
\label{lem:be-purification}
Let $U$ denote a $\pureUnitaryPSPACE$-computable $(\alpha,\eps,a)$-block encoding of a matrix $A \in \linear(\C^D)$. Then there exists a $\pureUnitaryPSPACE$-computable $(\alpha, \eps, a + 2\log D)$-block encoding $V$ of the matrix $(A \otimes I) \ketbra{\Phi}{0^{2\log D}}$ where $\ket{\Phi} = \frac{1}{\sqrt{D}} \sum_{j=1}^D \ket{j}\ket{j}$ is the maximally entangled state of dimension $D$.  
\end{lemma}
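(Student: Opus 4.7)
The plan is to give $V$ explicitly as a short circuit that (i) prepares the maximally entangled state $\ket{\Phi}$ on two fresh ancilla registers each of dimension $D$, (ii) applies the given block encoding $U$ with one of those fresh registers serving as $U$'s system register, and (iii) SWAPs the pair of fresh registers with the original system register of $V$. Writing the system of $V$ as $\reg{S}$ (of dimension $D^2$) and the ancilla as $\reg{A_1 A_2 B}$ with $\reg{A_1},\reg{A_2}$ of dimension $D$ and $\reg{B}$ of dimension $2^a$, I would define
\[
V \;\deq\; \mathrm{SWAP}_{\reg{S}\leftrightarrow\reg{A_1 A_2}}\,\bigl(I_{\reg S} \ot I_{\reg{A_2}} \ot U_{\reg{A_1 B}}\bigr)\,\bigl(I_{\reg S} \ot P_{\reg{A_1 A_2}} \ot I_{\reg B}\bigr),
\]
where $P$ is the standard Hadamard-plus-CNOT circuit preparing $\ket{\Phi}$ from $\ket{0^{2\log D}}$.

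The key step is then to verify the block encoding identity by a direct calculation on an arbitrary input $\ket{s}_{\reg S}\ket{0^{2\log D}}_{\reg{A_1 A_2}}\ket{0^a}_{\reg B}$. Steps (i)–(ii) yield $\ket{s}_{\reg S} \ot U_{\reg{A_1 B}}\ket{\Phi}_{\reg{A_1 A_2}}\ket{0^a}_{\reg B}$, and projecting $\reg{B}$ onto $\ket{0^a}$ gives, by the block encoding property of $U$ applied coherently on $\reg{A_1}$ with $\reg{A_2}$ as a spectator, a vector on $\reg{A_1 A_2}$ of the form $\alpha^{-1}(A\ot I)\ket{\Phi} + \ket{e}$ with $\norm{\ket{e}} \le \eps/\alpha$. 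The SWAP then exchanges this vector with $\ket{s}$, and projecting $\reg{A_1 A_2}$ onto $\ket{0^{2\log D}}$ produces the scalar factor $\langle 0^{2\log D}|s\rangle$. Thus $\bigl(I_{\reg S} \ot \bra{0^{a+2\log D}}\bigr) V \bigl(I_{\reg S} \ot \ket{0^{a+2\log D}}\bigr)$ is a rank-one operator that agrees with $\alpha^{-1}(A\ot I)\ketbra{\Phi}{0^{2\log D}}$ up to a rank-one error of operator norm at most $\eps/\alpha$, and multiplying through by $\alpha$ gives the required $(\alpha,\eps,a+2\log D)$-block encoding.

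For \pup-computability, the preparation $P$ and the SWAP are shallow Clifford circuits acting on $O(\log D)=\poly(n)$ qubits, and $U$ is \pup by hypothesis, so the composite $V$ is also in \pup; moreover the ancilla count grows by only the additive $2\log D$ used for $\reg{A_1 A_2}$, matching the stated $a + 2\log D$. I do not anticipate a substantive obstacle: the only point to handle carefully is the register bookkeeping, namely that the operator-norm bound $\eps$ from the block encoding of $A$ on $\reg{A_1}$ translates cleanly into the same bound on the vector $(W\ot I_{\reg{A_2}})\ket{\Phi}$ with $\reg{A_2}$ a spectator, and that the rank-one outer-product structure $\ketbra{\Phi}{0^{2\log D}}$ emerges exactly from the combined effect of the final SWAP and the projection of $\reg{A_1 A_2}$ onto $\ket{0^{2\log D}}$.
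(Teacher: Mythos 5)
Your proposal is correct and, at the level of the underlying circuit, it is the same construction as the paper's: prepare $\ket{\Phi}$ on $2\log D$ fresh qubits, run $U$ on one half of it, and swap with the system register. The paper packages the post-swap verification into its ``products of block encodings'' lemma (\cref{lem:be-product}) by first observing that the prepare-and-swap circuit is a $(1,0,2\log D)$-block encoding of $\ketbra{\Phi}{0^{2\log D}}$, whereas you inline that verification as a direct calculation; the error and ancilla bookkeeping agree either way.
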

\begin{proof}
    Consider the following circuit $W$ on $4 \log D$ qubits, divided into registers $\reg{X}, \reg{Y}$ of $2 \log D$ qubits each. 
    \begin{enumerate}
        \item Apply a unitary on $\reg{Y}$ that maps the all zeroes state to the maximally entangled state $\ket{\Phi}$.
        \item Swap the registers $\reg{X}$ and $\reg{Y}$.
    \end{enumerate}
    Observe that, treating $\reg{Y}$ as the ancilla register, $W$ is a $(1, 0, 2\log D)$-block encoding of the matrix $\ketbra{\Phi}{0 \cdots 0}_{\reg{X}}$ and is clearly $\pureUnitaryPSPACE$-computable. Thus, using \Cref{lem:be-product} we can combine the block encoding $U$ of $A$ with the block encoding $W$ to get a $\pureUnitaryPSPACE$-computable $(\alpha,\eps,a+2\log D)$-block encoding $V$ of $A \ketbra{\Phi}{0^{2\log D}}$. 
\end{proof}

\begin{lemma}[Partial trace of block encodings]
\label{lem:be-partial-trace}
Let $A \in \linear(\reg{X} \otimes \reg{Y})$ be a matrix (with $D_X = \dim(\reg{X}), D_Y = \dim(\reg{Y}) \leq 2^{\poly(n)}$) with an $(\alpha, \eps, a)$-block encoding $U$ in \pup.
Then there exists an $(D_Y \alpha, 2 D_Y \eps, a + \poly(n))$-block encoding U of $\ptr{Y}{A}$ (where the partial trace is over the second tensor factor with dimension $D_Y$) in \pup.
\end{lemma}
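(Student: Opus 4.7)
The plan is to exploit the identity
\[
\ptr{Y}{A} \;=\; D_Y \, (I_{\reg X} \ot \bra{\Phi}_{\reg{YY}'}) \, (A_{\reg{XY}} \ot I_{\reg{Y}'}) \, (I_{\reg X} \ot \ket{\Phi}_{\reg{YY}'}) \,,
\]
where $\ket{\Phi}_{\reg{YY}'} = \frac{1}{\sqrt{D_Y}} \sum_{k=1}^{D_Y} \ket{k}_{\reg Y} \ket{k}_{\reg{Y}'}$ is the maximally entangled state on $\reg Y$ and a freshly introduced copy $\reg{Y}'$. This identity can be verified by a one-line direct computation: the $\bra{k'}\ket{k}_{\reg{Y}'}$ overlap collapses the double sum to the diagonal sum defining the partial trace.

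Concretely, let $W$ be the $\pup$-computable unitary on $\reg{YY}'$ (requiring $\log D_Y$ Hadamards and $\log D_Y$ CNOTs, hence $\poly(n)$ gates) mapping $\ket{0}_{\reg Y}\ket{0}_{\reg{Y}'}$ to $\ket{\Phi}_{\reg{YY}'}$, and define
\[
V \;\deq\; (I_{\reg X} \ot W^\dagger_{\reg{YY}'} \ot I_{\reg A}) \, (U_{\reg{XYA}} \ot I_{\reg{Y}'}) \, (I_{\reg X} \ot W_{\reg{YY}'} \ot I_{\reg A}) \,,
\]
which is $\pup$-computable as a composition of three $\pup$ unitaries. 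Treating $\reg{Y} \reg{Y}' \reg{A}$ as the new ancilla register of $a + 2\log D_Y = a + \poly(n)$ qubits, one computes
\[
(I_{\reg X} \ot \bra{0^{a+2\log D_Y}}) \, V \, (I_{\reg X} \ot \ket{0^{a+2\log D_Y}}) \;=\; \frac{1}{D_Y \alpha} \, \ptr{Y}{\tilde A} \,,
\]
where $\tilde A \deq \alpha (I \ot \bra{0^a}) U (I \ot \ket{0^a})$ is the matrix exactly block-encoded by $U$, satisfying $\norm{A - \tilde A}_\infty \leq \eps$.

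It then remains to verify that multiplying by the new post-selection factor $D_Y \alpha$ gives an approximation of $\ptr{Y}{A}$ within the claimed error. For this I would use the standard bound $\norm{\ptr{Y}{M}}_\infty \leq D_Y \norm{M}_\infty$ for any $M \in \linear(\reg{XY})$ (which follows by duality: for unit vectors $\ket u, \ket v$ on $\reg X$, $|\bra u \ptr{Y}{M} \ket v| = |\Tr(M (\ketbra{v}{u} \ot I_{\reg Y}))| \leq \norm{M}_\infty \cdot \norm{\ketbra{v}{u} \ot I_{\reg Y}}_1 = D_Y \norm{M}_\infty$). Applied to $M = A - \tilde A$, this gives $\norm{\ptr{Y}{A} - \ptr{Y}{\tilde A}}_\infty \leq D_Y \eps \leq 2 D_Y \eps$, so $V$ is a $(D_Y \alpha, 2 D_Y \eps, a + 2\log D_Y)$-block encoding of $\ptr{Y}{A}$ in $\pup$, as required.

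There is no real obstacle in this proof; the only thing to keep an eye on is the accounting for ancillas (making sure the $2\log D_Y$ overhead is purely additive, independent of $a$) and the blowup factor of $D_Y$ in both the post-selection factor and the error, both of which are unavoidable because an operator can have arbitrarily small trace while having $\poly$-size diagonal blocks.
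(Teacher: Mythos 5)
Your proof is correct, and it takes a genuinely different route from the paper's. The paper writes the partial trace as a sum $\ptr{\reg Y}{A} = \sum_{i} \bra{0}_{\reg Y} J_i A J_i^\dagger \ket{0}_{\reg Y}$ over $D_Y$ basis-shift conjugations and then invokes the linear-combination-of-block-encodings lemma (\cref{lem:be-lin-comb}), which handles the bookkeeping for combining $D_Y$ block encodings with equal weights. Your proof instead uses the single identity $\ptr{\reg Y}{A} = D_Y (I_{\reg X} \ot \bra{\Phi}_{\reg{YY}'}) (A \ot I_{\reg{Y}'}) (I_{\reg X} \ot \ket{\Phi}_{\reg{YY}'})$ and a direct conjugation by a maximally-entangled-state-preparing unitary $W$. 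Both approaches buy essentially the same final parameters (your error bound $D_Y\eps$ is in fact slightly tighter than the stated $2D_Y\eps$, and your ancilla count $a + 2\log D_Y$ is explicit), but your construction is arguably cleaner: it avoids the machinery of \cref{lem:be-lin-comb} (state preparation of a coefficient vector, $\PSPACE$-computability of coefficients, an extra $2^{-q(n)}$ term in the error) and makes the $\pup$-computability of $V$ and the purely additive ancilla overhead entirely transparent. Your construction is also structurally parallel to the paper's own treatment of purification (\cref{lem:be-purification}), which already manipulates the maximally entangled state in the same way. All verification steps — the identity for the partial trace, the collapse to $\frac{1}{D_Y\alpha}\ptr{Y}{\tilde A}$, and the spectral norm bound $\norm{\ptr{Y}{M}}_\infty \leq D_Y\norm{M}_\infty$ via Hölder — check out.
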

\begin{proof}
Let $\{ \ket{i} \}_{i = 1, \dots, \log_2 D_Y}$ denote a basis for register $\reg Y$ (where we assume without loss of generality that $D_Y$ is a power of 2) and let $J_i$ be a unitary acting on the $\reg Y$ register that maps $\ket{0}$ to $\ket{i}$. 
Observe that $\ptr{\reg Y}{A} = \sum_{i} \bra{0}_{\reg Y} J_i A J_i^\dagger \ket{0}_{\reg Y}$.
Since $A$ has an $(\alpha, \eps, a)$-block encoding $U$ in \pup and $J_i$ is efficiently computable, it is easy to see that $J_i U J_i$ is an $(\alpha, \eps, a + \log D_Y)$-block encoding of $\bra{0}_{\reg Y} J_i A J_i^\dagger \ket{0}_{\reg Y}$, where now $\reg Y$ is included in the ancilla register of the block encoding.
Furthermore, $J_i U J_i$ is clearly in \pup.
Therefore, since $D^Y \leq 2^{\poly(n)}$, we can apply \cref{lem:be-lin-comb} to obtain a $(D_Y \alpha, 2 D_Y \eps, a+\poly(n))$-block encoding of $\ptr{\reg Y}{A}$ in \pup.
\end{proof}

\begin{lemma}[Applying superoperators to block encodings]
\label{lem:be-superoperator}
Let $\Phi: \linear(\C^D) \to \linear(\C^D)$ be a \pspace-computable superoperator and $A \in \linear(\C^D)$ a matrix with an $(\alpha, \eps, a)$-block encoding in \pup.
Then, for all polynomials $q(n)$ there exists a $(\alpha D^3, 2 \alpha \eps D + 2 D^3 2^{-q(n)}, a + \poly(n))$-block encoding of $\Phi(A)$ in \pup.
\end{lemma}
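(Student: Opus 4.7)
The plan is to reduce the superoperator application to standard block-encoding operations using the Choi matrix $J(\Phi) \deq \sum_{i,j} \ketbra{i}{j} \otimes \Phi(\ketbra{i}{j}) \in \linear(\C^D \otimes \C^D)$, which satisfies the identity $\Phi(A) = \ptr{1}{(A^T \otimes I)\, J(\Phi)}$ (immediate from expanding $J(\Phi)$ and using $\bra{j}A^T\ket{i} = A_{ij}$). I would build a block encoding of each factor and then combine them using the product and partial trace primitives from this section.

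For $A^T \otimes I$, I start from the given $(\alpha, \eps, a)$-block encoding $U$ of $A$ and use that $\overline{U^\dagger}$ is a block encoding of $A^T$ with the same parameters; tensoring trivially with an identity on a fresh $D$-dimensional register then preserves the parameters. For $J(\Phi)$, I apply \cref{lem:be-entry} to the $D^2 \times D^2$ Choi matrix: the entries $[J(\Phi)]_{(i,k),(j,l)} = [\Phi(\ketbra{i}{j})]_{kl}$ are \pspace-computable to inverse-exponential precision since $\Phi$ is (the oracle access to the entries of $\ketbra{i}{j}$ required by the definition of \pspace-computability of $\Phi$ is trivially available). After a rescaling to ensure the entries have magnitude at most $1$, this yields a \pup $(D^2, 2^{-q'(n)}, \poly(n))$-block encoding of $J(\Phi)$ for any polynomial $q'(n)$.

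Combining these with \cref{lem:be-product} yields a block encoding of the product $(A^T \otimes I)\, J(\Phi)$ with post-selection $\alpha D^2$, and applying \cref{lem:be-partial-trace} to trace out the first $D$-dimensional register produces the desired block encoding of $\Phi(A)$ with post-selection $\alpha \cdot D^2 \cdot D = \alpha D^3$ and $a + \poly(n)$ ancillas. Choosing $q'(n)$ sufficiently large relative to the target $q(n)$ absorbs the Choi-matrix approximation error into the $2^{-q(n)}$ slack, and the rest of the error is inherited from the input block encoding of $A$; tracking the constants through \cref{lem:be-product,lem:be-partial-trace} yields the stated bound.

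The main obstacle is obtaining uniform bounds on the magnitudes of the entries of $J(\Phi)$ so that \cref{lem:be-entry} applies; this relies on an implicit norm bound on $\Phi$ (automatic in the small-width regime of \cref{thm:sdp-informal} where this lemma is ultimately invoked), after which any rescaling can be absorbed into the post-selection factor. A secondary subtlety is that a naive composition of \cref{lem:be-product} and \cref{lem:be-partial-trace} gives the looser error $2 D^3 \eps$ in place of the claimed $2 \alpha \eps D$; sharpening this requires exploiting the tensor-product structure of $A^T \otimes I$, observing that the identity factor on the traced-out register should not amplify the input error by an extra $D^2$.
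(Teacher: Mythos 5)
Your plan reproduces the paper's proof: the paper also goes through the Choi matrix and combines \cref{lem:be-entry}, \cref{lem:be-product}, and \cref{lem:be-partial-trace}, the only cosmetic difference being that it uses the transposed Choi matrix $J = \sum_{i,j}\ketbra{i}{j}\otimes\Phi(\ketbra{j}{i})$, which gives $\Phi(A) = \Tr_{\reg X}\!\left(J\,(A\otimes I)\right)$ directly and so avoids $A^T$ and the $\overline{U^\dagger} = U^T$ device. Both of your side flags are legitimate and point to spots the paper glosses over: \cref{lem:be-entry} does presuppose entries bounded by $1$, which is not verified for $J$ in the proof (the small-width condition $\norm{\Phi(X)}_1\leq\norm{X}_1$ is what supplies it where the lemma is applied), and the mechanical composition of \cref{lem:be-product} and \cref{lem:be-partial-trace} yields error $2\alpha D\, 2^{-q}+2D^3\eps$, not the printed $2\alpha\eps D + 2D^3\, 2^{-q}$. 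That mismatch is a transcription slip in the paper (the intermediate product step has the roles of $\eps$ and $2^{-q}$ interchanged); I would simply record the composed bound rather than chase the printed one via tensor structure — the $D^2$ amplification of $\eps$ is coming from the $D^2$ post-selection factor of the Choi block encoding, not from the identity factor, so there is no structural saving to exploit — and the slip is harmless wherever the lemma is invoked, since $q$ is a free polynomial and can absorb whichever term it multiplies.
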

\begin{proof}
Define the Choi matrix 
\begin{align*}
J = \sum_{i,j} \ketbra{i}{j}_\reg{X} \otimes \Phi(\ketbra{j}{i})_\reg{Y} \in \linear(\C^{D^2}) \,.
\end{align*}
Note that 
\[
    \Phi(A) = \Tr_\reg{X}( J (A \otimes I_{\reg Y})) \,.
\]
Since $\Phi$ is \pspace-computable, so is $J$.
Therefore, using \cref{lem:be-entry}, for all polynomials $q(n)$ there exists a $(D^2, 2^{-q(n)},\poly(n))$-block encoding of $J$ in \pup.
By assumption, $A$ has an $(\alpha, \eps, a)$-block encoding $U$ in \pup, whence it is easy to see that $I_\reg{Y} \ot U$ is an $(\alpha, \eps, a)$-block encoding of $I_{\reg Y} \ot A$ (and, after swapping the registers $\reg X$ and $\reg Y$, also of $A \ot I_{\reg Y}$).
Therefore, we can apply \cref{lem:be-product} to find that there exists a $(\alpha D^2, \alpha \eps + D^2 2^{-q(n)}, a + \poly(n))$-block encoding of $J (A \otimes I_{\reg Y})$ in \pup.
Then it follows from \cref{lem:be-partial-trace} and $D \leq 2^\poly(n)$ that there exists a $(\alpha D^3, 2 \alpha \eps D + 2 D^3 2^{-q(n)}, a + \poly(n))$-block encoding of $\Phi(A)$ in \pup.
\end{proof}

\subsection{Renormalising block encodings by fixed-point amplitude amplification}

The block encoding-based implementation of our SDP solver presented in \cref{sec:exp_sdp} will run into the problem that the post-selection factor $\alpha$ grows exponentially as we apply transformations to our block-encodings.
To remedy the situation, in this section we show that we can restore the post-selection factor to $\alpha = 1$ using a fixed-point amplitude amplification scheme.
This is formalised in the following lemma.

\begin{lemma}
Let $\gamma(n) \leq 2^{\poly(n)}$ and let $U$ be a $\pureUnitaryPSPACE$-computable $(\alpha,\eps,a)$-block encoding of the matrix $\ketbra{\psi}{0^{\log D}}$ for some $\log D$-qubit state $\ket{\psi}$ and for some post-selection factor $\alpha \leq \gamma(n)$. Then for all polynomials $q(n)$ there exists a  $(1,2\eps + 2^{-q(n)}, a+2\log D)$-block encoding $V$ of the matrix $\ketbra{\psi}{0^{\log D}}$~. Furthermore, $V$ is $\pureUnitaryPSPACE$-computable.
\label{lem:be-fixedpoint}
\end{lemma}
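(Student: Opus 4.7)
The plan is to renormalize $U$ from post-selection factor $\alpha$ to $1$ by applying fixed-point amplitude amplification~\cite{grover2005fixed} to $U$ viewed as a state-preparation procedure, and then to convert the resulting amplified preparation into a block encoding of the rank-one matrix $\ketbra{\psi}{0^{\log D}}$ by using an additional ancilla register to test whether the input equals $\ket{0^{\log D}}$.

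First I would observe that the block encoding property of $U$ yields $\ket{\tilde\psi} \deq \alpha (I \otimes \bra{0^a}) U \ket{0^{\log D}} \ket{0^a}$ with $\| \ket{\tilde\psi} - \ket{\psi} \|_2 \leq \eps$ and thus $\| \ket{\tilde\psi} \|_2 \in [1 - \eps, 1 + \eps]$. Viewing $U$ as a state-preparation procedure with ``good'' subspace projector $I \otimes \ket{0^a}\bra{0^a}$, the good amplitude is $\| \ket{\tilde\psi} \|_2 / \alpha \geq (1-\eps)/\gamma(n)$. Next I would apply a fixed-point amplitude amplification scheme to $U$ with amplitude lower bound $\lambda_{\min} = 1/(2\gamma(n))$ and target error $\delta = 2^{-q(n)}/4$, producing a unitary $\tilde U$ on $\log D + a + O(1)$ qubits built from $O(\gamma(n)\log(1/\delta)) = 2^{\poly(n)}$ uses of $U$ and $U^\dagger$. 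This $\tilde U$ satisfies
\[
    \Big\| \tilde U \ket{0^{\log D}} \ket{0^{a+O(1)}} - \ket{\tilde\psi}/\|\ket{\tilde\psi}\|_2 \otimes \ket{0^{a + O(1)}} \Big\|_2 \leq \delta,
\]
so by the triangle inequality it is within distance $2\eps + \delta$ of $\ket{\psi} \ket{0^{a + O(1)}}$. Because each primitive in the amplification (calls to $U$, $U^\dagger$, reflections, and phase rotations) is in $\pureUnitaryPSPACE$ and only $O(1)$ extra ancillas are introduced, $\tilde U$ is itself in $\pureUnitaryPSPACE$.

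I would then construct $V$ on four registers $\reg{A}, \reg{B}_1, \reg{B}_2, \reg{B}_3$ of sizes $\log D, a, \log D, \log D$ respectively, with $\reg{B}_2$ absorbing the $O(1)$ amplification ancillas of $\tilde U$. The circuit for $V$ first swaps $\reg{A}$ with $\reg{B}_3$ and then applies $\tilde U$ on $(\reg{A}, \reg{B}_1, \text{part of }\reg{B}_2)$. For input $\ket{x} \otimes \ket{0^{a+2\log D}}$, the swap yields $\ket{0^{\log D}}_{\reg{A}} \ket{0^a}_{\reg{B}_1} \ket{0^{\log D}}_{\reg{B}_2} \ket{x}_{\reg{B}_3}$; then $\tilde U$ produces a state $(2\eps + \delta)$-close to $\ket{\psi}_{\reg{A}} \ket{0^a}_{\reg{B}_1} \ket{0^{\log D}}_{\reg{B}_2} \ket{x}_{\reg{B}_3}$. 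Projecting the ancilla registers onto $\ket{0^{a+2\log D}}$ in particular forces $\reg{B}_3 = \ket{0^{\log D}}$, which holds iff $x = 0^{\log D}$; consequently the projected action agrees with $\ketbra{\psi}{0^{\log D}} \ket{x}$ up to error $2\eps + \delta \leq 2\eps + 2^{-q(n)}$, establishing the required block encoding.

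The main obstacle will be handling fixed-point amplification when only a lower bound on the post-selection amplitude is available: standard (non-fixed-point) amplification would overshoot the target because $\alpha$ is not known exactly, so a scheme whose output fidelity is guaranteed given only $\lambda \geq \lambda_{\min}$ is essential. A related subtlety is threading the approximation error $\eps$ through the amplification: since the normalized projection $\ket{\tilde\psi}/\|\ket{\tilde\psi}\|_2$ differs from the intended target $\ket{\psi}$ by $O(\eps)$, the amplification only prepares $\ket{\tilde\psi}/\|\ket{\tilde\psi}\|_2$ and one must absorb the additional $O(\eps)$ mismatch into the final block-encoding error budget.
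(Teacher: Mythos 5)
Your proof is correct and takes essentially the same approach as the paper's: observe that $\alpha(I\otimes\bra{0^a})U\ket{0^{\log D}}\ket{0^a}$ is an $\eps$-close subnormalized version of $\ket{\psi}$, apply fixed-point amplitude amplification (the paper uses the Grover 2005 $\pi/3$ scheme, which needs no extra ancillas; you allow a generic scheme with $O(1)$ extra ancillas, but both are $\pureUnitaryPSPACE$-computable with the required iteration count), and then wrap the amplified state-preparation unitary into a block encoding of the rank-one matrix $\ketbra{\psi}{0^{\log D}}$ via a swap with a fresh $\log D$-qubit register that gets post-selected to zero. The only cosmetic difference is that you swap before running the amplified unitary whereas the paper swaps afterward — these are operationally equivalent — and the error accounting ($2\eps$ from renormalizing the block and $2^{-q(n)}$ from the amplification residual) matches.
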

\begin{proof}
For simplicity we assume $D = 2^n$, although the proof also works for $D = 2^{\poly(n)}$. Note that $U$, by definition, satisfies the following:
\[
    \Big \| \ketbra{\psi}{0^n} - \alpha (I \otimes \bra{0^a}) U (I \otimes \ket{0^a}) \Big \|_\infty \leq \eps~.
\]
This implies that
\[
    \Big \| \ket{\psi} - \alpha (I \otimes \bra{0^a}) U \ket{0^{n+a}} \Big \|_2 \leq \eps
\]
where now the norm is the Euclidean norm. Let $
    \ket{\psi'} = c \alpha (I \otimes \bra{0^a}) U \ket{0^{n+a}}$
denote the $n$-qubit state where $c$ is such that $\ket{\psi'}$ has unit norm. It must be that $c$ satisfies $1 - \eps \leq c \leq 1 + \eps$
so that overall $\ket{\psi'}$ is at least $2\eps$-close to $\ket{\psi}$. We can write
\[
    \Pi U \ket{0^{n+a}} = (c\alpha)^{-1} \ket{\psi'} \otimes \ket{0^a}
\]
where $\Pi = I \otimes \proj{0^a}$ is the projector onto the ancilla register being zero.

We now appeal to the fixed-point amplitude amplification algorithm of~\cite{grover2005fixed}:\footnote{One could also try to use~\cite{yoder2014fixed}, which has a quadratic speedup compared to the algorithm of~\cite{grover2005fixed}, but we use the latter because of its simpler description.} 

\begin{theorem}[Fixed point amplitude amplification~\cite{grover2005fixed}]
\label{thm:fixedpoint}
Let $0 \leq \eta \leq 1$ and let $U$ denote a unitary such that
for a vector $\ket{s}$ and a projector $\Pi$ we have $\| \Pi U \ket{s} \|^2 = 1 - \eta$. For $m = 1,2,\ldots$ let $U^{(m)}$ denote the recursively defined unitary
\[
    U^{(m)} = U^{(m-1)} R_s U^{(m-1),\dagger} R_t U^{(m-1)}
\]
where
\begin{gather*}
    R_s = I - e^{-i\pi/3} \ketbra{s}{s}~, \\
    R_t = I - e^{-i\pi/3} \Pi~,
\end{gather*}
and $U^{(0)} = U$. Then we have that $\| \Pi U^{(m)} \ket{s} \|^2 = 1 - \eta^{3^m}$.
\end{theorem}

Applying \Cref{thm:fixedpoint} with $1 - \eta = (c\alpha)^{-1}$, $\ket{s} = \ket{0^{n+a}}$ and $\Pi = I \otimes \ketbra{0^a}{0^a}$, we get that $\hat{V} = U^{(m)}$ for $3^m = 2 q(n) \gamma(n)$ 
satisfies
 \[
    \hat{V} \ket{0^{n+a}} = \sqrt{1 - \eta^{3^m}} \ket{\psi'} \otimes \ket{0^a} + \ket{\theta'}
\]
where $\ket{\theta'}$ is orthogonal to any state where the ancilla qubits are zero, and 
\[
    1 - \eta^{3^m} = 1 - \Big(1 - \frac{1}{c\alpha} \Big)^{2q(n) \gamma(n)} \geq 1 - \Big(1 - \frac{1}{c\alpha} \Big)^{c\alpha q(n)} \geq 1 - 2^{-q(n)}~. 
\]
Note that the number of calls to $U$ and $U^\dagger$ in the recursive construction of $\hat{V} = U^{(m)}$ is $3^m = O(q(n) \gamma(n))$, and also it does not involve any extra ancilla qubits. It is easy to see that the unitaries $R_s$ and $R_t$ can be computed efficiently.

Then, we get that 
\[
    (I\otimes \bra{0^a}) \hat{V} \ket{0^{n+a}} = \sqrt{1 - \eta^{3^m}} \ket{\psi'}~.
\]
Recall that $U$, and thus $\hat{V}$ act on $n + a$ qubits. Now define a unitary $V$ that acts on $2n+a$ qubits as follows: first, in the last $n+a$ qubits, it computes $\hat{V}$. Then, it swaps the first $n$ with the second $n$ qubits. It is straightforward to verify that $V$ is $\pureUnitaryPSPACE$-computable, and that
\[
     (I \otimes \bra{0^{n+a}}) V (I \otimes \ket{0^{n+a}}) = \sqrt{1 - \eta^{3^m}} \ketbra{\psi'}{0^n}~.
\]
Thus we have
\begin{align*}
    \Big \| \ketbra{\psi}{0^n} - (I \otimes \bra{0^{n+a}})V(I \otimes \ket{0^{n+a}}) \Big \|_\infty &= \Big \| \ketbra{\psi}{0^n} - \sqrt{1 - \eta^{3^m}} \ketbra{\psi'}{0^n} \Big \|_\infty \\
    &\leq \Big \| \ket{\psi} - \ket{\psi'} \Big \|_2 + (1 - \sqrt{1 - \eta^{3^m}}) \\
    &\leq 2\eps + 2^{-q(n)}~.
\end{align*}
This directly implies that $V$ is a $(1,2\eps + 2^{-q(n)},a+2n)$-block encoding of $\ketbra{\psi}{0 \cdots 0}$, as desired.
\end{proof}

\subsection{Applying polynomials to block encodings}

Given a block encoding of some matrix $A$, we will want to transform this into a block encoding for a matrix $p(A)$ for some polynomial $p$.
We begin with the case where $p$ is a Chebyshev polynomial $T_k$ (see \cref{lem:chebyshev-block}), in which case this transformation is relatively straightforward; the only difference to previous work~\cite{gilyen2019quantum,gilyen2019qsvt} is that we require the degree of the Chebyshev polynomial to be exponentially large.
Since the Chebyshev polynomials for a basis of fixed-degree polynomials, we can extend this result to arbitrary polynomials $p(\cdot)$ by taking linear combinations of Chebyshev polynomials (\cref{lem:be-polys}).
In particular, this allows us to approximately apply the sign function and the square root function to block encodings using the polynomials approximations developed in \cref{sec:poly_approx} (see \cref{lem:be-sign-poly} and \cref{lem:be-sqrt}).
 
\begin{lemma}[Chebyshev polynomials of block encodings]
Let $k = 2^{\poly(n)}$ and let $A \in \linear(\C^D)$ denote a Hermitian matrix with an $(\alpha,\epsilon,a)$-block encoding $U$ in \pup. Then there exists a $(1,4k\sqrt{\eps/\alpha}, a+1)$-block encoding $V$ of $T_k(A/\alpha)$ computable in \pup.
\label{lem:chebyshev-block}
\end{lemma}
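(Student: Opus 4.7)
The plan is to invoke the standard qubitization construction for Chebyshev polynomials of block encodings (as in, e.g., Gilyén--Su--Low--Wiebe), while verifying two extra points: (i) the construction remains \pup-computable even when the degree $k$ is exponentially large, and (ii) the quoted error bound $4k\sqrt{\eps/\alpha}$ holds under the approximate block-encoding hypothesis.

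First, I would pass from $U$ (a $(\alpha,\eps,a)$-block encoding of $A$) to the normalized picture: $U$ is equivalently a $(1,\eps/\alpha,a)$-block encoding of the Hermitian matrix $A/\alpha$, whose operator norm is at most $1+\eps/\alpha$. The goal then becomes to build a $(1,O(k\sqrt{\eps/\alpha}),a+1)$-block encoding of $T_k(A/\alpha)$. Let $\Pi = I \otimes \proj{0^a}$ and define the qubitized walk operator $W$ by adjoining one ancilla qubit and setting $W = (2\Pi' - I) \, \widehat U$, where $\widehat U$ is the standard ``controlled'' lift of $U$ and $\Pi'$ is the corresponding lifted projector; this is the usual $+1$ ancilla qubit appearing in the final bound. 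In the exact case ($\eps=0$), a textbook calculation shows $W$ decomposes into invariant two-dimensional subspaces on each of which it acts as a rotation by $\theta_i = \arccos(\lambda_i/\alpha)$, where $\lambda_i$ ranges over eigenvalues of $A$. Consequently
\[
    (I \otimes \bra{0^{a+1}}) \, W^k \, (I \otimes \ket{0^{a+1}}) \;=\; T_k(A/\alpha),
\]
by the defining identity $T_k(\cos\theta) = \cos(k\theta)$. This gives the exact $(1,0,a+1)$-block encoding.

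The \pup-computability with exponential $k$ is where the setting of the paper differs from the usual polynomial-time treatments. The circuit for $V$ is simply $W^k$, i.e.\ $k$ sequential applications of $W$ on the same $a+1$ qubits; no fresh ancillas are introduced per iteration. Hence the \emph{space} usage is $a + O(1)$, independent of $k$. A polynomial-space uniform Turing machine can enumerate the gates of $V$ by maintaining a $\poly(n)$-bit counter from $0$ to $k-1 = 2^{\poly(n)}-1$ and, at each value of the counter, emitting a description of the gates of $W$ (which is \pup-computable because $U$ is). This establishes that $V$ is \pup-computable.

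The main technical point is the error analysis. Let $\widetilde U$ denote a nearby exact block encoding of $A/\alpha$ with $\|\widetilde U - U\|_\infty \le \eps/\alpha$; such a $\widetilde U$ is obtained by unitarily extending the top-left block of $U$ to its nearest exact block encoding. The qubitized walks $W$ and $\widetilde W$ built from $U$ and $\widetilde U$ differ in operator norm by $O(\sqrt{\eps/\alpha})$ — the characteristic square-root loss of qubitization, which arises because the ``complementary'' basis vectors $\ket{\bot_i}$ used in the two-dimensional invariant subspaces depend on the \emph{square roots} of $1 - (\lambda_i/\alpha)^2$, so a perturbation of size $\eps/\alpha$ in the encoded matrix induces a perturbation of size $\sqrt{\eps/\alpha}$ in the walk operator. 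A telescoping/hybrid argument then yields $\|W^k - \widetilde W^k\|_\infty \le k \cdot \|W - \widetilde W\|_\infty$, and projecting onto the $\ket{0^{a+1}}$ block preserves this bound, giving the claimed $4k\sqrt{\eps/\alpha}$.

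The main obstacle is the square-root loss in the perturbation of the walk operator: one has to verify that the argument goes through uniformly over all eigenvalues of $A/\alpha$, including those close to $\pm 1$ where the complementary vectors become singular, and that the hybrid argument still yields a bound linear in $k$ rather than accumulating additional factors. Once this estimate is established, the conclusion follows immediately from the exact identity for the ideal walk.
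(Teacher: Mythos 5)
Your architecture — qubitization walk, show $W^k$ exactly encodes $T_k$, then argue robustness — is essentially the right shape and is what the paper does (citing Gilyén's Lemmas 2.2.7 and 2.4.3), and your treatment of \pup-computability for exponential $k$ is correct. However, your error analysis contains a genuine gap that also makes it internally inconsistent.

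You posit a unitary $\widetilde U$ that \emph{exactly} block-encodes $A/\alpha$ while satisfying $\|\widetilde U - U\|_\infty \leq \eps/\alpha$. This is false in general. Restoring unitarity after changing the top-left block from $\tilde A$ to $A/\alpha$ forces the off-diagonal blocks to change from $\sqrt{I-\tilde A^\dagger\tilde A}$-type terms to $\sqrt{I-(A/\alpha)^\dagger(A/\alpha)}$-type terms, and near singular values of modulus $\approx 1$ this square-root is non-Lipschitz: a change of $\delta$ in the block can force a change of $\Theta(\sqrt{\delta})$ in the unitary. So the honest bound is only $\|\widetilde U - U\|_\infty \leq O(\sqrt{\eps/\alpha})$. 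Your own sentence explaining the ``characteristic square-root loss'' actually describes exactly this, but you place the loss in the map $U \mapsto W$, which is just composition with a fixed reflection and hence $1$-Lipschitz — so if $\|\widetilde U - U\|\leq\eps/\alpha$ really held, you would get $\|W - \widetilde W\|\leq\eps/\alpha$ with no square root, contradicting the bound you then assert. The square root has to appear somewhere, and your proof puts it in the wrong place while also asserting a false intermediate inequality.

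The paper sidesteps this entirely: it never constructs a second unitary. Setting $\tilde A \deq (I\otimes\bra{0^a})U(I\otimes\ket{0^a})$ to be the \emph{actual} top-left block of $U$, the qubitized $W^k$ is an \emph{exact} $(1,0,a+1)$-block encoding of $T_k(\tilde A)$, period. The only comparison needed is then between the two Hermitian matrices $T_k(\tilde A)$ and $T_k(A/\alpha)$, and the robustness of QSVT (Gilyén's Lemma 2.4.3) gives $\|T_k(A/\alpha) - T_k(\tilde A)\|_\infty \le 4k\sqrt{\|A/\alpha - \tilde A\|_\infty} \le 4k\sqrt{\eps/\alpha}$. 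The square root here is cleanly localized in the polynomial-robustness lemma, which handles the boundary singular values uniformly — precisely the point you flag as the main obstacle but do not resolve. To repair your proof along the lines you intended, you would either need to drop $\widetilde U$ and compare $T_k$ of the two Hermitian matrices directly, or correct the bound on $\|\widetilde U - U\|$ to $O(\sqrt{\eps/\alpha})$ and prove that the nearest exact block encoding achieves it (which requires the same boundary analysis you defer).
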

\begin{proof}
Let $\tilde{A} \in \linear(\C^D)$ denote the matrix $(I \otimes \bra{0^a}) U (I \otimes \ket{0^a})$. It follows from~\cite[Lemma 2.2.7]{gilyen2019quantum} that there is a $(1,0, a+1)$-block encoding $V$ of $T_k(\tilde{A})$. Furthermore, this block encoding makes $k$ queries to $U$ and is computable in \pup (provided that $U$ is computable in \pup). 

The robustness of the Quantum Singular Value Transform~\cite[Lemma 2.4.3]{gilyen2019quantum} implies that $ \| T_k(A/\alpha) - T_k(\tilde{A}) \|_\infty \leq 4 k \sqrt{ \| A/\alpha - \tilde{A} \|_\infty } \leq 4k \sqrt{\eps/\alpha}$.
\end{proof}

\begin{lemma}[Arbitrary polynomials of block encodings] \label{lem:be-polys}
Let $k = 2^{\poly(n)}$ and let $P(x) = \sum_{j=1}^k c_j T_j(x)$ denote a linear combination of Chebyshev polynomials with a \pspace-computable coefficient vector $c = (c_j)_{j = 1,\ldots,k}$. Let $A \in \linear(\C^D)$ be a Hermitian matrix with an $(\alpha,\epsilon,a)$-block encoding $U$ computable in \pup. Then for all polynomials $q(n)$ there exists a $(\| c \|_1 , \| c \|_1 (2^{-q(n)} + 4 k \sqrt{\eps/\alpha}), a+\poly(n))$-block encoding $V$ of $P(A/\alpha)$ that is computable in \pup.
\end{lemma}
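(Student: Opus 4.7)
The plan is to reduce this statement to combining two earlier lemmas: \Cref{lem:chebyshev-block}, which handles a single Chebyshev polynomial applied to a block encoding, and \Cref{lem:be-lin-comb}, which allows forming linear combinations of block encodings. The overall strategy is straightforward in outline since the Chebyshev polynomials $T_j$ form a basis, so once we can block-encode each $T_j(A/\alpha)$ with good parameters, we get $P(A/\alpha)$ as a linear combination with the prescribed coefficients $c_j$.

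Concretely, first I would invoke \Cref{lem:chebyshev-block} once for each $j = 1, \dots, k$: for the given $(\alpha, \eps, a)$-block encoding $U$ of $A$ in \pup, this produces a $\pureUnitaryPSPACE$-computable $(1, 4k\sqrt{\eps/\alpha}, a+1)$-block encoding $V_j$ of $T_j(A/\alpha)$. Note that although there are exponentially many indices $j$ in principle, I only ever need to prepare these block encodings on demand inside the LCU circuit below, so the space cost is unaffected; and since $k = 2^{\poly(n)}$, the error factor $4k\sqrt{\eps/\alpha}$ and the ancilla overhead are as claimed uniformly in $j$.

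Second, I would apply \Cref{lem:be-lin-comb} with $m = k$, with matrices $A_j \deq T_j(A/\alpha)$ block-encoded by $V_j$ using common parameters $\alpha_j = 1$, common error $\eps'_0 \deq 4k\sqrt{\eps/\alpha}$, and ancilla size $a+1$, and with coefficient vector $y \deq c$. The hypotheses of \Cref{lem:be-lin-comb} are met: $k \leq 2^{\poly(n)}$, $c$ is \pspace-computable by assumption, and $\|c\|_1 \leq 2^{\poly(n)}$ (implicit in the intended regime, and in any case necessary for the output parameters to be nontrivial). The lemma then yields, for any polynomial $q(n)$, a \pup-computable $(\|\tilde y\|_1, \|\tilde y\|_1(2^{-q(n)} + \eps'_0/\min_j \alpha_j), (a+1) + \poly(n))$-block encoding of $\sum_j c_j T_j(A/\alpha) = P(A/\alpha)$. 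Since $\alpha_j = 1$ for all $j$, we get $\tilde y = c$, $\|\tilde y\|_1 = \|c\|_1$, and $\eps'_0/\min_j \alpha_j = 4k\sqrt{\eps/\alpha}$, exactly matching the claimed parameters $(\|c\|_1, \|c\|_1(2^{-q(n)} + 4k\sqrt{\eps/\alpha}), a + \poly(n))$.

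The only thing to double-check is uniformity and space bookkeeping: the LCU procedure from \Cref{lem:be-lin-comb} internally uses a state-preparation unitary for the (rescaled) coefficient distribution derived from $c$, and this preparation is what requires \pspace-computability of $c$ (not merely $\eps$-\pspace-computability); the Chebyshev block encodings $V_j$ are \pup-computable with at most $k$ queries to $U$ and $U^\dagger$, which is allowed since we permit exponential time. I do not anticipate a real obstacle here; if there is a subtle point it is keeping the additive-in-$a$ growth of the ancilla register honest, which is guaranteed because both \Cref{lem:chebyshev-block} and \Cref{lem:be-lin-comb} add only $1$ and $\poly(n)$ ancillas respectively, independent of $a$.
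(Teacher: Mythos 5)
Your proof is correct and takes exactly the route intended by the paper: the paper's own proof is the one-liner ``This follows directly from \Cref{lem:chebyshev-block} and \Cref{lem:be-lin-comb}'', and your proposal simply fills in the parameter bookkeeping (using $\alpha_j=1$, $\tilde y = c$, and the worst-case error $4k\sqrt{\eps/\alpha}$ across all Chebyshev degrees $j \le k$) that makes this derivation go through. Your side remark that $\|c\|_1 \le 2^{\poly(n)}$ is an implicit hypothesis carried over from \Cref{lem:be-lin-comb} is a fair observation; it holds in every downstream application in the paper.
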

\begin{proof}
This follows directly from \Cref{lem:chebyshev-block} and \Cref{lem:be-lin-comb}. 
\end{proof}

In particular, we can apply \cref{lem:be-polys} to the polynomial approximation of the sign function that we constructed in \cref{lem:sign-approx}.

\begin{lemma}[Sign polynomial applied to block encodings] \label{lem:be-sign-poly}
Let $A \in \linear(\C^D)$ be a Hermitian matrix with an $(\alpha,\epsilon, a)$-block encoding computable in \pup with $\alpha \leq 2^{\poly(n)}$.
Then, for all $d \leq 2^{\poly(n)}$ and all polynomials $q(n)$, there exists a $(\alpha', \alpha'(2^{-q(n)} + 4 d \sqrt{\eps/\alpha}), a+\poly(n))$-block encoding of $P_d^{\sgn}(A/\alpha)$ in \pup, where $P_d^{\sgn}$ is the polynomial approximation of the sign function from \cref{lem:sign-approx} and $\alpha' \leq O(\log d) = \poly(n)$ is \pspace-computable (given $d$).
\end{lemma}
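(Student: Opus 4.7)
The statement is essentially obtained by combining the polynomial approximation to the sign function from \Cref{lem:sign-approx} with the general machinery for applying polynomials to block encodings developed in \Cref{lem:be-polys}, so the proof plan is quite direct.

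The plan is to invoke \Cref{lem:sign-approx} to obtain the sign polynomial $P_d^{\sgn}(x) = \sum_{i=0}^{d} c_i T_i(x)$, where the coefficients $c_i$ are real and \pspace-computable and satisfy the norm bound $\norm{c}_1 \leq O(\log d)$. Observe that this is exactly the form required as input to \Cref{lem:be-polys} (a linear combination of Chebyshev polynomials with \pspace-computable coefficients, and degree $d \leq 2^{\poly(n)}$). Applying \Cref{lem:be-polys} with this polynomial and the given $(\alpha,\eps,a)$-block encoding $U$ of $A$ yields, for every polynomial $q(n)$, a $(\norm{c}_1, \norm{c}_1 (2^{-q(n)} + 4 d \sqrt{\eps/\alpha}), a + \poly(n))$-block encoding of $P_d^{\sgn}(A/\alpha)$ in \pup.

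Setting $\alpha' \deq \norm{c}_1$, this is exactly the block encoding claimed in the lemma. The bound $\alpha' \leq O(\log d) = \poly(n)$ then follows from the norm bound in \Cref{lem:sign-approx}. To verify that $\alpha'$ is \pspace-computable given $d$, I would note that each individual coefficient $c_i$ is \pspace-computable (to inverse-exponential accuracy), and summing their absolute values over the $d = 2^{\poly(n)}$ indices can be carried out in \pspace by iterating through $i = 0, 1, \dots, d$ while maintaining a running total to sufficient precision; this requires only polynomially many bits of working memory because the partial sums are bounded by $O(\log d)$.

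There is no real obstacle here: every ingredient has been set up earlier in the section. The only point worth being slightly careful about is that the \pspace-computability of $\alpha'$ refers to approximation to inverse-exponential accuracy rather than exact computation, but this is implicit in the conventions fixed in \Cref{def:pspace_comp} and is compatible with the notion of post-selection factor used throughout \Cref{sec:block-encodings}.
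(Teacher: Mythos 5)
Your proposal is correct and follows the same approach as the paper's own proof: invoke \Cref{lem:sign-approx} to express $P_d^{\sgn}$ as a linear combination of Chebyshev polynomials with \pspace-computable coefficients and $\norm{c}_1 = O(\log d)$, then feed this directly into \Cref{lem:be-polys}. The additional remark you make about computing $\alpha' = \norm{c}_1$ in \pspace by streaming through the $d$ coefficients is a valid elaboration of a step the paper leaves implicit.
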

\begin{proof}
By \cref{lem:sign-approx}, we can write $P_d^{\sgn}(x) = \sum_{i = 0}^d c_{i} T_{i}(x)$ as a linear combination of Chebchev polynomials with \pspace-computable coefficients $c_{i}$.
Furthermore, \cref{lem:sign-approx} shows that the coefficient vector $c = (c_1,\dots, c_d)$ satisfies $\norm{c}_1 = O(\log d) = \poly(n)$.
Then, we get from \cref{lem:be-polys} that for all polynomials $q(n)$, $P_d^{\sgn}(A/\alpha)$ has a $(\alpha', \alpha'2^{-q(n)} + 4 d \alpha' \sqrt{\eps/\alpha}, a+\poly(n))$-block encoding in \pup with \pspace-computable $\alpha = \norm{c}_1 = O(\log d)$.
\end{proof}

By the same argument as for \cref{lem:be-sign-poly}, we can also apply the square root polynomial from \cref{lem:sqrt-approx} to a block encoding.
\begin{lemma}[Square root polynomial applied to block encodings] \label{lem:be-sqrt}
Let $A \in \linear(\C^D)$ be a Hermitian matrix with an $(\alpha,\epsilon,a)$-block encoding computable in \pup with $\alpha \leq 2^{\poly(n)}$.
Then, for all $d \leq 2^{\poly(n)}$ and all polynomials $q(n)$, there exists a $(\alpha', \alpha'2^{-q(n)} + 4 d \alpha' \sqrt{\eps/\alpha}, a+\poly(n))$-block encoding of $P_d^{\sqrt{~}}(A/\alpha)$ in \pup, where $P_d^{\sqrt{~}}$ is the polynomial approximation of the square root function from \cref{lem:sqrt-approx} and $\alpha' = O(d)$ is \pspace-computable (given $d$).
\end{lemma}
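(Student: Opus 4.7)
The plan is to mirror the proof of \Cref{lem:be-sign-poly} essentially verbatim, substituting the square-root approximation from \Cref{lem:sqrt-approx} for the sign approximation from \Cref{lem:sign-approx}. The only nontrivial point is that the $\ell^1$-norm of the Chebyshev coefficient vector is larger here (linear in $d$ rather than logarithmic), which is why the lemma states $\alpha' = O(d)$ instead of $\alpha' = O(\log d)$.

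First I would invoke \Cref{lem:sqrt-approx} to write
\[
    P_d^{\sqrt{~}}(x) \;=\; \sum_{i=0}^{d} c_i\, T_i(x),
\]
with \pspace-computable coefficients $c_0,\ldots,c_d$. Next I would bound $\|c\|_1$. Since $P_d^{\sqrt{~}}$ arises as the orthogonal projection of the bounded function $g(x) = \sqrt{(1-c)(x+1)/2 + c}$ (with $c = \kappa^2/8$) onto the span of $T_0,\ldots,T_d$ under the Chebyshev inner product, each coefficient $c_i = \langle T_i, g\rangle$ (or $c_0 = \tfrac12 \langle T_0, g\rangle$) satisfies $|c_i| \leq \tfrac{2}{\pi}\int_{-1}^{1} |g(x)|\,\frac{dx}{\sqrt{1-x^2}} \leq 2$, using $|g(x)| \leq 1$ on $[-1,1]$. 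Hence $\|c\|_1 \leq 2(d+1) = O(d)$, and $\|c\|_1$ is itself \pspace-computable since each $c_i$ is.

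Finally, I would feed this Chebyshev expansion into \Cref{lem:be-polys} with the given block encoding $U$ of $A$. That lemma produces, for every polynomial $q(n)$, a
\[
    \bigl(\|c\|_1,\; \|c\|_1 \bigl(2^{-q(n)} + 4d\sqrt{\eps/\alpha}\bigr),\; a + \poly(n)\bigr)
\]
block encoding of $P_d^{\sqrt{~}}(A/\alpha)$, computable in \pup. Setting $\alpha' \deq \|c\|_1 = O(d)$ yields exactly the parameters claimed in the lemma statement.

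There is no real obstacle: the argument is a direct specialization of \Cref{lem:be-sign-poly}, and the only step requiring attention is the $O(d)$ bound on $\|c\|_1$, which follows immediately from uniform boundedness of the function being expanded. All \pup-computability and space-accounting claims are inherited from \Cref{lem:be-polys} and \Cref{lem:sqrt-approx}.
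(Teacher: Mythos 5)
Your proposal is correct and matches the paper's proof: both reduce to \Cref{lem:be-polys} by expanding $P_d^{\sqrt{~}}$ in the Chebyshev basis and use the trivial per-coefficient bound $|c_i| = O(1)$ (your $|c_i| \leq 2$ from $|T_i|, |g| \leq 1$ on $[-1,1]$) to obtain $\|c\|_1 = O(d)$ and hence $\alpha' = O(d)$. The paper even remarks, as you do, that a sharper $O(\log d)$ bound is possible with more work but unnecessary here.
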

\begin{proof}
The proof is identical to that of \cref{lem:be-sign-poly}, with the only exception that here we use the trivial bound $c_i \leq O(1)$ for the coefficients of the polynomial $P_d^{\sqrt{~}}$ from \cref{lem:sqrt-approx}, leading to a \pspace-computable $\alpha' \leq O(d)$ (rather than $\alpha' \leq O(\log d)$ as in \cref{lem:be-sign-poly}; one can prove logarithmic scaling with a little extra work in this case, too, but we will not need it).
\end{proof}

From this, we can obtain the following corollary for square roots of block encoded \emph{positive semidefinite} matrices.
\begin{corollary} \label{lem:sqrt_be_simplified}
Let $A \in \linear(\C^D)$ be a positive semidefinite matrix with an $(\alpha,\epsilon,a)$-block encoding computable in \pup with \pspace-computable $\alpha \leq 2^{\poly(n)}$ and any $\eps \geq 2^{-\poly(n)}$.
Then, there exists an $(\alpha', c \sqrt{\eps}, a+\poly(n))$-block encoding of $\sqrt{A}$ in \pup for some \pspace-computable $\alpha', c \leq 2^{\poly(n)}$.
\end{corollary}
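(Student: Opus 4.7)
The plan is to apply a polynomial approximation of $\sqrt{x}$ on $[0,1]$ to the block encoding of $A$ via \cref{lem:be-polys}, and then reinterpret the result as a block encoding of $\sqrt{A}$. Concretely, for a target approximation error $\kappa$ (to be chosen later), I will construct a polynomial $P$ of degree $d = \poly(1/\kappa)$ with \pspace-computable Chebyshev coefficients $\{a_k\}$ satisfying $|P(x) - \sqrt{x}| \leq \kappa$ for $x \in [0,1]$ and $\|a\|_1 \leq 2^{\poly(n)}$. Applying $P$ to the $(\alpha,\eps,a)$-block encoding of $A$ will yield a block encoding of $P(A/\alpha)$; since $A$ is PSD with spectrum approximately in $[0,\alpha]$, this will be close to $\sqrt{A/\alpha} = \sqrt{A}/\sqrt{\alpha}$, which rescales to a block encoding of $\sqrt{A}$ with an extra factor of $\sqrt{\alpha}$ in both the post-selection factor and the error.

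The main obstacle is constructing $P$ with \pspace-computable and suitably bounded Chebyshev coefficients. The natural candidate $P(x) = P_d^{\sqrt{~}}(2x-1)$, which satisfies $P(x) \approx \sqrt{x}$ on $[0,1]$ by \cref{lem:sqrt-approx}, unfortunately has Chebyshev coefficients \emph{in the variable $x$} that can be doubly-exponentially large: for $x$ near $-1$, $T_k(2x-1)$ is evaluated outside $[-1,1]$, where it grows exponentially in $k$. To avoid this, I will instead take $P$ to be the degree-$d$ Chebyshev partial sum of the extension $f(x) = \sqrt{\max(x,0)}$, which is bounded and H\"older-$1/2$-continuous on $[-1,1]$ and agrees with $\sqrt{x}$ on $[0,1]$. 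By Jackson's theorem combined with \cref{lem:unif_vs_proj}, the partial sum of degree $d = O(\log^2(1/\kappa)/\kappa^2)$ will approximate $f$ uniformly on $[-1,1]$ with error $\kappa$. By the same numerical-integration argument as in the proof of \cref{lem:sqrt-approx}, applied after the substitution $x = \cos\theta$ and splitting the integral near the singularity at $\theta = \pi/2$, the coefficients $a_k = \langle T_k, f\rangle$ will be \pspace-computable, and the trivial bound $|a_k| \leq 2$ yields $\|a\|_1 = O(d)$.

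Once $P$ is in hand, the remaining steps are straightforward. \cref{lem:be-polys} will produce a $(\|a\|_1, \|a\|_1(2^{-q(n)} + 4d\sqrt{\eps/\alpha}), a+\poly(n))$-block encoding of $P(A/\alpha)$ in \pup. Using that $A/\alpha$ has spectrum approximately in $[0,1]$, so that $\|P(A/\alpha) - \sqrt{A/\alpha}\|_\infty \leq O(\kappa)$, and then reinterpreting the block encoding as one for $\sqrt{A}$, the total error is bounded by $\sqrt{\alpha}\bigl(O(\kappa) + \|a\|_1(2^{-q(n)} + 4d\sqrt{\eps/\alpha})\bigr)$. Choosing $\kappa = \Theta(\sqrt{\eps/\alpha})$ and $q(n)$ a sufficiently large polynomial makes each term $O(2^{\poly(n)}\sqrt{\eps})$, giving the desired error $c\sqrt{\eps}$ with \pspace-computable $c \leq 2^{\poly(n)}$ and post-selection factor $\alpha' = \sqrt{\alpha}\|a\|_1 = O(\sqrt{\alpha}\,d) \leq 2^{\poly(n)}$, completing the proof.
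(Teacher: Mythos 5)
Your proposal is correct but takes a genuinely different (and longer) route than the paper. You correctly observe that naively expanding $P_d^{\sqrt{~}}(2x-1)$ in the Chebyshev basis of $x$ would produce doubly-exponentially large coefficients, since $T_k(2x-1)$ gets evaluated at arguments near $-3$ where Chebyshev polynomials grow exponentially in $k$; you circumvent this by building a fresh Chebyshev expansion of $f(x)=\sqrt{\max(x,0)}$ on $[-1,1]$ via Jackson's theorem plus \cref{lem:unif_vs_proj}, and then re-prove \pspace-computability of its coefficients by numerical integration with a careful split near the kink at $x=0$. The paper instead avoids composition entirely at the level of \emph{block encodings}: it forms a $(2\alpha, O(\alpha\eps), a+\poly(n))$-block encoding of $B \deq A - \alpha\Id$ via \cref{lem:be-lin-comb}, so that $B/\alpha$ is Hermitian with spectrum contained in $[-1,1]$, and then applies the already-constructed polynomial $P_d^{\sqrt{~}}$ from \cref{lem:sqrt-approx} directly via \cref{lem:be-sqrt}, using $P_d^{\sqrt{~}}(B/\alpha) \approx \sqrt{(B/\alpha+\Id)/2} = \sqrt{A/(2\alpha)}$ and a final rescaling by $\sqrt{2\alpha}$. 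The paper's route thus reuses the \pspace-computability and $O(1)$ boundedness of $P_d^{\sqrt{~}}$'s Chebyshev coefficients established once in \cref{lem:sqrt-approx}, whereas your route must re-derive both of these for a new coefficient sequence and additionally cope with the integrable singularity of $f'$ at $x=0$ that the paper never encounters. Both arguments yield the claimed parameters; the paper's is shorter precisely because the affine shift $[0,1]\to[-1,1]$ is performed on the matrix rather than on the polynomial.
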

\begin{proof}
By assumption we have an $(\alpha, \eps,a)$-block encoding of $A$, and we trivially have a $(1,0,0)$-block encoding of $\Id$.
Therefore, using \cref{lem:be-lin-comb} (with a sufficiently large $q(n)$) there exists a $(2\alpha, 4 \alpha \eps, a+\poly(n))$-block encoding of $B \deq A - \alpha \Id$ in \pup.
Since $A$ has an $(\alpha, \eps,a)$-block encoding, $\norm{A}_\infty \leq \alpha$, so $B/\alpha$ has eigenvalues in the interval $[-1,1]$.
Now choose $d = O(\log(\eps)/\eps)$ large enough such that the polynomial $P^{\sqrt{~}}_d$ from \cref{lem:sqrt-approx} satisfies $\left|\sqrt{\frac{x+1}{2}} - P_{d}^{\sqrt{~}}(x)\right| \leq \sqrt\eps$ for all $x \in [-1, 1]$.
With this choice, 
\begin{align}
\norm{\sqrt{\frac{A}{2\alpha}} - P_{d}^{\sqrt{~}}(B/\alpha)}_\infty = \norm{\sqrt{\frac{B/\alpha+\Id}{2}} - P_{d}^{\sqrt{~}}(B/\alpha)}_\infty \leq \eps \,. \label{eqn:poly_sqrt_close}
\end{align}
Using \cref{lem:be-sqrt}, we have an $(\tilde\alpha, \tilde\alpha2^{-q(n)} + 8 d \tilde\alpha \sqrt{\eps}, a+\poly(n))$-block encoding of $P_d^{\sqrt{~}}(B/\alpha)$ in \pup for \pspace-computable $\tilde\alpha \leq 2^\poly(n)$ and all polynomials $q(n)$.
With a sufficiently large choice of $q(n)$ and using \cref{eqn:poly_sqrt_close}, this is also an $(\tilde\alpha, (10 d \tilde\alpha + 1) \sqrt{\eps}, a+\poly(n))$-block encoding of $\sqrt{A}/\sqrt{2\alpha}$.
By \cref{def:be}, this is equivalently a $(\sqrt{2\alpha}\tilde\alpha, (10 d \tilde\alpha + 1) \sqrt{2\alpha} \sqrt{\eps}, a+\poly(n))$-block encoding of $\sqrt{A}$.
Defining $\alpha' = \sqrt{2\alpha}\tilde\alpha$ and $c = (10 d \tilde\alpha + 1) \sqrt{2\alpha}$ completes the proof.
\end{proof}

\subsection{Preparing the Gibbs state of a block-encoded Hamiltonian}
For our implementation of the SDP solver in \cref{sec:exp_sdp}, we will need to prepare (block encodings of) Gibbs states, i.e.~states of the form $\exp(-\beta H)/\tr{\exp(-\beta H)}$ for some Hamiltonian $H$ that is provided in the form of a block encoding.
The following lemma shows how to approximate the exponential of a block encoding by means of a Taylor expansion.

\begin{lemma}[Exponential function applied to block encoding]
\label{lem:be-exp}
Let $A \in \linear(\C^D)$ be Hermitian with $\norm{A}_\infty = \poly(n)$ and an $(\alpha, \eps, a)$-block encoding $U$ in \pup.
Let $\beta \leq \poly(n)$.
Then, for all polynomials  $k(n)$, there exists an $(\alpha', 4 k \alpha' \sqrt{\eps/\alpha} + 2 e^{2e \beta \norm{A}_\infty - k}, a+\poly(n))$-block encoding of $e^{\beta A}$ in \pup for some \pspace-computable $\alpha'(\alpha, \beta, k) \leq (\alpha \beta)^k$.
\end{lemma}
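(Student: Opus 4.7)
The plan is to approximate $e^{\beta A}$ by the degree-$k$ truncated Taylor series and then apply \cref{lem:be-polys}. Concretely, define the polynomial
\[
q_k(x) \deq \sum_{j=0}^{k} \frac{(\beta\alpha)^j}{j!}\, x^j,
\]
so that $q_k(A/\alpha) = \sum_{j=0}^k (\beta A)^j/j!$ is the Taylor approximation of $e^{\beta A}$. Spectral calculus gives $\norm{e^{\beta A} - q_k(A/\alpha)}_\infty \leq \sum_{j > k} (\beta\norm{A}_\infty)^j/j!$, and a standard Stirling-based estimate bounds this tail by $2 e^{2e\beta\norm{A}_\infty - k}$. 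It is crucial here that the spectrum of $A/\alpha$ lies in $[-\norm{A}_\infty/\alpha,\, \norm{A}_\infty/\alpha]$ rather than the full interval $[-1,1]$, so that the truncation error is controlled by $\beta\norm{A}_\infty = \poly(n)$ rather than by $\beta\alpha$, which could be exponentially large.

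To apply \cref{lem:be-polys}, I would expand $q_k$ in the Chebyshev basis, $q_k = \sum_{\ell=0}^k c_\ell T_\ell$, using the closed-form expansion $x^j = 2^{1-j}\sum_{\ell} \binom{j}{(j-\ell)/2}\, T_\ell(x)$ (with the usual conventions on $\ell = 0$ and matching parity). The resulting $c_\ell$ are finite combinations of binomial coefficients and ratios $(\beta\alpha)^j/j!$, and are therefore \pspace-computable. Since the Chebyshev coefficients of each monomial $x^j$ sum in absolute value to at most $2$, one obtains
\[
\norm{c}_1 \;\leq\; 2\sum_{j=0}^k \frac{(\beta\alpha)^j}{j!} \;\leq\; (\alpha\beta)^k
\]
for $k \geq 4$ by a short case split: if $k \leq \alpha\beta$, the Taylor terms are increasing and the sum is dominated by its last term, giving $2(k+1)(\alpha\beta)^k/k! \leq (\alpha\beta)^k$; if $k > \alpha\beta$, the full series $e^{\alpha\beta}$ fits comfortably under $(\alpha\beta)^k$. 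Setting $\alpha' \deq \norm{c}_1$ gives the desired post-selection factor.

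With $q_k$ expressed in the Chebyshev basis, \cref{lem:be-polys} directly produces an $(\alpha', \alpha'(2^{-q(n)} + 4k\sqrt{\eps/\alpha}), a+\poly(n))$-block encoding of $q_k(A/\alpha)$ in \pup. Choosing the auxiliary polynomial $q(n)$ from \cref{lem:be-polys} large enough to absorb the $\alpha'\, 2^{-q(n)}$ term, and then combining the block-encoding error with the Taylor truncation error via the triangle inequality inside \cref{def:be}, yields the claimed $(\alpha',\, 4k\alpha'\sqrt{\eps/\alpha} + 2 e^{2e\beta\norm{A}_\infty - k},\, a+\poly(n))$-block encoding of $e^{\beta A}$.

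The main subtlety is controlling $\norm{c}_1$. The natural temptation is to approximate $e^{\beta\alpha x}$ uniformly on $[-1,1]$, but any such approximation has supremum at least $e^{\beta\alpha}$, which is \emph{doubly} exponential when $\alpha$ is exponential in $n$ and hence not \pspace-computable. The escape route is that $q_k$ only needs to be accurate on the sub-interval where $A/\alpha$ has its spectrum, and we bound $\norm{c}_1$ not via the $L^\infty$ norm of $q_k$ on $[-1,1]$ but directly via the Chebyshev coefficients of each monomial, which are bounded by $2$ independently of the ambient interval. This replaces the prohibitive $e^{\alpha\beta}$ by the acceptable $(\alpha\beta)^k$.
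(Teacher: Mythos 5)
Your proposal follows essentially the same approach as the paper's proof: truncate the Taylor series of $x \mapsto e^{\beta\alpha x}$ at degree $k$, rewrite the truncation in the Chebyshev basis with $\PSPACE$-computable coefficients, and invoke \cref{lem:be-polys}, combining the resulting block-encoding error with the truncation error by the triangle inequality. Your explicit observation that the truncation error is governed by $\beta\norm{A}_\infty = \poly(n)$ — because the spectrum of $A/\alpha$ sits inside $[-\norm{A}_\infty/\alpha, \norm{A}_\infty/\alpha]$ rather than all of $[-1,1]$ — is exactly the point the paper relies on but states less directly, and your use of the closed-form monomial-to-Chebyshev expansion is an explicit instantiation of the argument the paper sketches in its footnote via~\cite{SV14}.

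One small caveat concerns the case split you use to show $\norm{c}_1 \leq (\alpha\beta)^k$: in the regime $k > \alpha\beta$, the claim that ``$2e^{\alpha\beta}$ fits comfortably under $(\alpha\beta)^k$'' fails whenever $\alpha\beta \leq 1$ (there $(\alpha\beta)^k \leq 1 < 2 \leq 2 e^{\alpha\beta}$), and more generally for $\alpha\beta$ only slightly above $1$ unless $k$ is taken enormous. But note that since block-encoding post-selection factors satisfy $\alpha' \geq 1$, the inequality $\alpha' \leq (\alpha\beta)^k$ as stated in the lemma cannot hold for $\alpha\beta < 1$ anyway, so this is an imprecision already present in the lemma statement itself, and the paper's proof settles for the likewise-imperfect bound $\alpha' \leq k(\alpha\beta)^k$. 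The substantive facts needed downstream — that $\alpha'$ is $\PSPACE$-computable and bounded by $2^{\poly(n)}$ — follow unconditionally from your cleaner intermediate bound $\norm{c}_1 \leq 2\sum_{j=0}^k (\alpha\beta)^j/j! \leq 2e^{\alpha\beta}$, so the argument is sound where it matters.
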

\begin{proof}
Let $m = \norm{A}_\infty = \poly(n)$.
On the interval $[-m/\alpha,m/\alpha]$, taking the Taylor expansion of degree $k$ of $x \mapsto \exp(\beta \alpha x)$ yields a polynomial $P_k$ that satisfies $|P_k(x) - \exp(\beta \alpha x)| \leq e^{2e \beta m - k}$.
We can express $P_k$ as a linear combination of Chebyshev polynomials of degree up to $k$: $P_k = \sum_{j = 0}^k c_j T_j$.
For $k \leq \poly(n)$, it is easy to verify that $c_j$ is non-negative and can be computed in \pspace.\footnote{In fact, $c_j$ can be computed in $\class{P}$. One explicit way of doing this is as follows: compute the (non-negative) coefficients of the truncated Taylor series of $\exp(\beta \alpha \cdot)$ and express each monomial as a sum of Chebyshev polynomials with non-negative coefficients using~\cite[Theorem 3.1]{SV14}; then combine terms. Since the degree is polynomial, there are at most a polynomial number of terms, and the coefficients for each term can be computed in $\class{P}$.}
Then, $\alpha' \deq \sum |c_j|$ is \pspace-computable, too, and we can bound $\alpha' \leq k (\alpha\beta)^k$.

By \cref{lem:be-polys}, for all polynomials $q(n)$ we get a \pup-computable $(\alpha', \alpha' 2^{-q(n)} + 4 k \alpha' \sqrt{\eps/\alpha}, a+\poly(n))$-block encoding of $P_k(A/\alpha)$, which, by the triangle inequality, is also a $(\alpha', \alpha' 2^{-q(n)} + 4 k \alpha' \sqrt{\eps/\alpha} + e^{2e \beta m - k}, a+\poly(n))$-block encoding of $\exp(\beta A)$.
Choosing a sufficiently large $q(n)$ such that $\alpha' 2^{-q(n)} \leq e^{2e \beta m - k}$, we obtain the desired result.
\end{proof}

\cref{lem:be-exp} shows how to apply the exponential function to a block encoding, but we do not necessarily know the normalisation factor $\tr{\exp(-\beta H)}$ that we need to use for the Gibbs state.
Instead of computing this factor explicitly, we can prepare a block-encoding of $\ketbra{\Gamma}{0\dots0}$, where $\ket{\Gamma}$ is a (normalised) purification of the Gibbs state.
To this, we can apply the amplitude amplification scheme from \cref{lem:be-fixedpoint} to set the post-selection factor to $\alpha = 1$.
The advantage of this approach is that to apply \cref{lem:be-fixedpoint}, we only need a crude bound on $\tr{\exp(-\beta H)}$ to determine the minimum number of iterations in the amplitude amplification scheme, but we do not need a precise estimate of $\tr{\exp(-\beta H)}$ as doing too many iterations in the amplitude amplification scheme does not do any harm.
The output of the amplitude amplification is a block encoding of a normalised purification of the Gibbs state with $\alpha = 1$; in \cref{lem:be-gibbs-mixed}, we use this block encoding to prepare the desired (mixed) Gibbs state with the correct normalisation $\tr{\exp(-\beta H)}$ and post-selection factor $\alpha = 1$. 
\begin{lemma}[Preparing a purified Gibbs state]
\label{lem:be-gibbs-pure}
Let $A \in \linear(\C^D)$ be Hermitian on $n = \log D$ qubits with $\norm{A}_\infty \leq p(n)$ for some polynomial $p(n)$ and an $(\alpha, \eps, a)$-block encoding $U$ in \pup.
Let $\beta \leq \poly(n)$. Then for all polynomials $q(n)$ there exists a $(1,c \sqrt{\eps} + 2^{-q(n)}, a+\poly(n))$-block encoding $V$ of the matrix $\ketbra{\Gamma}{0^{2n}}$, where $\ket{\Gamma}$ is a (normalised) $(2n)$-qubit purification of the Gibbs state $\frac{e^{\beta A}}{\Tr(e^{\beta A})}$.
Here, $c(\alpha, \beta, p) \leq 2^{\poly(n)}$ is a \pspace-computable function.
\end{lemma}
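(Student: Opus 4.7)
}
The plan is to build $\ket{\Gamma}$ as the normalised purification $(e^{\beta A/2} \otimes I)\ket{\Phi_{\text{unn}}}/\sqrt{\Tr(e^{\beta A})}$, where $\ket{\Phi_{\text{unn}}} = \sum_{j} \ket{j}\ket{j}$ is the unnormalised maximally entangled state, and to realise this as a block encoding by composing earlier primitives. Concretely, I will first use \cref{lem:be-exp} (applied with $\beta/2$ in place of $\beta$ and with polynomial truncation degree $k$ chosen large enough that the additive ``$e^{2e\beta\|A\|_\infty - k}$'' tail is at most $2^{-q(n)}$) to obtain a \pup-computable $(\alpha_1, c_1 \sqrt{\eps}+2^{-q(n)}, a+\poly(n))$-block encoding of $e^{\beta A/2}$, where $\alpha_1, c_1 \leq 2^{\poly(n)}$ are \pspace-computable because $\alpha, \beta, \|A\|_\infty$ are all bounded by $\poly(n)$.

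Next I will feed this block encoding into \cref{lem:be-purification}, which turns it into a \pup-computable block encoding of the rectangular matrix $(e^{\beta A/2} \otimes I)\ketbra{\Phi}{0^{2n}}$, with $\ket{\Phi} = D^{-1/2}\sum_j \ket{j}\ket{j}$. A direct computation gives
\[
    (e^{\beta A/2} \otimes I)\ket{\Phi} \;=\; \sqrt{\tfrac{\Tr(e^{\beta A})}{D}}\,\ket{\Gamma},
\]
so the block encoding so far encodes $\lambda \ketbra{\Gamma}{0^{2n}}$ with $\lambda = \sqrt{\Tr(e^{\beta A})/D}$. Absorbing the scalar $\lambda$ into the post-selection factor converts this into a $(\alpha_1/\lambda,\ (c_1\sqrt{\eps}+2^{-q(n)})/\lambda,\ a+\poly(n))$-block encoding of $\ketbra{\Gamma}{0^{2n}}$ itself. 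The key observation is that $\lambda$ admits a crude \pspace-computable lower bound: since every eigenvalue of $e^{\beta A}$ is at least $e^{-\beta \|A\|_\infty}$, we have $\Tr(e^{\beta A}) \geq D e^{-\beta \|A\|_\infty}$, so $1/\lambda \leq e^{\beta \|A\|_\infty/2} \leq 2^{\poly(n)}$. Hence the new post-selection factor $\alpha_2 \deq \alpha_1/\lambda$ is bounded by $2^{\poly(n)}$, and the new error is at most $c_2 \sqrt{\eps} + 2^{-q'(n)}$ for a \pspace-computable $c_2 \leq 2^{\poly(n)}$ and any polynomial $q'(n)$ (obtained by adjusting $q(n)$).

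Finally, I apply \cref{lem:be-fixedpoint} to this block encoding of the ``prepare'' matrix $\ketbra{\Gamma}{0^{2n}}$ with the parameter $\gamma(n)$ instantiated to the \pspace-computable upper bound $\alpha_2 \leq 2^{\poly(n)}$. The lemma replaces the post-selection factor by $1$ at the cost of doubling the error and adding a term $2^{-q(n)}$ for a freely chosen polynomial $q(n)$, and adding $2\log D = \poly(n)$ more ancillas; it preserves \pup-computability because each step of fixed-point amplitude amplification only uses $U$, $U^\dagger$, and reflections through states that are \pup-preparable. The result is the desired \pup-computable $(1, c\sqrt{\eps}+2^{-q(n)}, a+\poly(n))$-block encoding of $\ketbra{\Gamma}{0^{2n}}$, with $c \leq 2^{\poly(n)}$ coming from the cumulative multiplicative factors in the error.

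The only non-routine point — and the main place one has to be careful — is ensuring that the post-selection factor $\alpha_2$ picked up when rescaling by $\lambda^{-1}$ is truly bounded by an \emph{a priori} \pspace-computable quantity, since \cref{lem:be-fixedpoint} requires such a bound to schedule the correct number of amplitude-amplification rounds. The eigenvalue lower bound $\Tr(e^{\beta A}) \geq D e^{-\beta \|A\|_\infty}$ together with the hypothesis $\|A\|_\infty, \beta \leq \poly(n)$ makes this step clean; no precise estimate of the partition function is required, which is precisely the advantage of using fixed-point (rather than standard) amplitude amplification here.
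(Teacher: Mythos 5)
Your proposal is correct and follows essentially the same route as the paper: apply \cref{lem:be-exp} with $\beta/2$ to block-encode $e^{\beta A/2}$, purify via \cref{lem:be-purification} to get $(e^{\beta A/2}\otimes I)\ketbra{\Phi}{0^{2n}}$, rescale by $\nu=\sqrt{D/\Tr(e^{\beta A})}$ using the crude bound $\Tr(e^{\beta A})\geq D e^{-\beta p(n)}$ to obtain a \pspace-computable upper bound on the new post-selection factor, and then renormalise with fixed-point amplitude amplification (\cref{lem:be-fixedpoint}). You also correctly identify the one subtle point, namely that the $2^{-q(n)}$ tail from the Taylor truncation gets blown up by the rescaling and must be budgeted accordingly, which the paper handles by choosing $k$ so that $2e^{e\beta p(n)-k}\leq 2^{-q(n)-2}e^{-\beta p(n)/2}$.
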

\begin{proof}
Choose a \pspace-computable $k(n) = \poly(n)$ such that $2 e^{e \beta p(n) - k} \leq 2^{-q(n)-2} e^{-\beta p(n)/2}$.
Then \Cref{lem:be-exp} yields an $\left(\alpha', \eps', a+\poly(n) \right)$-block encoding of $e^{\beta A/2}$ for a \pspace-computable $\alpha'(\alpha, \beta, k)$ and $\eps' = \frac{4 k \alpha'}{\alpha} \sqrt{\eps} + 2^{-q(n)-2} e^{-\beta p(n)/2}$.
By \Cref{lem:be-purification} there exists a $\left(\alpha', \eps', a+\poly(n) \right)$-block encoding $V$ of the matrix $(e^{\beta A/2} \otimes I) \ketbra{\Phi}{0 \cdots 0}$, where $\ket{\Phi}$ is the maximally entangled state on $\C^D \ot \C^D$. 

Define the normalised state 
\begin{align*}
\ket{\Gamma} = \frac{(e^{\beta A/2} \otimes I) \ket{\Phi}}{\norm{(e^{\beta A/2} \otimes I) \ket{\Phi}}_2} = \frac{(e^{\beta A/2} \otimes I) \ket{\Phi}}{\sqrt{\tr{e^{\beta A}}/D}} \,.
\end{align*}
$\ket{\Gamma}$ is a purification of the (normalised) Gibbs state $e^{\beta A}/\Tr(e^{\beta A})$.
We can rescale our block encoding by a factor $\nu \deq \sqrt{D/\tr{e^{\beta A}}}$ and get that $V$ is equivalently a $\left(\alpha' \nu, \eps' \nu, a+\poly(n) \right)$-block encoding of $\ketbra{\Gamma}{0 \cdots 0}$.
Since $\norm{A}_\infty \leq p(n)$, we have that $\Tr(e^{\beta A}) \geq D e^{-\beta p(n)}$.
Therefore, we can simplify the error terms and find that $V$ is an $\left(\alpha'\nu, \eps'' + 2^{-q(n)-2}, a + \poly(n) \right)$-block encoding of $\ketbra{\Gamma}{0\dots 0}$ for $\eps'' = \frac{4 k \alpha' e^{\beta p(n)/2}}{\alpha} \sqrt{\eps} \geq \eps' \nu$.

We can now apply \cref{lem:be-fixedpoint} with $\gamma = \alpha' e^{\beta p(n)/2} \geq \alpha'\sqrt{D/\tr{e^{\beta A}}}$ to obtain a $(1, 2 \eps'' + 2^{-q(n)}, a+\poly(n))$-block encoding of $\ketbra{\Gamma}{0 \cdots 0}$ in \pup.
To conclude, we define $c(\alpha, \beta, k, p) \deq \frac{8 k \alpha' e^{\beta p(n)/2}}{\alpha}$. Since $\alpha'(\alpha, \beta, k)$ is a \pspace-computable function, so is $c$, concluding the proof.
\end{proof}

\begin{corollary}[Preparing a mixed Gibbs state] \label{lem:be-gibbs-mixed}
Let $A \in \linear(\C^D)$ be Hermitian on $n = \log D$ qubits with $\norm{A}_\infty \leq p(n)$ for some polynomial $p(n)$ and an $(\alpha, \eps, a)$-block encoding $U$ in \pup.
Let $\beta \leq \poly(n)$. Then for all polynomials $q(n)$ there exists a $(1,c \sqrt{\eps} + 2^{-q(n)}, a+\poly(n))$-block encoding of the Gibbs state $\frac{e^{\beta A}}{\Tr(e^{\beta A})}$.
Here, $c(\alpha, \beta, p) \leq 2^{\poly(n)}$ is a \pspace-computable function.
\end{corollary}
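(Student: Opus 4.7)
The plan is to combine \cref{lem:be-gibbs-pure} with the partial-tracing statement \cref{lem:be-reduced-density}. Concretely, I will first invoke \cref{lem:be-gibbs-pure} with a polynomial $q_0(n)$, to be chosen later as a function of $q(n)$, to obtain a \pup-computable $(1,\delta,a+\poly(n))$-block encoding $V$ of $\ketbra{\Gamma}{0^{2n}}$, where $\ket\Gamma$ is a normalised purification of the Gibbs state on a $2n$-qubit register whose first $n$ qubits carry the system and whose second $n$ qubits carry the purification, and $\delta = c_0\sqrt\eps + 2^{-q_0(n)}$. Since the partial trace of $\ketbra{\Gamma}{\Gamma}$ over the purification register equals $e^{\beta A}/\Tr(e^{\beta A})$, the strategy is to reinterpret $V$ as an approximate state-preparation circuit for $\ket\Gamma\otimes\ket{0^{a+\poly(n)}}$ and then invoke \cref{lem:be-reduced-density} with register $\reg X$ equal to the first $n$ qubits and $\reg Y$ equal to the remaining $n + a + \poly(n)$ qubits (the purifying half of $\ket\Gamma$ together with the block-encoding ancilla).

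The hard part will be controlling the Euclidean distance between $V\ket{0^{2n+a+\poly(n)}}$ and $\ket\Gamma\otimes\ket{0^{a+\poly(n)}}$ tightly enough. A generic $(1,\delta,\cdot)$-block encoding of a rank-one matrix $\ketbra{\psi}{0^n}$ only prepares $\ket\psi$ to Euclidean accuracy $O(\sqrt\delta)$, because the component of $V\ket{0^{n+a}}$ with ancilla $\neq\ket{0^a}$ can have norm up to $O(\sqrt\delta)$. Plugging in $\delta = O(\sqrt\eps)$ would yield a post-partial-trace error of $O(\eps^{1/4})$, which does not match the claimed $c\sqrt\eps$. To sidestep this, I will reopen the proof of \cref{lem:be-gibbs-pure} and exploit the specific structure of $V$: it is produced by fixed-point amplitude amplification (\cref{lem:be-fixedpoint}, which rests on \cref{thm:fixedpoint}) applied to a pre-amplification block encoding of $\ketbra{\Gamma}{0^{2n}}$ with error $\eps'' = O(\sqrt\eps)$ times a \pspace-computable prefactor. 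The analysis of \cref{thm:fixedpoint} shows that after amplification, the component of $V\ket{0^{2n+a+\poly(n)}}$ orthogonal to states with ancilla $\ket{0^{a+\poly(n)}}$ has Euclidean norm at most $O(2^{-q_0(n)/2})$, independent of $\eps$, while its aligned component is $2\eps''$-close to $\ket\Gamma$. Combining these gives
\[
\bigl\| V\ket{0^{2n+a+\poly(n)}} - \ket\Gamma\otimes\ket{0^{a+\poly(n)}}\bigr\|_2 \;\leq\; 2\eps'' + O(2^{-q_0(n)/2}) \;=\; O\bigl(\sqrt\eps + 2^{-q_0(n)/2}\bigr).
\]

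From here, the Fuchs--van de Graaf inequality together with monotonicity of trace distance under the partial trace will imply that the reduced density matrix on $\reg X$ of $V\ket{0^{2n+a+\poly(n)}}$ is within $O(\sqrt\eps + 2^{-q_0(n)/2})$ in trace distance, and therefore in operator norm, from the Gibbs state $e^{\beta A}/\Tr(e^{\beta A})$. \cref{lem:be-reduced-density} then packages this into a \pup-computable $(1,\, O(\sqrt\eps + 2^{-q_0(n)/2}),\, a+\poly(n))$-block encoding of the Gibbs state, where the ancilla count $n + a + \poly(n)$ from \cref{lem:be-reduced-density} collapses into $a+\poly(n)$. Finally, choosing $q_0(n) = 2q(n) + \Theta(1)$ and absorbing the \pspace-computable prefactors into a new $c(\alpha,\beta,p)\leq 2^{\poly(n)}$ delivers the claimed parameters.
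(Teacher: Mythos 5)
Your proof is correct and follows the same high-level route as the paper: apply \cref{lem:be-gibbs-pure} to get a $(1,\delta,b)$-block encoding $V$ of $\ketbra{\Gamma}{0^{2n}}$ with $\delta = c_0\sqrt{\eps} + 2^{-q_0(n)}$ and $b = a+\poly(n)$, then feed $V$ into \cref{lem:be-reduced-density}. However, the detour in the middle---reopening \cref{lem:be-gibbs-pure} and its call to \cref{lem:be-fixedpoint} to argue that the ancilla-misaligned component of $V\ket{0^{2n+b}}$ has norm $O(2^{-q_0(n)/2})$---is not actually needed to avoid the square-root loss you flag. Set $\ket{v} = V\ket{0^{2n+b}}$, $\ket{g} = \ket{\Gamma}\otimes\ket{0^b}$, and decompose $\ket{v} = \ket{v_\parallel} + \ket{v_\perp}$ with $\ket{v_\parallel} = (I \otimes \proj{0^b})\ket{v}$. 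The register $\reg{Y}$ you trace over contains the $b$ block-encoding ancilla qubits, and $\ket{v_\parallel}$, $\ket{v_\perp}$ are orthogonal when restricted to those qubits, so the cross terms $\Tr_{\reg{Y}}(\ketbra{v_\parallel}{v_\perp} + \ketbra{v_\perp}{v_\parallel})$ vanish identically. Therefore
\[
\big\| \Tr_{\reg{Y}}(\ketbra{v}{v}) - \Tr_{\reg{Y}}(\ketbra{g}{g}) \big\|_1 \;\le\; \big\| \ketbra{v_\parallel}{v_\parallel} - \ketbra{g}{g} \big\|_1 + \|\ket{v_\perp}\|_2^2 \;\le\; 2\delta + 2\delta\,,
\]
using only $\|\ket{v_\parallel}-\ket{g}\|_2 \le \delta$ and $\|\ket{v_\perp}\|_2^2 \le 1-(1-\delta)^2 \le 2\delta$, both of which follow directly from the $(1,\delta,b)$-block-encoding definition. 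The $O(\sqrt{\delta})$ accuracy you were worried about arises only if one first bounds the Euclidean distance between the full pure states $\ket{v}$ and $\ket{g}$ before tracing out; it is the \emph{square} of $\|\ket{v_\perp}\|_2$ that enters the reduced density, and the cross terms never do. Your version is still a valid proof---the fixed-point analysis does give $\|\ket{v_\perp}\|_2 \le 2^{-q_0(n)/2}$---but it reopens machinery that the direct cancellation argument makes unnecessary. (Your structural observation would, however, be genuinely relevant for the sibling statement \cref{lem:gibbs-extract}, where the quantity to control is $\|\ketbra{v}{v}-\ketbra{g}{g}\|_\infty$ on the full space and the cross terms do not cancel.)
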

\begin{proof}
By \cref{lem:be-gibbs-pure}, there exists a \pup-computable $(1, c\sqrt{\eps}, a+\poly(n))$-block encoding $V$ of $\ketbra{\Gamma}{0^{2n}}$ with a \pspace-computable $c(\alpha, \beta, k, p)$, where $\ket{\Gamma}$ is a purification of the Gibbs state $\frac{e^{\beta A}}{\Tr(e^{\beta A})}$
The corollary then follows from \cref{lem:be-reduced-density}.
\end{proof}

The following corollary shows that we can also ``extract'' a purification of the Gibbs state (as a quantum state, not inside a unitary block encoding) from \cref{lem:be-gibbs-pure}.
\begin{corollary}[Extracting a Gibbs state from a block encoding] \label{lem:gibbs-extract}
Let $A \in \linear(\C^D)$ be Hermitian on $n$ qubits with $\norm{A}_\infty = p(n)$ for some polynomial $p(n)$ and an $(\alpha, \eps, a)$-block encoding $U$ in \pup.
Let $\beta \leq \poly(n)$. Then for all polynomials $q(n)$ there exists a unitary $V$ in \pup such that $\norm{V\proj{0^{2n + b}}V^\dagger - \proj{\Gamma}\ot\proj{0^b}}_\infty \leq 2 c \sqrt{\eps} + 2^{-q(n)}$, where $\ket{\Gamma}$ is a $(2n)$-qubit purification of the Gibbs state $\frac{e^{\beta A}}{\Tr(e^{\beta A})}$, $b = a + \poly(n)$, and $c(\alpha, \beta, p) \leq 2^{\poly(n)}$ is a \pspace-computable function.
\end{corollary}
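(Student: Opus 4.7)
The plan is to invoke \cref{lem:be-gibbs-pure} and strengthen its block-encoding guarantee into a state-level $2$-norm closeness between $V\ket{0^{2n+b}}$ and $\ket{\Gamma}\ket{0^b}$, which then gives the claimed operator-norm closeness of the corresponding density matrices. A naive attempt to use \cref{lem:be-gibbs-pure} as a black box will not suffice: from a $(1, \delta, b)$-block encoding of $\ketbra{\Gamma}{0^{2n}}$ one can only conclude $\norm{V\ket{0^{2n+b}} - \ket{\Gamma}\ket{0^b}}_2 = O(\sqrt{\delta})$ (by arguing that the ancilla-nonzero component has norm at most $\sqrt{2\delta}$), which with $\delta = c_1\sqrt{\eps}$ would only yield an $O(\eps^{1/4})$ bound rather than the required $O(\sqrt{\eps})$. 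I would therefore open up the proof of \cref{lem:be-gibbs-pure} and exploit the fact that $V$ is produced by fixed-point amplitude amplification (\cref{lem:be-fixedpoint}).

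First, recall from the proof of \cref{lem:be-gibbs-pure} that $V$ is obtained by feeding an intermediate $(\alpha'\nu, \eps_1, a+\poly(n))$-block encoding of $\ketbra{\Gamma}{0^{2n}}$ into \cref{lem:be-fixedpoint}, where $\eps_1 \leq c_1 \sqrt{\eps}$ for some \pspace-computable $c_1(\alpha,\beta,p)$. Looking inside \cref{lem:be-fixedpoint}, whose construction invokes \cref{thm:fixedpoint} with $\Pi$ the ancilla-zero projector and $\ket{s} = \ket{0^{\cdots}}$ followed by a SWAP, the resulting $V$ has the explicit form
\begin{align*}
    V\ket{0^{2n+b}} = \sqrt{1-\eta^{3^m}}\,\ket{\Gamma'} \otimes \ket{0^b} + \ket{\theta},
\end{align*}
where $\ket{\Gamma'}$ is a unit vector satisfying $\norm{\ket{\Gamma'} - \ket{\Gamma}}_2 \leq 2\eps_1$ (this is precisely the bound derived in the proof of \cref{lem:be-fixedpoint} from $1-\eps_1 \leq c \leq 1+\eps_1$), the amplitude satisfies $1-\eta^{3^m} \geq 1-2^{-r(n)}$ for a parameter $r(n)$ of our choosing, and $\ket{\theta}$ lies in the orthogonal complement of the ancilla-zero subspace with $\norm{\ket{\theta}}_2 = \sqrt{\eta^{3^m}} \leq 2^{-r(n)/2}$.

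Combining these facts with the triangle inequality and the elementary bound $1-\sqrt{1-x} \leq x$ for $x \in [0,1]$ gives
\begin{align*}
    \norm{V\ket{0^{2n+b}} - \ket{\Gamma}\ket{0^b}}_2 \leq \norm{\ket{\Gamma'} - \ket{\Gamma}}_2 + \bigl(1 - \sqrt{1-\eta^{3^m}}\bigr) + \norm{\ket{\theta}}_2 \leq 2\eps_1 + 2 \cdot 2^{-r(n)/2}.
\end{align*}
The standard inequality $\norm{\proj{\phi} - \proj{\psi}}_\infty \leq 2\norm{\ket{\phi} - \ket{\psi}}_2$ for unit vectors $\ket{\phi},\ket{\psi}$ then upgrades this to
\begin{align*}
    \norm{V\proj{0^{2n+b}}V^\dagger - \proj{\Gamma} \otimes \proj{0^b}}_\infty \leq 4c_1\sqrt{\eps} + 4 \cdot 2^{-r(n)/2}.
\end{align*}
Choosing $r(n) \geq 2q(n) + 4$ forces the second summand below $2^{-q(n)}$, and setting $c \deq 2c_1$ (which is \pspace-computable and bounded by $2^{\poly(n)}$ since $c_1$ is) gives the claimed bound $2c\sqrt{\eps} + 2^{-q(n)}$. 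The unitary $V$ is in \pup by \cref{lem:be-gibbs-pure}, so no additional work is needed to verify the complexity condition.

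The main obstacle is thus conceptual: recognising that merely quoting \cref{lem:be-gibbs-pure}'s operator-norm guarantee loses a quadratic factor in $\eps$, and that the resolution is to use the more refined structural output of fixed-point amplitude amplification, in which the ancilla-zero component is directly within $O(\eps_1) = O(\sqrt{\eps})$ of the target state in $2$-norm (rather than only within $O(\sqrt{\eps_1})$ as a generic block-encoding argument would give).
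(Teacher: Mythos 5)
Your proof is correct, and it is also a genuine improvement on the paper's own argument, which is not quite right as stated. After quoting \cref{lem:be-gibbs-pure}, the paper asserts the conclusion ``follows straightforwardly from \cref{def:be} and the triangle inequality''; as you observe, this black-box reasoning only yields a quadratically weaker bound. Given a $(1,\delta,b)$-block encoding $V$ of $\ketbra{\Gamma}{0^{2n}}$, \cref{def:be} controls the ancilla-zero slice $(I\otimes\bra{0^b})V\ket{0^{2n+b}}$ to within $\delta$ of $\ket{\Gamma}$ in $2$-norm; but since $V\ket{0^{2n+b}}$ is a unit vector this only constrains the ancilla-nonzero component to have norm at most $\sqrt{2\delta-\delta^2}$, and that is tight (take a $V$ that ``leaks'' amplitude $\sqrt{2\delta}$ into a single orthogonal ancilla configuration). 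Plugging in $\delta = c\sqrt{\eps}+2^{-q(n)}$ thus yields only $\norm{V\proj{0^{2n+b}}V^\dagger - \proj{\Gamma}\otimes\proj{0^b}}_\infty = O(\eps^{1/4} + 2^{-q(n)/2})$, not the stated $O(\sqrt{\eps}) + 2^{-q(n)}$. Your remedy --- unrolling \cref{lem:be-fixedpoint} to get the explicit form $V\ket{0^{2n+b}} = \sqrt{1-\eta^{3^m}}\ket{\Gamma'}\ket{0^b} + \ket{\theta}$ in which the off-block norm $\norm{\ket{\theta}}_2 = \sqrt{\eta^{3^m}}$ is controlled \emph{directly} by the amplitude-amplification iteration count (rather than inferred from an operator-norm guarantee) and $\norm{\ket{\Gamma'}-\ket{\Gamma}}_2 = O(\sqrt{\eps}) + O(2^{-q(n)})$ --- is exactly what is needed, and the error accounting checks out (your factor-$2$ slack from $\norm{\proj{\phi}-\proj{\psi}}_\infty \le 2\norm{\ket{\phi}-\ket{\psi}}_2$ is unnecessary, since for unit vectors $\norm{\proj{\phi}-\proj{\psi}}_\infty \le \norm{\ket{\phi}-\ket{\psi}}_2$ already holds, but it is harmless because $c$ is redefinable). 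I would only add that the paper's sole downstream use of this corollary, in \cref{lem:mmwu_extract}, drives the quantity here playing the role of $\eps$ to an arbitrarily small inverse exponential, so the $\eps^{1/4}$-versus-$\sqrt{\eps}$ distinction is immaterial for the main results --- but your argument is what is required to prove the corollary as literally written.
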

\begin{proof}
By \cref{lem:be-gibbs-pure}, there exists a $(1,c \sqrt{\eps} + 2^{-q(n)}, b)$-block encoding $V$ of the matrix $\ketbra{\Gamma}{0^{2n}}$, where $\ket{\Gamma}$ is a (normalised) purification of the Gibbs state $\frac{e^{\beta A}}{\Tr(e^{\beta A})}$ and $b = a + \poly(n)$.
The result then follows straightforwardly from \cref{def:be} and the triangle inequality.
\end{proof}

\section{Solving exponentially large SDPs in polynomial space} \label{sec:exp_sdp}

We will use the block encoding framework introduced in the previous section to solve SDP feasibility problems, which are specified by a Hermiticity-preserving linear map $\Phi: \linear(\cH) \to \linear(\cH)$ and a Hermitian matrix $B \in \linear(\cH)$; the goal is to find a positive semidefinite $X$ such that $\Phi(X) = B$.
In particular, we are interested in the case where the underlying Hilbert space $\cH$ has exponentially large dimension, but the instance $(\Phi,B)$ (a) has small-width and (b) is \pspace-computable. The small-width condition means that $\|B\|_\infty \leq 1$ and $\| \Phi^*(Y) \|_\infty \leq \| Y\|_\infty$. \pspace-computability of $\Phi$ and $B$ are as defined in \cref{def:pspace_comp}.
For this case we construct a quantum polynomial-space algorithm that prepares a quantum state $\rho$ whose density matrix is an approximately feasible point, i.e.~$\| \Phi(\rho)- B \|_1 \approx 0$.
In \cref{sec:state_classes}, we then use this algorithm as a quantum polynomial-space procedure to prepare the output states of $\stateQIP$-protocols by expressing such protocols as a feasibility SDP.

\subsection{Multiplicative weights update algorithm for the feasibility SDP} \label{sec:mmwu}

Our SDP feasibility algorithm will be based on the matrix multiplicative weights update (MMWU) framwork~\cite{arora2016combinatorial,kale2007efficient}.
In this section, we describe the algorithm at an abstract level using subroutines $\GibbsOracle$ and $\TraceDistOracle$, and prove that if these subroutines satisfy certain properties (see \cref{def:good_oracles}), then the algorithm is correct, i.e~produces a valid (approximately) feasible output. 
In \cref{sec:solving}, we will show how to implement these subroutines, as well as the algorithm as a whole, using the block encoding framework.

Let $B$ denote a $D \times D$ Hermitian matrix. Let $\Phi$ denote a Hermiticity-preserving linear map on the space of $D \times D$ operators.\footnote{More generally, $\Phi$ could be a linear map $\linear(\C^{D'}) \to \linear(\C^{D})$, i.e.~the input and output dimensions need not match. Indeed, this will be the case when applying our SDP algorithm in \cref{sec:state_classes}. However, we can always make the input and output dimensions match by padding, which we will do implicitly whenever necessary.} 
Suppose that $(\Phi,B)$ has small width.
The goal is to identify an \emph{approximately feasible} point of the following set: 
\[
    \{ \rho \text{ density matrix} : \Phi(\rho) = B \}~.
\]
We say that a density matrix $\rho$ is \emph{$\epsilon$-feasible for $(\Phi,B)$} if
\[
    \| \Phi(\rho) - B \|_1 \leq \epsilon
\]
where $\| \cdot \|_1$ denotes the trace norm.

Our \cref{algo:mmwu} will be able to handle not just the case where the feasibility SDP $(\Phi, B)$ has an exactly feasible point (that we are trying to approximate), but also the case where only approximately feasible solutions exist.
For this, it will be convenient to consider the following optimisation problem, which computes the minimum approximation error that we need to allow for the approximately feasible set to be non-empty:
\begin{equation}
    \label{eq:min-max}
\beta = \min_\rho \max_{H : \| H \|_\infty \leq 1} \langle \Phi(\rho) - B, H \rangle
\end{equation}
where the minimization in the first optimization problem is over all density matrices $\rho$ and the maximization is over all Hermitian matrices $H$ with spectral norm at most $1$. For fixed $\rho$, the inner maximization $\max_{H: \|H\|_\infty \leq 1} \langle \Phi(\rho) - B, H \rangle$ is simply the trace norm of $\| \Phi(\rho) - B \|_1$; here we use the variational characterization of the trace norm.
Thus, the solution to \Cref{eq:min-max} is the minimum $\beta$ such that there exists a $\beta$-feasible density matrix $\rho$ for $(\Phi,B)$. If $S$ is non-empty, then $\beta = 0$. 

We then consider \cref{algo:mmwu}, which takes as input the instance $(\Phi, B)$ and a parameter $\epsilon > 0$ and is meant to output an approximately point encoded as a quantum state.
This algorithm uses subroutines $\TraceDistOracle$ and $\GibbsOracle$, which we describe in more detail below.
The approximation error of the output will depend both on the parameter $\eps$ as well as the solution to \cref{eq:min-max} for $(\Phi, B)$ and the ``goodness'' of the subroutines  $\TraceDistOracle$ and $\GibbsOracle$ (see \cref{def:good_oracles}).

\begin{longfbox}[breakable=false, padding=1em, margin-top=1em, margin-bottom=1em]
\begin{algorithm} {\bf Matrix multiplicative weights algorithm for general SDPs} \label{algo:mmwu} \end{algorithm}
\noindent\underline{Input}  \vspace{-0.8ex}
\begin{itemize}
    \item SDP described by $(\Phi, B)$, where $B \in \linear(\C^{D})$.
    \item Parameter $\eps > 0$.
    \item Trace distance oracle $\TraceDistOracle$.
    \item Gibbs oracle $\GibbsOracle$.
\end{itemize}

\noindent\underline{Algorithm}  \vspace{-0.8ex}
\begin{enumerate}[label=\arabic*.]
    \item $T \leftarrow \lceil \frac{\ln D}{\epsilon^2} \rceil$. 
    \item $\rho_1 \leftarrow I/D$.
    \item For $t = 1,\ldots,T$:
    \begin{enumerate}
        \item $H_t \leftarrow \TraceDistOracle(\Phi(\rho_t) - B)$.
        \item $\rho_{t+1} \leftarrow \GibbsOracle(\Phi^*(H_1 + \cdots + H_t), \epsilon)$.
    \end{enumerate}
    \item Output $\rho \deq \frac{1}{T} \sum_{t=1}^T \rho_t$.
\end{enumerate}
\end{longfbox}

To analyse the correctness of \cref{algo:mmwu}, we need to specify what guarantees the subroutines $\TraceDistOracle$ and $\GibbsOracle$ are required to satisfy. They are procedures that take as input an operator in $\linear(\C^D)$ and output an operator in $\linear(\C^D)$. Intuitively, $\TraceDistOracle$ is meant to approximate the optimal $H$ in \cref{eq:min-max} and $\GibbsOracle$ is meant to approximate the Gibbs state of a given Hamiltonian. 

\begin{definition}[Good trace distance and Gibbs oracles] \label{def:good_oracles}
We call a trace distance oracle $\TraceDistOracle$ $(C, \delta)$-good if for all Hermitian operators $M \in \linear(\C^D)$ with $\norm{M}_\infty \leq C$, we have $$\norm{\TraceDistOracle(M)}_\infty \leq 2 \quad \text{and} \quad | \langle \TraceDistOracle(M), M\rangle - \norm{M}_1| \leq \delta.$$
Similarly, we call a Gibbs oracle $\GibbsOracle$ $(C, \delta)$-good if for all operators $M$ with $\norm{M}_\infty \leq C$ and $|\beta| \leq 1$, we have $$\norm{\GibbsOracle(M,\beta) - \frac{\exp(-\beta M)}{\Tr(\exp(-\beta M))} }_1 \leq \delta.$$
\end{definition} 

With this, we can prove that \cref{algo:mmwu} is correct, i.e.~outputs an approximately feasible solution.
This is formalised by the following theorem.
\begin{theorem}
\label{thm:mmwu}
Let $\Phi: \linear(\C^D) \to \linear(\C^D)$ denote a Hermiticity-preserving superoperator and let $B \in \linear(\C^D)$, and suppose that $(\Phi,B)$ has small width. 
Let $\eps, \delta > 0$.
Suppose we run \cref{algo:mmwu} with a $(2, \delta)$-good trace distance oracle and a $(\lceil 2 \ln D/\epsilon^2 \rceil,\delta)$-good Gibbs oracle.
Then, the output $\rho$ of \cref{algo:mmwu} is a $(2 \beta + 11\epsilon + 2\delta)$-feasible density matrix, where $\beta$ is the solution to \Cref{eq:min-max}. 
\end{theorem}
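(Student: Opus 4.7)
The plan is to adapt the standard matrix multiplicative weights (MMWU) analysis, treating \cref{algo:mmwu} as playing MMWU against an adversary that picks loss matrices $M_t = \Phi^*(H_t)$. Because $\TraceDistOracle$ is $(2,\delta)$-good and $\Phi^*$ is contracting (small-width hypothesis), we have $\|M_t\|_\infty \leq 2$ throughout, which is the analogue of the usual $\|M_t\|_\infty \leq 1$ assumption up to a factor of $2$.

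\textbf{Step 1 (ideal regret bound).} First I would establish the MMWU guarantee for the \emph{exact} iterates $\rho_t^\star \deq \exp(-\epsilon \Phi^*(\sum_{s<t} H_s))/Z_t$, defined from the same sequence $H_1,\dots,H_T$ that the algorithm actually produces. Using the potential $\Psi_t \deq \Tr[\exp(-\epsilon \sum_{s<t} \Phi^*(H_s))]$ together with the Golden--Thompson inequality and the scalar-matrix bound $\exp(-\epsilon M) \preceq I - \epsilon M + \epsilon^2 M^2$ (valid when $\epsilon \|M\|_\infty \leq 1$, which is ensured by $\epsilon \leq 1/2$; larger $\epsilon$ makes the claimed bound trivial), a telescoping calculation gives
\begin{equation*}
    \sum_t \Tr[\Phi^*(H_t)\,\rho_t^\star] \;\leq\; \min_{\sigma}\sum_t \Tr[\Phi^*(H_t)\,\sigma] \;+\; \frac{\ln D}{\epsilon} \;+\; 4\epsilon T,
\end{equation*}
where the $4\epsilon T$ term absorbs the quadratic correction $\epsilon^2 \Tr[M_t^2 \rho_t^\star]$ via $\|\Phi^*(H_t)\|_\infty^2 \leq 4$.

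\textbf{Step 2 (transfer to the approximate iterates).} The $(\lceil 2\ln D/\epsilon^2\rceil,\delta)$-goodness of $\GibbsOracle$ applies at step $t$ because $\|\Phi^*(\sum_{s<t} H_s)\|_\infty \leq 2T \leq \lceil 2\ln D/\epsilon^2\rceil$, giving $\|\rho_t - \rho_t^\star\|_1 \leq \delta$. By H\"older, $|\Tr[\Phi^*(H_t)(\rho_t - \rho_t^\star)]| \leq 2\delta$ per round, so replacing $\rho_t^\star$ by $\rho_t$ in the regret bound costs at most an additional $2T\delta$ on the left-hand side.

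\textbf{Step 3 (close the loop with feasibility and both oracles).} Taking $\sigma$ in the min to be any state $\sigma^\star$ achieving the min-max value $\beta$ in \eqref{eq:min-max}, so $\|\Phi(\sigma^\star)-B\|_1 \leq \beta$, and using $\|\bar H\|_\infty \leq 2$ for $\bar H \deq T^{-1}\sum_t H_t$, bounds the right-hand side by $T \langle \bar H, B\rangle + 2T\beta$. The $\TraceDistOracle$ guarantee then converts $T^{-1}\sum_t \langle H_t, \Phi(\rho_t)-B\rangle$ into a lower bound on $T^{-1}\sum_t \|\Phi(\rho_t)-B\|_1$ up to an additive $\delta$. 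Finally, the triangle inequality for $\|\cdot\|_1$ with linearity of $\Phi$ yields $\|\Phi(\rho) - B\|_1 \leq T^{-1}\sum_t \|\Phi(\rho_t) - B\|_1$. Substituting $T = \lceil \ln D/\epsilon^2\rceil$ (so $\ln D/(T\epsilon) \leq \epsilon$) and collecting terms gives $\|\Phi(\rho)-B\|_1 \leq 2\beta + O(\epsilon) + O(\delta)$, with the precise constants as stated.

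\textbf{Main obstacle.} The delicate point is the coupling between the two oracles: the $H_t$ returned by $\TraceDistOracle$ depends on $\rho_t$, which is itself an approximation produced by $\GibbsOracle$, so one might fear Gibbs-oracle errors propagate into subsequent $H_t$'s and compound multiplicatively over iterations. The observation that rescues the analysis is that the MMWU regret bound in Step~1 compares $\sum_t \Tr[\Phi^*(H_t)\rho_t^\star]$ to the optimum \emph{for the very same sequence $H_1,\dots,H_T$} that the algorithm actually produced, regardless of how that sequence was chosen; consequently the Gibbs-oracle error enters as a simple additive $\delta$ per round and remains $O(\delta)$ after averaging.
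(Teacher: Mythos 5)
Your proposal is correct and follows essentially the same route as the paper. The paper proves a companion lemma (Lemma~\ref{lem:mmwu}) via Golden--Thompson plus the scalar inequality from \cite[Lemma 4.8]{vidick2016quantum}, which yields the $\sinh(2\epsilon)$ error term, and then applies it by plugging in a comparator $\hat\rho$ achieving $\beta$ -- exactly your Steps 1 and 3, since $\min_\sigma \sum_t \Tr[\Phi^*(H_t)\sigma] = \lambda_{\min}(\Phi^*(\sum_t H_t))$. The only cosmetic difference is your use of $e^{-x}\preceq I - x + x^2$ in place of the $\sinh$-based bound, which changes the numerical constants ($4\epsilon T$ vs.\ $2T(\epsilon+\sinh 2\epsilon)$) but not the structure; your observation in Step~2 about the Gibbs-oracle error entering additively per round (because the regret bound is against the realized sequence $H_1,\dots,H_T$) is also exactly how the paper handles $\|\rho_t-\tilde\rho_t\|_1\le\delta$.
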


Note that if the feasible region for $(\Phi, B)$ is nonempty, then $\beta = 0$ and the algorithm outputs a $O(\epsilon)$-feasible solution. 
This theorem relies on the following lemma, which is proved in many forms in many papers that use the matrix multiplicative weights update framework; we include a full proof here for the sake of completeness.
\begin{lemma}
\label{lem:mmwu}
Running \cref{algo:mmwu} with the parameters specified in \cref{thm:mmwu}, the sequence of density matrices $\rho_1,\ldots,\rho_T$ generated by the algorithm satisfies
\begin{equation}
\label{eq:mmwu}
    \lambda_{\min}(\Phi^*(H_1 + \cdots + H_T)) \geq \left( \sum_{t=1}^T \langle \rho_t , \Phi^*(H_t) \rangle  \right) - \frac{\ln D}{\epsilon} - 2T (\epsilon + \delta + \sinh(2\epsilon))~.
\end{equation}
\end{lemma}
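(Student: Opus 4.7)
The plan is to follow the standard matrix multiplicative weights analysis and obtain matching upper and lower bounds on $\log \Psi_T$, where I define
\begin{equation*}
    W_t \deq \exp\!\bigl(-\epsilon\, \Phi^*(H_1+\cdots+H_t)\bigr), \qquad \Psi_t \deq \Tr(W_t),
\end{equation*}
so that $W_0 = I$ and $\Psi_0 = D$. The lower bound is immediate: since $\Psi_T$ is at least the largest eigenvalue of $W_T$, we have $\log \Psi_T \geq -\epsilon\,\lambda_{\min}(\Phi^*(H_1+\cdots+H_T))$. The upper bound will be obtained by a telescoping argument, so the entire job reduces to controlling $\Psi_t/\Psi_{t-1}$ in terms of $\langle \rho_t, \Phi^*(H_t)\rangle$.

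For a single telescoping step, I would apply the Golden--Thompson inequality, giving
\begin{equation*}
    \Psi_t \leq \Tr\!\bigl(W_{t-1}\exp(-\epsilon \Phi^*(H_t))\bigr) = \Psi_{t-1}\cdot \Tr\!\bigl(\tilde\rho_t\,\exp(-\epsilon \Phi^*(H_t))\bigr),
\end{equation*}
where $\tilde\rho_t \deq W_{t-1}/\Psi_{t-1}$ is the ideal Gibbs state at step $t$. The small-width property together with the $(2,\delta)$-goodness of $\TraceDistOracle$ ensures $\|\Phi^*(H_t)\|_\infty \leq \|H_t\|_\infty \leq 2$, which lets me apply the scalar convexity bound $e^{-\epsilon x} \leq \cosh(2\epsilon) - (\sinh(2\epsilon)/2)\,x$ for $x \in [-2,2]$ (exact at $x = \pm 2$) as the operator inequality $\exp(-\epsilon A) \preceq \cosh(2\epsilon)\,I - (\sinh(2\epsilon)/2)\,A$. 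The Gibbs oracle guarantee $\|\tilde\rho_t - \rho_t\|_1 \leq \delta$ (with equality for $t=1$) combined with H\"older's inequality gives $|\langle \tilde\rho_t - \rho_t,\Phi^*(H_t)\rangle| \leq 2\delta$, so swapping $\tilde\rho_t$ for $\rho_t$ in the linear term adds at most $\delta\,\sinh(2\epsilon)$. Then taking the logarithm by means of $\log(c+y) \leq \log c + y/c$ with $c = \cosh(2\epsilon)$ yields
\begin{equation*}
    \log(\Psi_t/\Psi_{t-1}) \leq \log\cosh(2\epsilon) - (\tanh(2\epsilon)/2)\langle \rho_t,\Phi^*(H_t)\rangle + \delta\,\tanh(2\epsilon).
\end{equation*}

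To extract the exact form of the error, I would split $-(\tanh(2\epsilon)/2) = -\epsilon + (\epsilon - \tanh(2\epsilon)/2)$; the correction is nonnegative since $\tanh(2\epsilon) \leq 2\epsilon$, and bounding it against $|\langle \rho_t,\Phi^*(H_t)\rangle| \leq 2$ contributes a slack of $2\epsilon - \tanh(2\epsilon)$. The three leftover pieces then fit into the advertised error via the elementary inequalities $\log\cosh(2\epsilon) \leq 2\epsilon^2$ (from $\cosh x \leq e^{x^2/2}$), $\delta\,\tanh(2\epsilon) \leq 2\epsilon\,\delta$ (from $\tanh(2\epsilon) \leq 2\epsilon$), and $2\epsilon - \tanh(2\epsilon) \leq 2\epsilon\,\sinh(2\epsilon)$ (which one can verify by noting that $\tanh(x) + x\sinh(x) - x$ vanishes with its derivative at $0$ and has nonnegative second derivative on $[0,\infty)$). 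Summing these three contributions gives precisely the per-iteration error $2\epsilon(\epsilon + \delta + \sinh(2\epsilon))$.

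Summing the per-iteration bounds over $t = 1,\ldots,T$, combining with $\log \Psi_T \geq -\epsilon\,\lambda_{\min}(\Phi^*(\sum_t H_t))$, and dividing by $-\epsilon$ produces the claimed inequality. The main obstacle is the exact bookkeeping of error terms: the oracle error must be absorbed \emph{inside} the logarithm so that the $\delta$ term gets multiplied by $\sinh(2\epsilon)$ rather than the cruder $e^{2\epsilon}$ that one would get from applying the operator-norm bound $\|\exp(-\epsilon \Phi^*(H_t))\|_\infty \leq e^{2\epsilon}$ directly; and the three scalar inequalities in the final step must be chosen to match the specific form $\epsilon + \delta + \sinh(2\epsilon)$ that the lemma advertises, rather than some equivalent shape with worse constants.
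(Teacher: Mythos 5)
Your proposal is correct and follows the same overall skeleton as the paper (potential function $\Psi_t = \Tr(\exp(-\epsilon\Phi^*(H_1+\cdots+H_t)))$, Golden--Thompson, telescoping, $\log\Psi_T \geq -\epsilon\lambda_{\min}$), but the central analytic ingredient differs. The paper invokes an external result, Lemma~4.8 of Vidick--Watrous, which is the multiplicative bound $\langle \tilde\rho, e^{-\epsilon A}\rangle \leq \exp(-\epsilon e^{-2\epsilon}\langle\tilde\rho, A\rangle)\exp(2\epsilon\sinh(2\epsilon))$ for $\|A\|_\infty\leq 2$, and only afterwards cleans up the factor $e^{-2\epsilon}$ with a crude linearization and absorbs the trace-norm error from the oracle at the very end. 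You instead use the elementary secant-line bound $e^{-\epsilon A} \preceq \cosh(2\epsilon)\,I - (\sinh(2\epsilon)/2)A$ (operator convexity of $e^{-\epsilon x}$ on $[-2,2]$, exact at the endpoints), followed by the log concavity estimate $\log(c+y)\leq\log c + y/c$. This is a self-contained derivation that avoids citing the external lemma, gives tighter intermediate estimates (the secant is exact at $x=\pm 2$), and keeps the three error contributions ($\log\cosh$, the $\tanh$-vs-$\epsilon$ slack, the oracle error) cleanly separated so they match $\epsilon+\delta+\sinh(2\epsilon)$ without further juggling.

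One small imprecision: after swapping $\tilde\rho_t\mapsto\rho_t$, you bound the slack term against $|\langle\rho_t,\Phi^*(H_t)\rangle|\leq 2$, but $\rho_t$ is the (possibly non-normalized, possibly non-PSD) output of $\GibbsOracle$ and is only guaranteed $\delta$-close in trace norm to the genuine Gibbs state $\tilde\rho_t$; the true bound is $|\langle\rho_t,\Phi^*(H_t)\rangle|\leq 2+2\delta$. This does not break the argument — the extra $(2\epsilon-\tanh(2\epsilon))\delta$ combines with $\delta\tanh(2\epsilon)$ to give exactly $2\epsilon\delta$, so the final bound $2\epsilon(\epsilon+\delta+\sinh(2\epsilon))$ still holds — but you should either track the $(1+\delta)$ factor explicitly, or (as the paper does) perform the $\tilde\rho_t\to\rho_t$ substitution after the $\epsilon$-vs-$\tanh$ split so that the object whose inner product you bound by $2$ is the genuine density matrix $\tilde\rho_t$.
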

We defer the proof of this lemma and first show how \Cref{thm:mmwu} follows from \Cref{lem:mmwu}.

\begin{proof}[Proof of \Cref{thm:mmwu}]
Let $\hat{\rho}$ denote a density matrix achieving the minimum of~\Cref{eq:min-max}. We then have
\begin{align*}
    \lambda_{\min}(\Phi^*(H_1 + \cdots + H_T)) - \sum_{t=1}^T \langle B, H_t \rangle &\leq \sum_{t=1}^T \langle \hat{\rho}, \Phi^*(H_t) \rangle - \sum_{t=1}^T \langle B, H_t \rangle \\
    &= \sum_{t=1}^T \langle \Phi(\hat{\rho}) - B, H_t \rangle \\
    &= \sum_{t=1}^T \| H_t \|_\infty \cdot \Big \langle \Phi(\hat{\rho}) - B, \frac{H_t}{\| H_t \|_\infty} \Big \rangle  \\
    &\leq 2 T \beta~.
\end{align*}
The last step uses that $\norm{H_t}_\infty \leq 2$ (from the trace distance oracle normalisation condition), and that $\frac{H_t}{\| H_t \|_\infty}$ is a feasible choice for \cref{eq:min-max}, so that $\Big \langle \Phi(\hat{\rho}) - B, \frac{H_t}{\| H_t \|_\infty} \Big \rangle \leq \norm{\Phi(\hat{\rho}) - B}_1 = \beta$.

We now put this together with \Cref{lem:mmwu}. Subtracting $\sum_{t=1}^T \langle B, H_t \rangle$ from both sides of \Cref{eq:mmwu} and re-arranging, we get
\begin{align*}
\sum_{t=1}^T \langle \Phi(\rho_t) - B , H_t \rangle &\leq \lambda_{\min}(\Phi^*(H_1 + \cdots + H_T)) - \sum_{t=1}^T \langle B, H_t \rangle + \frac{\ln D}{\epsilon}+ 2T (\epsilon + \delta + \sinh(2\epsilon)) \\
&\leq 2 T \beta + \frac{\ln D}{\epsilon}+ 2T (\epsilon + \delta + \sinh(2\epsilon))~.
\end{align*}
Since the trace distance oracle is $(2,\delta)$-good and by the norm conditions on $\Phi^*$ and $B$,
\begin{align*}
    \| \Phi(\rho_t) - B \|_\infty &= \sup_{\ket{v}} \Tr \Big( \ketbra{v}{v} \, \Big( \Phi(\rho_t) - B \Big) \Big) \\
    &\leq \sup_{\ket{v}} \Tr \Big( \Phi^*(\ketbra{v}{v}) \, \rho_t \Big) + \Big| \Tr \Big( \ketbra{v}{v} \, B \Big) \Big | \\
    &\leq \sup_{\ket{v}} \| \Phi^*(\ketbra{v}{v}) \|_\infty + \| B \|_\infty \leq 2 \,,
\end{align*}
this implies that
\[
    \langle \Phi(\rho_t) - B, H_t \rangle \geq \| \Phi(\rho_t) - B \|_1 - \delta~.
\]
Thus we get that
\begin{align*}
\frac{1}{T} \sum_{t=1}^T \| \Phi(\rho_t) - B \|_1 
&\leq \frac{1}{T} \sum_{t=1}^T \Big( \langle \Phi(\rho_t) - B , H_t \rangle  + \delta \Big) \\
&\leq 2 \beta + \frac{\ln D}{T \eps} + 2 (\eps + \delta + \sinh(2\eps)) \\
&\leq 2 \beta + 11\epsilon + 2\delta
\end{align*}
where in the last line we used that $2\sinh(2\eps) \leq 8\eps$ for all $0 \leq \epsilon \leq 1$ and $T \geq \frac{\ln D}{\eps^2}$.

Let $\rho = \frac{1}{T}\sum_t \rho_t$ denote the output of the algorithm. Then $\rho$ is a $(2 \beta + 11\epsilon + 2\delta )$-feasible point for the set $S$, because
\begin{align*}
    \| \Phi(\rho) - B \|_1 = \Big \| \frac{1}{T} \sum_{t=1}^T \Phi(\rho_t) - B \Big \|_1 
    \leq \frac{1}{T} \sum_{t=1}^T \| \Phi(\rho_t) - B \|_1~.
\end{align*}
This concludes the proof of \Cref{thm:mmwu}.
\end{proof}

We now show \Cref{lem:mmwu}.
\begin{proof}[Proof of \Cref{lem:mmwu}]
First we argue that the small width property implies $\| \Phi(\rho_t) - B \|_\infty \leq 2$. This is because $\|B\|_\infty \leq 1$ and
\begin{align*}
    \| \Phi(\rho_t) \|_\infty &= \max_{A : \|A\|_1 \leq 1} \langle \Phi(\rho_t), A \rangle = \max_{A : \|A\|_1 \leq 1} \langle \rho_t, \Phi^*(A) \rangle \\
    &\leq \Tr(\rho_t) \cdot \max_{A : \|A\|_1 \leq 1} \| \Phi^*(A) \|_\infty \leq 1
\end{align*}
where in the last line we used the fact that $\|A\|_\infty \leq \|A\|_1 \leq 1$ and our norm assumption on $\Phi^*$. Combined with the fact that the trace distance oracle is $(2,\delta)$-good, we have $\| H_t \|_\infty \leq 2$. For all $t = 1,\ldots,T$ define $X_{t+1} = \exp(-\epsilon \Phi^*(H_1 + \cdots + H_t))$
and define $\tilde{\rho}_t = X_t /\Tr(X_t)$. Since the Gibbs oracle is $(2T,\delta)$-good and $\eps \| \Phi^* (H_1 + \cdots + H_t) \|_\infty \leq \sum_{j = 1}^t \| \Phi^*(H_j)\|_\infty \leq 2 T$
this means that $\| \rho_t - \tilde{\rho}_t \|_1 \leq \delta$.

By the Golden-Thompson inequality, we have
\begin{align*}
    \Tr(X_{t+1}) &\leq \Tr(X_t) \, \langle \tilde{\rho}_t, \exp( -\epsilon \, \Phi^*(H_t)) \rangle 
\end{align*}
for all $t$.
Therefore, since $\tr{X_1} = \tr{\1_D} = D$,
\begin{align}
    \Tr(X_{T+1}) &\leq D \, \prod_{t=1}^{T} \langle \tilde{\rho}_t, \exp( -\epsilon \, \Phi^*(H_t)) \rangle~. \label{eqn:mmwu1}
\end{align}
By \cite[Lemma 4.8]{vidick2016quantum}, since $\| \Phi^*(H_t) \|_\infty \leq 2$ (which follows from the small width property), we have the inequality 
\begin{align}
\langle \tilde{\rho}_t, \exp( -\epsilon \, \Phi^*(H_t)) \rangle \leq \exp \Big( -\epsilon \, \exp(-2\epsilon) \, \langle \tilde{\rho}_t, \Phi^*(H_t) \rangle \Big) \, \exp( 2\epsilon \, \sinh(2\epsilon))~. \label{eqn:mmwu2}
\end{align}
On the other hand, we have that
\begin{align}
\Tr(X_{T+1}) \geq \lambda_{\max} \Big(\exp(-\epsilon \, \Phi^*(H_1 + \cdots + H_T) \Big) = \exp \Big( - \epsilon \, \lambda_{\min}(\Phi^*(H_1 + \cdots + H_T)) \Big)~. \label{eqn:mmwu3}
\end{align}
Combining \cref{eqn:mmwu1}, \cref{eqn:mmwu2}, and \cref{eqn:mmwu3} we get
\[
    \exp \Big( - \epsilon \, \lambda_{\min}(\Phi^*(H_1 + \cdots + H_T)) \Big) \leq D \, \prod_{t=1}^{T} \Big [\exp \Big( -\epsilon \, \exp(-2\epsilon) \, \langle \tilde{\rho}_t, \Phi^*(H_t) \rangle \Big) \, \exp( 2\epsilon \, \sinh(2\epsilon)) \Big ]~.
\]
Taking natural logarithms of both sides we get
\[
    \lambda_{\min}(\Phi^*(H_1 + \cdots + H_T)) \geq \exp(-2\epsilon) \, \left( \sum_{t=1}^T \langle \tilde{\rho}_t, \Phi^*(H_t) \rangle  \right) - \frac{\ln D}{\epsilon} - 2 T\sinh(2\epsilon)~.
\]
We conclude by observing that 
\begin{align*} \exp(-2\epsilon) \, \sum_{t=1}^T \langle \tilde{\rho}_t, \Phi^*(H_t) \rangle &\geq (1 - 2\epsilon) \sum_{t=1}^T \langle \tilde{\rho}_t, \Phi^*(H_t) \rangle \\
&\geq \left( \sum_{t=1}^T \langle \tilde{\rho}_t, \Phi^*(H_t) \rangle \right) - 2\epsilon T  \\
&\geq \left( \sum_{t=1}^T \langle \rho_t, \Phi^*(H_t) \rangle  \right) - (2\epsilon + \delta)T
\end{align*}
where the second line comes from
\[
    \Big | \sum_{t=1}^T \langle \tilde{\rho}_t, \Phi^*(H_t) \rangle\Big | \leq \sum_{t=1}^T \Big | \langle \tilde{\rho}_t, \Phi^*(H_t) \rangle \Big | \leq T
\]
and the third line comes from the fact that $\| \rho_t - \tilde{\rho}_t \|_1 \leq \delta$. 
\end{proof}

\subsection{Implementing the MMWU algorithm using block encodings}
\label{sec:solving}

Having established the correctness of \cref{algo:mmwu}, we need to show that if $\Phi$ and $B$ are \pspace-computable, we can actually implement the algorithm in quantum polynomial space.
For this, we will make use of the block encoding framework.
As a first step, we show that we can implement ``good'' trace distance and Gibbs oracles on block encodings.

\subsubsection{Implementing the trace distance and Gibbs oracles}

\begin{lemma}[Trace distance oracle] \label{lem:poly_as_tdoracle}
Let $D \in \N$, $C \geq 1$, and $\delta > 0$.
Let $\kappa = \frac{\delta}{6DC}$. 
Then, for sufficiently large $d = O\left(\frac{\log(1/\kappa)}{\kappa}\right)$, the function $\TraceDistOracle(A) \deq P_{d}^{\sgn} (A/C)$ is a $(C,\delta)$-good trace distance oracle for operators $A \in \linear(\C^D)$, where $P_{d}^{\sgn}$ is the polynomial defined in \cref{lem:sign-approx}.
\end{lemma}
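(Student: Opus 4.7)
The plan is to reduce both conditions of \cref{def:good_oracles} to scalar estimates by diagonalising $A$ and invoking the pointwise guarantees of the polynomial $P_d^{\sgn}$ from \cref{lem:sign-approx}. Write the eigendecomposition $A = \sum_{i=1}^D \lambda_i \proj{v_i}$; since $\norm{A}_\infty \leq C$, the rescaled eigenvalues satisfy $\lambda_i/C \in [-1,1]$. Then $P_d^{\sgn}(A/C) = \sum_i P_d^{\sgn}(\lambda_i/C) \proj{v_i}$, so everything factors through values of $P_d^{\sgn}$ on $[-1,1]$.

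The operator norm bound is immediate: by the second property in \cref{lem:sign-approx}, $|P_d^{\sgn}(\lambda_i/C)| \leq 1 + \kappa \leq 2$, provided $\kappa \leq 1$ (which follows from $\delta \leq 6DC$, beyond which the inner-product condition is vacuous).

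For the inner-product bound, first note that $\langle \sgn(A), A\rangle = \sum_i \sgn(\lambda_i)\lambda_i = \sum_i |\lambda_i| = \norm{A}_1$. Hence
\[
\bigl|\langle P_d^{\sgn}(A/C), A\rangle - \norm{A}_1\bigr| \leq \sum_{i=1}^D |\lambda_i|\,\bigl|P_d^{\sgn}(\lambda_i/C) - \sgn(\lambda_i/C)\bigr|,
\]
using $\sgn(\lambda_i) = \sgn(\lambda_i/C)$. I would split the sum into two regimes. For indices with $|\lambda_i/C| > \kappa$, the first property of \cref{lem:sign-approx} gives $|P_d^{\sgn}(\lambda_i/C) - \sgn(\lambda_i/C)| \leq \kappa$, contributing at most $\kappa |\lambda_i| \leq \kappa C$ per term. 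For indices with $|\lambda_i/C| \leq \kappa$, we only have the crude bound $|P_d^{\sgn}(\lambda_i/C) - \sgn(\lambda_i/C)| \leq (1+\kappa) + 1 \leq 3$, but now $|\lambda_i| \leq \kappa C$, so each such term contributes at most $3\kappa C$. Summing over the $D$ eigenvalues gives a total error of at most $3D\kappa C$.

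Plugging in $\kappa = \delta/(6DC)$ yields $3D\kappa C = \delta/2 \leq \delta$, as required. The degree $d = O(\log(1/\kappa)/\kappa)$ is precisely the value for which \cref{lem:sign-approx} produces a polynomial with the needed pointwise approximation guarantees, so the parameter choice is consistent. There is no substantive obstacle here; the only point requiring care is the bookkeeping for the small-eigenvalue regime, where the trivial bound on $|P_d^{\sgn} - \sgn|$ forces the extra factor of $D$ in the denominator of $\kappa$.
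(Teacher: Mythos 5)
Your proof is correct and follows essentially the same route as the paper's: diagonalise $A$, split the eigenvalues at the threshold $|\lambda_i/C| = \kappa$, use the pointwise guarantees of \cref{lem:sign-approx} ($\kappa$-accuracy away from the origin, boundedness by $1+\kappa$ overall) in each regime, and verify that the choice $\kappa = \delta/(6DC)$ makes the total at most $\delta$. The only cosmetic difference is that you collapse both regimes into a single crude per-eigenvalue bound of $3\kappa C$ rather than summing $\delta/2 + \delta/2$ as the paper does.
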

\begin{proof}
Let $A \in \linear(\C^D)$ denote a Hermitian matrix with spectral norm at most $C$. The operator $A/C$ has spectral norm at most $1$, and therefore by \cref{lem:sign-approx} the operator $P_{d}^{\sgn}(A/C)$ has spectral norm at most $2$.
This establishes the first property of a trace distance oracle as stated in \cref{def:good_oracles}.

Let $A = \sum_{j=1}^D \lambda_j \ketbra{v_j}{v_j}$ denote the spectral decomposition of $A$. Let $S = \{ j : |\lambda_j/C| \leq \kappa \}$ (i.e.~the set of ``small'' eigenvalues) and let $L$ denote the complement of $S$ (i.e.~the set of ``large'' eigenvalues). Then 
\begin{align*}
    \left|\langle P_{d}^{\sgn}(A/C) , A \rangle - \norm{A}_1\right| 
    &= \left| \sum_{j=1}^D P_{d}^{\sgn}(\lambda_j/C) \cdot \lambda_j - |\lambda_j| \right|\\
    &\leq \sum_{j=1}^D \left| (P_{d}^{\sgn}(\lambda_j/C) - \sgn(\lambda_j/C)) \cdot \lambda_j \right|\\
    &= \sum_{j \in S} \left| P_{d}^{\sgn}(\lambda_j/C) - \sgn(\lambda_j/C)\right| \cdot |\lambda_j| + \sum_{j \in L} \left| P_{d}^{\sgn}(\lambda_j/C) - \sgn(\lambda_j/C)\right| \cdot |\lambda_j|\,.
\end{align*}
We bound the terms separately. First, for the small eigenvalues we have
\[
    \sum_{j \in S} \left| P_{d}^{\sgn}(\lambda_j/C) - \sgn(\lambda_j/C)\right| \cdot |\lambda_j|
    \leq \sum_{j\in S} 3 |\lambda_j| \leq 3DC \kappa = \delta/2~.
\]
In the first inequality, we made use of the fact that $|P_{d}^{\sgn}(\lambda_j/C)| \leq 2$ and $|\sgn(\lambda_j/C)| = 1$, the second inequality follows from $|\lambda_j| \leq C \kappa$ and $|S| \leq D$, and the final equality holds due to our choice of $\kappa$.

For the large eigenvalues, we observe that if $j \in L$, then $|\lambda_j / C| \in [\kappa, 1]$, so by \cref{lem:sign-approx} for sufficiently large $d = O(\log(1/\kappa)/\kappa)$ we have that $\left| P_{d}^{\sgn}(\lambda_j/C) - \sgn(\lambda_j/C)\right| \leq \kappa$.
Hence
\begin{align*}
    \sum_{j \in L} \left| P_{d}^{\sgn}(\lambda_j/C) - \sgn(\lambda_j/C)\right| \cdot |\lambda_j|
    \leq \sum_{j \in L} \kappa \cdot |\lambda_j| \leq \kappa D C \leq \delta /2 \,.
\end{align*}
The second inequality follows because $|L| \leq D$ and $|\lambda_j| \leq C$ since $\norm{A}_\infty \leq C$ by assumption.
The last inequality holds due to our choice of $\kappa$.
Summing both bounds yields the desired result.
\end{proof}

Combining \cref{lem:poly_as_tdoracle} and \cref{lem:be-sign-poly}, we obtain the following result.
\begin{corollary}[Trace distance oracle with block encodings] \label{lem:be-tdoracle}
For any $\alpha \leq 2^{\poly(n)}$ and $\delta \geq 2^{-\poly(n)}$, there exists an $(\alpha, \delta)$-good trace distance oracle $\TraceDistOracle$ that satisfies the following additional property: for any Hermitian $A \in \linear(\C^D)$ with an $(\alpha, \eps, a)$-block encoding $U$ in \pup and any polynomial $q(n)$, there exists an $(\alpha', \alpha'2^{-q(n)} + \alpha'' \sqrt{\eps}, a + \poly(n))$-block encoding of $\TraceDistOracle(A)$ in \pup, where $\alpha' = \alpha'(\alpha, \delta, D) = \poly(n)$ and $\alpha'' = \alpha''(\alpha, \delta, D) \leq 2^{\poly(n)}$ are  \pspace-computable functions derived from \cref{lem:be-sign-poly}.
\end{corollary}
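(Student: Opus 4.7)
The proof is essentially a direct composition of \Cref{lem:poly_as_tdoracle} with \Cref{lem:be-sign-poly}; the only work is checking that the parameter regime $\alpha \leq 2^{\poly(n)}$, $\delta \geq 2^{-\poly(n)}$ keeps the degree $d$ at most $2^{\poly(n)}$, so that \Cref{lem:be-sign-poly} applies and every resulting constant is either $\poly(n)$ or at worst $2^{\poly(n)}$ and is \pspace-computable.

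First I would fix $C \deq \alpha$, set $\kappa \deq \delta/(6 D \alpha)$, and pick $d = O(\log(1/\kappa)/\kappa)$ large enough for the conclusion of \Cref{lem:poly_as_tdoracle} to hold. Since $D \leq 2^{\poly(n)}$ (this is the standing convention in \Cref{sec:block-encodings}), $\alpha \leq 2^{\poly(n)}$, and $\delta \geq 2^{-\poly(n)}$, we have $\kappa \geq 2^{-\poly(n)}$, so that $d \leq 2^{\poly(n)}$ as required by \Cref{lem:be-sign-poly} and the approximation range of \Cref{lem:sign-approx}. Define $\TraceDistOracle(A) \deq P_d^{\sgn}(A/\alpha)$; \Cref{lem:poly_as_tdoracle} directly gives that this is an $(\alpha,\delta)$-good trace distance oracle.

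Next I would invoke \Cref{lem:be-sign-poly} on the given $(\alpha,\eps,a)$-block encoding $U$ of $A$. For every polynomial $q(n)$ this produces, in \pup, an
\[
    \bigl(\tilde\alpha,\; \tilde\alpha 2^{-q(n)} + 4 d \tilde\alpha \sqrt{\eps/\alpha},\; a + \poly(n)\bigr)\text{-block encoding of } P_d^{\sgn}(A/\alpha),
\]
where $\tilde\alpha = O(\log d) = \poly(n)$ is \pspace-computable given $d$. Setting $\alpha' \deq \tilde\alpha$ and $\alpha'' \deq 4 d \tilde\alpha/\sqrt{\alpha}$, both quantities are \pspace-computable functions of $(\alpha,\delta,D)$ (since $d$ and $\tilde\alpha$ are, and $\alpha$ is assumed \pspace-accessible); furthermore $\alpha' = \poly(n)$ and $\alpha'' \leq 2^{\poly(n)}$ because $d \leq 2^{\poly(n)}$. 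Substituting into the error bound yields exactly the claimed $(\alpha',\,\alpha' 2^{-q(n)} + \alpha'' \sqrt{\eps},\, a + \poly(n))$-block encoding of $\TraceDistOracle(A)$ in \pup.

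There is no real obstacle here; the only thing one has to be a bit careful about is that the degree $d$ of the sign polynomial, although exponential, stays inside the window $d \leq 2^{\poly(n)}$ allowed by \Cref{lem:be-sign-poly}, and that the resulting \pspace-computability of $\alpha'$ and $\alpha''$ really does follow from the \pspace-computability asserted for the coefficients of $P_d^{\sgn}$ in \Cref{lem:sign-approx}. Both are immediate from the parameter assumptions.
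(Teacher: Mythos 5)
Your proof is correct and is essentially the same proof the paper intends — the paper explicitly introduces the corollary with the phrase ``Combining \cref{lem:poly_as_tdoracle} and \cref{lem:be-sign-poly}, we obtain the following result'' and gives no further details, so the composition you spell out is exactly the one being invoked. The parameter bookkeeping ($\kappa = \delta/(6D\alpha) \geq 2^{-\poly(n)}$ hence $d = O(\log(1/\kappa)/\kappa) \leq 2^{\poly(n)}$, and the identification $\alpha' = \tilde\alpha$, $\alpha'' = 4d\tilde\alpha/\sqrt{\alpha}$) is right, and the resulting magnitudes and \pspace-computability claims follow as you say.
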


\begin{lemma}[Gibbs oracle] \label{lem:be-gibbs-oracle}
For any $k = \poly(n)$ and $\delta \geq 2^{-\poly(n)}$ there exists a $(k,\delta)$-Gibbs oracle \GibbsOracle that satisfies the following additional property:
for any $\eps \in (0,1)$ and $\alpha \leq 2^{\poly(n)}$, if a matrix $A \in \linear(\C^D)$ has an $(\alpha, \delta^2/(2 c^2 D^2), a)$-block encoding in \pup, there is a $(1,0,a+\poly(n))$-block encoding of $\GibbsOracle(A,\eps)$ in \pup.
Here, $c(\alpha, \eps, k)$ is the \pspace-computable function defined in \cref{lem:be-gibbs-mixed}.
\end{lemma}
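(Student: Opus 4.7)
I would define $\GibbsOracle(A, \eps)$ to be the operator extracted from the block encoding produced by \cref{lem:be-gibbs-mixed}, and then verify $(k, \delta)$-goodness by an error-parameter calculation. Concretely, given an input $(\alpha, \eps', a)$-block encoding of $A$ in \pup with $\eps' = \delta^2 / (2 c^2 D^2)$, I first observe that \cref{lem:be-gibbs-mixed} applies here with bound $\norm{A}_\infty \leq k = \poly(n)$ and parameter $\beta = -\eps \in [-1, 1]$ (the proof of \cref{lem:be-gibbs-mixed} works identically for negative $\beta$, since \cref{lem:be-exp} handles arbitrary-sign exponents via its Taylor expansion). Invoking \cref{lem:be-gibbs-mixed}, for every polynomial $q(n)$ I obtain a \pup-computable $(1,\; c \sqrt{\eps'} + 2^{-q(n)},\; a + \poly(n))$-block encoding $V$ of the exact Gibbs state
\[ G \deq \frac{\exp(-\eps A)}{\tr{\exp(-\eps A)}}, \]
where $c = c(\alpha, \eps, k)$ is precisely the \pspace-computable function from \cref{lem:be-gibbs-mixed}.

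Next, I would simply set
\[ \GibbsOracle(A, \eps) \;\deq\; (I \otimes \bra{0^{a'}})\, V \,(I \otimes \ket{0^{a'}}), \qquad a' \deq a + \poly(n). \]
By construction $V$ is then an exact $(1, 0, a')$-block encoding of $\GibbsOracle(A, \eps)$, which settles the ``additional property'' claim of the lemma. For $(k, \delta)$-goodness in the sense of \cref{def:good_oracles}, the block encoding error bound yields
\[ \norm{\GibbsOracle(A, \eps) - G}_\infty \;\leq\; c \sqrt{\eps'} + 2^{-q(n)} \;=\; \frac{\delta}{\sqrt{2}\, D} + 2^{-q(n)}. \]
Choosing $q(n) = \poly(n)$ large enough (possible since $\delta \geq 2^{-\poly(n)}$ and $D \leq 2^{\poly(n)}$) so that $2^{-q(n)} \leq (1 - 1/\sqrt{2})\,\delta / D$, I would bound $\norm{\GibbsOracle(A, \eps) - G}_\infty \leq \delta / D$. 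Converting this to trace norm via the standard inequality $\norm{X}_1 \leq D \norm{X}_\infty$ for $D \times D$ operators gives $\norm{\GibbsOracle(A, \eps) - G}_1 \leq \delta$, which is exactly the $(k, \delta)$-goodness condition.

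The main subtlety here is precision calibration rather than any genuine technical difficulty: because \cref{lem:be-gibbs-mixed} yields only a $\sqrt{\eps'}$-dependence on the input block encoding error, and because converting operator norm to trace norm costs a factor of $D$, the input block encoding must be quadratically more precise than the target trace-norm accuracy (scaled by $1/D$). This is precisely what the hypothesis $\eps' = \delta^2 / (2 c^2 D^2)$ encodes. Beyond this bookkeeping step, the proof is essentially a direct unpacking of \cref{lem:be-gibbs-mixed} together with the observation that an operator ``read off'' from a block encoding automatically admits an exact block encoding of itself.
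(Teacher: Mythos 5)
Your proof is correct and follows essentially the same route as the paper's: invoke \cref{lem:be-gibbs-mixed} on the given block encoding to obtain a $(1,\cdot,a+\poly(n))$-block encoding $V$ of the Gibbs state, define $\GibbsOracle(A,\eps)$ to be the top-left block of $V$ (so $V$ is automatically a $(1,0,a+\poly(n))$-block encoding of $\GibbsOracle(A,\eps)$), and convert the $\infty$-norm guarantee to trace norm by paying a factor of $D$, which is exactly what the $\eps' = \delta^2/(2c^2D^2)$ hypothesis is calibrated to absorb. The only cosmetic difference is that you make the sign convention explicit (passing $\beta = -\eps$ to \cref{lem:be-gibbs-mixed}) and spell out the choice of $q(n)$ that the paper leaves implicit, both minor improvements in precision over the paper's terse write-up.
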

\begin{proof}
It suffices to define $\GibbsOracle$ on states that have an $(\alpha, \delta^2/(2 c^2 D^2),a)$-block encoding in \pup because we will only apply $\GibbsOracle$ to such states in our block-encoding based implementation of \cref{algo:mmwu} in \cref{thm:solving}.\footnote{Note that it is a slight abuse of terminology to call what we are constructing here a Gibbs oracle because (i) it is only defined for certain matrices, and (ii) it requires as input not that matrix, but rather a block encoding of that matrix. However, in \cref{thm:solving} whenever we are applying the Gibbs oracle, the input will be a matrix whose block encoding we have access to, so this slightly weaker notion of a Gibbs oracle is sufficient for our purposes.}
For such $A$ with $\norm{A}_\infty \leq k$, by \cref{lem:be-gibbs-mixed} there exists a $(1,\delta/D,b)$-block encoding $V$ of $e^{\beta A}/\tr{e^{\beta A}}$ for $b  = a+\poly(n)$.
We define $\GibbsOracle(A,\beta) = (I \ot \bra{0^b}) V (I \ot \ket{0^b})$, where $\ket{0^b}$ acts on the ancilla register of the block encoding.
By definition, $V$ is a $(1,0, b)$-block encoding of $\GibbsOracle(A,\beta)$.
Furthermore, 
\begin{align*}
\norm{\GibbsOracle(A,\beta) - \frac{e^{\beta A}}{\tr{e^{\beta A}}}}_1 \leq D \norm{\bra{0\cdots0} V \ket{0 \cdots 0} - \frac{e^{\beta A}}{\tr{e^{\beta A}}}}_{\infty} \leq \delta 
\end{align*}
because $V$ is a $(1, \delta/D,b)$-block encoding of $\frac{e^{\beta A}}{\tr{e^{\beta A}}}$.
\end{proof}

\subsubsection{Space complexity of MMWU algorithm}

We now show that the approximate feasibility problem can be solved in $\pureUnitaryPSPACE$.
For this, we are going to instantiate \cref{algo:mmwu} with the trace distance and Gibbs oracles from the previous section.
We will then show that there exists a \pup-computable block encoding of a purification of the output state $\rho$ of \cref{algo:mmwu}.
Note that with this implementation, there are three contributions to the final error (i.e.~the distance between the block-encoded purification and the feasible set of the SDP): the tolerance $\eps$ with which we run \cref{algo:mmwu}, the error $\delta$ that we allow in the trace distance and Gibbs oracles, and a ``block-encoding error'' that arises because we can only prepare a block encoding of the final output state of \cref{algo:mmwu} to within some tolerance. 
\begin{theorem}
\label{thm:solving}
Consider a small-width \pspace-computable SDP described by $(\Phi,B)$ and let $D = \dim(B) = 2^{\poly(n)}$.
Choose any parameters
\begin{itemize}
\item $\eps \geq 1/\poly(n)$ (the tolerance $\eps$ with which we are going to run \cref{algo:mmwu}),
\item $\delta \geq 2^{-\poly(n)}$ (the error of the trace distance and Gibbs oracles with  which we are going to instantiate \cref{algo:mmwu}),
\item $2^{-r}$ for $r = \poly(n)$ (the allowed ``block-encoding error'' with which we are going to approximate a purification of the output state of \cref{algo:mmwu}).
\end{itemize}
Instantiate \cref{algo:mmwu} with error parameter $\eps$ and 
\begin{itemize}
\item the $(\alpha, \delta)$-good trace distance oracle from \cref{lem:be-tdoracle} with $\alpha = D^3 + D$ and $\delta$ as chosen above,
\item and the $(k, \delta)$-Gibbs oracle from \cref{lem:be-gibbs-oracle} with $k = 2 \ln D/\eps^2$ and $\delta$ as chosen above.
\end{itemize}

Then there exists a unitary $V$ in \pup and a purification $\ket{\psi}$ of the output state $\rho$ of this instantiation of \cref{algo:mmwu} such that 
\begin{align*}
\norm{V \proj{0\cdots 0} V^\dagger - \proj{\psi}}_1 \leq 2^{-r(n)} \,.
\end{align*}
\end{theorem}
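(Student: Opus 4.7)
The plan is to construct the block encoding of a purification of the output state $\rho = \frac{1}{T}\sum_{t=1}^T \rho_t$ by induction on the round $t$, using the block-encoding primitives developed in \Cref{sec:block-encodings} to simulate the transitions $\rho_t \mapsto \rho_{t+1}$ prescribed by \Cref{algo:mmwu}. The base case is $\rho_1 = I/D$, for which a \pup{}-computable $(D, 0, \log D)$-block encoding is trivial (apply Hadamards to produce a purification). For the inductive step, suppose we have a \pup{}-computable block encoding of a purification $\ket{\tilde{\rho}_t}$ of some state $\tilde{\rho}_t$ close to $\rho_t$; I will show how to obtain a \pup{}-computable block encoding of a purification of $\tilde{\rho}_{t+1}$ close to $\rho_{t+1}$.

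The transition proceeds in several stages. First, using \Cref{lem:be-reduced-density}, the purification yields a block encoding of $\tilde{\rho}_t$. Applying \Cref{lem:be-superoperator} (using $\PSPACE$-computability of $\Phi$) gives a block encoding of $\Phi(\tilde{\rho}_t)$; combining it with the \pup{}-block encoding of $B$ furnished by \Cref{lem:be-entry} via \Cref{lem:be-lin-comb} yields a block encoding of $\Phi(\tilde{\rho}_t)-B$. Next, I invoke the block-encoding version of the trace distance oracle from \Cref{lem:be-tdoracle} (which internally applies the sign polynomial from \Cref{lem:sign-approx} via \Cref{lem:be-sign-poly}) to produce a block encoding of $H_t$. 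To form $\Phi^*(H_1+\cdots+H_t)$, I store, as part of the classical description of the algorithm, pointers to the recursive constructions of $H_1,\ldots,H_t$, apply \Cref{lem:be-superoperator} to each to get block encodings of $\Phi^*(H_s)$, then take their linear combination via \Cref{lem:be-lin-comb}. Finally, I apply the block-encoding Gibbs oracle from \Cref{lem:be-gibbs-oracle} (which calls \Cref{lem:be-gibbs-pure}) to obtain a \pup{}-computable block encoding of a purification of $\tilde{\rho}_{t+1}$.

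The key bookkeeping issues are (a) the post-selection factor $\alpha$ and (b) the ancilla count. Each of the transformations from \Cref{sec:block-encodings} multiplies $\alpha$ by at most a $\PSPACE$-computable factor bounded by $2^{\poly(n)}$, so the final $\alpha$ at each step is bounded by $2^{\poly(n)}$ and can be renormalised back to $1$ by invoking the fixed-point amplitude amplification of \Cref{lem:be-fixedpoint}; this is precisely why the Gibbs oracle in \Cref{lem:be-gibbs-pure} already outputs a normalised purification. Each primitive also adds only $\poly(n)$ ancilla qubits \emph{additively}, so after $T = O(\log D / \eps^2) = \poly(n)$ recursive iterations the total ancilla count remains $\poly(n)$. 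Because the recursion at iteration $t$ unfolds into a tree of depth $t$ with branching equal to $t$ (for the linear combination over $H_1, \dots, H_t$), the overall circuit has size $2^{\poly(n)}$ but space $\poly(n)$, and is uniformly describable by a $\PSPACE$ Turing machine.

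The main technical hurdle is the error analysis: a naive account of the error propagation would have each transformation multiply the error by factors like $D$, $\alpha$, or $\sqrt{\eps}$, which, compounded across $T$ iterations and the recursive tree, could blow up. I will argue that by choosing the internal precisions of each primitive to be inverse exponential (polynomially smaller than $2^{-r}$) — which only costs an extra $\poly(n)$ qubits per primitive, since the \pspace-computable polynomial approximations have exponential degree — the total accumulated block-encoding error at the end remains at most $2^{-r}$. Once a block encoding of a purification of $\tilde{\rho}_t$ for each $t$ is available, I obtain a block encoding of a purification of $\rho = \frac{1}{T}\sum_t \tilde{\rho}_t$ by introducing an auxiliary register prepared in the uniform superposition $\frac{1}{\sqrt{T}}\sum_t \ket{t}$ and then applying the $t$-th block encoding controlled on $t$; the resulting global state is a purification of $\rho$ up to error $2^{-r}$. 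Combining these gives the required \pup{}-computable unitary $V$.
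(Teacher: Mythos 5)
Your high‑level plan — induct on the iteration index $t$, build a block encoding of a purification at each step via the primitives of \Cref{sec:block-encodings}, and then combine the $T$ purifications using a control register in uniform superposition — matches the paper's proof strategy (via \Cref{lem:mmwu_iterate} and \Cref{lem:mmwu_extract}). The base‑case parameters are slightly off: preparing a purification of $I/D$ via Hadamards gives a $(1,0,2\log D)$‑block encoding through \Cref{lem:be-partial-trace-0}, not a $(D,0,\log D)$‑block encoding; the latter would come from \Cref{lem:be-entry}, which you do not use here. But this is cosmetic.

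The genuine gap is in your inductive hypothesis. You maintain a block encoding of a purification of ``some state $\tilde\rho_t$ close to $\rho_t$'' and propose to propagate the block‑encoding error across iterations, controlling it by taking the internal precisions inverse‑exponentially small. This plan fails because the error amplification inside a single iteration is not linear: the Chebyshev transform in \Cref{lem:chebyshev-block} (and hence \Cref{lem:be-sign-poly}) and the Gibbs‑state preparation in \Cref{lem:be-gibbs-pure} both take a block‑encoding error $\eps$ to roughly $\sqrt{\eps}$ (up to $2^{\poly(n)}$ factors). If $\eta_t$ denotes the block‑encoding error carried into iteration $t$, then $\eta_{t+1} \approx 2^{\poly(n)}\cdot\eta_t^{1/4}$, and over $T = \Theta(\ln D/\eps^2) = \poly(n)$ iterations this compounds doubly exponentially. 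To end with $\eta_T \le 2^{-r(n)}$ you would need $\eta_1 \le 2^{-2^{\poly(n)}}$, which cannot even be written down in polynomial space. Choosing ``inverse exponential'' internal precisions only suppresses singly‑exponential, additive accumulation, not repeated square‑rooting.

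The paper's resolution, which your proposal does not contain, is to make the block‑encoding error \emph{identically zero} at the start of every iteration. This is the point of \Cref{lem:be-gibbs-oracle}: the Gibbs oracle is \emph{defined} to return $\GibbsOracle(A,\eps) = (I \ot \bra{0^b})V(I \ot \ket{0^b})$, the state literally encoded in the prepared unitary $V$, rather than the ideal Gibbs state. Consequently $\rho_{t+1} = \GibbsOracle(\cdot)$ has a $(1,0,a+\poly(n))$‑block encoding by construction (\Cref{lem:mmwu_iterate}), so the $\sqrt{\eps}$‑amplification only ever acts on the freshly chosen within‑iteration precision $2^{-q(n)}$ and never compounds across iterations. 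The residual deviation of $\GibbsOracle(\cdot)$ from the true Gibbs state is not treated as a block‑encoding error at all — it is folded into the oracle tolerance $\delta$, which the convergence analysis in \Cref{thm:mmwu} explicitly accommodates. You gesture at this when you note that \Cref{lem:be-gibbs-pure} ``already outputs a normalised purification,'' but fixed‑point amplification only resets the post‑selection factor $\alpha$ to $1$, not the error $\eps$ to $0$; without the oracle‑redefinition trick the inductive step does not close.
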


We prove this theorem in two steps.
First, we show that given good \pup-computable block encodings of the states $\rho_1, \dots, \rho_t$ in \cref{algo:mmwu}, we can construct another good \pup-computable block encoding of $\rho_{t+1}$.
This means that we can implement the main iterative step in \cref{algo:mmwu} using the block-encoding framework.
Second, we show how to extract a purification of the final output state of \cref{algo:mmwu} from the block encodings of the states $\rho_1, \dots, \rho_T$.
We will do this by means of \cref{lem:gibbs-extract}.
We state each of these two steps as separate lemmas.

\begin{lemma} \label{lem:mmwu_iterate}
Consider the same setting as in \cref{thm:solving} and let $\rho_i$ be the intermediate states in \cref{algo:mmwu}.
Then, the following holds for all $t \geq 1$: if every $\rho_1,\dots,\rho_{t}$ has a $(1,0,a)$-block encoding in \pup, then $\rho_{t+1}$ has a $(1, 0, a+\poly(n))$-block encoding.
As before, $\poly(n)$ denotes a polynomial that can depend on our parameter choice in \cref{thm:solving}, but not on $a$.
\end{lemma}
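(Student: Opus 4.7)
The plan is to execute each step of \cref{algo:mmwu} within the block-encoding framework developed in \cref{sec:block-encodings}, using the given $(1,0,a)$-block encodings of $\rho_1,\ldots,\rho_t$ as the starting ingredient and producing a $(1,0,a+\poly(n))$-block encoding of $\rho_{t+1}=\GibbsOracle(\Phi^*(H_1+\cdots+H_t),\eps)$ by a sequence of transformations on block encodings.

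First, for each $s\leq t$, I would apply \cref{lem:be-superoperator} (invoking \pspace-computability of $\Phi$) to the given block encoding of $\rho_s$ to obtain a block encoding of $\Phi(\rho_s)$. I would separately build a block encoding of $B$ using \cref{lem:be-entry} and the \pspace-computability of $B$, and then combine these via \cref{lem:be-lin-comb} with coefficients $1,-1$ to produce a block encoding of $\Phi(\rho_s)-B$. Next, I would feed this into the trace-distance oracle from \cref{lem:be-tdoracle}, which is $(D^3+D,\delta)$-good in the instantiation of \cref{thm:solving} and therefore applicable since the small-width property together with $\|\rho_s\|_1=1$ guarantees $\|\Phi(\rho_s)-B\|_\infty \leq 2\leq D^3+D$; this yields a block encoding of $H_s$. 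I would then apply \cref{lem:be-lin-comb} once more, with all coefficients equal to $1$, to aggregate the $t$ block encodings of the $H_s$ into a single block encoding of $H_1+\cdots+H_t$, and use \cref{lem:be-superoperator} again to obtain a block encoding of $\Phi^*(H_1+\cdots+H_t)$.

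Finally, I would invoke the Gibbs oracle from \cref{lem:be-gibbs-oracle}. The norm condition $\|\Phi^*(H_1+\cdots+H_t)\|_\infty \leq 2t \leq k = 2\ln D/\eps^2$ holds because the trace-distance oracle output satisfies $\|H_s\|_\infty\leq 2$ and the small-width of $\Phi^*$ preserves norms up to a factor of $1$. By construction, \cref{lem:be-gibbs-oracle} outputs a $(1,0,a+\poly(n))$-block encoding of $\GibbsOracle(\Phi^*(H_1+\cdots+H_t),\eps)$, which is exactly $\rho_{t+1}$. Since each of \cref{lem:be-superoperator}, \cref{lem:be-entry}, \cref{lem:be-lin-comb}, \cref{lem:be-tdoracle}, and \cref{lem:be-gibbs-oracle} adds at most an \emph{additive} $\poly(n)$ ancillas (independent of the starting ancilla count $a$), the final ancilla count is $a+\poly(n)$ as required, and all unitaries are \pup-computable because each transformation preserves this property.

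The main obstacle is managing error propagation: the Gibbs oracle requires its input to be a block encoding with error at most $\delta^2/(2c^2D^2) = 2^{-\poly(n)}$, while each intermediate step amplifies errors by at most a \pspace-computable factor of size $2^{\poly(n)}$ (coming from post-selection factors and the explicit constants in the lemmas of \cref{sec:block-encodings}). Fortunately, each lemma exposes an independent precision parameter $2^{-q(n)}$ that can be chosen freely, and since there is only a constant number of composition steps (plus one linear combination aggregating $t=\poly(n)$ terms in a single invocation of \cref{lem:be-lin-comb}), one can set these $q(n)$ recursively from the end backwards to guarantee that the final error meets the Gibbs oracle's tolerance. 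This bookkeeping is routine but must be carried out carefully since the allowed input error for the Gibbs oracle enters inside a square root in \cref{lem:be-gibbs-mixed} and hence in \cref{lem:be-gibbs-oracle}.
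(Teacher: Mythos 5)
Your proposal is correct and follows essentially the same route as the paper's own proof: building a \pup-computable block encoding step by step through $\Phi(\rho_i)$, $B$, $\Phi(\rho_i)-B$, $H_i$, $\sum_i H_i$, $\Phi^*(\sum_i H_i)$, and finally the Gibbs oracle, while tracking that the ancilla increase is additive and that the free precision parameters in \cref{lem:be-superoperator}, \cref{lem:be-lin-comb}, \cref{lem:be-tdoracle}, and so on can be chosen large enough to meet the Gibbs oracle's $\delta^2/(2c^2D^2)$ input-error requirement. You also correctly identify the two side conditions the paper verifies, namely $\|\Phi(\rho_i)-B\|_\infty\leq 2$ from the small-width assumption and $\|\Phi^*(H_1+\cdots+H_t)\|_\infty\leq 2t\leq k$, so your sketch is faithful to the paper's argument at every stage.
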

\begin{proof}
First note that in the lemma statement we assume that all of $\rho_1, \dots, \rho_t$ have the same ancilla size $a$; this is without loss of generality since we can always add spurious ancillas, so $a$ can be set as the maximum ancilla size of $\rho_1, \dots, \rho_t$.

We will construct a $(1, 0, a+\poly(n))$-block encoding of $\rho_{t+1}$ step-by-step following \cref{algo:mmwu}.
Let $q(n)$ be a polynomial to be chosen later.
Suppose that $t \geq 1$ and $\rho_1, \dots, \rho_{t}$ have $(1,0,a)$-block encodings in \pup.
We can perform the following steps for any $i = 1, \dots, t$.
\begin{enumerate}
\item \textbf{Block encoding of $\Phi(\rho_i)$}.
Since $\rho_i$ has a $(1,0,a)$-block encoding in \pup and $\Phi$ is \pspace-computable, we can apply \cref{lem:be-superoperator} to get a \pup-computable $(\beta_1, \eta_1, b_1)$-block encoding of $\Phi(\rho_i)$ for 
\begin{align*}
\beta_1 = D^3\,, \quad \eta_1 = D^3 2^{-q(n)}\,, \quad b_1 = a+\poly(n) \,.
\end{align*}
\item \textbf{Block encoding of $B$}.
Since $B$ is \pspace-computable, by \cref{lem:be-entry} we can construct a $(\beta_2, \eta_2, b_2)$-block encoding of $B$ in \pup for
\begin{align*}
\beta_2 = D\,, \quad \eta_2 = 2^{-q(n)}\,, \quad b_2 = \poly(n) \,.
\end{align*}
\item \textbf{Block encoding of $\Phi(\rho_i) - B$}.
Applying \cref{lem:be-lin-comb} and simplifying the error term, we obtain a $(\beta_3, \eta_3, b_3)$-block encoding of $\Phi(\rho_i) - B$ in \pup for
\begin{align*}
\beta_3 = \beta_1 + \beta_2\,, \quad \eta_3 = 4 \beta_1 \eta_1 \,, \quad b_3 = a+\poly(n) \,.
\end{align*}
\item \textbf{Block encoding of $H_i$}.
Applying \cref{lem:be-tdoracle} (with $\alpha = D^3 + D = \beta_1 + \beta_2$ 
 and $\delta$ as chosen in \cref{thm:solving}), we get a $(\beta_4, \eta_4, b_4)$-block encoding in \pup of $H_i = \TraceDistOracle(\Phi(\rho_i) - B)$ for 
 \begin{align*}
\beta_4 = \alpha'\,,\quad \eta_4 = \alpha'2^{-q(n)} + 4 \alpha'' D^3 2^{-q(n)/2}\,,\quad b_4 = a+\poly(n) \,.
\end{align*}
Here, $\alpha' = \alpha'(\beta_3, \delta, D) = \poly(n)$ and $\alpha'' = \alpha''(\beta_3, \delta, D) \leq 2^{\poly(n)}$ are the \pspace-computable functions defined in \cref{lem:be-tdoracle}.
\item \textbf{Block encoding of $H_1 + \cdots + H_t$.}
Applying \cref{lem:be-lin-comb} to the block encodings of $H_i$ from the previous step, we obtain a \pup-computable $(\beta_5, \eta_5, b_5)$-block encoding of $H_1 + \dots + H_t$ for
\begin{align*}
\beta_5 = \beta_4 t\,,\quad
\eta_5 = \beta_4 t (\eta_4 + 2^{-q(n)})\,,\quad b_5 = a+\poly(n) \,.
\end{align*}
\item \textbf{Block encoding of $\Phi^*(H_1 + \cdots + H_t)$.}
It is easy to see that if $\Phi$ is \pspace-computable, so is the adjoint map $\Phi^*$.
Then, we can apply \cref{lem:be-superoperator} (and simplify the error bounds) to obtain a $(\beta_6, \eta_6, b_6)$-block encoding of $\Phi^*(H_1 + \dots + H_t)$ for 
\begin{align*}
\beta_6 = \beta_5 D^3\,,\quad 
\eta_6 = 2 \beta_5 \eta_5 D + 2 D^3 2^{-q(n)}\,,\quad 
b_6 = a+\poly(n) \,.
\end{align*}
\item \textbf{Block encoding of $\rho_{t+1}$.}
$\rho_{t+1}$ is obtained from $\Phi^*(H_1 + \cdots + H_t)$ by applying the $ \left( k, \delta \right)$-good Gibbs oracle provided by \cref{lem:be-gibbs-oracle} with $k = \frac{2 \ln D}{\eps^2}$ and $\eps, \delta$ as chosen in \cref{thm:solving}.

We first observe that by combining all of the errors from the previous steps, we obtain an error $\eta_6 \leq \alpha^* 2^{-q(n)/2}$ for some $\alpha^{*} \leq 2^{\poly(n)}$ that depends on the parameters chosen in \cref{thm:solving}, but is independent of $q(n)$.
This means that by picking a sufficiently large $q(n) = \poly(n)$, we can achieve any desired inverse exponential value for $\eta_6$.
In particular, let $c' = c(\beta_6, \eps, k)$, where $c$ is the function defined in \cref{lem:be-gibbs-oracle}, and note that $c'$ is independent of $q(n)$, too.
Then, we can pick $q(n) = \poly(n)$ large enough\footnote{When running \cref{algo:mmwu}, of course we need to make a choice for $q(n)$ before running the algorithm. However, we can simply pre-compute $\alpha^*$ and $c'$ in \pspace to pick an appropriate $q(n)$ before running the actual steps of the algorithm.} such that $\eta_6 \leq \frac{\delta^2}{2 c'^2 D^2}$.

Next we note that $t \leq T = \ln(D) / \eps^2$, since \cref{algo:mmwu} runs for $T$ iterations.
Furthermore, due to the guarantee of the trace distance oracle, $\norm{H_i}_\infty \leq 2$ for all $i$, so $\norm{H_1 + \cdots H_t}_\infty \leq 2 T$.
Since $\Phi^*$ is contracting (this is the small-width requirement on the SDP $(\Phi,B)$), this means that $\norm{\Phi^*(H_1 + \cdots H_t)}_\infty \leq 2 T \leq k$.

With this choice of $q(n)$, the block encoding of $\Phi^*(H_1 + \cdots + H_t)$ from the previous step therefore satisfies the conditions on the input to the Gibbs oracle spelled out in \cref{lem:be-gibbs-oracle}.
Hence, \cref{lem:be-gibbs-oracle} implies that there is a $(1, 0, a+\poly(n))$-block encoding of $\rho_{t+1} = \GibbsOracle(\Phi^*(H_1 + \dots + H_t))$ in \pup. \qedhere
\end{enumerate}
\end{proof}

\begin{lemma} \label{lem:mmwu_extract}
Consider the same setting as in \cref{thm:solving} and let $\rho_i$ be the intermediate states in \cref{algo:mmwu}.
Then, the following holds for all $t \geq 1$: if every $\rho_1,\dots,\rho_{t}$ has a $(1,0,\poly(n))$-block encoding in \pup, then for any polynomial $r(n)$ there exists a unitary $U_{t+1}$ in \pup such that $U\proj{0\dots 0}U^\dagger$ is $2^{-r(n)}$-close in trace distance to a purification of $\rho_{t+1}$.
\end{lemma}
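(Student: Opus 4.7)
The strategy is to follow the same block-encoding construction as in the proof of \cref{lem:mmwu_iterate} up through the block encoding of the Hamiltonian $H \deq \Phi^*(H_1 + \cdots + H_t)$, but then to replace the final Gibbs-oracle step with the pure-state extraction of \cref{lem:gibbs-extract}. The Gibbs oracle (\cref{lem:be-gibbs-oracle}) produces a block encoding of the mixed Gibbs state, whereas \cref{lem:gibbs-extract} directly produces a unitary whose output on $\ket{0 \cdots 0}$ is close to a plain (unboxed) purification of the Gibbs state, which is exactly the form we need.

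Concretely, following steps 1--6 of the proof of \cref{lem:mmwu_iterate} with a sufficiently large internal precision parameter $q(n)$, I obtain a \pup-computable block encoding of $H$ with post-selection factor $\beta \leq 2^{\poly(n)}$ and error $\eta \leq \alpha^\ast 2^{-q(n)/2}$ for a \pspace-computable $\alpha^\ast \leq 2^{\poly(n)}$ independent of $q(n)$. Applying \cref{lem:gibbs-extract} to this block encoding yields a unitary $U_{t+1}$ in \pup such that $U_{t+1} \proj{0 \cdots 0} U_{t+1}^\dagger$ is within operator-norm distance $2 c \sqrt{\eta} + 2^{-q'(n)}$ of $\proj{\Gamma} \ot \proj{0 \cdots 0}$, where $\ket{\Gamma}$ is a purification of the exact Gibbs state $\sigma \deq \exp(-\eps H)/\Tr(\exp(-\eps H))$. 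Here $c \leq 2^{\poly(n)}$ is the \pspace-computable function from \cref{lem:gibbs-extract}, independent of $q(n)$ and $q'(n)$, so absorbing the dimension factor $D = 2^{\poly(n)}$ needed to convert operator-norm to trace distance, I can choose the polynomials $q(n), q'(n)$ large enough to make the trace-distance error at most $2^{-r(n)-1}$.

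Finally, I transfer from a purification of $\sigma$ to a purification of $\rho_{t+1}$. By the $(k, \delta)$-goodness of the Gibbs oracle invoked in the definition of $\rho_{t+1}$, we have $\norm{\rho_{t+1} - \sigma}_1 \leq \delta$. Uhlmann's theorem together with the Fuchs--van de Graaf inequalities then yields a purification $\ket{\psi}$ of $\rho_{t+1}$ on the same Hilbert space as $\ket{\Gamma}$ satisfying $\norm{\proj{\psi} - \proj{\Gamma}}_1 \leq 2 \sqrt{\delta}$. Since $\delta$ is a free parameter of \cref{thm:solving} with $\delta \geq 2^{-\poly(n)}$ as its only constraint, I may take it smaller than $2^{-2 r(n) - 4}$, making this contribution at most $2^{-r(n) - 1}$, and summing with the extraction error yields the claimed $2^{-r(n)}$ bound.

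The bulk of the work is routine error bookkeeping: one must verify that the free parameters $q(n)$, $q'(n)$, and $\delta$ can be consistently chosen to meet the precision requirements of every invoked lemma while driving the total error below $2^{-r(n)}$. The only mild conceptual subtlety is that $\rho_{t+1}$ is defined implicitly via the Gibbs oracle rather than as the exact Gibbs state, so we prepare a purification of the nearby true Gibbs state and transfer closeness via Uhlmann rather than attempting to purify $\rho_{t+1}$ directly.
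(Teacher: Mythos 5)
Your proposal follows the paper's route (block-encode $H = \Phi^*(H_1+\cdots+H_t)$ via the first six steps of \cref{lem:mmwu_iterate}, then invoke \cref{lem:gibbs-extract}) up to the final step, where you insert an Uhlmann transfer from a purification of the exact Gibbs state $\sigma$ to a purification of $\rho_{t+1}$, incurring an error of order $\sqrt{\delta}$. This extra step creates a genuine gap: in ``the same setting as in \cref{thm:solving},'' the oracle accuracy $\delta$ has already been fixed---it could be as large as $1/\poly(n)$---whereas the lemma is quantified ``for any polynomial $r(n)$'' afterwards. You are not free to shrink $\delta$ inside the proof of this lemma, so your $2\sqrt{\delta}$ term need not be below $2^{-r(n)}$, and the final bound does not close.

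The subtlety you identified is real, but the Uhlmann transfer is not the right fix because it is not needed. Unwinding \cref{lem:be-gibbs-oracle}, the Gibbs-oracle output $\rho_{t+1}$ is \emph{defined} as the top-left corner of the block encoding from \cref{lem:be-gibbs-mixed}, which is built from the Gibbs purification unitary of \cref{lem:be-gibbs-pure} via \cref{lem:be-partial-trace-0}, a \emph{zero-error} block encoding of the reduced density matrix of that unitary's output. Hence $\rho_{t+1}$ is exactly the reduced state of the pure state that the extraction unitary prepares, and that pure state is already an exact purification of $\rho_{t+1}$; the discrepancy between the exact Gibbs state and $\rho_{t+1}$ never enters the final bound. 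The only error to control is the preparation error from \cref{lem:gibbs-extract}, of order $c\sqrt{\eta_6}$, which depends on the internal block-encoding precision $q(n)$ (freely choosable, independent of $\delta$). Converting operator norm to trace distance then costs only a factor of two, since the difference of two pure-state projectors has rank at most two---your factor of $D$ is valid but unnecessarily lossy. Choosing $q(n)$ sufficiently large yields the claimed $2^{-r(n)}$ bound with no $\delta$-dependence.
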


\begin{proof}
From the proof of \cref{lem:mmwu_iterate} we know that if $\rho_1, \dots, \rho_t$ have $(1,0,a)$-block encodings in \pup, $\Phi^*(H_1 + \dots + H_t)$ has a $(\beta_6, \eta_6, b_6)$-block encoding in \pup with $\beta_6, \eta_6, b_6$ as in the proof of \cref{lem:mmwu_iterate}.
Furthermore, as we showed in \cref{lem:mmwu_iterate} (proof step (vii)), $\norm{\Phi^*(H_1 + \cdots H_t)}_\infty \leq k = 2 \ln D/\eps^2$.  
Then, by \cref{lem:gibbs-extract} there exists a unitary $V$ in \pup such that $\norm{V \proj{0\cdots 0} V^\dagger - \proj{\psi_{t+1}}}_\infty \leq 3 c \sqrt{\eta_6}$ for the \pspace-computable function $c(\beta_6, \eps, k) \leq 2^{\poly(n)}$ defined in \cref{lem:gibbs-extract}.
Since $V \proj{0} V^\dagger - \proj{\psi_{t+1}}$ has rank at most 2, this implies $\norm{V \proj{0} V^\dagger - \proj{\psi_{t+1}}}_1 \leq 6 c \sqrt{\eta_6}$.
Finally, as we observed in the proof of \cref{lem:mmwu_iterate}, we can choose the polynomial $q(n)$ in that proof large enough to make $\eta_6$ an arbitrarily small inverse exponential.
In particular, this means that we can choose a sufficiently large $q(n)$ (as a function of the parameter in \cref{thm:solving}) that ensures that $6 c \sqrt{\eta_6} \leq 2^{-r(n)}$ for the polynomial $r(n)$ chosen in \cref{thm:solving}.\footnote{Note that we made potentially different choices of $q(n)$ here and previously in the proof of \cref{lem:mmwu_iterate}. However, this is easily remedied by picking the maximum of the two choices of $q(n)$, for which the guarantees of \cref{lem:mmwu_iterate} and \cref{lem:mmwu_extract} hold simultaneously.}
\end{proof}

We can now combine \cref{lem:mmwu_iterate} and \cref{lem:mmwu_extract} to show \cref{thm:solving}.
\begin{proof}[Proof of \cref{thm:solving}]
Choose a sufficiently large polynomial $q(n)$ in the proof of \cref{lem:mmwu_iterate} such that \cref{lem:mmwu_iterate} and \cref{lem:mmwu_extract} both hold for the choice of parameters in \cref{thm:solving} and this choice of $q(n)$.

Let $\rho_i$ be the states defined in \cref{algo:mmwu}.
Observe that $\rho_1 = I/D$ (on $\poly(n)$ qubits) is the reduced state of a maximally entangled state of dimension $D$. This purification can be prepared efficiently without any error (using only Hadamard and CNOT gates), so by \cref{lem:be-reduced-density} $\rho_1$ has a $(1,0,\poly(n))$-block encoding in \pup.
Applying \cref{lem:mmwu_iterate} iteratively, this means that $\rho_1, \dots, \rho_T$ all have $(1,0,a(n))$-block encodings in \pup for some universal polynomial $a(n)$ that can depend on the parameters chosen in \cref{thm:solving}. (Concretely, we simply take $a(n)$ to be the largest of the polynomial ancilla sizes we get by applying \cref{lem:mmwu_iterate} $T$ times and can trivially pad all block encodings with smaller ancilla sizes.)

Then it follows from \cref{lem:mmwu_extract} that for every $t = 1, \dots, T$ there exists a unitary $U_t$ in \pup such that $\ket{\psi_t} \deq U_t \ket{0\dots 0}$ is a purification of $\rho_t$ up to an error $2^{-r(n)}$ in trace distance (where $r(n)$ is the polynomial we chose in \cref{thm:solving}).
Since $U_t \in$ \pup (as opposed to non-pure $\unitaryPSPACE$), we can use a control register ranging across $\ket{1}, \dots, \ket{T}$ to combine $U_1, \dots, U_T$ into a unitary $V$ in \pup such that $V\ket{0\dots 0} = \frac{1}{\sqrt{T}} \sum_{t = 1}^T \ket{t} \ket{\psi_t}$.
From the guarantee on the individual states $\ket{\psi_t} = U_t \ket{0\dots 0}$ it follows that this is a purification of the output state $\rho = \frac{1}{T} \sum_{i=1}^T \rho_t$ up to trace distance error $2^{-r(n)}$.
\end{proof}

\section{Application to quantum state complexity}
\label{sec:state_classes}

We now turn our attention to quantum interactive protocols, and in particular interactive protocols for state synthesis.
We begin by formally describing generic quantum interactive protocols (\cref{sec:gen_qip_protocols}) and define the class $\stateQIP$ (\cref{sec:stateqip}).
Following \cite{jain2011qip}, we then show that we can express a $\stateqip$-protocol as an SDP whose feasible points correspond to the state synthesised by the protocol (\cref{sec:sdp-qip}).
Applying our quantum polynomial-space SPD solver from \cref{sec:exp_sdp} to this SDP allows us to show our main result, $\stateQIP = \statePSPACE$ (\cref{sec:solving2}).

\subsection{Quantum interactive protocols} \label{sec:gen_qip_protocols}

Since in quantum computing the standard model of computation is the quantum circuit model (rather than quantum Turing machines), we model the verifier in a quantum interactive protocol as a sequence of \emph{verifier circuits}, one for each input length. A verifier circuit is itself a tuple of quantum circuits that correspond to the operations performed by the verifier in each round of the protocol.

More formally, an \emph{$r$-round quantum verifier circuit} $C = (C_j)_{j \in [r]}$ is a tuple of general quantum circuits that implement quantum channels $\Phi_{C_j}: \linear(\reg{M_{j}} \reg{W_{j-1}}) \to \linear(\reg{M'_{j}} \reg{W_{j}})$ for $j < r$, and $\Phi_{C_r}: \linear(\reg{M_r} \reg{W_{r-1}}) \to \linear(\reg Z \reg S \reg{W_r})$.
For each round $j$, we think of $\reg{M_j}$ as the register containing the incoming message from the prover to the verifier and $\reg{W_{j-1}}$ as the register containing the verifier's private memory before round $j$; $\reg{M'_j}$ and $\reg{W}_j$ contain the outgoing message from the verifier to the prover and the verifier's private memory after the $j$-th round, respectively.
At the end of the last round, the verifier produces an additional register $\reg{Z}$ containing a single qubit indicating whether to accept or reject the prover, and a register $\reg{S}$ containing an output state from the protocol. 
The size of a verifier circuit $C$ is the sum of the circuit sizes of the $C_j$'s.

A \emph{quantum prover} $P$ for an $r$-round quantum verifier circuit $C$ is a tuple of quantum channels $(\Phi_{P_{j}})_{j \in [r]}$, where $\Phi_{P_j}: \linear(\reg{M'_{j-1} Q_{j-1}}) \to \linear(\reg{M_{j} Q_{j}})$. By dilating, we can without loss of generality assume that the prover's channels are all unitary, i.e.~$\Phi_{P_j}(X) = U_j X U_j^\dagger$ for some unitary map $U_j$. 
\cref{fig:prot_generic} shows how the channels $\Phi_{C_i}$ and $\Phi_{P_j}$ are connected to form a quantum interactive protocol.

\begin{figure*}[t]
\centering
\resizebox{0.9\textwidth}{!}{
	\begin{tikzpicture}
			\tikzset{
				block/.style={rectangle,draw,thick,fill=gray!50,inner sep=0pt,minimum width=1.23cm,minimum height=1.25cm},
				point/.style={minimum width=1.25cm},
				every node/.style={scale=1.4,font=\large}
			}
			\coordinate (1) at (-2,3.5){};
			\coordinate (2) at (-2,0){};
			\node[block] (3) at (0,3.5){$\Phi_{P_1}$};
			\node[block] (4) at (2.75,0){$\Phi_{C_1}$};
			\node[block] (5) at (5.5,3.5){$\Phi_{P_2}$};
			\node[block] (6) at (8.25,0){$\Phi_{C_2}$};
			\node[point] (7) at (11,3.5){};
			\node[point] (8) at (11,0){};
			\node[point] (9) at (11.75,3.5){};
			\node[point] (10) at (11.75,0){};
			\node[block] (11) at (17.25,3.5){$\Phi_{P_r}$};
			\node[block] (12) at (14.5,0){$\Phi_{C_{r-1}}$};
			\coordinate (13) at (21.75,3.5){};
			\node[block] (14) at (20,0){$\Phi_{C_r}$};
			\coordinate (15) at (21.75,0){};

			\draw[-latex] (1) -- (3) node[midway,above]{$Q_0$};
			\draw[-latex] (3) -- (5) node[midway,above]{$Q_1$};
			\draw[-latex] (5) -- (7) node[midway,above]{$Q_2$};
			\draw[-latex] (9) -- (11) node[midway,above]{$Q_{r-1}$};
			\draw[-latex] (11) -- (13) node[midway,above]{$Q_r$};			
			\draw[-latex] (2) -- (4) node[midway,below]{$W_0$};
			\draw[-latex] (4) -- (6) node[midway,below]{$W_1$};
			\draw[-latex] (6) -- (8) node[midway,below]{$W_2$};
			\draw[-latex] (10) -- (12) node[xshift=10pt,midway,below left]{$W_{r-2}$};
			\draw[-latex] (12) -- (14) node[midway,below]{$W_{r-1}$};
			\draw[-latex] ([yshift=-15pt]3.east) to[out=0,in=180] node[midway,right]{$M_1$} ([yshift=15pt]4.west);
			\draw[-latex] ([yshift=15pt]4.east) to[out=0,in=180] node[midway,right]{$M'_1$} ([yshift=-15pt]5.west);
			\draw[-latex] ([yshift=-15pt]5.east) to [out=0,in=180] node[midway,right]{$M_2$} ([yshift=15pt]6.west);
			\draw[-latex] ([yshift=15pt]6.east) to [out=0,in=180] node[midway,left]{$M'_2$} ([yshift=-15pt]7.west);
			\fill (10.75,1.5) circle (2pt) (11.25,1.5) circle (2pt) (11.75,1.5) circle (2pt);
			\draw[-latex] ([yshift=-15pt]9.east) to[out=0,in=180] node[midway,right]{$M_{r-1}$} ([yshift=15pt]12.west);
			\draw[-latex] ([yshift=15pt]12.east) to[out=0,in=180] node[midway,right]{$M'_{r-1}$} ([yshift=-15pt]11.west);
			\draw[-latex] ([yshift=-15pt]11.east) to[out=0,in=180] node[midway,right]{$M_r$} ([yshift=15pt]14.west);
			\draw[-latex] (14.east) -- (15) node[right]{$S$};
			\draw[-latex] ([yshift=15pt]14.east) to[out=0,in=180] ([yshift=30pt]15) node[right]{$Z$};
			\draw[-latex] ([yshift=-15pt]14.east) to[out=0,in=180] ([yshift=-30pt]15) node[right]{$W_r$};
	\end{tikzpicture}
	}
\caption{Generic quantum interactive protocol.}
\label{fig:prot_generic}
\end{figure*}
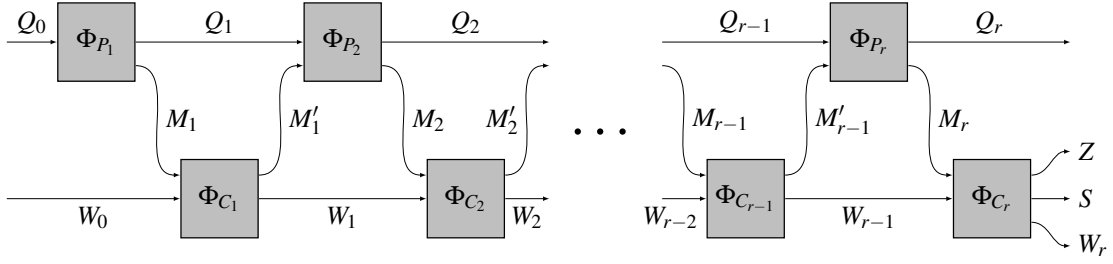

Let $x \in \{0,1\}^*$ denote a string whose length is at most the number of qubits in $W_0$. 
We write $C(x) \interact P$ to denote the interaction between the verifier circuit $C$ and the prover $P$ on input $x$, which means applying the channels $\Phi_{C_j}$ and $\Phi_{P_j}$ as pictured in \cref{fig:prot_generic} to the initial state $\ket{x, 0\dots0}_{W_0} \ket{0\dots0}_{Q_0}$.
We say that $C(x) \interact P$ accepts (resp.~rejects) if measuring the register $Z$ in the standard basis yields the outcome $1$ (resp.~$0$). 

For some fixed input $x$, we denote the intermediate states of the protocol by $\rho_{\reg R}$, where $R$ can be any set of registers that exist in the interactive protocol at the same time.
For example, $\rho_{\reg{M_2 W_1}}$ denotes the reduced state on registers $\reg{M_2}$ and $\reg{W_1}$ after the second message from the prover to the verifier.
Note that it is implicit in the choice of register $\reg R$ which step in the protocol we are referring to.
We say that the \emph{output of $C(x) \interact P$ conditioned on accepting} is the density matrix
\[
\frac{\ptr{\reg Z}{\proj{1}_\reg{Z} \rho_{\reg{ZS}}}}{\bra{1}\rho_\reg{Z} \ket{1}}\,.
\]
In other words, the output state is the reduced density matrix on register $\reg S$ of the protocol's final state conditioned on $C(x) \interact P$ accepting. (If the probability of accepting is $0$, then we leave the output undefined.) 

A \emph{quantum verifier} $V = (V_n)_{n \in \N}$ is a uniform sequence of polynomial-size and polynomial-round quantum verifier circuits.

\subsection{The class \texorpdfstring{$\class{stateQIP}$}{stateQIP}}
\label{sec:stateqip}

Next, we define quantum interactive proofs for state synthesis and their associated complexity class.

\begin{definition}[{$\stateqip$}]
	\label{def:stateQIP}
 Let $c,s,\delta:\N \to [0,1]$ be functions. The class $\stateQIP_{c,s,\delta}$ is the set of state families $R = (\rho_n)_{n}$ (where $\rho_n$ is on $n$ qubits\footnote{The assumption that $\rho_n$ is on $n$ qubits is for convenience; this definition and our results can easily be generalized to the case where each $\rho_n$ is a state on $\poly(n)$ qubits.}) for which there exists an $r(n)$-round quantum verifier $V = (V_n)_{n}$ for some polynomial $r(n)$ such that for all sufficiently large $n$ the following holds:
\begin{itemize}
	\item \emph{Completeness:} There exists a quantum prover $P$ (called an \emph{honest prover}) such that 
	\begin{equation*}
		\pr {V_n \interact P \text{ accepts}} \geq c(n) \,.
	\end{equation*}
	\item \emph{Soundness:} For all quantum provers $P$, it holds that
	\[
	   \text{if } \quad \pr {V_n \interact P \text{ accepts}} \geq s(n) \qquad \text{then} \qquad \td(\sigma, \rho_n) \leq \delta(n)~.
	\]
	where $\sigma$ denotes the output of $V_n \interact P$ conditioned on accepting.
\end{itemize}
Here the probabilities are over the randomness of the interaction. 

Finally, define $\stateQIP_\delta = \stateQIP_{1,\frac{1}{2},\delta}$, and 
\[
\stateQIP = \bigcap_{q(n)} \stateQIP_{1/q(n)}
\]
where the intersection ranges over all polynomials $q(n)$.
\end{definition}

The following lemma shows that the class $\stateQIP_\delta$ is robust under perturbation.

\begin{lemma}
\label{lem:state-qip-robust}
Let $(\rho_n)_{n \in \N} \in \stateQIP_{\delta(n)}$ for some function $\delta(n)$. Suppose $(\tilde{\rho}_n)_{n \in \N}$ is a state sequence satisfying $\td(\rho_n,\tilde{\rho}_n) \leq \eps(n)$ for another function $\eps(n)$. Then $(\tilde{\rho}_n)_{n \in \N} \in \stateQIP_{\delta(n) + \eps(n)}$. 
\end{lemma}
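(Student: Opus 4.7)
The plan is to observe that the same verifier $V$ witnessing $(\rho_n) \in \stateQIP_{\delta(n)}$ also witnesses $(\tilde{\rho}_n) \in \stateQIP_{\delta(n)+\eps(n)}$, and that both the completeness and soundness conditions of \cref{def:stateQIP} transfer with only a triangle-inequality loss in the trace distance.

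First I would invoke the definition of $\stateQIP_{\delta(n)}$ to obtain a polynomial-round quantum verifier $V = (V_n)_n$ together with an honest prover $P$ such that $V_n \interact P$ accepts with probability $1$, and such that for every prover $P'$ accepted by $V_n$ with probability at least $1/2$, the output state $\sigma$ of $V_n \interact P'$ conditioned on accepting satisfies $\td(\sigma,\rho_n) \leq \delta(n)$. I claim that the same $V$ and the same honest $P$ work for $(\tilde{\rho}_n)$ with closeness parameter $\delta(n)+\eps(n)$.

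Completeness is immediate since it does not reference the target state: the same $P$ is still accepted with probability $1$. For soundness, fix any prover $P'$ and suppose $V_n \interact P'$ accepts with probability at least $1/2$, with conditional output $\sigma$. By the soundness guarantee for $(\rho_n)$ we have $\td(\sigma, \rho_n) \leq \delta(n)$, and by assumption $\td(\rho_n, \tilde{\rho}_n) \leq \eps(n)$. The triangle inequality for trace distance then gives
\[
\td(\sigma, \tilde{\rho}_n) \leq \td(\sigma, \rho_n) + \td(\rho_n, \tilde{\rho}_n) \leq \delta(n) + \eps(n),
\]
which is precisely the soundness condition required for $(\tilde{\rho}_n) \in \stateQIP_{\delta(n)+\eps(n)}$.

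There is no genuine obstacle here; the only thing to be mildly careful about is that the definition of $\stateQIP$ fixes the completeness and soundness parameters ($1$ and $1/2$ respectively), so one must verify that only the closeness parameter $\delta(n)$ degrades under perturbation of the target state, while the acceptance probabilities are untouched. This is exactly what the argument above shows, completing the proof.
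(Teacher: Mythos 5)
Your proof is correct and takes essentially the same approach as the paper: reuse the same verifier, note that completeness is untouched, and apply the triangle inequality for the soundness bound. The only difference is that you spell out the completeness check explicitly, which the paper leaves implicit.
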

\begin{proof}
Let $V$ denote a $\stateQIP_{\delta(n)}$-verifier for $(\rho_n)_{n \in \N}$. Then by the soundness property, if $\pr {V_n \interact P \text{ accepts}} \geq \frac{1}{2}$, this implies that the output $\sigma$ conditioned on accepting satisfies
\[
    \td(\sigma, \tilde{\rho}_n) \leq \td(\sigma,\rho_n) + \td(\rho_n,\tilde{\rho}_n) \leq \delta(n) + \eps(n)
\]
by the triangle inequality. Thus $V$ is a $\stateQIP_{\delta(n) + \eps(n)}$-verifier for $(\tilde{\rho}_n)_{n \in \N}$.
\end{proof}

The main result of~\cite{rosenthal2022interactive} is the inclusion $(\ket{\psi_n})_{n \in \N} \subseteq \stateQIP$ for all families of \emph{pure} states $(\ket{\psi_n})_n \in \statePSPACE$. 
We will prove in \Cref{thm:purification} that $\statePSPACE$ is closed under purification, i.e.~if a family of mixed states is in $\statePSPACE$, then there exists a purification of those mixed states also in $\statePSPACE$. Putting these two facts together, we get the following:
\begin{theorem} \label{thm:statepspace_in_stateqip}
For all functions $\delta(n)$, for all polynomials $q(n)$, we have 
\[
\statePSPACE_{\delta(n)} \subseteq \stateQIP_{\delta(n) + 1/q(n)}~,
\]
\end{theorem}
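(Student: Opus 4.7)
The plan is to bootstrap the pure-state containment of Rosenthal--Yuen~\cite{rosenthal2022interactive} to the mixed-state setting via the purification theorem (\cref{thm:purification-informal}). Let $(\rho_n)_n \in \statePSPACE_{\delta(n)}$, witnessed by a space-uniform family of general quantum circuits $(C_n)_n$ producing states $\sigma_n$ with $\td(\sigma_n, \rho_n) \le \delta(n)$. The key observation is that $(\sigma_n)_n$, being exactly output by the $C_n$'s under the fixed discrete gate set $\{H, \mathit{CNOT}, T\}$, itself lies in $\statePSPACE_0$ (its error against \emph{itself} is zero).

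Applying the purification theorem to $(\sigma_n)_n \in \statePSPACE_0$ yields a pure state family $(\ket{\psi_n})_n \in \class{pureStatePSPACE}_{2\sqrt{0}} = \class{pureStatePSPACE}_0$ in which each $\ket{\psi_n}$, living on $n+k(n)$ qubits for some polynomial $k(n)$, is an exact purification of $\sigma_n$. The Rosenthal--Yuen inclusion for pure states then supplies, for the given polynomial $q(n)$, a $\stateQIP$-verifier $V'_n$ witnessing $(\ket{\psi_n})_n \in \stateQIP_{1/q(n)}$. I would then modify $V'_n$ by appending, as a final internal step after acceptance, a partial trace over the $k(n)$-qubit purification register of the output. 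The resulting verifier $V_n$ has acceptance probabilities identical to $V'_n$ (so completeness and soundness thresholds are preserved), and by contractivity of the partial trace its conditional output state is $1/q(n)$-close to $\sigma_n$. Hence $(\sigma_n)_n \in \stateQIP_{1/q(n)}$. The $\stateQIP$-robustness lemma (\cref{lem:state-qip-robust}) together with $\td(\sigma_n,\rho_n) \le \delta(n)$ then yields $(\rho_n)_n \in \stateQIP_{\delta(n) + 1/q(n)}$, as required.

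The only point that requires a little care is \emph{uniformity}: the Turing machine describing $V_n$ must be produced in polynomial time from the machine describing $C_n$. This is precisely the additional guarantee built into the full version of the purification theorem (referenced in the excerpt as \cref{thm:purification}), which provides a polynomial-time transformation of preparation machines from $(\rho_n)_n$ to $(\ket{\psi_n})_n$; composing this with the (uniform) RY construction produces the required uniform family $(V_n)_n$. I anticipate no substantive technical obstacle in this proof once the purification theorem and the pure-state RY inclusion are in hand --- the heavy lifting has been delegated to those two results, and this theorem just threads them together with the contractivity of partial trace and the $\stateQIP$-robustness lemma.
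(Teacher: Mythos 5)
Your proposal is correct and follows essentially the same route as the paper: reduce to an exactly-synthesized sequence in $\statePSPACE_0$, purify it via the purification theorem, invoke the Rosenthal--Yuen pure-state inclusion, trace out the purifying register (which preserves acceptance probabilities and is contractive in trace distance), and finally apply the robustness lemma (\Cref{lem:state-qip-robust}) to absorb the $\delta(n)$ error. You also correctly identify the uniformity of the TM transformation as the point handled by the ``furthermore'' clause of \Cref{thm:purification}, which the paper likewise relies on.
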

We note that, technically speaking,~\cite{rosenthal2022interactive} require $\delta(n) = \exp(-p(n))$ for all polynomials $p(n)$. To see that this implies \cref{thm:statepspace_in_stateqip}, consider a state sequence in $(\rho_n)_{n \in \N} \in \statePSPACE_{\delta(n)}$ for an arbitrary $\delta(n)$ (not necessarily exponentially small). Then by definition there exists a state sequence $(\tilde{\rho}_n)_{n \in \N} \in \statePSPACE_0$ where, for all sufficiently large $n$, we have $\td(\rho_n,\tilde{\rho}_n) \leq \delta(n)$. The main theorem of~\cite{rosenthal2022interactive} combined with \cref{thm:purification} implies that $(\tilde{\rho}_n)_{n \in \N} \in \stateQIP_{1/q(n)}$ for all polynomials $q(n)$. \cref{thm:statepspace_in_stateqip} then follows from \Cref{lem:state-qip-robust}.

\subsection{An SDP for quantum interactive protocols}
\label{sec:sdp-qip}

We will now describe an SDP whose feasible points correspond to quantum states synthesised by a $\stateQIP$-protocol.
For this, consider an $r(n)$-round $\stateQIP_{c,s,\delta}$ protocol with a verifier $V = (V_n)$ for some family of states $(\rho_n)_n$ and some ($\PSPACE$-computable) $c(n),s(n),\delta(n)$ such that $c(n) - s(n) \geq 1/p(n)$ for some polynomial $p(n)$. 
For each $n$, $V_n$ implements a sequence of channels $\Phi_{C_1}, \dots, \Phi_{C_r}$.
We will characterise the set of states that can be produced by such an interactive protocol (for all prover behaviour) as the feasible set of a \pspace-computable SDP. The SDP is the same as the one presented in the survey by Vidick and Watrous~\cite[Section 4.3]{vidick2016quantum}; we in fact show that the same SDP can be solved in (quantum) polynomial space. 

Informally, this SDP has variables corresponding to intermediate states of the verifier during the protocol, and it enforces the following constraints on them:
\begin{enumerate}
\item Each intermediate state is a valid quantum state, i.e.~a positive semidefinite operator with trace 1.
\item The reduced states on the verifier's systems $\reg{W}_i$ and message registers $\reg{M}_i$ must be related to each other by the verifier's channels $\Phi_{C_i}$.
\item The prover's actions have no effect on the verifier's marginal of the state.
\item The verifier's initial state on register $\reg{W_0}$ is the all-0 state.
\item The verifier accepts with probability $c(n)$.\footnote{Note that in \cref{def:stateQIP}, the verifier is required to accept with probability at least $c(n)$. As the prover can always lower the verifier's acceptance probability by deviating from the desired behaviour with some small probability, without loss of generality we can require the prover to achieve an acceptance probability of exactly $c(n)$, which we assume is $\PSPACE$-computable.}
\end{enumerate}
It is not immediately obvious that these constraints are sufficient, but we will show that this is the case in \cref{lem:qip-as-sdp}.

To formalise these constraints, as before we denote the intermediate states of the protocol by $\rho_{\reg R}$ for some collection $\reg R$ of registers that uniquely determines which step in the protocol we are referring to.
Specifically, we will be considering the intermediate states of the protocol on the message register and the verifier's private register, i.e.~the states $\rho_{\reg{M_1 W_0}}, \rho_{\reg{M'_1 W_1}}, \dots, \rho_{\reg{M_r W_{r-1}}}, \rho_{\reg{Z S W_r}}$.\footnote{Note that these states share some registers. For example, both $\rho_{\reg{M'_1 W_1}}$ and $\rho_{\reg{M_2 W_1}}$ are defined in part on register $\reg W_1$. However, importantly these are two (a priori) independent states, i.e.~they are \emph{not} two different reduced states of some larger state $\rho_{\reg{M'_1 M_2 W_1}}$. As a result, the tensor product $\rho_{\reg{M'_1 W_1}} \ot \rho_{\reg{M_2 W_1}}$ is well-defined as a state on $\reg{M'_1 W_1 M_2 W_1}$, i.e.~it includes two copies of $\reg{W_1}$.
Whenever we refer to these intermediate states in this notation, we have to include sufficiently many registers as subscripts to make sure that the states are uniquely identified.
For example, we cannot simply write $\rho_{\reg{W_1}}$ for the reduced state on register $\reg{W_1}$ as there are two independent states, $\rho_{\reg{M'_1 W_1}}$ and $\rho_{\reg{M_2 W_1}}$, of which we could take the reduced state.
We will always have to explicitly write $\ptr{\reg{M_1}}{\rho_{\reg{M'_1 W_1}}}$ or $\ptr{\reg{M_2}}{\rho_{\reg{M_2 W_1}}}$ to avoid this ambiguity.
This notation may seem slightly confusing at first encounter, but turns out to be quite useful in this context.}
On these states, the above constraints can be formalised as follows:
\begin{enumerate}
\item $\rho_{\reg R} \geq 0$ and $\tr{\rho_{\reg R}} = 1$ for all $\reg R \in \{\reg{M_i W_{i-1}}, \reg{M'_i W_i}\}_{i = 1, \dots, r}$, where we set $\reg{M'_{r}} \deq \reg{ZS}$ to ease the notation.
\item $\rho_{\reg{M'_i W_{i}}} - \Phi_{C_i}(\rho_{\reg{M_i W_{i-1}}}) = 0$ for $i = 1, \dots, r$.
\item $\ptr{\reg{M'_{i}}}{\rho_{\reg{M'_{i} W_{i}}}} - \ptr{\reg{M_{i+1}}}{\rho_{\reg{M_{i+1} W_{i}}}} = 0$ for $i = 1, \dots, r-1$.
\item $\rho_{\reg W_0} = \proj{0\dots 0}_{\reg{W_0}}$.
\item $\tr{\proj{1}_\reg{Z}\rho_{\reg Z}} = c(n)$.
\end{enumerate}

We can write these constraints in the required form $\Phi(A) = B$ of an SDP.
For this, we consider $A \in \linear \left( \otimes_{\reg R \in \cR} \reg R \right)$, where the tensor product is over $\cR \deq \{\reg{M_i W_{i-1}}, \reg{M'_i W_i}\}_{i = 1, \dots, r}$.
Then, it is easy to see that the constraints (i)-(v) can be written as linear constraints on $A$.
More formally, there exist linear maps $\Phi^{(j)}$ and matrices $B^{(j)}$ such that for all $A \in \linear(\otimes_{\reg R \in \cR} \reg R)$ that satisfies $A \geq 0$ and $\Phi^{(j)}(A) = B^{(j)}$ for all $j$, the reduced states of $A$ are states $\rho_\reg{R}$ satisfying constraints (i)-(v).
We can combine these constraints by defining $\Phi(A) \deq \frac{1}{N} \oplus_j \Phi^{(j)}(A)$ and $B = \frac{1}{N} \oplus_j B^{j}$ to reach the desired SDP form $\Phi(A) = B$.
Here, $N$ is some normalisation factor to be chosen later.
We provide a formal description of the maps $\Phi^{(j)}$ and the matrices $B^{(j)}$ in \cref{sec:sdp_details}.

Each map $\Phi^{(j)}$ only involves taking a partial trace, applying a verifier channel $\Phi_{C_i}$, and taking the difference of such operations.
It is easy to check that each of these is a linear map $\cM$ whose adjoint is contracting, i.e.~satisfies $\norm{\cM^*(X)}_\infty \leq \norm{X}_\infty$.
Furthermore, taking the adjoint is a linear operation on superoperators.
Therefore, each $(\Phi^{(j)})^*$ is the difference of at most two contracting superoperators, so each $(\Phi^{(j)})^*$ satisfies $\norm{(\Phi^{(j)})^*(X)}_\infty \leq 2 \norm{X}_\infty$.
Since the protocol has $r(n)$ rounds, there are $O(r(n))$ different constraints $\Phi^{(j)}$.
Taking the adjoint channel of the direct sum $\Phi = \frac{1}{N} \oplus_j \Phi^{(j)}$, we see that we can write $\Phi^* = \frac{1}{N} \sum (\Phi^{(j)})^*$, so $\norm{\Phi^*(X)}_\infty \leq \frac{O(r(n))}{N} \norm{X}_\infty$.
By choosing a sufficiently large $N = O(r(n))$, we can therefore make sure that $\Phi^*$ is contracting.
Furthermore, since the verifier circuits have at most polynomial size, each $\Phi_{C_i}$ is \pspace-computable; as a result, $\Phi$ is also \pspace-computable.
Finally, it is easy to see that $\norm{B}_\infty \leq 1$ and $B$ is \pspace-computable.

We can summarise these results into the following lemma.
\begin{lemma} \label{lem:qip-as-sdp}
Fix verifier circuits $C_j$ in a $\stateQIP$ protocol.
Let $\cR \deq \{\reg{M_i W_{i-1}}, \reg{M'_i W_i}\}_{i = 1, \dots, r}$.
Then there exists a \pspace-computable superoperator $\Phi$ and a \pspace-computable matrix $B$ with $\norm{B}_\infty \leq 1$ such that $\Phi^*$ is contracting and the following holds: 
\begin{enumerate}
\item For all prover channels $\Phi_{P_i}$ that lead the prover to be accepted with probability $c(n)$, the intermediate states $(\rho_{R})_{R \in \cR}$ in the protocol satisfy $\Phi(\otimes_{\reg R \in \cR} \rho_\reg{R}) = B$.
\item For all matrices $A$ in the set
\begin{align*}
\Big \{ A \in \linear \big (\bigotimes_{R \in \cR} \reg{R} \big) \; \mid \; A \geq 0 \tand \Phi(A) = B \Big \}
\end{align*}
there exist prover channels $\Phi_{P_j}$ that lead the prover to be accepted with probability $c(n)$ and for which the intermediate states $\rho_{\reg R}$ in the protocol are the partial traces of $A$ on the corresponding registers.
\end{enumerate}
\end{lemma}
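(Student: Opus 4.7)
The plan is to write each of the five constraints (i)--(v) enumerated in the excerpt as a single linear equation of the form $\Phi^{(j)}(A) = B^{(j)}$ where $\Phi^{(j)}$ is a simple superoperator (a partial trace, the verifier channel $\Phi_{C_i}$, or a difference of such maps applied to appropriate reduced states of $A$) and $B^{(j)}$ is either $0$, the projector $\proj{0\cdots 0}_{\reg{W_0}}$, or the scalar $c(n)$. Taking the direct sum $\Phi = \frac{1}{N}\bigoplus_j \Phi^{(j)}$ and $B = \frac{1}{N}\bigoplus_j B^{(j)}$ packages all constraints into the single equation $\Phi(A) = B$. Here $A$ ranges over operators on $\bigotimes_{\reg R \in \cR} \reg R$, and we interpret $\rho_\reg{R} \deq \ptr{(\reg R)^c}{A}$ as the intermediate states. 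Since $A$ is normalized to trace one, each $\rho_\reg R$ is automatically a valid density matrix, so constraint (i) is free.

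Next I would verify the claimed analytic and computational properties of $\Phi$ and $B$. Each partial trace and each verifier channel $\Phi_{C_i}$ has a contracting adjoint (partial traces become isometric embeddings in the Heisenberg picture, and channel adjoints are unital and thus operator-norm contractive). Consequently each $(\Phi^{(j)})^*$ is a sum of at most two contracting maps and has operator norm at most $2$. Summing the $O(r(n))$ constraints and normalizing by $N = \Theta(r(n))$ makes $\Phi^*$ contracting. The matrix $B$ satisfies $\norm{B}_\infty \leq 1$ because each block is a projector, zero, or a scalar in $[0,1]$. Both $\Phi$ and $B$ are $\PSPACE$-computable because each $\Phi_{C_i}$ is specified by a polynomial-size verifier circuit (so its matrix entries are $\PSPACE$-computable given oracle access to the entries of the input), partial traces and direct sums are straightforwardly $\PSPACE$-computable, and the constants $c(n)$ and $\proj{0\cdots 0}$ are trivial to compute.

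For part (i) of the lemma, I would observe that the intermediate states generated by \emph{any} prover achieving acceptance probability $c(n)$ automatically satisfy each constraint (ii)--(v) by construction: (ii) says the verifier applies its channel $\Phi_{C_i}$; (iii) says the prover's operation does not touch the verifier's private register $\reg{W_i}$; (iv) says the verifier starts with $\ket{0\cdots 0}$; (v) is the acceptance condition. Hence the tensor product of these intermediate states (or equivalently any joint extension that reduces to them) is a feasible $A$.

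The main obstacle is the reverse direction in part (ii): given an arbitrary feasible $A$, reconstruct prover channels whose induced intermediate states reproduce the reduced states $\rho_\reg{R} = \ptr{(\reg R)^c}{A}$. The key tool is Uhlmann's theorem. Constraint (iii) guarantees that $\ptr{\reg{M'_i}}{\rho_{\reg{M'_i W_i}}} = \ptr{\reg{M_{i+1}}}{\rho_{\reg{M_{i+1} W_i}}}$, i.e.~the states $\rho_{\reg{M'_i W_i}}$ and $\rho_{\reg{M_{i+1} W_i}}$ share the same marginal on $\reg{W_i}$. Choosing a sufficiently large prover register $\reg{Q_i}$ and fixing purifications of each $\rho_\reg{R}$ on the prover side, Uhlmann's theorem yields a unitary on $\reg{M'_i Q_i}$ (acting trivially on $\reg{W_i}$) that maps the purification of $\rho_{\reg{M'_i W_i}}$ to the one of $\rho_{\reg{M_{i+1} W_i}}$. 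Defining this unitary to be the prover's round-$(i{+}1)$ operation, and using constraint (iv) to set the first prover operation (which prepares a purification of $\rho_{\reg{M_1 W_0}}$ on top of the verifier's initial state $\ket{0\cdots 0}_{\reg{W_0}}$), an induction on $i$ shows the protocol's intermediate states coincide with those encoded by $A$: verifier transitions are enforced by constraint (ii) and prover transitions by the Uhlmann construction. The acceptance probability is then $\tr{\proj{1}_{\reg Z} \rho_{\reg Z}} = c(n)$ by constraint (v), completing the proof.
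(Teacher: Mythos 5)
Your proposal is correct and follows essentially the same route as the paper. Both arguments package the five constraint families into a direct-sum map $\Phi = \frac{1}{N}\bigoplus_j \Phi^{(j)}$ and $B = \frac{1}{N}\bigoplus_j B^{(j)}$ with $N = \Theta(r(n))$ chosen to make $\Phi^*$ contracting, verify $\PSPACE$-computability and $\|B\|_\infty \leq 1$ in the same way, note that part~(i) is immediate from the construction, and prove part~(ii) by an inductive equivalence-of-purifications (Uhlmann) argument that reconstructs prover unitaries from the reduced states of $A$ --- the paper simply cites Vidick--Watrous for this last step while you spell it out.
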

\begin{proof}
We use the map $\Phi$ and matrix $B$ described above, for which we have already shown that they are \pspace-computable and satisfy $\norm{\Phi^*(X)}_\infty \leq \norm{X}_\infty$ and $\norm{B}_\infty \leq 1$.
From the construction of $\Phi$ and $B$, condition (i) is immediate.
Condition (ii) holds because we can purify the verifier's channels $\Phi_{C_j}$ and the prover's actions so that all intermediate states in the protocol become pure states; then, the condition follows from the equivalence of purifications. We refer to~\cite[Section 4.3]{vidick2016quantum} for a detailed argument.
\end{proof}

\subsection{Solving the SDP}
\label{sec:solving2}

\begin{theorem}
\label{thm:main}
For all $c(n), s(n) \geq 1/\poly(n)$ such that $c(n) - s(n) \geq 1/\poly(n)$, for all functions $\delta(n)$, and all polynomials $q(n) \geq 0$, we have
 \[
 \stateQIP_{c, s, \delta} \subseteq \statePSPACE_{\delta + 1/q}\,.
 \]
\end{theorem}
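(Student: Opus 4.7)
The strategy is to apply the quantum polynomial-space SDP solver of \Cref{thm:solving} to the SDP associated to a given $\stateQIP$ protocol and then show that the approximately-feasible solution it returns can be used to synthesise a state close to $\rho_n$. Fix a $\stateQIP_{c,s,\delta}$ protocol for $(\rho_n)_n$ with verifier $V = (V_n)_n$. By \Cref{lem:qip-as-sdp}, there exists a small-width, $\PSPACE$-computable SDP instance $(\Phi_n, B_n)$ of exponential dimension $D = 2^{\poly(n)}$ whose feasible set consists of positive semidefinite operators $A \in \linear(\bigotimes_{\reg R \in \cR} \reg R)$ whose partial traces are the intermediate states of some prover strategy accepted with probability exactly $c(n)$. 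In particular, by completeness the feasible set is non-empty, so $\beta = 0$ in \Cref{eq:min-max}.

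The first step is to invoke \Cref{thm:solving} with error parameters $\eps, \delta', 2^{-r(n)}$ all taken to be $1/\poly(n)$ (concrete polynomial dependence on $q(n)$, $p(n)$, and $\delta(n)$ will be fixed at the end). This yields a space-uniform unitary $V$ in $\pureUnitaryPSPACE$ such that $V \ket{0 \cdots 0}$ is $2^{-r(n)}$-close in trace distance to a purification $\ket{\Psi}$ of an output state $\rho$ of \Cref{algo:mmwu} satisfying $\| \Phi_n(\rho) - B_n \|_1 \leq 11 \eps + 2\delta' \eqqcolon \eta$. The second step is to trace out all registers of the purification except for the $\reg{Z} \reg{S}$ subregister of $\rho_{\reg{M'_r W_r}}$, perform a standard basis measurement on $\reg{Z}$, condition on the outcome $1$ (by discarding the state otherwise and repeating if necessary), and output the resulting state $\sigma$ on $\reg{S}$. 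Each of these is a general quantum operation implementable in polynomial space given $V$, so $\sigma$ is prepared in $\statePSPACE$.

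It remains to show (this will be the content of the separate rounding lemma alluded to in the introduction as \texttt{lem:approx\_feasible\_rounding}) that $\td(\sigma, \rho_n) \leq \delta(n) + 1/q(n)$ whenever $\eta \leq \eta_0(n)$ for some inverse polynomial $\eta_0(n)$. The argument proceeds as follows. The $\eta$-feasibility of $\rho$ (with error measured in trace norm on the direct-sum output of $\Phi_n$) implies that each individual SDP constraint is violated by at most $O(\eta \cdot r(n))$ in trace norm. By Uhlmann's theorem applied round-by-round, there exist prover channels $(\Phi_{P_j})_j$ whose induced sequence of intermediate states is within $O(\sqrt{r(n) \eta})$ in trace distance of the corresponding marginals of $\rho$ at every round; this uses the Fuchs–van de Graaf inequality to convert approximate equality of reduced states to an Uhlmann transformation error. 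The marginal acceptance probability is within $O(r(n)\sqrt{\eta})$ of $c(n)$, hence exceeds $s(n)$ for small enough $\eta$ since $c(n) - s(n) \geq 1/\poly(n)$. Soundness of the protocol then gives $\td(\sigma', \rho_n) \leq \delta(n)$ for the output $\sigma'$ of this actual prover, and monotonicity of trace distance under the conditioning channel combined with the closeness of marginals gives $\td(\sigma, \sigma') \leq 1/q(n)$ for a suitable choice of parameters.

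The main obstacle is the rounding argument in the last paragraph: turning a bound on $\| \Phi_n(\rho) - B_n \|_1$ into a bound on the trace distance between the extracted state $\sigma$ and the actual target $\rho_n$. Approximate feasibility of the SDP only ensures that $\rho$ is close to an exactly feasible operator in a fairly weak sense, and one must be careful that the post-selection on $\reg{Z} = 1$ does not blow up errors. The reason it works here is that $c(n) - s(n)$ is at least inverse polynomial, giving a constant factor slack that absorbs the $O(r(n)\sqrt{\eta})$ error in the acceptance probability; combined with the Uhlmann-based reconstruction of a valid prover, soundness does the rest. By choosing $\eps$ and $\delta'$ as sufficiently small inverse polynomials and $r(n)$ a sufficiently large polynomial in terms of $q(n)$, $p(n)$, and $r(n)$ (protocol rounds), all the accumulated errors fit within the budget $1/q(n)$.
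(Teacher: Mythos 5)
Your proof plan follows the paper's argument exactly: form the SDP of \Cref{lem:qip-as-sdp}, solve it approximately in quantum polynomial space via \Cref{thm:solving}, round the approximately-feasible output to an exactly-feasible prover transcript via a round-by-round Uhlmann argument (the paper's \Cref{lem:approx_feasible_rounding}) so that the acceptance probability still clears $s(n)$ thanks to the inverse-polynomial completeness–soundness gap, invoke soundness, and finish by post-selecting on $\reg{Z}=1$. The only divergence is cosmetic — the exponents in the intermediate trace-distance bound for the rounding step (the paper gets $O(r'(n)\epsilon^{1/4})$ rather than $O(\sqrt{r(n)\eta})$ after accounting for Fuchs–van de Graaf and the fidelity accumulation across rounds) — but since every quantity in play is inverse-polynomial and the final $\epsilon$ is freely chosen, this does not affect correctness.
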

\begin{proof}
Consider any sequence of states $(\tau_n)_{n \in \N}$ in $\stateQIP_{c,s,\delta}$.
By the definition of $\stateQIP$ there exists an $r(n)$-round verifier $V = (V_n)$ for some polynomial $r(n)$ such that for all provers $P$, if $P$ is accepted with probability at least $s(n)$, then the output state $\sigma$ conditioned on accepting satisfies $\td(\sigma, \tau_n)\leq \delta(n)$.

We now construct a $\statePSPACE$ procedure for preparing the state $\tau_n$.
For simplicity, we leave the $n$-dependence of this procedure implicit and set $\tau = \tau_n$.
Starting from the above verifier, we can apply \cref{lem:qip-as-sdp} to get a \pspace-computable SDP $(\Phi, B)$ with the following properties (where $\cR$ is as in \cref{lem:qip-as-sdp}):
\begin{enumerate}
\item $\norm{\Phi^*(X)}_\infty \leq \norm{X}_\infty$ and $\norm{B}_\infty \leq 1$.
\item The set
\begin{align}
\cS = \Big \{A \in \linear(\bigotimes_{R \in \cR} \reg{R}) \; : \; A \geq 0 \tand \Phi(A) = B \Big \} \label{eqn:feasible_set}
\end{align}
is not empty, i.e.~the SDP $(\Phi,B)$ is feasible. This follows from the existence of an honest prover for the $\stateqip$-protocol and condition (i) in \cref{lem:qip-as-sdp}.
\end{enumerate}

The SDP $(\Phi, B)$ satisfies the requirements of both \cref{thm:mmwu} and \cref{thm:solving}.
Therefore, we get that for all $\eps \geq 1/\poly(n)$, there exists a unitary $V$ in \pup such that the state $\ket{\psi'} \deq V \ket{0 \dots 0}$ is a purification of a state $\rho'$ such that $\norm{\Phi(\rho') - B}_1 \leq \eps$, i.e.~$\rho'$ is $\eps$-feasible.
This also means that $\rho' \in \statePSPACE_0$.

Intuitively, we would now like to show that our $\eps$-feasible $\rho'$ is in fact close in trace distance to an exactly feasible $\rho \in \cS$, as the definition of $\stateQIP$ only gives guarantees about exactly feasible, not approximately feasible, states.
However, it turns out that we can only show a slightly weaker statement: we can show that $\rho'$ is close in trace distance to a state $\rho$ that satisfies all constraints in $(\Phi, B)$ exactly except the last constraint ($\tr{\proj{1}_\reg{Z}\rho_{\reg Z}} = c(n)$), which will only be satisfied up to inverse polynomial error.
This is still sufficient: as long as $\tr{\proj{1}_\reg{Z}\rho_{\reg Z}} \geq c(n) - 1/\poly(n) \geq s(n)$, the guarantees of $\stateQIP$ apply.

More formally, we define $\tilde \Phi$ to be the same as $\Phi$, except that we do not include the constraint $\tr{\proj{1}_\reg{Z}\rho_{\reg Z}} = c(n)$ (i.e.~we leave out the term in the direct sum in the definition of $\Phi$ corresponding to this constraint).
We likewise define $\tilde B$.
Then, we define
\begin{align}
\tilde \cS = \Big \{A \in \linear(\bigotimes_{R \in \cR} \reg{R}) \; : \; A \geq 0 \tand \tilde \Phi(A) = \tilde B \Big \} \,. \label{eqn:feasible_set_no_c}
\end{align}

Now, we can use \cref{lem:approx_feasible_rounding}, stated and proved below, which shows that there exists a $\rho \in \tilde \cS$ such that 
\begin{align*}
\td(\rho, \rho') \leq  2 r'(n) \eps^{1/4} \qquad \tand \qquad \tr{\proj{1}_\reg{Z}\rho_{\reg Z}} \geq c(n) - O(r(n)) \eps -  \sqrt{2 r'(n)} \eps^{1/8} \,.
\end{align*}
for some polynomial $r'(n)$ that only depends on the number of rounds of the protocol and is independent from $\eps$. 
Since $r'(n)  = \poly(n)$ and $c(n) - s(n) \geq 1/\poly(n)$ are independent of $\eps$, we can now choose $\eps = 1/\poly(n)$ small enough such that $\td(\rho, \rho') \leq 1/q(n)$ and $\tr{\proj{1}_\reg{Z}\rho_{\reg Z}} \geq s(n)$.

Since $\rho \in \tilde S$ and $\tr{\proj{1}_\reg{Z}\rho_{\reg Z}} \geq s(n)$, it follows from condition (ii) of \cref{lem:qip-as-sdp} and the soundness of the $\stateQIP$-protocol for $(\tau_n)_{n \in \N}$ that 
\begin{align}
\td \left( \frac{\ptr{\reg{W_r Z}}{\proj{1}_{\reg Z} \rho_{\reg{ZSW_r}}}}{\tr{\proj{1}_\reg{Z} \rho_{\reg{ZSW_r}}}} , \tau \right) \leq \delta(n) \,. \label{lem:reduced_state_qip_cond}
\end{align}

Since $\rho' \in \statePSPACE_0$ is $1/q(n)$-close in trace distance to $\rho$, we have $\rho' \in \statePSPACE_{1/q(n)}$ by \cref{lem:state-pspace-robust}.
By measuring the $\reg{Z}$-register of $\rho$ and post-selecting on receiving outcome 1 (which we can do in polynomial space simply by repeating the preparation of $\rho'$), this also means that $\frac{\ptr{\reg{W_r Z}}{\proj{1}_{\reg Z} \rho_{\reg{ZSW_r}}}}{\tr{\proj{1}_\reg{Z} \rho_{\reg{ZSW_r}}}} \in \statePSPACE_{1/q(n)}$.
Combining this with \cref{lem:reduced_state_qip_cond} and again using \cref{lem:state-pspace-robust}, we get that $\tau \in \statePSPACE_{\delta(n) + 1/q(n)}$ as desired.
\end{proof}

We now show the remaining step in the proof of \cref{thm:main}.
For this, we will need the follow triangle-like inequality for the fidelity.
\begin{lemma} \label{lem:fidelity_triangle}
Let $\rho, \sigma, \tau$ be quantum states.
Suppose that $F(\rho, \sigma) \geq 1-\delta$ and $F(\sigma, \tau) \geq 1 - \eps$ for $\delta, \eps \in [0,1]$.
Then $F(\rho, \tau) \geq 1 - \delta - \eps - 2 \sqrt{\delta \eps} \geq 1 - \delta - 3 \sqrt \eps$.
\end{lemma}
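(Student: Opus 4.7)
The plan is to derive the stated inequality from the fact that the Bures angle $A(\rho,\sigma) \deq \arccos \fidelity(\rho,\sigma)$ is a metric on density matrices (this is a standard consequence of Uhlmann's theorem, which lets one transport the Fubini--Study metric on purifications down to mixed states). Given this, the triangle inequality $A(\rho,\tau) \leq A(\rho,\sigma) + A(\sigma,\tau)$ immediately implies, upon taking cosines of both sides and applying the cosine addition formula,
\[
\fidelity(\rho,\tau) \;\geq\; \fidelity(\rho,\sigma)\,\fidelity(\sigma,\tau) \;-\; \sqrt{1-\fidelity(\rho,\sigma)^2}\,\sqrt{1-\fidelity(\sigma,\tau)^2}.
\]

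From here it is a routine estimate. First, the product term satisfies $\fidelity(\rho,\sigma)\fidelity(\sigma,\tau) \geq (1-\delta)(1-\eps) \geq 1-\delta-\eps$. Second, since $\fidelity(\rho,\sigma)\ge 1-\delta$ we have $1-\fidelity(\rho,\sigma)^2 \leq 1-(1-\delta)^2 \leq 2\delta$, and analogously $1-\fidelity(\sigma,\tau)^2 \leq 2\eps$, so the subtracted term is bounded by $\sqrt{2\delta}\sqrt{2\eps} = 2\sqrt{\delta\eps}$. Combining these gives the first inequality $\fidelity(\rho,\tau) \geq 1-\delta-\eps-2\sqrt{\delta\eps}$.

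For the second (weaker) bound, note that $\delta,\eps \in [0,1]$ implies $\eps \leq \sqrt{\eps}$ and $2\sqrt{\delta\eps} \leq 2\sqrt{\eps}$, so $\eps + 2\sqrt{\delta\eps} \leq 3\sqrt{\eps}$, which directly yields $1-\delta-\eps-2\sqrt{\delta\eps}\geq 1-\delta-3\sqrt{\eps}$. The only non-trivial input is the metric property of the Bures angle; everything else is elementary trigonometric bookkeeping. If one prefers to avoid invoking the metric property as a black box, an alternative is to apply Uhlmann's theorem directly: pick purifications $\ket{\psi_\rho},\ket{\psi_\sigma},\ket{\psi_\tau}$ on a common system so that $|\langle\psi_\rho|\psi_\sigma\rangle| = \fidelity(\rho,\sigma)$ and $|\langle\psi_\sigma|\psi_\tau\rangle| = \fidelity(\sigma,\tau)$, and then apply the same cosine-addition argument to the angles between the three vectors in Hilbert space (where the triangle inequality for angles is elementary). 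The hardest step, such as it is, is just being careful with the square-root bounds, but there are no real obstacles.
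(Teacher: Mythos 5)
Your proof is correct and follows essentially the same route as the paper: both reduce the statement to the fidelity triangle inequality $\fidelity(\rho,\tau)\geq \fidelity(\rho,\sigma)\fidelity(\sigma,\tau)-\sqrt{(1-\fidelity(\rho,\sigma)^2)(1-\fidelity(\sigma,\tau)^2)}$ and then carry out the same elementary estimates. The only difference is that the paper cites this inequality from the literature (Renes, Prop.~10.5), whereas you re-derive it from the metric property of the Bures angle (equivalently, from Uhlmann's theorem plus the spherical triangle inequality), which is a valid and slightly more self-contained presentation of the same argument.
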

\begin{proof}
From~\cite[Proposition 10.5]{renes2015quantum}, we have that 
\begin{align*}
F(\rho, \tau) \geq (1 - \delta)(1 - \eps) - \sqrt{(1 - (1-\eps)^2)(1 - (1 - \delta)^2)}\,.
\end{align*}
We can bound $(1 - \delta)(1 - \eps) \geq 1- \delta - \eps$ and $(1 - (1-\eps)^2)(1 - (1 - \delta)^2) \leq 4 \eps \delta$.
\end{proof}

\begin{lemma} \label{lem:approx_feasible_rounding}
Let $(\Phi, B)$ be the SDP associated with a $\stateQIP$-protocol as constructed in \cref{lem:qip-as-sdp}.
Define $\tilde \cS$ as in \cref{eqn:feasible_set_no_c}.
Then, there exists a polynomial $r'(n)$ that only depends on the number of rounds $r(n)$ such that for all $\eps > 0$ and an $\eps$-feasible $\rho'$ there exists a $\rho \in \tilde \cS$ satisfying
\begin{align*}
\td(\rho, \rho') \leq  2 r'(n) \eps^{1/4} \qquad \tand \qquad \tr{\proj{1}_\reg{Z}\rho_{\reg Z}} \geq c(n) - O(r(n)) \eps -  \sqrt{2 r'(n)} \eps^{1/8} \,.
\end{align*}
\end{lemma}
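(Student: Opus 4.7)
My plan is to construct $\rho$ in two stages: (a) build consistent target marginals $\hat\rho_R$ for each $R \in \cR$ that satisfy constraints (ii)--(iv) of $\tilde\cS$ exactly, and then (b) assemble these into a joint state on $\otimes_R R$ with those marginals that is close to $\rho'$ in trace distance. Before either stage, I would unpack the $\eps$-feasibility hypothesis: since $\Phi = \frac{1}{N}\oplus_j \Phi^{(j)}$ with $N = O(r(n))$ in the construction of \cref{lem:qip-as-sdp}, $\eps$-feasibility of $\rho'$ yields $\norm{\Phi^{(j)}(\rho') - B^{(j)}}_1 \leq \delta_0 \deq N\eps = O(r(n))\eps$ for every individual constraint $j$.

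Stage (a) proceeds iteratively along the protocol. First set $\hat\rho_{\reg{M_1 W_0}}$ via Uhlmann's theorem to be the fidelity-closest extension of $\proj{0\cdots 0}_{\reg{W_0}}$ to $\rho'_{\reg{M_1 W_0}}$ (possible because $\fidelity(\rho'_{\reg{W_0}}, \proj{0\cdots 0}) \geq 1 - \delta_0$). Then alternate: enforce constraint~(ii) by $\hat\rho_{\reg{M'_i W_i}} \deq \Phi_{C_i}(\hat\rho_{\reg{M_i W_{i-1}}})$, and enforce constraint~(iii) by picking $\hat\rho_{\reg{M_{i+1} W_i}}$ as the fidelity-closest extension, again via Uhlmann, of $\ptr{\reg{M'_i}}{\hat\rho_{\reg{M'_i W_i}}}$ to $\rho'_{\reg{M_{i+1} W_i}}$. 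Track the infidelity $1 - \fidelity(\hat\rho_R, \rho'_R)$ through this process using monotonicity of fidelity under CPTP maps and partial traces, together with the triangle-like inequality \cref{lem:fidelity_triangle}. Working in fidelity (rather than trace distance) throughout is essential: it keeps the error at each iteration step additive in $\delta_0$ instead of compounding through repeated square roots, which would otherwise give useless bounds of the form $\eps^{2^{-r}}$. After $O(r(n))$ iterations this yields $1 - \fidelity(\hat\rho_R, \rho'_R) \leq O(\poly(r(n)) \cdot \delta_0)$, whence $\td(\hat\rho_R, \rho'_R) \leq O(\poly(r(n)) \sqrt\eps)$.

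For stage~(b), produce $\rho \in \tilde\cS$ on $\otimes_R R$ with marginals $\hat\rho_R$ and small trace distance to $\rho'$. The crucial feature is that the target marginals are protocol-consistent (they arise from the transcript of an honest prover against the verifier), so a single Uhlmann step applied to a purification of $\rho'$ can fix all of them simultaneously; this lifts the marginal closeness $O(\sqrt\eps)$ to a joint closeness $O(\eps^{1/4})$ at the cost of one more square-root loss, giving $\td(\rho, \rho') \leq 2 r'(n) \eps^{1/4}$ for an appropriate polynomial $r'(n)$. The acceptance probability bound then follows from $|\tr{\proj{1}\rho_{\reg Z}} - \tr{\proj{1}\rho'_{\reg Z}}| \leq \td(\rho, \rho')$ combined with $|\tr{\proj{1}\rho'_{\reg Z}} - c(n)| \leq \delta_0$ (from the $\eps$-feasibility and constraint~(v)); the slightly weaker $\eps^{1/8}$ factor in the stated bound comes out of using the fidelity gap directly rather than passing through trace distance.

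I expect stage~(b) to be the main technical obstacle. Applying Uhlmann one marginal at a time across $|\cR|$ marginals would compound errors exponentially in $r(n)$, so a purely sequential approach cannot succeed. The key is to exploit the fact that $\hat\rho_R$ are not arbitrary target marginals but arise as successive states of a single self-consistent protocol run, which allows a single global Uhlmann step on the full purification of $\rho'$ to enforce all marginal constraints simultaneously with only one extra square-root loss beyond stage~(a). Combined with the fidelity-based error tracking in stage~(a), which ensures the marginal gap entering stage~(b) is polynomially (rather than exponentially) small in $r(n)$, this yields the stated $\eps^{1/4}$ bound.
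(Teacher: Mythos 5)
Your stage~(a) is essentially the paper's argument: set up the per-register targets iteratively along the rounds, enforcing constraint~(ii) by applying $\Phi_{C_i}$ and constraint~(iii) by an Uhlmann step, track the accumulated error via fidelity (using monotonicity and \cref{lem:fidelity_triangle}), and avoid compounding square roots by staying in fidelity until the very end. That matches.

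Your stage~(b), however, is based on a misconception about the SDP's structure, and as written it has a real gap. There is no need to assemble the per-register targets $\hat\rho_R$ into a correlated joint state with those marginals: the feasibility constraints in \cref{eqn:feasible_set_no_c} concern \emph{only} the partial traces $\ptr{\setminus R}{A}$, so the tensor product $\rho \deq \bigotimes_{R \in \cR} \hat\rho_R$ is already a valid feasible point of $\tilde\cS$. The step you are missing is the initial reduction the paper makes: because every constraint in $(\Phi,B)$ depends on $A$ only through its reduced states on the various $R \in \cR$, if $\rho'$ is $\eps$-feasible then so is the product $\bigotimes_R \rho'_R$ of its own marginals, so one may assume WLOG that $\rho'$ itself is a product state. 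Once both $\rho$ and $\rho'$ are product states, $\td(\rho,\rho')$ is controlled directly by the per-factor fidelities via multiplicativity of fidelity under tensor products and the Fuchs--van de Graaf inequality, giving the $\eps^{1/4}$ bound. Your proposed ``single global Uhlmann step on a purification of $\rho'$'' is not needed, and as stated it is unsubstantiated: Uhlmann's theorem fixes a \emph{single} marginal of a purification, and there is no reason a single application should simultaneously enforce all $O(r(n))$ marginal targets. If you actually tried to implement stage~(b) this way you would get stuck; the product-state reduction is what makes the lemma go through cleanly.
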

\begin{proof}
First note that we only need to show that a $\rho \in \tilde \cS$ close to $\rho'$ exists, not that it can be constructed efficiently.
Due to the structure of $(\Phi, B)$, if $\rho'$ is $\eps$-feasible, so is the tensor product of reduced states of $\rho'$ on all registers $\reg R \in \cR$.
Therefore, for the rest of the proof we will assume that $\rho'$ is a product state and construct another product state $\rho \in \tilde\cS$, i.e.~we write
\begin{align*}
\rho' = \bigotimes_{\reg R \in \cR} \rho'_{\reg R} \quad \tand \quad \rho = \bigotimes_{\reg R \in \cR} \rho_{\reg R} \,.
\end{align*}
Furthermore, at the cost of incurring at most a factor $O(r(n))$ in the error $\eps$ (which can be absorbed into $r'(n)$), we can ensure that each $\rho'_{\reg R}$ is indeed a quantum state (instead of just being approximately normalised).
By a similar argument, we can also assume that $\rho'_{\reg W_0} = \proj{0\dots 0}_{\reg{W_0}}$ without loss of generality.

Because the trace distance is monotonic under partial trace, we can expand out the definition of $\Phi$ to see that $\rho'$ being $\eps$-feasible implies that for all $i$,
\begin{enumerate}
\item $\td(\rho'_{\reg{M'_i W_{i}}}, \Phi_{C_i}(\rho'_{\reg{M_i W_{i-1}}})) \leq \eps'$,
\item $\td\left(\ptr{\reg{M'_{i}}}{\rho'_{\reg{M'_{i} W_{i}}}}, \ptr{\reg{M_{i+1}}}{\rho'_{\reg{M_{i+1} W_{i}}}}\right) \leq \eps'$
\end{enumerate}
for $\eps' = O(r(n)) \eps$, where the factor $O(r(n))$ is from the renormalisation of the SDP that we performed in \cref{lem:qip-as-sdp} to make $\Phi^*$ contracting.
We can convert these into the following fidelity requirements:
\begin{enumerate}
\item $\fidelity(\rho'_{\reg{M'_i W_{i}}}, \Phi_{C_i}(\rho'_{\reg{M_i W_{i-1}}})) \geq 1 - \eps'$,
\item $\fidelity \left(\ptr{\reg{M'_{i}}}{\rho'_{\reg{M'_{i} W_{i}}}}, \ptr{\reg{M_{i+1}}}{\rho'_{\reg{M_{i+1} W_{i}}}}\right) \geq 1 - \eps'$.
\end{enumerate}
We can now construct a feasible $\rho$ inductively as follows:
set $\rho_{\reg{M_1W_0}} = \rho'_{\reg{M_1W_0}}$.
For $i = 1, \dots, r-1$, suppose we have constructed a $\rho_{\reg{M_i W_{i-1}}}$ that satisfies 
\begin{align*}
\fidelity(\rho_{\reg{M_i W_{i-1}}}, \rho'_{\reg{M_i W_{i-1}}}) \geq 1 - \xi_i \,.
\end{align*}
Then we set $\rho_{\reg{M'_i W_i}} = \Phi_{C_i}(\rho_{\reg{M_iW_{i-1}}})$.
We can bound $\fidelity(\rho_{\reg{M'_i W_i}}, \rho'_{\reg{M'_i W_i}})$ as follows.
By monotonicity of the fidelity under quantum channels, 
\begin{align*}
\fidelity(\rho_{\reg{M'_i W_i}}, \Phi_{C_i}(\rho'_{\reg{M_i W_{i-1}}})) = \fidelity(\Phi_{C_i}(\rho_{\reg{M_i W_{i-1}}}), \Phi_{C_i}(\rho'_{\reg{M_i W_{i-1}}})) \geq 1 - \xi_i \,.
\end{align*}

By fidelity requirement (i) and \cref{lem:fidelity_triangle}, this yields $\fidelity(\rho_{\reg{M'_i W_i}}, \rho'_{\reg{M'_i W_i}}) \geq 1 - \xi_i - 3 \sqrt{\eps'}$.
Using fidelity requirement (ii), monotonicity of the trace distance under partial trace, and \cref{lem:fidelity_triangle}, this means that 
\begin{align*}
\fidelity \left(\ptr{\reg{M'_{i}}}{\rho_{\reg{M'_{i} W_{i}}}}, \ptr{\reg{M_{i+1}}}{\rho'_{\reg{M_{i+1} W_{i}}}}\right) \geq 1 - \xi_i - 6 \sqrt{\eps'} \,.
\end{align*}
By Uhlmann's theorem, there exists a $\rho_{\reg{M_{i+1} W_{i}}}$ such that 
\begin{align*}
\ptr{\reg{M'_{i}}}{\rho_{\reg{M'_{i} W_{i}}}} = \ptr{\reg{M_{i+1}}}{\rho_{\reg{M_{i+1} W_{i}}}} \tand 
\fidelity(\rho_{\reg{M_{i+1} W_{i}}}, \rho'_{\reg{M_{i+1} W_{i}}}) \geq 1 - \xi_i - 6 \sqrt{\eps'} \,.
\end{align*}
We have therefore constructed $\rho_{\reg{M'_i W_i}}$ and $\rho_{\reg{M_{i+1} W_{i}}}$ from $\rho_{\reg{M_{i} W_{i-1}}}$.
Starting this procedure from $\rho_{\reg{M_1 W_0}}$ defined above and applying this procedure iteratively, we can construct $\rho = \otimes_{\reg R \in \cR} \rho_{\reg R}$.

From the construction, it is clear that $\rho \in \tilde \cS$.
Furthermore, using the above error analysis for each step and multiplicativity of the fidelity for tensor products, we get that $\fidelity(\rho, \rho') \geq 1 - r'(n) \sqrt \eps$ for some polynomial $r'(n)$ that only depends on the number of rounds in the protocol; note that we have switched back to $\eps$ from $\eps'$ and absorbed the factor $O(r(n))$ into $r'(n)$. (Here, we have not spelled out the last step of the protocol that produces $\reg{ZSW_r}$, but this only adds an error $\eps'$ by the same argument we used for applying $\Phi_{C_i}$ above.)
By the Fuchs-van de Graaf inequality, 
\begin{align*}
\td(\rho, \rho') \leq \sqrt{1 - (1 - 2 (r'(n))^2 \sqrt \eps)^2} \leq 2 r'(n) \eps^{1/4} \,.
\end{align*}
This shows the first inequality claimed in the lemma.
To show the second, we observe that 
\begin{align*}
\tr{\proj{1}_\reg{Z}\rho_{\reg Z}} 
\geq \tr{\proj{1}_\reg{Z}\rho'_{\reg Z}} - \sqrt{2 r'(n) \eps^{1/4}} 
\geq c(n) - O(r(n)) \eps -  \sqrt{2 r'(n)} \eps^{1/8} \,.
\end{align*}
Here, the first inequality follows from H\"older's inequality and the second inequality holds because $\rho'$ is $\eps$-feasible.
\end{proof}

Recalling the definitions of $\stateQIP$ and $\statePSPACE$, we see that we can combine \cref{thm:statepspace_in_stateqip} and \cref{thm:main} to get equality of these two state complexity classes.
\begin{corollary}
$\stateQIP = \statePSPACE$.
\end{corollary}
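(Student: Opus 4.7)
The plan is to derive the corollary as a straightforward consequence of the two-sided containments \Cref{thm:statepspace_in_stateqip} and \Cref{thm:main}, combined with the definition of the unparameterised classes as intersections over all inverse-polynomial closeness guarantees.

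First I would prove the inclusion $\stateQIP \subseteq \statePSPACE$. Fix any $(\rho_n)_n \in \stateQIP$; by definition, $(\rho_n)_n \in \stateQIP_{1/q(n)}$ for every polynomial $q$. To show $(\rho_n)_n \in \statePSPACE$, we must exhibit, for every polynomial $p$, a $\statePSPACE_{1/p(n)}$ synthesis procedure. Since $\stateQIP_{\delta} = \stateQIP_{1, 1/2, \delta}$ and the completeness-soundness gap $1 - 1/2 = 1/2 \geq 1/\poly(n)$, \Cref{thm:main} applies and gives
\[
\stateQIP_{1/q(n)} \;\subseteq\; \statePSPACE_{1/q(n) + 1/q'(n)}
\]
for every pair of polynomials $q, q'$. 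Choosing $q(n) = q'(n) = 2p(n)$ yields $(\rho_n)_n \in \statePSPACE_{1/p(n)}$, as required. Taking the intersection over all $p$ gives $(\rho_n)_n \in \statePSPACE$.

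For the reverse inclusion $\statePSPACE \subseteq \stateQIP$, the argument is symmetric. Take $(\rho_n)_n \in \statePSPACE$, so $(\rho_n)_n \in \statePSPACE_{1/q(n)}$ for every polynomial $q$. Given a target polynomial $p$, apply \Cref{thm:statepspace_in_stateqip} with $q(n) = q'(n) = 2p(n)$ to conclude $(\rho_n)_n \in \stateQIP_{1/p(n)}$. Intersecting over all $p$ places $(\rho_n)_n$ in $\stateQIP$.

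There is essentially no obstacle here: the hard work has already been done in \Cref{thm:statepspace_in_stateqip} (the Rosenthal--Yuen direction, lifted to mixed states via the purification closure \Cref{thm:purification-informal}) and in \Cref{thm:main} (our main technical contribution via the \pspace-computable SDP solver). The only subtlety worth flagging is that the inverse-polynomial slack introduced by each inclusion is harmless precisely because the definition of $\stateQIP$ and $\statePSPACE$ quantifies over \emph{all} polynomial closeness guarantees; this is exactly the asymmetry-free choice of parameterisation highlighted in the introduction, which makes the two classes cleanly comparable.
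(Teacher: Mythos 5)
Your proof is correct and follows exactly the route the paper takes (the paper merely states that the corollary follows from \Cref{thm:statepspace_in_stateqip} and \Cref{thm:main} after recalling the definitions, while you spell out the bookkeeping of choosing $q = q' = 2p$ to absorb the slack). No issues.
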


\section{$\statePSPACE$ is closed under purification}
\label{sec:purification}

In this section we show that $\statePSPACE$ is closed under purification, answering an open question of~\cite{rosenthal2022interactive}. This is formalized as the following theorem:

\begin{theorem}
\label{thm:purification}
Let $(\rho_n)_n \in \statePSPACE_\delta$ be a family of density matrices for some function $\delta(n)$. Then there exists a pure state family $(\ket{\psi_n})_n \in \statePSPACE_{2 \sqrt \delta}$ such that for all $n$, $\ket{\psi_n}$ is a purification of $\rho_n$. Furthermore, a description of the Turing machine that outputs the circuits synthesizing $(\ket{\psi_n})_n$ can be computed in polynomial-time from the description of the Turing machine that outputs the circuits synthesizing $(\rho_n)_n$. 
\end{theorem}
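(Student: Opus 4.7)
The plan is to obtain a block encoding of a (normalised) purification of $\rho_n$ using the machinery from \cref{sec:block-encodings}, and then extract that purification as a quantum state. Let $(C_n)_n$ be the space-uniform family of general quantum circuits guaranteed by the hypothesis, and let $\sigma_n$ denote the exact density matrix that $C_n$ outputs, so that $\td(\sigma_n,\rho_n)\leq \delta(n)$. First I would note that the entries of $\sigma_n$ are \pspace-computable to inverse-exponential precision: the classical simulation of a polynomial-space general quantum circuit with intermediate measurements and resets can be carried out in polynomial space by summing amplitudes over the (exponentially many) measurement-outcome branches, and this simulation is uniform in the description of $C_n$. Invoking \cref{lem:be-entry}, this yields a \pup-computable $(D,2^{-q(n)},\poly(n))$-block encoding of $\sigma_n$ for any chosen polynomial $q(n)$, where $D=2^n$.

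Next, since $\sigma_n$ is positive semidefinite with $\|\sigma_n\|_\infty\leq 1$, \cref{lem:sqrt_be_simplified} produces a \pup-computable block encoding of $\sqrt{\sigma_n}$ with inverse-exponential error. Applying \cref{lem:be-purification} then gives a block encoding of $(\sqrt{\sigma_n}\otimes I)\ketbra{\Phi}{0^{2n}}$, where $\ket{\Phi}=\tfrac{1}{\sqrt{D}}\sum_j\ket{j}\ket{j}$ is the maximally entangled state. Observing that $\sqrt{D}\,(\sqrt{\sigma_n}\otimes I)\ket{\Phi}$ is a canonical normalised purification $\ket{\tilde\psi_n}$ of $\sigma_n$, we may reinterpret this object as a block encoding of $\ketbra{\tilde\psi_n}{0^{2n}}$ whose post-selection factor has grown by a factor of $\sqrt{D}$. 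Since this remains at most $2^{\poly(n)}$, the fixed-point amplitude amplification of \cref{lem:be-fixedpoint} applies and yields a \pup-computable $(1,\eta,a)$-block encoding $V$ of $\ketbra{\tilde\psi_n}{0^{2n}}$ for any $\eta=2^{-\poly(n)}$ of our choosing.

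By \cref{def:be} the unitary $V$ satisfies $\bigl\|V\ket{0\cdots 0}-\ket{\tilde\psi_n}\otimes\ket{0^{a}}\bigr\|_{2}\leq 2\eta$, so applying $V$ to $\ket{0\cdots 0}$ and treating the ancillas as part of the purification register produces a pure state $\eta'$-close to $\ket{\tilde\psi_n}\bra{\tilde\psi_n}$ in trace distance, where $\eta'$ is inverse-exponential. It then remains to bridge $\ket{\tilde\psi_n}$ (a purification of $\sigma_n$) to a purification of $\rho_n$. Because $\td(\sigma_n,\rho_n)\leq \delta$ implies $\fidelity(\sigma_n,\rho_n)\geq 1-\delta$, Uhlmann's theorem furnishes a purification $\ket{\psi_n}$ of $\rho_n$ on the same registers with $|\braket{\tilde\psi_n|\psi_n}|\geq 1-\delta$, and a short computation gives $\td\bigl(\ketbra{\tilde\psi_n}{\tilde\psi_n},\ketbra{\psi_n}{\psi_n}\bigr)\leq \sqrt{2\delta}$. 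Combining via the triangle inequality and picking $\eta$ small enough that $\sqrt{2\delta(n)}+\eta'\leq 2\sqrt{\delta(n)}$ for large $n$ gives $(\ket{\psi_n})_n\in\statePSPACE_{2\sqrt{\delta}}$.

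The uniformity claim is automatic: each of the block-encoding transformations above is specified by \cref{sec:block-encodings} as a fixed procedure that takes as input the Turing machine describing the preceding block encoding, and the initial block encoding of $\sigma_n$ is obtained by composing \cref{lem:be-entry} with the (polynomial-time computable) classical simulation of $C_n$. Hence the composite Turing machine outputting the circuits for $(\ket{\psi_n})_n$ can be constructed in polynomial time from the one for $(\rho_n)_n$. The main conceptual step is recognising that the classical \pspace-simulability of $\sigma_n$ is exactly the hypothesis needed to seed the block-encoding pipeline; from there, the only bookkeeping is to propagate the inverse-exponential errors through the square-root, purification, and fixed-point amplification steps so that the final combined error sits below the $2\sqrt{\delta}$ budget.
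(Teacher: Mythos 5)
Your proposal follows essentially the same route as the paper's proof: argue that the entries of the (exactly prepared) output state are \pspace-computable, build a block encoding, take its square root, purify via the maximally entangled state, restore the post-selection factor to $1$, extract the purification, and then use Uhlmann together with Fuchs--van de Graaf to account for the $\delta$-error in the original mixed state. The only structural difference is the extraction step: you use fixed-point amplitude amplification (\cref{lem:be-fixedpoint}), whereas the paper uses a repeat-until-success loop that runs the block encoding $\poly(n)$ times and post-selects on the ancilla being zero. Both work, since general quantum circuits in $\statePSPACE$ may measure and reset; the paper's version is arguably a bit simpler because it avoids invoking \Cref{thm:fixedpoint}, while yours keeps the circuit coherent.

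However, there is a genuine gap in the first step. You assert that the entries of $\sigma_n$ are \pspace-computable because ``the classical simulation of a polynomial-space general quantum circuit with intermediate measurements and resets can be carried out in polynomial space by summing amplitudes over the (exponentially many) measurement-outcome branches.'' This does not hold up as stated. A space-uniform polynomial-space circuit may have $2^{\poly(n)}$ gates, hence up to $2^{\poly(n)}$ intermediate measurements, which gives $2^{2^{\poly(n)}}$ measurement-outcome branches --- doubly exponential, so they cannot be enumerated in polynomial space. Even with polynomially many measurements, the per-branch amplitude is a product of exponentially many $2^{\poly(n)}$-dimensional matrices, and computing it in polynomial space already requires a Savitch-style recursive decomposition. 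This is exactly the content of Watrous's theorem that space-bounded quantum computation can be simulated classically with a quadratic space blow-up. The paper's \Cref{lem:tomography} takes the clean route: it first describes a polynomial-space \emph{quantum} algorithm that estimates each Pauli expectation $\Tr(\rho_n P)$ (and hence each matrix entry) to inverse-exponential precision, and then cites Watrous to de-quantize that algorithm. You need an appeal to that result (or an equivalent divide-and-conquer argument) rather than the direct branch-sum; otherwise the starting block encoding is not justified. Once that is repaired, the rest of your error bookkeeping (including the slightly loose $2\eta$ in place of $O(\sqrt{\eta})$ when passing from a $(1,\eta,a)$-block encoding of $\ketbra{\tilde\psi_n}{0^{2n}}$ to the output state) is harmless, since every term beyond $\sqrt{2\delta}$ is made inverse-exponential and fits inside the $2\sqrt{\delta}$ budget.
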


To prove this theorem, we first need to show that the matrix elements of a density operator in $\statePSPACE$ are $\PSPACE$-computable (in the sense of \cref{def:pspace_comp}).
\begin{lemma}
\label{lem:tomography}
Let $(\rho_n)_n \in \statePSPACE_0$ be a family of density matrices. Then the entries of each density matrix $\rho_n$ are $\PSPACE$-computable.
\end{lemma}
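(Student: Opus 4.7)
The plan is to simulate the circuit for $\rho_n$ in the superoperator (Liouville) picture, reducing the computation of a matrix entry $\langle i | \rho_n | j\rangle$ to extracting a single coordinate of an exponentially long matrix--vector product, which can then be done in polynomial space by a divide-and-conquer procedure reminiscent of Savitch's theorem.

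More concretely, since $(\rho_n)_n \in \statePSPACE_0$, there is a space-uniform family of general quantum circuits $(C_n)_n$ outputting $\rho_n$ exactly, with each $C_n$ using $s = \poly(n)$ qubits and consisting of at most $T = 2^{\poly(n)}$ elementary operations -- gates from $\{H, \mathit{CNOT}, T\}$, mid-circuit measurements, qubit resets, and ancilla introductions (padding with identity gates if needed, so $T$ can be taken to be a fixed power of two). Vectorising density matrices on $s$ qubits as vectors in $\C^{D^2}$ with $D = 2^s$, each such operation is a linear map on density matrices and therefore corresponds to a $D^2 \times D^2$ matrix $M_k$ that acts as the identity outside a constant-sized tensor factor, whose entries are readable in polynomial space from the space-uniform description of $C_n$. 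The state $\rho_n$ corresponds to $M_T \cdots M_1 |\rho_0\rangle$, where $|\rho_0\rangle$ represents $|0 \cdots 0\rangle\langle 0 \cdots 0|$, and so each entry $\langle i | \rho_n | j \rangle$ is a specific coordinate of this vector.

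I would then compute any single entry of $M_j M_{j-1} \cdots M_i$ via the recursion
\[
  \langle a | M_j \cdots M_i | b \rangle \;=\; \sum_c \langle a | M_j \cdots M_{k+1} | c\rangle \cdot \langle c | M_k \cdots M_i | b\rangle,
\]
with $k = \lfloor (i+j)/2 \rfloor$, the base case $j = i$ being a single-gate entry produced directly from the circuit description. The recursion has depth $O(\log T) = \poly(n)$ and stores only $\poly(n)$ bits per level (the indices $a, b, c$ and a running partial sum), for total space usage $\poly(n)$.

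The main obstacle I foresee is arithmetic precision: entries of the $M_k$ are in general irrational, the fan-in at each recursion level is $D^2 = 2^{\poly(n)}$, and the product has exponential length, so naive propagation of rounding errors would be disastrous. However, since our gate set has algebraic entries, each $M_k$-entry is approximable to additive error $2^{-r(n)}$ in polynomial space for any polynomial $r(n)$, and tracking errors through the $\poly(n)$-deep recursion multiplies the per-step error by at most a $2^{\poly(n)}$ factor per level. Hence, choosing $r(n)$ a sufficiently large polynomial (depending on the target precision $q(n)$ and on $D$) guarantees that the entry-wise approximations $\tilde\rho_n$ output by the resulting algorithm satisfy $\| \tilde\rho_n - \rho_n \|_1 \leq 2^{-q(n)}$ for any prescribed polynomial $q(n)$, as required by \pspace-computability.
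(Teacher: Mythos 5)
Your proof is correct, but it takes a genuinely different route from the paper's. The paper proceeds via \emph{quantum} tomography: it gives a quantum polynomial-space algorithm that estimates each Pauli expectation $\Tr(\rho_n P)$ by repeated preparation and measurement, reconstructs $\bra{x}\rho_n\ket{y}$ from the Pauli decomposition, and then appeals to Watrous's result that quantum polynomial-space computation with algebraic gate entries can be simulated classically in $\PSPACE$. You instead give a direct classical $\PSPACE$ algorithm: vectorise in the Liouville picture (so each gate, mid-circuit measurement, and reset becomes a sparse, exactly-specifiable $D^2\times D^2$ matrix), and compute a single entry of the $T$-fold product $M_T\cdots M_1\ket{\rho_0}$ by the standard Savitch-style divide-and-conquer, with the error propagation controlled because the $\poly(n)$-deep recursion blows up the per-level precision loss only by $2^{\poly(n)}$ overall. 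In effect you have inlined a self-contained special case of the very quantum-to-classical simulation theorem the paper cites as a black box. Both arguments are sound; the paper's is shorter and modular, while yours is more elementary and avoids the dependence on the Watrous simulation result. A couple of details worth making explicit if you wrote this up: the magnitudes of intermediate entries $\langle a|M_j\cdots M_i|b\rangle$ are bounded (each chunk is a CPTP map, whose Liouville superoperator has $2\to 2$ norm at most $\sqrt D$), which is what lets the per-level error blow-up stay inside $2^{\poly(n)}$; and mid-circuit measurements in the paper's circuit model do not feed classical control into later gates (the gate sequence is fixed), so each $M_k$ really is a fixed linear map, which your argument implicitly relies on.
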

\begin{proof}
We need to show that we can approximate the entries of $\rho_n$ to within error $2^{-p(n)}$ for all polynomials $p(n)$ using polynomial space.
Fix any such $p(n)$.
Let $\rho_n$ be a state on $r_n$ qubits and let $\beta(n) = 4^{-r_n} \cdot 2^{-p(n)}$. 
Consider the following polynomial-space quantum algorithm to estimate entries of $\rho_n$. On input $(1^n,x,y)$, where $x,y \in [2^{r_n}]$ are the indices specifying the matrix element we would like to compute, the algorithm does the following: for each Pauli string $P \in \{I,X,Y,Z\}^{r_n}$, compute an estimate $\alpha_P$ such that $|\alpha_P - \Tr(\rho_n P)| \leq \beta(n)$ with probability at least $1 - \beta(n)$. Output the number
\[
    c_{xy} = \frac{1}{2^{r_n}} \sum_{P \in \{I,X,Y,Z\}^{r_n}} \alpha_P \bra{x}P\ket{y}~.
\]
This algorithm uses polynomial space because it just has to keep track of the Pauli string $P$ being estimated, a running sum for $c_{xy}$, and the space usage of estimating $\alpha_P$ is $O\left(r_n + \log \frac{1}{\beta(n)}\right)$ qubits. 
This is because a single copy of $\rho_n$ can be generated each time and measured with respect to $P$ for a total of $\poly(1/\beta(n))$ times, and a counter is maintained to keep track of the estimate $\alpha_P$. 

Note if $\alpha_P = \Tr(\rho_n P)$ exactly, then $c_{xy} = \bra{x} \rho_n \ket{y}$. Thus with probability at least $1 - 4^{r_n} \beta(n)$, the estimate $c_{xy}$ satisfies
\[
    \Big | c_{xy} - \bra{x} \rho_n \ket{y} \Big| \leq 2^{r_n} \beta(n) \leq 2^{-p(n)} \,.
\]

Now we argue that there is in fact a \emph{classical} polynomial-space algorithm to compute $c_{xy}$. This follows from~\cite{watrous03complexity}, which shows that any quantum polynomial-space-computable language, where the quantum circuits\footnote{In~\cite{watrous03complexity} the model of quantum computation considered is actually quantum Turing machines, but this is equivalent to considering uniform general quantum circuits.} consist of gates with algebraic entries, is also $\PSPACE$-computable with a quadratic blow-up in space\footnote{The task of computing the numbers $c_{xy}$ can be efficiently reduced to decision problems by performing binary search on the bits of $c_{xy}$.}.
Note that the classical polynomial-space algorithm will output a deterministic answer if the quantum algorithm outputs that answer with high probability, so we do not need to take the failure probability of the quantum algorithm into account here.
\end{proof}

With this, we now proceed to prove \cref{thm:purification}.

\begin{proof}[Proof of \Cref{thm:purification}]
Let $\tilde \rho_n \in \statePSPACE_0$ be a family of density matrices such that $\td(\rho_n, \tilde \rho_n) \leq \delta(n)$.
By Uhlmann's theorem, for all purifications $\ket{\tilde \psi_n}$ of $\tilde \rho_n$ there exists a purification $\ket{\psi_n}$ such that 
\begin{align*}
\td(\proj{\psi_n}, \proj{\tilde \psi_n}) \leq \sqrt{1 - \fidelity(\rho_, \tilde \rho_n)^2} \leq \sqrt{2 \delta} \,,
\end{align*}
where we used the Fuchs-van de Graaf inequality.

Therefore, it suffices to find a purification of $\tilde \rho_n$ in $\statePSPACE_{\delta'}$ for $\delta' = (2 - \sqrt 2) \sqrt \delta$.
Let $D_n \leq 2^{\poly(n)}$ denote the dimension of $\tilde \rho_n$ and $p(n)$ a polynomial to be chosen later. By \Cref{lem:tomography} and \Cref{lem:be-entry}, there exists a \pup-computable $(D_n,2^{-p(n)}, \poly(n))$-block encoding $U_n$ of $(\tilde \rho_n)_n$.
By \Cref{lem:sqrt_be_simplified}, there exists a \pup-computable $(\alpha',c2^{-p(n)/2}, \poly(n))$-block encoding $V_n$ of $\sqrt{\tilde \rho_n}$ for \pspace-computable $\alpha', c \leq 2^{\poly(n)}$. 
(Note that here the fact that the ancilla size increases by an \emph{additive} polynomial is not relevant since we are not doing an inductive argument, so we simply observe that the number of ancillas remains polynomial.)
By \Cref{lem:be-purification}, there exists a \pup-computable $(\alpha',c2^{-p(n)/2}, \poly(n))$-block encodings $W_n$ of $(\sqrt{\tilde \rho_n} \otimes I) \ketbra{\Phi_n}{0}$, where $\ket{\Phi_n}$ is the maximally entangled state on $\C^{D_n} \otimes \C^{D_n}$. 

Note that the state
\[
    \ket{\tilde \psi_n} \deq \sqrt{D_n} (\sqrt{\tilde \rho_n} \otimes I) \ket{\Phi_n}
\]
is a purification of $\tilde \rho_n$ and a normalized quantum state. Thus $W_n$ can also be viewed as $(\sqrt{D_n} \alpha',c\sqrt{D_n}2^{-p(n)/2}, \poly(n))$-block encoding of $\ketbra{\tilde \psi_n}{0}$. 

Now consider the following algorithm: on input $1^n$, run $W_n$ on input $\ket{0,0}$ and test whether the ancilla register is in the zero state. If so, output the prepared state. Otherwise, discard all qubits, reinitialize the workspace to all zeroes, and try again for $r(n)$ number of times for some polynomial $r(n)$ to be fixed later. 
If after $r(n)$ repetitions the procedure has not terminated, output the all zeroes state. This algorithm clearly uses a polynomial number of qubits of space.

Consider first the case where the algorithm is successful (i.e.~does not output the all zeros state).
Then the state prepared by the algorithm is 
\begin{align*}
\proj{\phi_n}_S \deq (\sqrt{D_n} \alpha')^2 \bra{0}_A W_n (\proj{0}_S \ot \proj{0}_A) W_n^\dagger \ket{0}_A \,,
\end{align*}
where we have introduced the labels $S$ and $A$ for the ``encoded'' and auxiliary register of the block encoding, respectively, and $\ket{0}$ denotes the all zeroes state on the respective system.
Abbreviating $B_n \deq \sqrt{D_n} \alpha' \bra{0}_A W_n \ket{0}_A$, we can now bound 
\begin{align*}
\norm{\proj{\phi_n} - \proj{\tilde \psi_n}}_1
&= \norm{B_n \proj{0}_S B_n - \proj{\tilde \psi_n}}_1 \\
&\leq \norm{B_n \proj{0}_S B_n - \ket{\tilde \psi_n}\!\bra{0}_S \proj{0}_S B_n}_1
+ \norm{\ket{\tilde \psi_n}\!\bra{0}_S \proj{0}_S B_n - \ket{\tilde \psi_n}\!\bra{0}_S\proj{0}_S \ket{0}\!\bra{\tilde \psi_n}_S}_1 \\
&\leq \norm{B_n - \ket{\tilde \psi_n}\!\bra{0}_S}_\infty \norm{\proj{0}_S}_1 \norm{B_n}_\infty + \norm{\ket{\tilde \psi_n}\!\bra{0}}_\infty \norm{\proj{0}}_1 \norm{B_n - \ket{0}\!\bra{\tilde \psi_n}}_1 \\
&\leq 3 c \sqrt{D_n}2^{-p(n)/2} \,,
\end{align*}
where we repeatedly used H\"older's inequality and the last line follows because $W_n$ is a $(\sqrt{D_n} \alpha',c\sqrt{D_n}2^{-p(n)/2}, \poly(n))$-block encoding of $\ketbra{\tilde \psi_n}{0}$ and $\norm{B_n}_\infty \leq \norm{\ketbra{\tilde \psi_n}{0}}_\infty + \eps \leq 2$.

By definition, the probability of success in each attempt is $\frac{1}{\alpha' D_n}$, so we can choose a sufficiently large polynomial $r(n)$ such that the probability of failure after $r(n)$ repetitions is at most $c \sqrt{D_n}2^{-p(n)/2}$.
Thus the state produced by the algorithm described above is $(4 c \sqrt{D_n}2^{-p(n)/2})$-close to $\proj{\tilde \psi_n}$.
Choosing $p(n)$ large enough, we can ensure $4 c \sqrt{D_n}2^{-p(n)/2} \leq \delta'(n)$. 

The ``furthermore'' part of the theorem follows from the observation that the circuits computing each block encoding in the proof are simple, efficiently-computable functions of circuits of previous block encodings (or of the circuits synthesizing $(\rho_n)_n$). 
\end{proof}

\section{Uhlmann Transformation Problem and the complexity of optimal provers}
\label{sec:strategy}

The main result of this paper implies that the intermediate states of the verifier (i.e.~the reduced state on the verifier's private workspace register and the shared message register) of a quantum interactive protocol, interacting with an optimal prover, can be prepared in quantum polynomial space. However, this does not immediately tell us what the complexity of \emph{implementing} an optimal prover might be. As described in \Cref{sec:gen_qip_protocols}, we can model a prover as applying a unitary operator in each round of the interaction. What is the complexity of performing each unitary? 

Here, it is important that we consider a \emph{uniformly} generated sequence of verifiers $(V_n)_{n \in \N}$: for a \emph{fixed} verifier $V_n$, an optimal prover can always be implemented by a circuit $P_n$ acting on $\poly(n)$ qubits and using at most $2^{\poly(n)}$ gates  -- this is true for every unitary on $\poly(n)$ qubits~\cite[Chapter 4.5]{nielsen2000quantum}. However, this generic fact does not guarantee that all of these circuits can be \emph{uniformly} generated in either a time-efficient or a space-efficient manner. \emph{A priori}, it could be that the sequence of circuits $(P_n)_n$ can only be specified by (say) an \emph{exponential space} Turing machine; this would be a strange asymmetry between the complexity of computing the verifier's final state versus the complexity of the prover itself.

In this section, we show that the complexity of optimal provers in an arbitrary quantum interactive protocol (not necessarily a $\stateQIP$ protocol) with efficient uniform verifiers
is the same as computing the intermediate states of the verifiers (which are computable in quantum polynomial space).

\begin{theorem}[Complexity of optimal provers]
\label{thm:prover-complexity}
Let $(V_n)_{n \in \N}$ denote a family of $r(n)$-round quantum verifiers for some polynomial $r(n)$. Let $\omega^*_n$ denote the optimal acceptance probability over all provers $P_n$ that interact with the verifier $V_n$. Let $q(n)$ be a polynomial and let $(U_{n,j})_{1 \leq j \leq r(n)}$ be the unitary operators of a prover $P$ (who applies $U_{n,j}$ in round $j$) satisfying
\[
    \pr {V_n \interact P \text{ accepts}} \geq \omega_n^* - \frac{1}{q(n)}~.
\]
Then the family of unitaries $(U_n)_{n \in \N}$ is contained in $\unitaryPSPACE_{1/q(n)}$, where
\[
    U_n = \sum_{j=1}^{r(n)} \ketbra{j}{j} \otimes U_{n,j}~.
\]
\end{theorem}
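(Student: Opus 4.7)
The plan is to construct an explicit near-optimal prover whose per-round unitaries arise as Uhlmann transformations between consecutive purified intermediate states of the interaction, and then invoke the Algorithmic Uhlmann Theorem to implement these transformations in $\unitaryPSPACE$.

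First, I fix a target acceptance probability $c(n) \deq \omega^*_n - 1/(2q(n))$. A $\PSPACE$-approximation of $\omega^*_n$ is available (e.g., from the $\QIP = \PSPACE$ theorem, or as a by-product of our SDP solver from \Cref{thm:solving}), so $c(n)$ can be taken to be $\PSPACE$-computable. Instantiating the feasibility SDP of \Cref{lem:qip-as-sdp} with this value of $c(n)$ yields a small-width $\PSPACE$-computable SDP whose feasible solutions correspond to honest-prover strategies achieving acceptance probability $c(n)$. Running the SDP solver of \Cref{thm:solving} produces, in quantum polynomial space, an approximately feasible solution and in particular the intermediate states $\rho_{\reg{M_i W_{i-1}}}$ (just after the verifier sends its $i$th message) and $\rho_{\reg{M'_i W_i}}$ (just after the verifier applies its $i$th channel). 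By the reasoning in the proof of \Cref{thm:main}, each of these mixed states lies in $\statePSPACE$; by \Cref{thm:purification}, each has a purification in $\statePSPACE$ on registers $\reg{W M Q}$, where $\reg Q$ is an auxiliary register to be identified with the prover's private memory. Denote these purifications $\ket{\varphi_{n,j}}$ (state just after the verifier sends its $j$th message, i.e., right before the prover's $j$th move) and $\ket{\psi_{n,j}}$ (state just after the prover's $j$th move).

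Next, since the prover does not touch $\reg W$ during its $j$th round, the reduced states of $\ket{\varphi_{n,j}}$ and $\ket{\psi_{n,j}}$ on $\reg W$ must agree up to the SDP solver's approximation error. Hence their fidelity on $\reg W$ is $1 - 1/\poly(n)$, and the Algorithmic Uhlmann Theorem (\Cref{thm:uhlmann-informal}) supplies a unitary $K_{n,j}$ acting only on $\reg{MQ}$ satisfying
\[
\bigl\| (I_{\reg W} \otimes K_{n,j}) \ket{\varphi_{n,j}} - \ket{\psi_{n,j}} \bigr\|^2 \leq 1/\poly(n),
\]
with $K_{n,j} \in \unitaryPSPACE_{1/\poly(n)}$. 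Defining the prover $P$ by $U_{n,j} \deq K_{n,j}$, the induced acceptance probability differs from $c(n)$ by a total of at most $r(n)\cdot 1/\poly(n)$ (by a simple hybrid argument over the $r(n)$ rounds), so choosing the inner polynomial large enough yields $\Pr[V_n \interact P \text{ accepts}] \geq c(n) - 1/(2q(n)) \geq \omega^*_n - 1/q(n)$. Finally, assembling the per-round unitaries into $U_n = \sum_j \ketbra{j}{j} \otimes U_{n,j}$ is routine: using a $\lceil \log r(n)\rceil$-qubit control register we iterate over $j$ and apply the controlled version of the polynomial-space circuit for $U_{n,j}$, which still uses only polynomial space by space-uniformity; since each $U_{n,j}$ can be realized via the \pup-based block-encoding constructions of \Cref{sec:block-encodings}, its controlled version is in \pup and the overall $U_n$ is in $\unitaryPSPACE_{1/q(n)}$ after calibrating per-round errors.

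The main obstacle will be tracking the interplay between two kinds of approximation error: the per-round Uhlmann-transformation error from \Cref{thm:uhlmann-informal} (which degrades both the output-state fidelity and the acceptance probability when the simulated prover is plugged back into the verifier), and the approximate-feasibility error from the SDP solver (which makes the hypothesis ``$\reg W$-reductions of $\ket{\varphi_{n,j}}$ and $\ket{\psi_{n,j}}$ agree'' hold only approximately, forcing a corresponding approximation in the Uhlmann theorem input). Both errors compound over the $r(n)$ rounds, but because each can independently be driven to an arbitrarily small inverse polynomial, for any target $1/q(n)$ we can pick the inner polynomials in \Cref{thm:solving} and \Cref{thm:uhlmann-informal} so that the total deviation—in circuit-implementation error and in acceptance probability—is at most $1/q(n)$.
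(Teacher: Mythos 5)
Your proposal is correct and follows essentially the same route as the paper: obtain $\statePSPACE$-purifications of the verifier's intermediate states via the SDP solver and purification (this is \Cref{lem:intermediate-states-pspace}), then realize the prover's per-round moves as Algorithmic Uhlmann transformations between consecutive purifications (\Cref{thm:uhlmann}), and finish with a per-round hybrid/triangle argument. One small observation: because the rounded SDP solution corresponds to the intermediate states of an \emph{actual} prover interaction, the $\reg{W}$-reductions of consecutive purifications agree \emph{exactly} (fidelity $1$), which the paper uses to make the $2(1-\fidelity)$ term in \Cref{thm:uhlmann} vanish; your weaker ``agree up to $1/\poly(n)$'' suffices but is not necessary.
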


We prove \Cref{thm:prover-complexity} as a special case of a more general computational task that we call the \emph{Uhlmann Transformation Problem}, which we describe next.

\subsection{Uhlmann's theorem and the Uhlmann Transformation Problem}

The well-known Uhlmann's theorem states that the fidelity between two mixed states is equal to the largest overlap between two purifications of those mixed states~\cite{uhlmann1976transition}.
Using that any two purifications are related by a unitary (or, more generally, a partial isometry) acting only on the purifying system, we can state Uhlmann's theorem in the following way.

\begin{theorem}[Uhlmann's theorem] \label{thm:uhlmann_std}
Let $\ket{\psi}_{\reg{AB}}$ and $\ket{\varphi}_{\reg{AB}}$ be pure states on registers $\reg{AB}$ and denote their reduced states on register $\reg{A}$ by $\rho_\reg{A}$ and $\sigma_\reg{A}$, respectively.
Then, there exists a unitary $U_{\reg{B}}$ acting only on register $\reg{B}$ such that 
\begin{align*}
F(\rho_{\reg{A}}, \sigma_{\reg{A}}) = \bra{\varphi}_{\reg{AB}} (\Id_\reg{A} \ot U_\reg{B}) \ket{\psi}_{\reg{AB}} \,.
\end{align*}
\end{theorem}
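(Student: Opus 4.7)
The plan is to prove Uhlmann's theorem by reducing to a canonical form of purification and then applying the variational characterization of the trace norm. The key idea is that all purifications of a given mixed state on a fixed register are related by a unitary (or, if the purifying register is strictly larger than the rank, an isometry) acting only on the purifying system, so we can freely transport the problem to one involving a single canonical pair of purifications.

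More concretely, first I would introduce the (unnormalized) maximally entangled vector $\ket{\Omega}_{\reg{AB}} = \sum_i \ket{i}_\reg{A} \ket{i}_\reg{B}$, where the sum ranges over a basis of $\reg{A}$ and we have padded $\reg{B}$ with enough dimensions if necessary. For any operator $M$ on $\reg{A}$, the ``ricochet'' identity $(M \otimes \Id_\reg{B}) \ket{\Omega} = (\Id_\reg{A} \otimes M^T) \ket{\Omega}$ holds, and for any two operators $M, N$ on $\reg{A}, \reg{B}$ respectively we have $\bra{\Omega} (M \otimes N) \ket{\Omega} = \Tr(M^T N)$. Using this, the canonical purifications $\ket{\rho}^{\mathrm{can}} \deq (\sqrt{\rho_\reg{A}} \otimes \Id_\reg{B}) \ket{\Omega}$ and $\ket{\sigma}^{\mathrm{can}} \deq (\sqrt{\sigma_\reg{A}} \otimes \Id_\reg{B}) \ket{\Omega}$ have the correct reduced states on $\reg{A}$.

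Next I would invoke the standard fact that any other purification of $\rho_\reg{A}$ on $\reg{AB}$ can be written as $\ket{\psi}_{\reg{AB}} = (\Id_\reg{A} \otimes V_\psi) \ket{\rho}^{\mathrm{can}}$ for some unitary (or isometry) $V_\psi$ on $\reg{B}$, and likewise $\ket{\varphi}_{\reg{AB}} = (\Id_\reg{A} \otimes V_\varphi) \ket{\sigma}^{\mathrm{can}}$. The substitution $U = V_\varphi W V_\psi^\dagger$ shows that the maximum of $\bra{\varphi}(\Id_\reg{A} \otimes U)\ket{\psi}$ over unitaries $U$ on $\reg{B}$ equals the maximum of $\bra{\sigma}^{\mathrm{can}}(\Id \otimes W)\ket{\rho}^{\mathrm{can}}$ over unitaries $W$ on $\reg{B}$. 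Using the ricochet identity, this inner product simplifies to $\Tr\!\bigl((\sqrt{\sigma}\sqrt{\rho})^T W\bigr)$.

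Finally, I would apply the variational characterization of the trace norm, namely $\|M\|_1 = \max_{W\,\mathrm{unitary}} |\Tr(M W)|$, with the maximum attained by choosing $W$ to be the inverse of the unitary in the polar decomposition $M = V|M|$. Taking $M = (\sqrt{\sigma}\sqrt{\rho})^T$ gives $\max_W |\Tr(M W)| = \|\sqrt{\sigma}\sqrt{\rho}\|_1 = \fidelity(\rho_\reg{A},\sigma_\reg{A})$. Unwinding the reductions produces a unitary $U_\reg{B}$ achieving the desired overlap, completing the proof. The only mildly delicate point is handling the case where $\rho_\reg{A}$ or $\sigma_\reg{A}$ is rank-deficient, since then the ``unitaries'' $V_\psi, V_\varphi$ are really only isometries on the support of $\sqrt{\rho}$, $\sqrt{\sigma}$; this is easily patched by extending them to unitaries on all of $\reg{B}$ (which is possible since $\dim \reg{B}$ is the same for both purifications).
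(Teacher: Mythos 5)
The paper does not actually prove Theorem~\ref{thm:uhlmann_std} — it is cited to Uhlmann's original paper as a known result. The closest the paper comes is Lemma~\ref{lem:uhlmann} (the explicit Uhlmann unitary), whose proof your argument closely parallels: both write purifications in canonical form $(\sqrt{\rho}\otimes X)\ket{\Omega}$, apply the ricochet/transpose identity to reduce the overlap to $\Tr\big((\sqrt{\sigma}\sqrt{\rho})^T W\big)$, and then identify the optimal $W$ via the polar decomposition. Your proposal is correct. The only cosmetic difference is that you invoke the variational characterization $\norm{M}_1 = \max_W |\Tr(MW)|$ to conclude existence, whereas Lemma~\ref{lem:uhlmann} packages the same optimizer explicitly as $W = \sgn\big(\Tr_{\reg{A}}(\ketbra{\psi}{\varphi})\big)$, which the paper needs in constructive form downstream for its block-encoding implementation. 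Your remark about extending $V_\psi, V_\varphi$ from isometries on the support to full unitaries on $\reg{B}$ is the right way to handle the rank-deficient case, so no gap there.
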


Uhlmann's theorem motivates the following algorithmic question: how difficult is it to actually implement the \emph{Uhlmann unitary} $U_\reg{B}$?

Naturally, this will depend on the states $\ket{\varphi}$ and $\ket{\psi}$.
We can formalise this into an abstract computational task, which we call the \emph{Uhlmann Transformation Problem} and which is parameterised by two pure state families $(\ket{\psi_n})_{n \in \N}$ and $(\ket{\varphi_n})_{n \in \N}$.

\begin{definition}[Uhlmann Transformation Problem] \label{def:uhlmann_tp}
Let $(\ket{\psi_n})_{n \in \N}$ and $(\ket{\varphi_n})_{n \in \N}$ be families of pure states such that for each $n$, the states $\ket{\psi_n},\ket{\varphi_n}$ have the same number of qubits and the qubits can be divided into two registers $\reg{A}_n, \reg{B}_n$.
The \emph{Uhlmann Transformation Problem for $(\ket{\psi_n})_{n \in \N}$ and $(\ket{\varphi_n})_{n \in \N}$} is the following: given a classical description of these state families (e.g.~in terms of circuit families generating these states), implement the Uhlmann unitaries $(U_n)_{n \in \N}$ such that 
\begin{align*}
\fidelity(\rho_n, \sigma_n) = \bra{\varphi_n} (\Id_{\reg{A}_n} \ot U_n) \ket{\psi_n} \,,
\end{align*}
where $\rho_n$ and $\sigma_n$ are the reduced states on register $\reg{A}_n$ of $\ket{\varphi_n}$ and $\ket{\psi_n}$, respectively.
\end{definition}

In \cref{def:uhlmann_tp}, we have defined the Uhlmann Transformation Problem for any two sequences of pure states $(\ket{\psi_n})_{n \in \N}$ and $(\ket{\varphi_n})_{n \in \N}$.
For the rest of this section, we focus on the case where these two state families are in $\statePSPACE$.
For this case, we will show that we can solve the Uhlmann Transformation Problem (i.e.~implement the unitaries $(U_n)_{n \in \N}$) in $\unitaryPSPACE$.
Specifically, we show the following Algorithmic Uhlmann's Theorem, which solves the Uhlmann Transformation Problem for the case of state sequences in $\statePSPACE$.

\begin{theorem}[Algorithmic Uhlmann's Theorem]
\label{thm:uhlmann}
Let $\delta(n)$ be a function and let $(\ket{\psi_n})_n, (\ket{\varphi_n})_n \in \statePSPACE_\delta$ be pure state families. Suppose that for each $n$ the states $\ket{\psi_n},\ket{\varphi_n}$ have the same number of qubits and the qubits can be divided into two registers $\reg{A}_n, \reg{B}_n$. Then for all polynomials $q(n)$ there exists a sequence of unitaries $( K_n )_n$ in $\unitaryPSPACE$ such that $K_n$ acts on registers $\reg{B}_n$ and an ancilla register $\reg{R}_n$ and satisfies
\[
    \Big \| I_{\reg{A}_n} \otimes   K_n \ket{\varphi_n}_{\reg{A}_n \reg{B}_n} \ket{0}_{\reg{R}_n} - \ket{\psi_n}_{\reg{A}_n \reg{B}_n} \ket{0}_{\reg{R}_n} \Big \|^2 \leq 2(1 - \fidelity(\rho_n,\sigma_n)) + O(\delta(n)) + 2^{-q(n)}\,.
\]
where $\rho_n$ and $\sigma_n$ are the reduced density matrices on register $\reg{A}_n$ of $\ket{\psi_n}$ and $\ket{\varphi_n}$, respectively. 

Furthermore, the description of the Turing machine that outputs the circuits implementing the unitaries $(K_n)_n$ can be computed in polynomial time from the descriptions of the Turing machines that output the circuits for preparing the states $(\ket{\psi_n})_n$ and $(\ket{\varphi_n})_n$.
\end{theorem}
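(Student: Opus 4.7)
The plan is to algorithmically realize the standard proof of Uhlmann's theorem by building a block encoding of the Uhlmann unitary from the preparation circuits and then converting this block encoding into a genuine \pup-computable unitary via oblivious amplitude amplification.

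First I would use the hypothesis $(\ket{\psi_n})_n,(\ket{\varphi_n})_n \in \statePSPACE_\delta$ together with the purification machinery from \Cref{sec:purification} to produce \pup-computable approximate preparation unitaries $U_\psi,U_\varphi$ satisfying $\|U_\psi\ket{0\cdots 0}-\ket{\psi_n}\ket{0\cdots 0}\|_2 = O(\sqrt{\delta(n)})$ and analogously for $\varphi$. This accounts for the $O(\delta(n))$ contribution in the final error bound. Feeding these preparation unitaries into \Cref{lem:be-partial-trace-0} yields \pup-computable block encodings of the reduced density matrices $\rho_n$ and $\sigma_n$ on register $\reg{A}_n$; applying \Cref{lem:sqrt_be_simplified} to each and multiplying via \Cref{lem:be-product} gives a \pup-computable block encoding of the operator $M_n \deq \sqrt{\rho_n}\sqrt{\sigma_n}$.

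The heart of the argument is extracting the partial isometry $W_n$ from the polar decomposition $M_n = W_n |M_n|$, because by the standard proof of Uhlmann's theorem this partial isometry (suitably transposed/conjugated onto register $\reg{B}_n$, and pre- and post-composed with the preparation unitaries $U_\psi,U_\varphi$ to pass from canonical purifications to the given ones) is exactly the Uhlmann transformation we need. To extract $W_n$, I would form the Jordan--Wielandt Hermitian dilation $\tilde{M}_n = \left(\begin{smallmatrix} 0 & M_n \\ M_n^\dagger & 0 \end{smallmatrix}\right)$ via \Cref{lem:be-lin-comb} and then apply the exponentially precise \pspace-computable sign polynomial of \Cref{lem:be-sign-poly}, obtaining a block encoding of $\left(\begin{smallmatrix} 0 & W_n \\ W_n^\dagger & 0 \end{smallmatrix}\right)$. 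Reading off the appropriate block, composing with \pup-computable basis-change unitaries derived from $U_\psi,U_\varphi$, and finally applying the (approximate) oblivious amplitude amplification procedure of~\cite{berry2014exponential} to boost the post-selection factor to $1$ produces a \pup-computable unitary $K_n$ acting on $\reg{B}_n$ and a polynomial-sized ancilla register $\reg{R}_n$. The intrinsic $2(1-\fidelity(\rho_n,\sigma_n))$ in the stated error bound is exactly the information-theoretic best overlap given by Uhlmann's theorem, and the $2^{-q(n)}$ slack absorbs the additive block-encoding and amplification errors, which by the machinery of \Cref{sec:block-encodings} can be driven to any desired inverse exponential while staying in polynomial space.

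The main technical obstacle is that $M_n$ may have singular values arbitrarily close to zero, so both the sign function and the oblivious amplitude amplification step behave poorly near the kernel: \Cref{lem:sign-approx} only guarantees a good approximation outside a window $[-\kappa,\kappa]$, and the post-selection factor one inherits from the polar-decomposition construction scales like $1/\kappa$. The argument must therefore choose $\kappa$ small enough that the contribution of near-kernel singular directions to $\bra{\psi_n}(I\otimes K_n)\ket{\varphi_n}$ is negligible compared to $\fidelity(\rho_n,\sigma_n)$, and must verify that the output state $\ket{\varphi_n}$ does not place too much weight on those bad directions. A secondary subtlety is that oblivious amplitude amplification is classically analysed in the exact setting, whereas here every intermediate object is only approximately what it should be; as flagged in the introduction, a quantitatively robust version of oblivious amplitude amplification needs to be proved and then carefully composed with the propagating errors from the block-encoding pipeline.
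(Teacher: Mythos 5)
Your high-level framework is right: block-encode an operator, apply the \pspace-computable sign polynomial to extract a partial isometry, and then boost the post-selection factor to one via a robust oblivious amplitude amplification, all while tracking the error from near-kernel singular values. This is indeed the shape of the paper's argument. But the specific operator you propose to sign is the wrong one, and the step needed to fix this is exactly the non-trivial content that the paper's proof supplies.

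You build a block encoding of $M_n = \sqrt{\rho_n}\sqrt{\sigma_n}$, which lives on register $\reg{A}_n$, and propose to pass from its polar-decomposition isometry $\sgn(M_n)$ to the Uhlmann unitary on $\reg{B}_n$ by ``transposing onto $\reg{B}_n$ and pre- and post-composing with the preparation unitaries $U_\psi, U_\varphi$.'' This step does not go through. Writing $\ket{\psi} = (\sqrt{\rho}\otimes X)\ket{\Phi}$ and $\ket{\varphi} = (\sqrt{\sigma}\otimes Y)\ket{\Phi}$, the Uhlmann unitary acting only on $\reg{B}_n$ is $X\,\sgn(\sqrt{\sigma}\sqrt{\rho})^{\top}\,Y^\dagger$. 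The transposition is harmless, but $X$ and $Y$ are \emph{not} the preparation unitaries: $U_\psi$ and $U_\varphi$ act jointly on $\reg{A}_n\reg{B}_n$, whereas $X$ and $Y$ act only on $\reg{B}_n$ and map the canonical purification $(\sqrt{\rho}\otimes I)\ket{\Phi}$ to $\ket{\psi}$ (resp.~$(\sqrt{\sigma}\otimes I)\ket{\Phi}$ to $\ket{\varphi}$). Constructing $X$ given $U_\psi$ is itself an instance of the Uhlmann Transformation Problem (between the canonical and the given purification of $\rho$), so invoking it is circular; and you cannot substitute $U_\psi, U_\varphi$ in their place because then the final unitary would act on $\reg{A}_n$, contradicting the requirement that $K_n$ act only on $\reg{B}_n\reg{R}_n$.

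The paper resolves this by never computing $X$, $Y$, or $\sqrt{\rho}\sqrt{\sigma}$ at all. The key observation (\Cref{lem:uhlmann}) is that $\Tr_{\reg{A}}(\ketbra{\psi}{\varphi}) = X(\sqrt{\sigma}\sqrt{\rho})^{\top}Y^\dagger$, so $\sgn\!\big(\Tr_{\reg{A}}(\ketbra{\psi}{\varphi})\big)$ \emph{is} the Uhlmann unitary directly on $\reg{B}_n$, with $X$ and $Y$ automatically absorbed. The algorithmic construction therefore block-encodes $\ketbra{\psi_n}{0}$ and $\ketbra{0}{\varphi_n}$ via tomography of amplitudes (\Cref{lem:tomography}) and \Cref{lem:be-entry}, multiplies them (\Cref{lem:be-product}), takes the partial trace over $\reg{A}_n$ (\Cref{lem:be-partial-trace}), forms the Hermitian dilation, applies the sign polynomial (\Cref{lem:be-sign-poly} together with \Cref{lem:singular-value-func}), and finally runs robust oblivious amplitude amplification (\Cref{lem:oaa_for_approx_iso}). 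You would need to replace your $M_n$ with $R_n \deq \Tr_{\reg{A}_n}(\ketbra{\psi_n}{\varphi_n})$ and prove the identity above for the argument to close; otherwise the ``pass to the given purifications'' step is a gap of the same difficulty as the theorem being proved.
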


We will prove \cref{thm:uhlmann} in \cref{sec:algo_uhlmann}.
Before doing so, we show how this general result can be used to analyse the complexity of optimal provers in quantum interactive protocols.

\subsection{From the Uhlmann Transformation Problem to optimal quantum provers}

The task of implementing optimal provers in quantum interactive protocols is a special case of the Uhlmmann Transformation Problem.
As a result, we can use \cref{thm:uhlmann} to prove \cref{thm:prover-complexity} as follows.
Let $V_n$ denote an $r(n)$-round quantum verifier in a quantum interactive protocol.
One example is of course a verifier in a $\stateQIP$ protocol, but \cref{thm:prover-complexity} is not restricted to this case.
We will need that the intermediate states on the verifier and message registers in the protocol have purifications in $\statePSPACE$.
This follows straightforwardly by combining the proof of \cref{thm:main} with \cref{lem:be-purification} and we formalise it as \cref{lem:intermediate-states-pspace} below.
In contrast to \cref{sec:sdp-qip}, here we use the following simpler (but less general) notation for the intermediate states of the protocol:
\begin{enumerate}
\item $\rho^{(j)}_{\reg{M^{(n)}} \reg{W^{(n)}}}$ denotes the state of the message register $\reg{M^{(n)}}$ and the private workspace $\reg{W^{(n)}}$ of verifier $V_n$ at the beginning of the verifier's $j$'th turn.
\item $\sigma^{(j)}_{\reg {M^{(n)}W^{(n)}}}$ denotes the state of the message register and the verifier's private workspace at the end of the verifier's $j$'th turn (i.e.~after the verifier has applied its channel $\Phi_{C_j}$). 
\end{enumerate}
Note that here we assume without loss of generality that the workspace and message registers ($\reg{W^{(n)}}$ and $\reg{M^{(n)}}$, respectively) of verifier $V_n$ are identical in all rounds of the protocol; this can always be achieved by padding with ancilla qubits.
\begin{lemma}
\label{lem:intermediate-states-pspace}
Let $V_n$ denote an $r(n)$-round quantum verifier (that receives no input state) with optimal acceptance probability $\omega_n^*$.
For all polynomials $q(n)$ and for all $n \in \N$, there exists a prover $P_n$ that is accepted with probability at least $\omega_n^* - \frac{1}{q(n)}$ for which the following additional property holds: 
there are families of pure states 
\begin{align*}
( \ket{\psi_{n,j}}_{\reg{M^{(n)} W^{(n)} Q^{(n)}}} )_{n,j}, \,\, (\ket{\varphi_{n,j}}_{\reg{M^{(n)} W^{(n)} Q^{(n)}}})_{n,j} \in \statePSPACE_{1/q(n)}
\end{align*}
for some purifying registers $\reg{Q^{(n)}}$ that are purifications of the intermediate states $\rho^{(j)}_{\reg{M^{(n)}} \reg{W^{(n)}}}$ and $\sigma^{(j)}_{\reg{M^{(n)}} \reg{W^{(n)}}}$ of the verifier $V_n$ interacting with the prover $P_n$.
\end{lemma}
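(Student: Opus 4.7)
The plan is to mimic the proof of \cref{thm:main}, with two changes: we work with a general quantum interactive protocol (rather than a $\stateQIP$ protocol), and we extract purifications of the \emph{individual} intermediate states from the purification of the full SDP solution that \cref{thm:solving} produces. Fix a polynomial $q(n)$. Since $\QIP = \PSPACE$, the value $\omega_n^*$ is $\PSPACE$-computable, so we can fix a $\PSPACE$-computable approximation $\tilde\omega_n$ with $\omega_n^* - \tfrac{1}{4q(n)} \leq \tilde\omega_n \leq \omega_n^*$. Applying \cref{lem:qip-as-sdp} to $V_n$ with acceptance-probability target $c(n) = \tilde\omega_n$ yields a small-width, $\PSPACE$-computable feasibility SDP $(\Phi,B)$ whose feasible set $\cS$ is nonempty.

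Next, invoke \cref{thm:solving} with a sufficiently small $\eps = 1/\poly(n)$ (and inverse-exponential block-encoding error) to obtain a \pup-computable unitary $V_n^{\mathrm{sdp}}$ such that $\ket{\Psi_n} \deq V_n^{\mathrm{sdp}}\ket{0\cdots 0}$ is, up to inverse-exponential trace-distance error, a purification of an $\eps$-feasible solution $\rho'$ to $(\Phi,B)$. By \cref{lem:approx_feasible_rounding}, there exists $\rho \in \tilde\cS$ (satisfying all constraints except possibly the acceptance-probability one) with $\td(\rho,\rho') = O(r'(n)\eps^{1/4})$ and $\tr{\proj{1}_\reg{Z}\rho_\reg{Z}} \geq \tilde\omega_n - O(r(n)\eps) - O(\sqrt{r'(n)}\,\eps^{1/8})$. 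Choosing $\eps = 1/\poly(n)$ small enough yields simultaneously $\td(\rho,\rho') \leq 1/(4q(n)^2)$ and $\tr{\proj{1}_\reg{Z}\rho_\reg{Z}} \geq \omega_n^* - 1/q(n)$. Because $\rho \in \tilde\cS$, the purification-of-channels construction behind \cref{lem:qip-as-sdp}\,(ii) (which only uses the consistency constraints, not the acceptance-probability constraint) produces an actual prover $P_n$ whose round-$j$ intermediate states are exactly the reductions $\rho^{(j)}_{\reg{MW}}$ and $\sigma^{(j)}_{\reg{MW}}$ of $\rho$; this $P_n$ is accepted with probability $\tr{\proj{1}_\reg{Z}\rho_\reg{Z}} \geq \omega_n^* - 1/q(n)$, as required.

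Finally, we extract the advertised purifications. Regroup all of the (polynomially many) qubits on which $\ket{\Psi_n}$ acts outside of $\reg{M^{(j)}W^{(j-1)}}$ into a purifying register $\reg{Q^{(n)}}$, so $\ket{\Psi_n}$ lives on $\reg{M^{(j)}W^{(j-1)}Q^{(n)}}$ and its reduction on $\reg{M^{(j)}W^{(j-1)}}$ equals (up to inverse-exponential error) the reduction $\rho'^{(j)}_{\reg{MW}}$ of $\rho'$. Monotonicity of trace distance gives $\td(\rho'^{(j)}_{\reg{MW}},\rho^{(j)}_{\reg{MW}}) \leq 1/(4q(n)^2)$, so by Uhlmann's theorem (\cref{thm:uhlmann_std}) combined with Fuchs--van de Graaf there is a purification $\ket{\psi_{n,j}}$ of $\rho^{(j)}_{\reg{MW}}$ on $\reg{MWQ^{(n)}}$ with $\td(\proj{\psi_{n,j}}, \proj{\Psi_n}) \leq 1/q(n)$. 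Since $\ket{\Psi_n} \in \statePSPACE_0$, \cref{lem:state-pspace-robust} places $\ket{\psi_{n,j}}$ in $\statePSPACE_{1/q(n)}$; the identical argument for $\sigma^{(j)}_{\reg{MW}}$ handles $\ket{\varphi_{n,j}}$. The only delicate step is the second one: we must ensure that a single consistent prover $P_n$ simultaneously realises all the required intermediate states, which is why we invoke \cref{lem:qip-as-sdp}\,(ii) on the exactly-feasible rounded state $\rho$ rather than directly on the approximately-feasible output $\rho'$ of the SDP solver; everything else is bookkeeping that transfers verbatim from the proof of \cref{thm:main}.
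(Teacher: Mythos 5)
Your proof is correct and arrives at the same conclusion, but the final step takes a genuinely different route from the paper. The paper's proof first argues (exactly as you do) that the SDP construction applies to generic verifiers with $c(n) = \omega_n^*$, that the rounded exactly-feasible $\rho$ yields the prover $P_n$, and that the reduced (mixed) intermediate states $\rho^{(j)}_{\reg{MW}}$, $\sigma^{(j)}_{\reg{MW}}$ are close to the corresponding reductions of the $\statePSPACE_0$-preparable $\rho'$, hence lie in $\statePSPACE_{1/(4q(n)^2)}$. It then invokes the purification theorem (\cref{thm:purification}) to produce \emph{new} purifications in $\statePSPACE_{2\sqrt{1/(4q^2)}} = \statePSPACE_{1/q}$; this route goes through tomography of the intermediate mixed states and a block-encoding of $\sqrt{\cdot}$. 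You instead observe that $\ket{\Psi_n}$ produced by \cref{thm:solving} is \emph{already} a purification of $\rho'^{(j)}_{\reg{MW}}$ (after viewing $\reg{M_j W_{j-1}}$ as the system register and the remaining qubits as purifier), so you apply Uhlmann's theorem information-theoretically — not algorithmically — to get a nearby purification of $\rho^{(j)}_{\reg{MW}}$ in the same Hilbert space, and then use \cref{lem:state-pspace-robust}. This is cleaner: it avoids \cref{thm:purification} entirely, and produces purifications that are small perturbations of the single state $\ket{\Psi_n}$ rather than freshly re-synthesized ones. The only cosmetic item to mind is that the purifying register is formally different for each $j$ (the complement of $\reg{M_j W_{j-1}}$ inside the big tensor product), but since all these registers have the same polynomial size, a \pup-computable permutation puts the system qubits in a fixed location $\reg{M^{(n)} W^{(n)}}$ and the rest in a fixed $\reg{Q^{(n)}}$, so the lemma's ``single $\reg{Q^{(n)}}$'' phrasing is respected.
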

\begin{proof}
Note that $V_n$ need not be a verifier for a $\stateQIP$-protocol, but can be a verifier for any interactive quantum protocol (without an input state, i.e.~the verifier's starting state can be fixed to all-0 without loss of generality).
In this more general case, the corresponding SDP we constructed in \Cref{sec:sdp-qip} is still well-defined as it did not use any particular properties of state synthesis protocols.\footnote{Technically speaking the SDP in \Cref{sec:sdp-qip} is defined to search for provers that are accepted with probability $c(n)$, the completeness parameter, which is not necessarily the acceptance probability of an optimal prover.
However, since the maximum acceptance probability $\omega_n^*$ is a $\PSPACE$-computable quantity~\cite{jain2011qip}, we can set $c(n) = \omega^*_n$ and still have a \pspace-computable SDP.}
Following the same steps as in the proof of \cref{thm:main}, this means that there exists a prover who succeeds with probability at least $\omega^* - 1/q(n)$ and for which the intermediate states $\rho^{(j)}_{\reg{M^{(n)}} \reg{W^{(n)}}}$ and $\sigma^{(j)}_{\reg{M^{(n)}} \reg{W^{(n)}}}$ are in $\statePSPACE_{1/(4 q^2(n))}$.
The lemma then follows from \cref{thm:purification}.
\end{proof}

Having shown that the intermediate states of the protocol have purifications in $\statePSPACE$, we can now use our Algorithmic Uhlmann's Theorem (\cref{thm:uhlmann}) to implement a $\unitaryPSPACE$-prover that performs the Uhlmann unitaries connecting these purifications.
Since this prover produces exactly the same states from the verifier's point of view as the prover $P_n$ from \cref{lem:intermediate-states-pspace} and that prover was accepted by the verifier with probability $\omega^* - 1/q(n)$, it follows that the $\unitaryPSPACE$-prover that implements the Uhlmann transformations is accepted with the same probability (up to an additional factor of the number of rounds $r(n)$ due to the error accumulating in each round).
More formally, we show the following.
\begin{proof}[Proof of \cref{thm:prover-complexity}]
For a given polynomial $q(n)$, define $q'(n) = q(n) r(n)$ and let 
\begin{align*}
( \ket{\psi_{n,j}}_{\reg{M^{(n)} W^{(n)} Q^{(n)}}} )_{n,j}, \,\, (\ket{\varphi_{n,j}}_{\reg{M^{(n)} W^{(n)} Q^{(n)}}})_{n,j} \in \statePSPACE_{1/q'(n)}
\end{align*}
be the purifications constructed in \cref{lem:intermediate-states-pspace} for a prover $P_n$ that succeeds with probability $\omega^* - 1/q'(n)$.
We can also view these as purifications of the reduced states $\rho^{(j)}_{\reg{W^{(n)}}}$ and $\sigma^{(j)}_{\reg{W^{(n)}}}$ on the verifier registers only.
Recall that by construction $\rho^{(j)}_{\reg{W^{(n)}}}$ and $\sigma^{(j)}_{\reg{W^{(n)}}}$ are the \emph{exact} reduced states of the verifier $V_n$ interacting with the prover $P_n$.
Therefore, $F(\sigma^{(j)}_{\reg{W^{(n)}}}, \rho^{(j+1)}_{\reg{W^{(n)}}}) = 1$,
since the prover does not act on register $\reg{W^{(n)}}$.
Then, \cref{thm:uhlmann} implies that  there exist unitaries $\{K_{n,j}\}_{n,j} \in \unitaryPSPACE$ acting on $\reg{M^{(n)}W^{(n)}}$ such that 
\begin{align*}
\norm{(\id_{\reg{W^{(n)}}} \ot K_{n,j}) \ket{\varphi_{n,j}}_{\reg{W^{(n)} M^{(n)} Q^{(n)}}} - \ket{\psi_{n,j+1}}_{\reg{W^{(n)} M^{(n)} Q^{(n)}}}}_2 \leq O(1/q'(n)) \,.
\end{align*}
Therefore, a prover $P^*_n$ that, in each round $j \in \{1, \dots, r(n)\}$ of the protocol, applies the unitary $K_{n,j}$, will be accepted with probability $\omega^* - O(r(n)/q'(n)) = \omega^* - O(1/q(n))$ by the triangle inequality.
\end{proof}

\subsection{Proof of the Algorithmic Uhlmann's Theorem} \label{sec:algo_uhlmann}

In this section, we prove our Algorithmic Uhlmann's Theorem (\cref{thm:uhlmann}).
As a first step, we show \cref{lem:uhlmann}, which gives an explicit form for the Uhlmann unitary in terms of the involved states $\ket{\varphi}$ and $\ket{\psi}$. 
For this, we need the following piece of notation: for a matrix $A$ with a singular value decomposition $A = U\Sigma V^\dagger$ and a function $f: \R \to \R$ such that $f(0) = 0$, we write
\begin{equation}
    \label{eq:singular-val-func}
    f(A) = U f(\Sigma) V^\dagger~.
\end{equation}
This is well-defined because the span of the left singular vectors (resp. right singular vectors) corresponding to a nonzero singular value of a matrix is unique. 

\begin{lemma}[Explicit Uhlmann unitary]
\label{lem:uhlmann}
Let $\ket{\psi}_{\reg{A} \reg{B}}$ and $\ket{\varphi}_{\reg{A} \reg{B}}$ denote pure states on registers $\reg{A}, \reg{B}$. Let $\rho$ and $\sigma$ denote the reduced density matrices on register $\reg{A}$ of $\ket{\psi}$ and $\ket{\varphi}$, respectively. Define the operator
\[
    W = \sgn(\Tr_{\reg{A}} (\ketbra{\psi}{\varphi}))
\]
acting on register $\reg{B}$. 
Then $W$ is a partial isometry satisfying
\[
    \fidelity(\rho,\sigma) =  \bra{\psi} (I_{\reg{A}} \otimes W) \ket{\varphi}~.
\]
\end{lemma}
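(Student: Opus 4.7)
The plan is to reduce the identity to a standard matrix computation by writing both pure states in terms of their coefficient matrices, and then to use the defining property of the sign function that $\Tr(\sgn(X)\,X^\dagger)=\|X\|_1$.

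Concretely, assume without loss of generality that $\reg{A}$ and $\reg{B}$ have the same dimension $d$ (otherwise pad with ancillas initialized to $\ket{0}$; this does not change any of the relevant quantities). Expand $\ket{\psi}=\sum_{ij}(M_\psi)_{ij}\ket{i}_{\reg{A}}\ket{j}_{\reg{B}}$ and $\ket{\varphi}=\sum_{ij}(M_\varphi)_{ij}\ket{i}_{\reg{A}}\ket{j}_{\reg{B}}$, so that the reduced states are $\rho=M_\psi M_\psi^\dagger$ and $\sigma=M_\varphi M_\varphi^\dagger$. A direct calculation gives
\[
    \Tr_{\reg{A}}(\ketbra{\psi}{\varphi}) \;=\; M_\psi^T M_\varphi^{*}\,,
\]
viewed as an operator on $\reg{B}$. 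Let $X\deq M_\psi^T M_\varphi^*$, so $W=\sgn(X)$. The partial isometry claim is immediate: writing $X=U\Sigma V^\dagger$, we have $W=UV^\dagger$ (with $U,V$ restricted to the singular vectors of nonzero singular values), and $WW^\dagger=UU^\dagger$, $W^\dagger W=VV^\dagger$ are projections.

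For the main identity, a short computation in components gives
\[
    \bra{\psi}(I_{\reg{A}}\otimes W)\ket{\varphi} \;=\; \Tr\!\big(W^T\, M_\psi^\dagger M_\varphi\big) \;=\; \Tr\!\big(\sgn(Y)\,Y^\dagger\big)\,,\quad Y\deq M_\varphi^\dagger M_\psi,
\]
where in the last step I use $W^T=\sgn(X^T)=\sgn(Y)$, which follows by transposing the SVD of $X$ and noting that $X^T=\bar V\,\Sigma\,\bar U^\dagger$ is an SVD of $X^T$ and $X^T=(M_\psi^T M_\varphi^*)^T=M_\varphi^\dagger M_\psi=Y$. The standard identity $\Tr(\sgn(Y)Y^\dagger)=\|Y\|_1$ (read off the SVD) then yields
\[
    \bra{\psi}(I_{\reg{A}}\otimes W)\ket{\varphi} \;=\; \|M_\varphi^\dagger M_\psi\|_1 \;=\; \|M_\psi^\dagger M_\varphi\|_1,
\]
using $\|Z\|_1=\|Z^\dagger\|_1$. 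Finally, the standard polar-decomposition argument for the fidelity says that for any $A,B$ with $AA^\dagger=\rho$ and $BB^\dagger=\sigma$, one has $\fidelity(\rho,\sigma)=\|A^\dagger B\|_1$; applying this with $A=M_\psi$ and $B=M_\varphi$ identifies $\|M_\psi^\dagger M_\varphi\|_1$ with $\fidelity(\rho,\sigma)$, completing the proof.

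The only real obstacle is the careful bookkeeping of transposes and complex conjugates: the sign function is naturally applied to $\Tr_{\reg{A}}(\ketbra{\psi}{\varphi})=M_\psi^T M_\varphi^*$, whereas the fidelity is most cleanly expressed through $M_\psi^\dagger M_\varphi$. The bridge is the transpose relation $\sgn(X^T)=\sgn(X)^T$ (verified by transposing an SVD) together with the unitary-invariance of the trace norm under Hermitian conjugation. All other steps are routine.
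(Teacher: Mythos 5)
Your proof is correct and is essentially the paper's own argument in a slightly different dress: you parametrize the purifications by coefficient matrices $M_\psi, M_\varphi$ and work in components, while the paper writes $\ket{\psi}=(\sqrt{\rho}\otimes X)\ket{\Phi}$, $\ket{\varphi}=(\sqrt{\sigma}\otimes Y)\ket{\Phi}$ with unitary $X,Y$ — but these are the same decomposition (your $M_\psi=\sqrt{\rho}\,X^\top$), and both reduce the overlap to $\|\sqrt{\rho}\sqrt{\sigma}\|_1$ via the SVD of $\Tr_{\reg{A}}(\ketbra{\psi}{\varphi})$. The only cosmetic difference is that you invoke the standard formula $\fidelity(\rho,\sigma)=\|A^\dagger B\|_1$ at the end rather than re-deriving it inline.
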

\begin{proof}
Assume without loss of generality that the dimensions of registers $\reg{A}$ and $\reg{B}$ are equal. Let $\ket{\Phi}_{\reg{A} \reg{B}}$ denote the unnormalized maximally entangled state between registers $\reg{A}$ and $\reg{B}$. 
By the equivalence of purifications, there exist unitary operators $X,Y$ acting on register $\reg{B}$ such that
\[
    \ket{\psi} = \sqrt{\rho} \otimes X \ket{\Phi} \qquad \text{and} \qquad \ket{\varphi} = \sqrt{\sigma} \otimes Y \ket{\Phi}~.
\]
Thus 
\begin{align*}
    \Tr_{\reg{A}} (\ketbra{\psi}{\varphi})
    &= \Tr_{\reg{A}} (\sqrt{\rho} \otimes X \ketbra{\Phi}{\Phi} \sqrt{\sigma} \otimes Y^\dagger )  \\
    &= (I_{\reg{A}} \otimes X) \Tr_{\reg{A}} \Big( (\sqrt{\rho} \otimes I_{\reg{B}}) \ketbra{\Phi}{\Phi} (\sqrt{\sigma} \otimes I_{\reg{B}}) \Big) (I_{\reg{A}} \otimes Y^\dagger) \\
    &= X (\sqrt{\sigma} \sqrt{\rho})^\top Y^\dagger
\end{align*}
where the transpose is with respect to the basis $\{\ket{i}\}$ in which $\ket{\Phi}$ is equal to $\sum_i \ket{i} \ket{i}$. Then we have
\begin{align*}
    W &= \sgn( X (\sqrt{\sigma} \sqrt{\rho})^\top Y^\dagger ) \\
    &= X \sgn( (\sqrt{\sigma} \sqrt{\rho})^\top ) Y^\dagger \\ 
    &= X \sgn(\sqrt{\sigma} \sqrt{\rho} )^\top Y^\dagger~.
\end{align*}
The operator $W$ is a partial isometry, meaning that $W^\dagger \, W$ is a projection onto its support:
\begin{align*}
    W^\dagger \, W &= Y \overline{\sgn(\sqrt{\sigma} \sqrt{\rho} )} \sgn(\sqrt{\sigma} \sqrt{\rho} )^\top Y^\dagger \\
    &= Y \overline{\Big(\sgn(\sqrt{\sigma} \sqrt{\rho} ) \sgn(\sqrt{\sigma} \sqrt{\rho} )^\dagger \Big) } Y^\dagger \\
    &= Y \overline{U} \sgn(\Sigma)^2 \overline{U}^\dagger Y^\dagger
\end{align*}
where $\overline{A}$ denotes taking the complex conjugate of the entries of $A$ in the standard basis, and the singular value decomposition of $\sqrt{\sigma} \sqrt{\rho}$ is $U \Sigma V^\dagger$. Note that since $\Sigma \geq 0$, we have $\sgn(\Sigma)^2 = \sgn(\Sigma)$, which is an orthogonal projection and since the singular value decomposition of $W$ is $X \overline{V} \sgn(\Sigma) \overline{U}^\dagger Y^\dagger$ it follows that $W^\dagger W$ is the projection onto the support of $W$, as desired.

We now calculate
\begin{align*}
    \bra{\psi} (I_{\reg{A}} \otimes W) \ket{\varphi} &= \bra{\Phi} (\sqrt{\rho} \otimes X^\dagger) ( I_{\reg{A}} \otimes X \sgn(\sqrt{\sigma} \sqrt{\rho})^\top \,  Y^\dagger) (\sqrt{\sigma} \otimes Y) \ket{\Phi} \\
    &= \bra{\Phi} \sqrt{\rho} \sqrt{\sigma} \otimes \sgn(\sqrt{\sigma} \sqrt{\rho})^\top \ket{\Phi} \\
    &= \bra{\Phi} \sqrt{\rho} \sqrt{\sigma} \, \sgn(\sqrt{\sigma} \sqrt{\rho}) \otimes I_{\reg{B}} \ket{\Phi} \\
    &= \Tr \Big( \sqrt{\rho} \sqrt{\sigma} \, \sgn(\sqrt{\sigma} \sqrt{\rho}) \Big) \\
    &= \Tr( |\sqrt{\rho} \sqrt{\sigma}|) \\
    &= \fidelity(\rho,\sigma)
\end{align*}
where the second-to-last line follows from the fact that $\Tr(K \sgn(K^\dagger) ) = \Tr(|K|)$ for all matrices $K$, and the last line follows by definition of the fidelity function.
\end{proof}

We now state a ``robust'' version of \Cref{lem:uhlmann}.
\begin{lemma}[Robust version of \Cref{lem:uhlmann}]
\label{lem:uhlmann-robust}
Let $\kappa \in (0,1)$, $\alpha\geq 1$, and let $P_d^{\sgn}$ denote the odd degree-$d = O \Big( \frac{\log 1/\kappa}{\kappa} \Big)$ polynomial approximation to the sign function from \Cref{lem:sign-approx}. 
Let $\ket{\psi}_{\reg{A} \reg{B}}$, $\ket{\varphi}_{\reg{A} \reg{B}}$ and $\rho, \sigma$ be as in \Cref{lem:uhlmann}. 
Let
\[
    \widetilde{W} = P_d^{\sgn} (\Tr_{\reg{A}}(\ketbra{\psi}{\varphi})/\alpha)
\]
where $P_d^{\sgn}$ is applied in the sense of~\Cref{eq:singular-val-func}. 
Then 
\[
    \Big | \bra{\psi} I_{\reg{A}} \otimes \widetilde{W} \ket{\varphi} - \fidelity(\rho,\sigma) \Big| \leq O(\dim \reg{A} \cdot \alpha \cdot \kappa)~.
\]
\end{lemma}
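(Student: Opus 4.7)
The plan is to follow the same computation as in the proof of \Cref{lem:uhlmann}, but tracking the error introduced by replacing $\sgn(\cdot)$ with $P_d^{\sgn}(\cdot/\alpha)$. First I would reduce to the product-of-square-roots form: assuming without loss of generality that $\dim \reg{A} = \dim \reg{B}$ and writing $\ket{\psi} = (\sqrt{\rho}\otimes X)\ket{\Phi}$ and $\ket{\varphi} = (\sqrt{\sigma}\otimes Y)\ket{\Phi}$ for some unitaries $X, Y$ on $\reg{B}$ (where $\ket{\Phi}$ is the unnormalised maximally entangled state), exactly as in the proof of \Cref{lem:uhlmann} one obtains
\[
\Tr_{\reg{A}}(\ketbra{\psi}{\varphi}) \;=\; X\,(\sqrt{\sigma}\sqrt{\rho})^{\top}\,Y^{\dagger}.
\]
Since $P_d^{\sgn}$ is applied via the singular value convention \eqref{eq:singular-val-func}, conjugation by the unitaries $X$ and $Y^{\dagger}$ commutes with the functional calculus, giving
\[
\widetilde{W} \;=\; X\,P_d^{\sgn}\!\big((\sqrt{\sigma}\sqrt{\rho})^{\top}/\alpha\big)\,Y^{\dagger}.
\]

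Next I would plug this into the inner product and repeat the computation from \Cref{lem:uhlmann} verbatim. Using $(I\otimes M)\ket{\Phi} = (M^{\top}\otimes I)\ket{\Phi}$ and $\bra{\Phi}(N\otimes I)\ket{\Phi} = \Tr(N)$, the calculation collapses to
\[
\bra{\psi}\,(I_{\reg{A}}\otimes \widetilde{W})\,\ket{\varphi} \;=\; \Tr\!\big(\sqrt{\rho}\sqrt{\sigma}\cdot P_d^{\sgn}(\sqrt{\sigma}\sqrt{\rho}/\alpha)\big),
\]
while the exact analogue (as in \Cref{lem:uhlmann}) is $\fidelity(\rho,\sigma) = \Tr(\sqrt{\rho}\sqrt{\sigma}\cdot \sgn(\sqrt{\sigma}\sqrt{\rho}))$. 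Setting $K = \sqrt{\sigma}\sqrt{\rho}$ and taking an SVD $K = U\Sigma V^{\dagger}$ with singular values $\lambda_1,\ldots,\lambda_D \in [0,1]$ (here $D = \dim \reg{A}$, using that $\|\sqrt{\rho}\|_\infty,\|\sqrt{\sigma}\|_\infty \leq 1$), both traces become diagonal sums and the error reduces to
\[
\Big|\bra{\psi}(I_{\reg{A}}\otimes\widetilde{W})\ket{\varphi} - \fidelity(\rho,\sigma)\Big|
\;=\;\Big|\sum_{i=1}^{D}\lambda_i\,\big(P_d^{\sgn}(\lambda_i/\alpha) - \sgn(\lambda_i)\big)\Big|.
\]

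The final step is a scalar bound using the guarantees of \Cref{lem:sign-approx}. Since $\alpha \geq 1$, each $\lambda_i/\alpha \in [0,1]$, so $P_d^{\sgn}(\lambda_i/\alpha)$ is defined with $|P_d^{\sgn}(\lambda_i/\alpha)| \leq 1+\kappa$. I would split the sum into two ranges. For indices with $\lambda_i \geq \alpha\kappa$, the approximation guarantee gives $|P_d^{\sgn}(\lambda_i/\alpha) - 1| \leq \kappa$ while $\sgn(\lambda_i) = 1$, contributing at most $\lambda_i\kappa \leq \kappa$ per term. For indices with $\lambda_i < \alpha\kappa$, the trivial bound $|P_d^{\sgn}(\lambda_i/\alpha) - \sgn(\lambda_i)| \leq 2 + \kappa$ combined with $\lambda_i < \alpha\kappa$ yields $O(\alpha\kappa)$ per term. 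Summing over the $D$ indices gives the claimed bound $O(D\cdot \alpha\cdot \kappa) = O(\dim \reg{A}\cdot \alpha\cdot \kappa)$.

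There is no real obstacle; the only point requiring some care is checking that the functional calculus \eqref{eq:singular-val-func} commutes with conjugation by $X$ and $Y^{\dagger}$ (which follows because $X, Y$ are unitary and $P_d^{\sgn}$ is odd, so the singular value decomposition is preserved up to these unitaries), and that $P_d^{\sgn}$ on the transposed matrix equals the transpose of $P_d^{\sgn}$ applied to the original (since $P_d^{\sgn}$ is a real odd polynomial in $T_k$). Both are routine verifications from the definition of $P_d^{\sgn}$.
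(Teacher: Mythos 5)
Your proof is correct and follows essentially the same route as the paper's: repeat the computation from \Cref{lem:uhlmann} with $P_d^{\sgn}(\cdot/\alpha)$ in place of $\sgn(\cdot)$, reduce the error to $\sum_i \lambda_i \lvert P_d^{\sgn}(\lambda_i/\alpha) - \sgn(\lambda_i)\rvert$ over the singular values of $\sqrt{\sigma}\sqrt{\rho}$, and split that sum at the threshold $\lambda_i = \alpha\kappa$ using the two guarantees of \Cref{lem:sign-approx}. The only cosmetic differences are that you sum over all $\dim\reg{A}$ indices rather than only the nonzero singular values (immaterial because $\lambda_i$ multiplies each term), and you spell out the routine commutation of the singular-value functional calculus with conjugation by unitaries and with transposition, which the paper leaves implicit.
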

\begin{proof}
First note that since $P_d^{\sgn}$ is odd, $P_d^{\sgn}(0) = 0$, so the application of $P_d^{\sgn}$ in the sense of \cref{eq:singular-val-func} is well-defined.
Then, the proof is nearly identical to that of \Cref{lem:uhlmann}, but the $\sgn$ function is replaced with the polynomial $P_d^{\sgn}$. 
The only difference is that we have to bound the difference
$$
\Big | \Tr \Big( \sqrt{\rho} \sqrt{\sigma}  \, \sgn(\sqrt{\sigma} \sqrt{\rho}) \Big) - \Tr \Big( \sqrt{\rho} \sqrt{\sigma} \, P_d^{\sgn} (\sqrt{\sigma} \sqrt{\rho} /\alpha) \Big) \Big |
$$
Let $K = \sqrt{\rho} \sqrt{\sigma}$ and suppose it has singular value decomposition $U \Sigma V^\dagger$. Let $( \lambda_1,\ldots,\lambda_t)$ denote the nonzero diagonal entries of $\Sigma$. Let $S = \{ i : \lambda_i/\alpha \leq \kappa \}$ denote the ``small'' singular values and let $L$ denote the complement (the ``large'' singular values). Then this difference is equal to
\begin{align*}
    \Big | \Tr \Big( \Sigma ( \sgn(\Sigma) - P_d^{\sgn} (\Sigma/\alpha) ) \Big) \Big| &\leq \sum_i \lambda_i \, | 1 - P_d^{\sgn}(\lambda_i/\alpha) | \\
    &= \sum_{i \in S} \lambda_i \, | 1 - P_d^{\sgn}(\lambda_i/\alpha) | + \sum_{i \in L} \lambda_i \, | 1 - P_d^{\sgn}(\lambda_i/\alpha) | \\
    &\leq O(t \alpha \kappa) + O(t \kappa) = O(t \alpha \kappa)
\end{align*}
where to bound the sum over $S$ we used that $|1 - P_d^{\sgn}(x)| \leq O(1)$ for all $x \in [-1,1]$ and to bound the sum over $L$ we used that $|\lambda_i| \leq 1$ for all $i$, and the fact that for $\kappa \leq x \leq 1$ we have $|1 - P_d^{\sgn}(x)| \leq \kappa$ by definition. Since $t$ (the number of singular values of $K$) is at most $\dim \reg{A}$, the lemma follows.
\end{proof}

In the proof of \cref{thm:uhlmann}, we will need the following utility lemma that allows us to embed an arbitrary square matrix $R$ into a Hermitian matrix $Q$ in a convenient way.
\begin{lemma}
\label{lem:singular-value-func}
Let $R$ denote a square matrix with singular value decomposition $U \Sigma V^\dagger$. Let $Q$ denote the Hermitian matrix
\[
Q = \ketbra{0}{1} \otimes R^\dagger + \ketbra{1}{0} \otimes R~.
\]
Let $f:\R \to \R$ be an odd function. Then 
\[
    f(Q) = \ketbra{0}{1} \otimes f(R)^\dagger + \ketbra{1}{0} \otimes f(R)
\]
where $f(R) = U f(\Sigma) V^\dagger$.
\end{lemma}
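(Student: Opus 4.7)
The plan is to diagonalize $Q$ explicitly in terms of the singular value decomposition $R = U \Sigma V^\dagger$, apply the functional calculus definition of $f(Q)$ (which is legitimate since $Q$ is Hermitian), and use the oddness of $f$ to collapse the resulting expression into the claimed off-diagonal block form.

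First I would observe that $Q$ is a $2\times 2$ block matrix
$\bigl(\begin{smallmatrix}0 & R^\dagger\\ R & 0\end{smallmatrix}\bigr)$
whose squares $Q^2 = \mathrm{diag}(R^\dagger R,\, RR^\dagger)$ are block diagonal. Thus, for each singular triple $(\sigma_i,\ket{u_i},\ket{v_i})$ with $R\ket{v_i}=\sigma_i\ket{u_i}$ and $R^\dagger\ket{u_i}=\sigma_i\ket{v_i}$, the two vectors
\[
\ket{\pm_i} \;=\; \tfrac{1}{\sqrt{2}}\bigl(\ket{0}\ket{v_i}\pm\ket{1}\ket{u_i}\bigr)
\]
are orthonormal eigenvectors of $Q$ with eigenvalues $\pm\sigma_i$. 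Together with the kernel (vectors in $\ket{0}\otimes \ker R \oplus \ket{1}\otimes \ker R^\dagger$, which are sent to $0$ and on which $f$ vanishes by $f(0)=0$), this gives the full spectral decomposition of $Q$.

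Next, I would invoke functional calculus on the Hermitian operator $Q$ to write
\[
f(Q) \;=\; \sum_i \Bigl(f(\sigma_i)\,\ketbra{+_i}{+_i} + f(-\sigma_i)\,\ketbra{-_i}{-_i}\Bigr),
\]
and use the hypothesis $f(-x)=-f(x)$ to combine the two terms. A direct computation gives
\[
\ketbra{+_i}{+_i} - \ketbra{-_i}{-_i} \;=\; \ketbra{0}{1}\otimes \ket{v_i}\!\bra{u_i} + \ketbra{1}{0}\otimes \ket{u_i}\!\bra{v_i},
\]
so summing against $f(\sigma_i)$ yields
\[
f(Q) \;=\; \ketbra{0}{1}\otimes V f(\Sigma) U^\dagger + \ketbra{1}{0}\otimes U f(\Sigma) V^\dagger.
\]
Since $f(\Sigma)$ is a real diagonal matrix (because $f:\mathbb R\to\mathbb R$ and the $\sigma_i$ are real), the first tensor factor equals $f(R)^\dagger$ and the second equals $f(R)$ in the sense of~\eqref{eq:singular-val-func}, giving the claimed identity.

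I don't anticipate any real obstacles here; the only mildly subtle point is to handle the kernel of $Q$ correctly (ensuring that the ambiguity in the SVD on degenerate or zero singular values does not affect $f(R)$, which is why the lemma requires $f(0)=0$ via oddness) and to verify that the spanning set $\{\ket{\pm_i}\}\cup\{\text{kernel vectors}\}$ is indeed a complete orthonormal eigenbasis of $Q$, which is immediate from counting dimensions using the SVD of $R$.
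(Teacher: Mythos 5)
Your proof is correct. It is essentially the same argument as the paper's, just unpacked: the paper writes $Q = A(X\otimes\Sigma)A^\dagger$ with $A = \ketbra{0}{0}\otimes V + \ketbra{1}{1}\otimes U$ and then uses $f(X\otimes\Sigma) = X\otimes f(\Sigma)$ (from oddness of $f$), whereas you explicitly exhibit the eigenvectors $\ket{\pm_i} = \tfrac{1}{\sqrt2}(\ket{0}\ket{v_i}\pm\ket{1}\ket{u_i})$ of $Q$ with eigenvalues $\pm\sigma_i$ and sum the spectral projections directly. These are the same eigenvectors --- your $\ket{\pm_i}$ are exactly $A$ applied to the $\pm$-eigenvectors of $X\otimes\Sigma$ --- so the two proofs are two presentations of one computation. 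Your handling of the kernel (and the remark that oddness forces $f(0)=0$, so the non-uniqueness of the SVD on zero singular values does not affect $f(R)$) is a nice explicit point that the paper leaves implicit, and it is exactly the right thing to check.
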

\begin{proof}
    We can rewrite $Q$ as 
    \[
        Q = A (X \otimes \Sigma) A^\dagger
    \]
    where $A = \ketbra{0}{0} \otimes V + \ketbra{1}{1} \otimes U$ (which is unitary) and $X = \begin{pmatrix} 0 & 1 \\ 1 & 0 \end{pmatrix}$ is the single-qubit bitflip operator. Then $f(Q) = A f(X \otimes \Sigma) A^\dagger$. Since $f$ is an odd function and the eigenvalues of $X \otimes \Sigma$ are simply the diagonal entries of $\Sigma$ and their negations, we have $(X \otimes \Sigma) = X \otimes f(\Sigma)$. This implies that 
    \[
        f(Q) = A (X \otimes f(\Sigma))A^\dagger = \ketbra{0}{1} \otimes f(R)^\dagger + \ketbra{1}{0} \otimes f(R)
    \]
    as desired.
\end{proof}

Another ingredient that we will require is a certain ``robust'' version of oblivious amplitude amplification~\cite{berry2014exponential}, which is based on~\cite[Theorem 3.2.5]{gilyen2019quantum}.
To state this, we first make the following definition.

\begin{definition}[Approximate isometry on a subspace]
\label{def:approx-isometry}
Let $F$ be an operator with singular value decomposition $U \Sigma V^\dagger$, and let $\Delta_{1 \pm \kappa}$ denote the span of columns of $V$ (i.e.~the right singular vectors) such that the associated singular values are in the interval $[1 - \kappa,1 + \kappa]$. Then we say that $F$ is a \emph{$\kappa$-approximate isometry on a subspace $\Gamma$} if $\Gamma$ is a subspace of $\Delta_{1 \pm \kappa}$. If $\kappa = 0$, then we say that $F$ is an \emph{exact isometry} on $\Gamma$. 
\end{definition}
In other words, $F$ is a $\kappa$-approximate isometry on a subspace $\Gamma$ if for every normalised $\ket{\psi} \in \Gamma$, $\bra{\psi}F^\dagger F \ket{\psi} \in [1-\kappa, 1+\kappa]$.

The final ingredient we will need is a way to ``reset'' the post-selection factor $\alpha$ of a block encoding of an approximate isometry $F$ on a subspace $\Gamma$ to $\alpha \approx 1$.
This can be accomplished using oblivious amplitude amplification~\cite{berry2014exponential}.
However, since $F$ is an approximate, not an exact, isometry, and $F$ only behaves this way on a subspace, we cannot use the result of \cite{berry2014exponential} directly.
However, we can show that the following robust version of oblivious amplitude amplification holds for block encodings of approximate isometries.
We defer the proof to \cref{app:oaa}.

\begin{lemma}[Robust oblivious amplitude amplification] \label{lem:oaa_for_approx_iso}
Let $J$ be an $(\alpha,\eps,a)$-block encoding of an operator $\widetilde{W}$ that is a $\kappa$-approximate isometry on a subspace $\Gamma$.
Let $\Pi = \Id \otimes \ketbra{0^a}{0^a}$ denote the projector onto the ancilla qubits of the block encoding being zero, let $L = 2\Pi - \Id$ and let $S = -J L J^\dagger L$. Then for all $\ell \in \N$ and all $\ket{\phi} \in \Gamma$, 
\[
    \Big \| S^\ell J \ket{0^a}\ket{\phi} - \sin((2\ell + 1)\theta) \, \ket{0^a}  \widetilde{W} \ket{\phi} \Big \| \leq \cos((2\ell + 1)\theta) + O\left(\ell \sqrt{\kappa + \eps}\right)
\]
where $\theta = \arcsin(\alpha^{-1})$.
\end{lemma}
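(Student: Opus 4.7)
The plan is to reduce the approximate statement to the standard exact oblivious amplitude amplification of~\cite{berry2014exponential} via a two-step perturbation argument. First, I would construct an idealized operator $\widehat{W}$ that is an \emph{exact} isometry on $\Gamma$ by taking the singular value decomposition $\widetilde{W} = U \Sigma V^\dagger$ and replacing every singular value of $\Sigma$ lying in $[1-\kappa, 1+\kappa]$ by $1$, leaving the others unchanged. Since $\Gamma \subseteq \Delta_{1\pm\kappa}$, the resulting $\widehat{W}$ is an exact isometry on $\Gamma$ and satisfies $\|\widehat{W} - \widetilde{W}\|_\infty \le \kappa$.

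Second, I would build a unitary $\widehat{J}$ that is an \emph{exact} $(\alpha, 0, a)$-block encoding of $\widehat{W}$ and close to $J$ as a unitary. Writing $J$ in $2{\times}2$ block form relative to $\Hilb = \Pi\Hilb \oplus (\Id-\Pi)\Hilb$ as $J = \bigl(\begin{smallmatrix} A & B \\ C & D \end{smallmatrix}\bigr)$, the block $A$ equals $\widetilde{W}/\alpha$ up to an $\eps/\alpha$ perturbation, so replacing $A$ by $\widehat{A} \deq \widehat{W}/\alpha$ introduces a change of at most $\delta \deq (\kappa+\eps)/\alpha$. The off-diagonal blocks of $J$ can be written in polar form as $B = (\Id - AA^\dagger)^{1/2} V_B$ and $C = V_C (\Id - A^\dagger A)^{1/2}$ for some partial isometries $V_B, V_C$; I would define $\widehat{J}$ by keeping these partial isometry factors while substituting $\widehat{A}$ for $A$ inside the square roots and adjusting the $D$-block so that the cross-relation $\widehat{A}^\dagger \widehat{B} + \widehat{C}^\dagger \widehat{D} = 0$ is restored. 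Using the operator inequality $\|\sqrt{X} - \sqrt{Y}\|_\infty \le \|X - Y\|_\infty^{1/2}$ for positive semidefinite $X, Y$, together with $\|AA^\dagger - \widehat{A}\widehat{A}^\dagger\|_\infty = O(\delta)$, this construction yields $\|\widehat{J} - J\|_\infty = O(\sqrt{\delta}) = O(\sqrt{\kappa+\eps})$. The square-root loss is inherent---as seen from the scalar example $A = 1, \widehat{A} = 1-\delta$, where the off-diagonal block necessarily jumps from $0$ to $\sqrt{2\delta - \delta^2}$---and making this completion precise is the main technical subtlety of the proof.

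Having constructed $\widehat{J}$, I would invoke the standard exact oblivious amplitude amplification: with $\widehat{S} \deq -\widehat{J} L \widehat{J}^\dagger L$, the two-dimensional invariant subspace argument of~\cite{berry2014exponential} goes through using only that $\widehat{W}$ is exactly isometric on $\Gamma$ (so that $\ket{0^a}\widehat{W}\ket{\phi}$ is a unit vector and $\widehat{W}^\dagger \widehat{W}\ket{\phi} = \ket{\phi}$ for $\ket{\phi}\in\Gamma$), yielding
\[
\widehat{S}^\ell \widehat{J}\ket{0^a}\ket{\phi} \;=\; \sin((2\ell+1)\theta)\,\ket{0^a} \widehat{W}\ket{\phi} \;+\; \cos((2\ell+1)\theta)\,\ket{\perp}
\]
for some unit vector $\ket{\perp}$ orthogonal to $\Pi\Hilb$, where $\theta = \arcsin(1/\alpha)$.

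Finally, I would telescope. From $\|L\|_\infty = 1$ and unitarity of $J, \widehat{J}$, one has $\|S - \widehat{S}\|_\infty \le 2\|J - \widehat{J}\|_\infty$, and the standard inequality $\|S^\ell - \widehat{S}^\ell\|_\infty \le \ell \|S - \widehat{S}\|_\infty$ for unitaries then gives $\|(S^\ell J - \widehat{S}^\ell \widehat{J})\ket{0^a}\ket{\phi}\|_2 \le \|J - \widehat{J}\|_\infty + \|S^\ell - \widehat{S}^\ell\|_\infty = O(\ell\sqrt{\kappa+\eps})$. Combining this with the previous displayed expression for $\widehat{S}^\ell\widehat{J}\ket{0^a}\ket{\phi}$ and the bound $\|\ket{0^a}(\widehat{W} - \widetilde{W})\ket{\phi}\|_2 \le \kappa$, the triangle inequality delivers
\[
\Big\|S^\ell J \ket{0^a}\ket{\phi} - \sin((2\ell+1)\theta)\,\ket{0^a} \widetilde{W}\ket{\phi}\Big\|_2 \;\le\; \cos((2\ell+1)\theta) + O\bigl(\ell\sqrt{\kappa+\eps}\bigr),
\]
which is the claimed bound.
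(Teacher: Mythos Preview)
Your overall strategy coincides with the paper's: round $\widetilde{W}$ to an exact isometry $\widehat{W}$ on $\Gamma$, build a nearby unitary that exactly block-encodes it, invoke the exact oblivious amplitude amplification of~\cite{berry2014exponential} (extended to isometries on a subspace), and telescope back. The only substantive difference is in how the nearby exact block-encoding unitary is constructed, and here your sketch has gaps that the paper's route avoids.

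Two concrete issues with your construction of $\widehat{J}$. First, $\widehat{A}=\widehat{W}/\alpha$ need not be a contraction: the block-encoding hypothesis only gives $\|\widetilde{W}\|_\infty\le\alpha+\eps$, and singular values of $\widetilde{W}$ lying outside $[1-\kappa,1+\kappa]$ are left unchanged in $\widehat{W}$, so $\Id-\widehat{A}\widehat{A}^\dagger$ may fail to be positive semidefinite and the square roots defining your $\widehat{B},\widehat{C}$ are undefined. Second, even after a rescaling fix, the single cross-relation $\widehat{A}^\dagger\widehat{B}+\widehat{C}^\dagger\widehat{D}=0$ that you impose on $\widehat{D}$ does not by itself force $\widehat{J}$ to be unitary; you would still need to exhibit a $\widehat{D}$ close to $D$ satisfying \emph{all} the unitarity constraints simultaneously. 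You correctly flag this as the main technical subtlety, but you do not resolve it.

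The paper sidesteps both issues. It replaces the top-left block of $J$ by $\widehat{W}/\alpha$ to get a non-unitary $J'$, rescales to $J''=J'/(1+\beta)$ with $\beta=(\kappa+\eps)/\alpha$ so that $J''$ has all singular values in $[1-O(\beta),1]$, and then applies an explicit embedding lemma: any matrix $M$ with singular values in $[1-\delta,1]$ sits as the top-left block of the unitary $\bigl(\begin{smallmatrix} M & -N \\ N & M\end{smallmatrix}\bigr)$ with $N=U\sqrt{\Id-\Sigma^2}\,V^\dagger$, at the cost of one extra ancilla qubit and an $O(\sqrt\delta)$ state-wise error. This yields a unitary $J_0$ that is an exact $(\alpha(1+\beta),0,a{+}1)$-block encoding of $\widehat{W}$; the rest is essentially your telescoping argument plus an $O(\ell\beta/\alpha)$ correction coming from the shifted angle $\theta_0=\arcsin\bigl((\alpha(1+\beta))^{-1}\bigr)$. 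The extra ancilla and the angle shift are the price paid for turning your acknowledged subtlety into a two-line calculation.
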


With these ingredients at hand, we can now prove our Algorithmic Uhlmann's Theorem.
\begin{proof}[Proof of \cref{thm:uhlmann}]
Assume for now that the state families $(\ket{\psi_n})_n,(\ket{\varphi_n})_n$ are in $\statePSPACE_0$, i.e.~they can be synthesized exactly in quantum polynomial space; we will handle the approximate case at the end. Let $D_n = 2^{\poly(n)}$  denote the dimension of $\ket{\psi_n},\ket{\varphi_n}$, and let $D_n^{(1)}$ and $D_n^{(2)}$ denote the dimensions of the first and second registers, respectively, so that $D_n = D_n^{(1)} D_n^{(2)}$. 

By \Cref{lem:tomography}, the amplitudes of the states $\ket{\psi_n},\ket{\varphi_n}$ are $\PSPACE$-computable, and as a result the entries of the matrices $\ketbra{\psi_n}{0}, \ketbra{0}{\varphi_n}$ are \pspace-computable, too. Therefore, by \cref{lem:be-entry} for all polynomials $p(n)$ there are $(\alpha_1,\eps_1, \poly(n))$-block encodings $A_n$ of $\ketbra{\psi_n}{0}$ and $B_n$  of $\ketbra{0}{\varphi_n}$ that are computable in $\pureUnitaryPSPACE$, where $\alpha_1 = D_n$ and $\eps_1 = 2^{-p(n)}$. \Cref{lem:be-product} implies a $(\alpha_2,\eps_2, \poly(n))$-block encoding $E_n$ of $\ketbra{\psi_n}{\varphi_n}$ that is computable in $\pureUnitaryPSPACE$, where $\alpha_2 = \alpha_1^2$ and $\eps_2 = 2\alpha_1 \eps_1$. By \Cref{lem:be-partial-trace}, there exists an $(\alpha_3,\eps_3, \poly(n))$-block encoding $F_n$ of 
\begin{align*}
R_n \deq \Tr_{\reg{A}}(\ketbra{\psi_n}{\varphi_n})
\end{align*}
computable in $\pureUnitaryPSPACE$ where $\alpha_3 = D_n^{(1)} \alpha_2$ and $\eps_3 = 2 D_n^{(1)} \eps_2$. 

At this point, we would like to apply the $\sgn$ function (or rather its polynomial approximation) to the block encoding of $R_n$.
Unfortunately, our tools for applying polynomials (in particular, \cref{lem:be-sign-poly}) require a block encoding of a Hermitian matrix, but $R_n$ is not Hermitian.
We therefore employ the construction from \cref{lem:singular-value-func} and define the unitary
\[
    G_n = \ketbra{0}{1} \otimes F_n^\dagger + \ketbra{1}{0} \otimes F_n~.
\]
It is straightforward to see that $G_n$ is a $(1,\eps_4, \poly(n))$-block encoding of the matrix
\begin{equation}
    \label{eq:uhlmann-qn}
    Q_n = \frac{1}{\alpha_3} \Big( \ketbra{0}{1} \otimes  R_n^\dagger + \ketbra{1}{0} \otimes R_n \Big).
\end{equation}
for $\eps_4 = 2\eps_3/\alpha_3$. Let $\kappa \geq 2^{-\poly(n)}$ to be chosen later and let $P_d^{\sgn}$ denote the degree-$d = O \Big ( \frac{\log 1/\kappa}{\kappa} \Big)$ polynomial approximation to the sign function from \Cref{lem:sign-approx}. 
Using \Cref{lem:be-sign-poly}, we obtain a $\pureUnitaryPSPACE$-computable $(\alpha_5,\eps_5, \poly(n))$-block encoding $H_n$ of $P_d^{\sgn}(Q_n)$ where $\alpha_5 = O(\log d)$ and $\eps_5 = O(\alpha_5 d \sqrt{\eps_4})$. 

We can connect a block-encoding of $P^{\sgn}_d(Q_n)$ to a block-encoding of $P^{\sgn}_d(R_n)$ (where $P^{\sgn}_d$ is again applied to the singular values of $R_n$, see \Cref{eq:singular-val-func}) by means of \Cref{lem:singular-value-func}, which implies that $H_n$ is a $(\alpha_5,\eps_5, \poly(n))$-block encoding of
\[
\ketbra{0}{1} \otimes \widetilde{W}_n^\dagger  +  \ketbra{1}{0} \otimes  \widetilde{W}_n \,,
\]
where $\widetilde{W}_n = P_d^{\sgn}(R_n/\alpha_3)$. Now let $J_n = (X \otimes I) H_n$ where $X$ is the single-qubit bit-flip operator acting on the first qubit; thus $J_n$ is a $(\alpha_5,\eps_5, \poly(n))$-block encoding of
\[
    (X \otimes I) P_d^{\sgn}(Q_n) =  \ketbra{0}{0} \otimes \widetilde{W}_n  +  \ketbra{1}{1} \otimes  \widetilde{W}_n^\dagger
\]
where $\widetilde{W}_n = P_d^{\sgn}(R_n/\alpha_3)$. Viewing the first qubits as part of the ancilla register of the block encoding, we get that $J_n$ is a $(\alpha_5,\eps_5,\poly(n))$-block encoding of $\widetilde{W}_n$, and furthermore $J_n$ is $\pureUnitaryPSPACE$-computable.

We now want to apply \cref{lem:oaa_for_approx_iso} to $J_n$.
For this, we define $\Gamma_n$ to be the span of right singular vectors of $R_n/\alpha_3$ whose corresponding singular value is at least $\kappa$.
We then claim that $\widetilde{W}_n$ is a $\kappa$-approximate isometry on the subspace $\Gamma_n$. 
To see that this is the case, let $U \Sigma V^\dagger$ denote the singular value decomposition of $R_n/\alpha_3$. Since $\widetilde{W}_n = U P^{\sgn}_d(\Sigma) V^\dagger$, the right singular vectors of $R_n/\alpha_3$ whose singular values are at least $\kappa$ will now have associated singular values in the interval $[1 - \kappa,1+\kappa]$ due to the guarantees of the polynomial $P_d^{\sgn}$ given by \Cref{lem:sign-approx}.

This means that $J_n$ is a \pup-computable $(\alpha_5,\eps_5,\poly(n))$-block encoding of the $\kappa$-approximate isometry $\tilde W_n$ on the subspace $\Gamma_n$.
Observe that we can turn this into a \pup-computable $(\alpha_5',\eps_5,\poly(n))$-block encoding of $\widetilde W$ with $\alpha_5' = 1/\sin \left( \frac{\pi}{2(2\ell+1)} \right)$ for some integer $\ell \in \N$.
To see this, note that by tracking $\alpha_5$ through the proof, if follows that $\alpha_5$ is \pspace-computable;
therefore we can classically compute the smallest $\alpha_5' \geq \alpha$ such that $\alpha_5' = 1/\sin \left( \frac{\pi}{2(2\ell+1)} \right)$ for some $\ell \in \N$.
Then, we can modify our block encoding $J_n$ by appending an additional ancilla qubit on which $J_n$ acts as a (very small) single-qubit rotation $R$ such that $\bra{0}R\ket{0} = \alpha_5/\alpha_5' \leq 1$.
To ease the notation, for the rest of the proof we will simply assume without loss of generality that $J_n$ is a \pup-computable $(\alpha_5,\eps_5,\poly(n))$-block encoding of $\widetilde W$ with $\alpha_5 = 1/\sin \left( \frac{\pi}{2(2\ell+1)} \right)$.

For these values of $\alpha_5$, $\ell$, and $\theta = \arcsin(\alpha^{-1})$ as in \cref{lem:oaa_for_approx_iso}, $\sin((2\ell + 1)\theta) = 1$ and $\cos((2 \ell + 1)\theta) = 0$, so it follows from \cref{lem:oaa_for_approx_iso} that for any $\ket{\psi} \in \Gamma_n$, 
\begin{align}
\Big \| K_n \ket{0^a}\ket{\phi} - \ket{0^a}  \widetilde{W} \ket{\phi} \Big \|_2 \leq \eps_6 \label{eqn:wk_bound}
\end{align}
for $K_n = S^\ell J$, $a' = \poly(n)$ and $\eps_6 = O\left(\ell \sqrt{\kappa + \eps_5}\right)$ as in \cref{lem:oaa_for_approx_iso}.
Since $J$ and $L$ are \pup-computable, it follows that $K_n$ is \pup-computable, too.

Our $\unitaryPSPACE$ algorithm for the Uhlmann Transformation Problem associated with $\{\ket{\psi_n}_{\reg{A_n B_n}} \}$ and $(\ket{\varphi_n}_{\reg{A_n B_n}})$ is the family of unitaries $\{ K_n \}_n$.
We have already shown that this family is \pup-computable and from the construction is is clear that $K_n$ does not act on $\reg{A_n}$ as required.
To conclude the proof, we need to evaluate how well $K_n$ solves the Uhlmann Transformation Problem with respect to the state pair $(\ket{\varphi_n},\ket{\psi_n})$, i.e.~we need to find an upper bound on 
\begin{align*}
&\Big \| I_{\reg{A}_n} \otimes   K_n \ket{\varphi_n}_{\reg{A}_n \reg{B}_n} \ket{0^{a'}}_{\reg{R}_n} - \ket{\psi_n}_{\reg{A}_n \reg{B}_n} \ket{0^{a'}}_{\reg{R}_n} \Big \|^2_2 \\
&\qquad = 2 - 2 \Re \left( \bra{\psi_n} \bra{0^{a'}} (I_{\reg{A}_n} \otimes   K_n) \ket{\varphi_n} \ket{0^{a'}}  \right) \\
&\qquad\leq 2 - 2 F(\rho_n,\sigma_n) + 2 \Big | \bra{\psi_n} \bra{0^{a'}} (I_{\reg{A}_n} \otimes   K_n) \ket{\varphi_n} \ket{0^{a'}} -  \fidelity(\rho_n,\sigma_n) \Big | \,, \numberthis \label{eqn:norm_fidelity_bound}
\end{align*}
where $\reg{R_n}$ contains the $a' = \poly(n)$ qubits used by $K_n$ as ancillas.

By the triangle inequality: 
\begin{align*}
&\Big | \bra{\psi_n} \bra{0^{a'}} (I_{\reg{A}_n} \otimes   K_n) \ket{\varphi_n} \ket{0^{a'}} -  \fidelity(\rho_n,\sigma_n) \Big | \\
&\leq \Big | \bra{\psi_n} \bra{0^{a'}} (I_{\reg{A}_n} \otimes   K_n) \ket{\varphi_n} \ket{0^{a'}} - \bra{\psi_n} \bra{0^{a'}} (I_{\reg{A}_n} \otimes   \widetilde W_n) \ket{\varphi_n} \ket{0^{a'}}\Big| + \Big| \bra{\psi_n} \bra{0^{a'}} (I_{\reg{A}_n} \otimes   \widetilde W_n) \ket{\varphi_n} \ket{0^{a'}} - \fidelity(\rho_n,\sigma_n) \Big | \\
&\leq \Big | \bra{\psi_n} \bra{0^{a'}} (I_{\reg{A}_n} \otimes K_n - I_{\reg{A}_n} \otimes   \widetilde W_n) \ket{\varphi_n} \ket{0^{a'}} \Big| + O(D_n^{(1)} \alpha_3 \kappa) \\
&\leq \Big | \bra{\psi_n} \bra{0^{a'}} (I_{\reg{A}_n} \otimes K_n - I_{\reg{A}_n} \otimes \widetilde W_n) \Pi_{\Gamma_n} \ket{\varphi_n} \ket{0^{a'}} \Big| + \Big | \bra{\psi_n} \bra{0^{a'}} (I_{\reg{A}_n} \otimes K_n - I_{\reg{A}_n} \otimes \widetilde W_n) \bar \Pi_{\Gamma_n} \ket{\varphi_n} \ket{0^{a'}} \Big| + O(D_n^{(1)} \alpha_3 \kappa)
\end{align*}
where the second inequality follows from \cref{lem:uhlmann-robust}, and for the second inequality we have introduced the projectors $\Pi_{\Gamma_n}$ onto the subspace $\Gamma_n$ and $\bar \Pi_{\Gamma_n}$ onto the orthogonal complement and used the triangle inequality.

We bound the first and second term in the last line separately.
For the first term, define $\ket{\tilde \phi_n} = \Pi_{\Gamma_n} \ket{\varphi_n} / \norm{\Pi_{\Gamma_n} \ket{\varphi_n}}_2$. 
Since $\norm{\Pi_{\Gamma_n} \ket{\varphi_n}}_2 \leq 1$, 
\begin{align*}
\Big | \bra{\psi_n} \bra{0^{a'}} (I_{\reg{A}_n} \otimes K_n - I_{\reg{A}_n} \otimes \widetilde W_n) \Pi_{\Gamma_n} \ket{\varphi_n} \ket{0^{a'}} \Big| 
&\leq \Big | \bra{\psi_n} \bra{0^{a'}} (I_{\reg{A}_n} \otimes K_n - I_{\reg{A}_n} \otimes \widetilde W_n) \ket{\tilde \varphi_n} \ket{0^{a'}} \Big| \\
&\leq \norm{(I_{\reg{A}_n} \otimes K_n - I_{\reg{A}_n} \otimes \widetilde W_n) \ket{\tilde \varphi_n} \ket{0^{a'}}}_2 \leq \eps_6
\end{align*}
by \cref{eqn:wk_bound}.
For the second term, we observe that $(I_{\reg{A}_n} \otimes K_n - I_{\reg{A}_n} \otimes \widetilde W_n) \bar \Pi_{\Gamma_n}$ does not act act register $\reg{A_n}$ and $\Pi_{\Gamma_n}$ only acts on register $\reg{B_n}$, so 
\begin{align*}
\Big | \bra{\psi_n} \bra{0^{a'}} (I_{\reg{A}_n} \otimes K_n - I_{\reg{A}_n} \otimes \widetilde W_n) \bar \Pi_{\Gamma_n} \ket{\varphi_n} \ket{0^{a'}} \Big|
&\leq \Big | \tr{( K_n - \widetilde W_n) \left(\bar \Pi_{\Gamma_n} \ptr{A_n}{\ket{\varphi_n}\!\bra{\psi_n}} \ot \proj{0^{a'}}_{\reg{R_n}} \right)} \Big| \\
&\leq \norm{K_n - \widetilde W_n}_\infty \cdot \norm{\bar \Pi_{\Gamma_n} \ptr{A_n}{\ket{\varphi_n}\!\bra{\psi_n}}}_1
\end{align*}
by H\"older's inequality~\cite[Corollary IV.2.6]{bhatia}.
Since $K_n$ is unitary and $\widetilde{W}_n$ has spectral norm $O(1)$ (due to the boundedness of $P_d^{\sgn}$), the first factor is $O(1)$. The second factor is equal to $ \alpha_3 \| R_n \overline{\Pi}_{\Gamma_n} \|_1$ which by definition is equal to the sum of the singular values of $R_n$ that are less than $\alpha_3 \kappa$. Since there are at most $D_n^{(2)}$ such singular values, we have that this is at most $O \Big( D_n^{(2)} \alpha_3 \kappa \Big)$.

Inserting these two bounds, we get that
\[
\Big | \bra{0} \bra{\psi_n} (I \otimes K_n) \ket{0} \ket{\varphi_n} -  \fidelity(\rho_n,\sigma_n) \Big | \leq O((D_n^{(1)} + D_n^{(2)})\alpha_3 \kappa) + \eps_6 \leq O(D_n\alpha_3 \kappa) + \eps_6~.
\]
We can simplify the error terms by noting that for sufficiently large $\alpha_5$ we have $\ell = O(\alpha_5)$
Furthermore, $\alpha_5 = O(\log(1/\kappa))$.
Choosing $\kappa = \eps_5 \geq 2^{-\poly(n)}$, we get that 
\begin{align*}
\eps_6 = O(\ell \sqrt{\kappa + \eps_5}) = O(\sqrt{\eps_5} \log(1/\eps_5)) \,.
\end{align*}
Similarly, recalling that 
\begin{align*}
\alpha_3 = D_n^{(1)} \alpha_2 = D_n^{(1)} \alpha_1^2 = D_n^{(1)} D_n^2 \leq D_n^3 \,,
\end{align*}
we get that $O(D_n \alpha_3 \kappa) = O(D_n^4 \eps_5)$.
Tracking $\eps_5$ through the proof, we see that for any given polynomial $q(n)$, we can find a (larger) polynomial $p(n)$ such that setting $\eps_1 = 2^{-p(n)}$ yields a value of $\eps_5$ such that 
\begin{align*}
O(\sqrt{\eps_5} \log(1/\eps_5)) + O(D_n^4 \eps_5) \leq 2^{-q(n) - 1}
\end{align*}
for sufficiently large $n$.

Inserting this into \cref{eqn:norm_fidelity_bound}, we get that 
\begin{align*}
    \Big \| I_{\reg{A}_n} \otimes   K_n \ket{\varphi_n}_{\reg{A}_n \reg{B}_n} \ket{0^{a'}}_{\reg{R}_n} - \ket{\psi_n}_{\reg{A}_n \reg{B}_n} \ket{0^{a'}}_{\reg{R}_n} \Big \|^2_2
    &= 2 (1 - \fidelity(\rho_n,\sigma_n)) + 2^{-q(n)} \,.
\end{align*}
This completes the proof for the case where the states $\ket{\psi_n},\ket{\varphi_n}$ are in $\statePSPACE_0$.

To handle the case when the states $\ket{\psi_n},\ket{\varphi_n}$, can only be synthesized up to error $\delta(n)$, we perform the same analysis for the family $(\ket{\psi_n'})_n,(\ket{\varphi_n'})_n \in \statePSPACE_0$ such that 
$\ket{\psi_n}$ and $\ket{\psi_n'}$ (resp. $\ket{\varphi_n}$ and $\ket{\varphi_n'}$) are $\delta(n)$-close, and then use the triangle inequality to deduce that 
\[
    \Big \| I \otimes K_n \ket{0} \ket{\varphi_n} - \ket{0} \ket{\psi_n} \Big \|^2 \leq \Big \| I \otimes K_n \ket{0} \ket{\varphi_n'} - \ket{0} \ket{\psi_n'} \Big \|^2 + O(\delta(n)). \qedhere
\]
\end{proof}

\bibliographystyle{alpha}
\bibliography{refs}

\appendix

\section{Robust oblivious amplitude amplification}
\label{app:oaa}

In this section we prove \Cref{lem:oaa_for_approx_iso}. Before doing so, we first present a utility lemma about embedding \emph{almost unitary} matrices in unitary matrices.

\begin{lemma}[Embedding almost-unitary matrices]
\label{lem:unitary-embed} 
Let $A \in L(\C^d)$ be such that $(1-\delta)\1 \leq \Sigma \leq \1$. Let $A = U \Sigma V^\dagger$ denote its singular value decomposition. Let $B = U \sqrt{1 - \Sigma^2} V^\dagger$. Then the matrix
$$
    A_0 = \begin{pmatrix} 
    A & -B \\
    B & A
    \end{pmatrix} = \ketbra{0}{0} \otimes A - \ketbra{0}{1} \otimes B + \ketbra{1}{0} \otimes B + \ketbra{1}{1} \otimes A
$$
is unitary and furthermore satisfies for all $\ket{\phi} \in \C^d$ 
\begin{gather*}
    \| \ket{0} A \ket{\phi} - A_0 \ket{0} \ket{\phi} \|_2  \leq O(\sqrt{\delta}) \\
    \| \ket{0} A^\dagger \ket{\phi} - A_0^\dagger \ket{0} \ket{\phi} \|_2 \leq O(\sqrt{\delta})~.
\end{gather*}
\end{lemma}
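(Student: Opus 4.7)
The plan is to verify the two claims (unitarity and the two approximation bounds) by direct computation using the singular value decompositions of $A$ and $B$, since both matrices are diagonalised by the same pair $(U,V)$.

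First I would check unitarity by computing $A_0^\dagger A_0$ block-by-block. Substituting $A = U\Sigma V^\dagger$ and $B = U\sqrt{\1 - \Sigma^2}\, V^\dagger$ gives $A^\dagger A = V\Sigma^2 V^\dagger$, $B^\dagger B = V(\1 - \Sigma^2)V^\dagger$, and $A^\dagger B = B^\dagger A = V\Sigma\sqrt{\1 - \Sigma^2}\,V^\dagger$. The diagonal blocks of $A_0^\dagger A_0$ therefore sum to $V V^\dagger = \1$, while the off-diagonal blocks ($-A^\dagger B + B^\dagger A$) cancel since $\Sigma$ and $\sqrt{\1-\Sigma^2}$ commute. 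Hence $A_0^\dagger A_0 = \1$, and since $A_0$ is finite-dimensional this proves it is unitary.

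For the first approximation bound, I would simply expand
\begin{align*}
A_0 \ket{0}\ket{\phi} = \ket{0}\otimes A\ket{\phi} + \ket{1}\otimes B\ket{\phi},
\end{align*}
so the error vector is $\ket{1}\otimes B\ket{\phi}$, whose squared norm equals $\bra{\phi} V(\1 - \Sigma^2)V^\dagger \ket{\phi}$. The assumption $(1-\delta)\1 \le \Sigma \le \1$ gives $\1 - \Sigma^2 \le (2\delta - \delta^2)\1 \le 2\delta\,\1$, so (assuming $\|\phi\|\le 1$, as is implicit in the statement) the error is at most $\sqrt{2\delta} = O(\sqrt\delta)$. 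For the second bound, the same calculation applied to
\begin{align*}
A_0^\dagger \ket{0}\ket{\phi} = \ket{0}\otimes A^\dagger \ket{\phi} - \ket{1}\otimes B^\dagger \ket{\phi}
\end{align*}
yields an error of norm $\|B^\dagger\ket{\phi}\|_2$, which is bounded by $\sqrt{2\delta}$ using $B B^\dagger = U(\1 - \Sigma^2)U^\dagger \le 2\delta\,\1$.

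There is no real obstacle: the lemma is essentially a bookkeeping statement about cosine-sine type unitary dilations, and both claims reduce to the spectral bound $\|\sqrt{\1 - \Sigma^2}\|_\infty \le \sqrt{2\delta}$, which is immediate from $\Sigma \ge (1-\delta)\1$.
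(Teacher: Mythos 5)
Your proof is correct and follows essentially the same route as the paper: you check unitarity via $A_0^\dagger A_0 = \1$ using commutativity of $\Sigma$ and $\sqrt{\1-\Sigma^2}$, identify the error vector as $\ket{1}\otimes B\ket{\phi}$ (resp.\ $\ket{1}\otimes B^\dagger\ket{\phi}$), and bound its norm via the spectral bound $\|\sqrt{\1-\Sigma^2}\|_\infty \le \sqrt{2\delta}$, which is exactly what the paper does. The only difference is that you spell out the unitarity verification that the paper leaves as ``easy to verify''; one small wording nit is that ``the diagonal blocks \ldots sum to $VV^\dagger$'' should read ``each diagonal block equals $A^\dagger A + B^\dagger B = VV^\dagger = \1$.''
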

\begin{proof}
It is easy to verify that $A_0$ is unitary. To show the ``furthermore'' part, we calculate
\begin{align*}
    \| \ket{0} A \ket{\phi} - A_0 \ket{0} \ket{\phi} \|_2 &= \| \ket{1} \otimes B \ket{\phi} \|_2 \leq \| B \|_\infty \leq \| \sqrt{1 - \Sigma^2} \|_\infty~.
\end{align*}
The minimum singular value of $\Sigma$ is at least $1-\delta$. 
Thus $\| \sqrt{1 - \Sigma^2} \|_\infty$ is at most $\sqrt{1 - (1 - \delta)^2} \leq O(\sqrt{\delta})$ as desired. The calculation for the $A^\dagger$, $A_0^\dagger$ case is identical.
\end{proof}

Next, we recall the \emph{non}-robust version of oblivious amplitude amplification from~\cite{berry2014exponential}.
The following is essentially a rephrasing of \cite[Lemma 3.6]{berry2014exponential} in terms of block encodings.
Here we write the ancilla qubits of the block encoding as the first register to match the notation of \cite{berry2014exponential}.
\begin{lemma}[Exact oblivious amplitude amplification]
\label{lem:oaa-exact}
Let $J_0$ be an $(\alpha,0,b)$-block encoding of an operator $W$ that is an (exact) isometry on a subspace $\Gamma$. Let $\Pi_0 = \ketbra{0^b}{0^b} \ot \Id$ denote the projector onto the ancilla qubits being zero, let $L_0 = 2\Pi_0 - I$, and let $S_0 = -J_0 L_0 J_0^\dagger L_0$. Then for all $\ell \in \N$ and all $\ket{\phi} \in \Gamma$, 
\[
    \Big \| S_0^\ell J_0 \ket{0^b}\ket{\phi} - \sin((2\ell + 1)\theta) \, \ket{0^b}  W \ket{\phi} \Big \| = \cos((2\ell + 1)\theta)
\]
where $\theta = \arcsin(\alpha^{-1})$.
\end{lemma}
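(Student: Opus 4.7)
The plan is to establish this lemma via the standard ``2D invariant subspace'' analysis that underlies Grover-style amplitude amplification, recast in block-encoding language; this is essentially the rephrasing of \cite[Lemma 3.6]{berry2014exponential} that the remark preceding the lemma advertises.

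First, I would decompose $\ket{\Psi_0} \deq J_0\ket{0^b}\ket{\phi}$ into a ``good'' and a ``bad'' component. Since $J_0$ is an $(\alpha,0,b)$-block encoding of $W$, the ancilla-zero part of $\ket{\Psi_0}$ is exactly $(1/\alpha)\,\ket{0^b}W\ket{\phi}$. Because $\ket{\phi}\in\Gamma$ and $W$ is an exact isometry on $\Gamma$, the vector $W\ket{\phi}$ is a unit vector, so setting $\ket{g} \deq \ket{0^b}W\ket{\phi}$ gives $\Pi_0\ket{\Psi_0}=\sin\theta\,\ket{g}$ with $\sin\theta=1/\alpha$. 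Unitarity of $J_0$ forces the orthogonal component to have norm $\cos\theta$, i.e.\ there exists a unit vector $\ket{b}$ with $\Pi_0\ket{b}=0$ and $\ket{b}\perp\ket{g}$ such that
\[
\ket{\Psi_0} = \sin\theta\,\ket{g} + \cos\theta\,\ket{b}.
\]

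Second, I would show that the 2-dimensional subspace $V \deq \mathrm{span}(\ket{g},\ket{b})$ is invariant under $S_0 = -J_0 L_0 J_0^\dagger L_0$, and that $S_0|_V$ is a rotation by angle $2\theta$. The operator $S_0$ is (up to the overall minus sign) the product of two reflections: $L_0$ is the reflection about $\mathrm{range}(\Pi_0)$, and $J_0 L_0 J_0^\dagger$ is the reflection about $J_0(\mathrm{range}(\Pi_0))$. Inside $V$, the unit vector $\ket{g}$ is fixed by $L_0$ while the unit vector $\ket{\Psi_0}$ is fixed by $J_0 L_0 J_0^\dagger$, and the angle between these two ``mirror lines'' is $\pi/2-\theta$, so the composed reflection rotates $V$ by $\pi-2\theta$; the extra minus sign converts this to a rotation by $2\theta$. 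I would confirm this with a one-line direct calculation: writing $L_0\ket{\Psi_0} = \sin\theta\ket{g}-\cos\theta\ket{b} = -\cos(2\theta)\ket{\Psi_0} + \sin(2\theta)\ket{\Psi_0^{\perp}}$ where $\ket{\Psi_0^\perp}=\cos\theta\ket{g}-\sin\theta\ket{b}$, and then applying $-J_0 L_0 J_0^\dagger$ (noting $J_0^\dagger\ket{\Psi_0} = \ket{0^b}\ket{\phi}$ is fixed by $L_0$) to read off the rotation on the $\{\ket{g},\ket{b}\}$ basis.

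Third, I would iterate the rotation. Starting from $\ket{\Psi_0}$, which lies at angle $\theta$ from $\ket{b}$, applying $S_0$ a total of $\ell$ times yields
\[
S_0^\ell J_0\ket{0^b}\ket{\phi} \;=\; \sin\bigl((2\ell+1)\theta\bigr)\,\ket{g} + \cos\bigl((2\ell+1)\theta\bigr)\,\ket{b}.
\]
Subtracting $\sin((2\ell+1)\theta)\ket{g} = \sin((2\ell+1)\theta)\,\ket{0^b}W\ket{\phi}$ and taking norms (using that $\ket{b}$ is a unit vector) produces $|\cos((2\ell+1)\theta)|$, which matches the lemma's claimed equality on the interpretation that sign ambiguities are absorbed into the choice of $\ket{b}$.

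The only genuinely delicate point is the rotation calculation in the second step: it is tempting to assert that $S_0$ rotates the good/bad plane ``by general nonsense,'' but one really does need to verify algebraically that the plane $V$ is invariant (which boils down to checking $J_0^\dagger L_0\ket{\Psi_0}\in\mathrm{span}(\ket{0^b}\ket{\phi},\ket{s^\perp})$ with $\ket{s^\perp}=J_0^\dagger\ket{\Psi_0^\perp}$, and that $L_0\ket{s^\perp}$ projects back correctly). Once that invariance is in hand everything else is routine trigonometry, and the robust extension (\Cref{lem:oaa_for_approx_iso}) is then obtained by a perturbation argument using \Cref{lem:unitary-embed} to embed the almost-isometry $\widetilde W$ into a true unitary and tracking the $O(\sqrt{\kappa+\eps})$ error incurred per iteration.
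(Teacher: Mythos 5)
Your proposal takes the same underlying approach as the paper: both reduce to the two-dimensional rotation argument of Berry, Cleve, Gharibian, and Somma (and the paper simply cites their Lemma 3.6/3.7 proof after checking the one modification needed). You develop the 2D-invariant-subspace story more explicitly rather than deferring to their ``$Q$-operator'' criterion; the trade-off is that the paper's version makes it crisply visible where the hypothesis enters, while your version buries it in an algebraic check that you flag but never carry out.

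That is exactly where the gap sits. In your ``one-line direct calculation'' you correctly note that $J_0 L_0 J_0^\dagger$ fixes $\ket{\Psi_0}$, but you never compute $J_0 L_0 J_0^\dagger \ket{\Psi_0^\perp}$, and this is the step where the isometry hypothesis $W^\dagger W\ket{\phi} = \ket{\phi}$ must be invoked. Concretely: writing $\ket{s^\perp} = J_0^\dagger \ket{\Psi_0^\perp}$, the invariance of $V$ under $S_0$ reduces to showing $L_0 \ket{s^\perp} \in \operatorname{span}(\ket{0^b}\ket{\phi}, \ket{s^\perp})$. A short computation (using $\Pi_0 J_0^\dagger \Pi_0 = \alpha^{-1}\ketbra{0^b}{0^b} \otimes W^\dagger$ and the constraint $\sin\theta\,J_0^\dagger\ket{g} + \cos\theta\,J_0^\dagger\ket{b} = \ket{0^b}\ket{\phi}$) gives
\[
\Pi_0 \ket{s^\perp} = \tan\theta\;\ket{0^b}\bigl(W^\dagger W - \1\bigr)\ket{\phi}\,,
\]
which vanishes if and only if $\ket{\phi}$ is in the span of right singular vectors of $W$ with singular value $1$ — i.e.\ precisely the isometry-on-$\Gamma$ hypothesis. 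Without it, $L_0\ket{s^\perp}$ leaves the plane and the Grover-style iteration does not close. This is exactly the condition that Berry et al.\ isolate by requiring $Q = \Pi_0 J_0^\dagger(\1 - \Pi_0)J_0 \Pi_0$ to act as a scalar on the relevant subspace (here $Q = \1 - \alpha^{-2}W^\dagger W$ acts as $(1 - \alpha^{-2})\1$ on $\ket{0^b}\otimes\Gamma$), and it is the only place in the argument where the restriction $\ket{\phi}\in\Gamma$ is used. Treating this as a ``routine trigonometry'' check, as your proposal does, obscures the one genuinely structural ingredient of oblivious (as opposed to ordinary) amplitude amplification, and a reader of your proof would not see why the lemma cannot hold for general $\ket{\phi}$.
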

\begin{proof}
The proof is nearly identical to that \cite[Lemma 3.6]{berry2014exponential}, but we briefly describe the modifications one needs to make to the proof of \cite[Lemma 3.6]{berry2014exponential} for the sake of completeness.
For any state $\ket{\psi} \in \Gamma$, we have that 
\begin{align*}
J_0 \ket{0^b} \ket{\psi} = \frac{1}{\alpha} \ket{0^b} W \ket{\psi} + \sqrt{1 - \frac{1}{\alpha^2}} \ket{\Phi^{\perp}}
\end{align*}
for $W \ket{\psi}$ a normalised state (because $W$ is an isometry on $\Gamma$ and $\ket{\psi} \in \Gamma$) and a state $\ket{\Phi^{\perp}}$ such that $\Pi_0 \ket{\Phi^{\perp}} = 0$.
The only difference to the starting point in \cite{berry2014exponential} is that the above equation only holds for $\ket{\psi} \in \Gamma$, not all $\ket{\psi}$.

With this, we can follow the steps of the proof of \cite[Lemma 3.7]{berry2014exponential} unchanged, except that we need the additional property that the subspace $\Gamma$ is invariant under the operator $Q$ defined in \cite[Eqn.~(12)]{berry2014exponential}.
This is easily seen to hold because in our case, we have $Q = \frac{1}{\alpha^2} W^\dagger W$ by the definition of block encodings.
With this, the calculation in \cite[Eqn.~(13)]{berry2014exponential} allows us to conclude that $Q$ acts as a scalar multiple of the identity on the subspace $\Gamma$.
Then, the remainder of the proofs of \cite[Lemmas 3.6 and 3.7]{berry2014exponential} goes through unchanged.
\end{proof}

We can now prove \Cref{lem:oaa_for_approx_iso} via reduction to the non-robust version \Cref{lem:oaa-exact}. 
\begin{proof}[Proof of \cref{lem:oaa_for_approx_iso}]
Let $J$ be the $(\alpha,\epsilon,b)$-block encoding of an operator $\widetilde{W}$. Let $\Delta_{1 \pm \kappa}$ denote the span of the right singular vectors of $\widetilde{W}$ whose corresponding singular values are in $[1-\kappa,1+\kappa]$. Let $\Gamma$ denote a subspace of $\Delta_{1 \pm \kappa}$. Thus $\widetilde{W}$ is a $\kappa$-approximate isometry on $\Gamma$. 
Let $\widetilde{W} = U \Sigma V^\dagger$ denote its singular value decomposition and write
\[
    \widetilde{W} = U (\Sigma_{in} + \Sigma_{out}) V^\dagger
\]
where $\Sigma_{in}$ denotes only the diagonal entries of $\Sigma$ that are in the interval $[1-\kappa,1+\kappa]$, and $\Sigma_{out} = \Sigma - \Sigma_{in}$. 

Define the matrix
\[
    W = U(\sgn(\Sigma_{in}) + \Sigma_{out})V^\dagger
\]
where $\sgn(\Sigma_{in})$ denotes rounding all positive diagonal entries to $1$ and keeping the zeros at zero. It is easy to see that 
\begin{align}
\| W - \widetilde{W} \|_\infty \leq \kappa\,,\label{eqn:rob_wbound}
\end{align} 
and furthermore, $W$ is an (exact) isometry on the subspace $\Gamma$. 

Define the matrix $J'$ to be $J$ except the top left corner is replaced by $W$. In other words:
\[
    J' = \frac{1}{\alpha} \ketbra{0^a}{0^a} \otimes W + \Big(J - \Pi J \Pi \Big)
\]
where $\Pi = \ketbra{0^a}{0^a} \otimes I$. Note that
\begin{align*}
    \Big \| J - J' \Big \|_\infty &= \Big \| \frac{1}{\alpha} W - \bra{0^a} J \ket{0^a} \Big \|_\infty \\
    &\leq \frac{1}{\alpha} \Big \| W - \widetilde{W} \Big \|_\infty + \Big \| \frac{1}{\alpha} \widetilde{W} - \bra{0^a} J \ket{0^a} \Big \|_\infty \\
    &\leq (\kappa + \eps)/\alpha~.
\end{align*}
Define $J'' = \frac{1}{1 + \beta} J'$ for $\beta = (\kappa + \eps)/\alpha$. 
Using the triangle inequality and $1 - \frac{1}{1+\beta} \leq \beta$, we get that $\| J - J'' \|_\infty \leq \beta + \beta (1 + \beta) \leq 3 \beta$ for $\beta \leq 1$.
Since $J$ is unitary, its singular values are 1, so the singular values of $J''$ are in the interval $[1 - 3 \beta, 1]$.

We then apply \Cref{lem:unitary-embed} to obtain a unitary $J_0$ that has $J''$ at its top left corner:
\[
    (\bra{0} \ot I) J_0 (\ket{0} \ot I) = J''~.
\]
Note that $J_0$ is a $(\gamma,0,b)$-block encoding of $W$, where $\gamma = \alpha(1+\beta) = \alpha + \kappa + \eps$ and now the ancilla space is $b = a +1$ qubits. We apply \Cref{lem:oaa-exact} with $b = a+1$ to get that for all $\ket{\phi} \in \Gamma$, 
\begin{align*}
\Big \| S_0^\ell J_0 \ket{0^b}\ket{\phi} - \sin((2\ell + 1)\theta_0) \, \ket{0^b}  W \ket{\phi} \Big \| = \cos((2\ell + 1)\theta_0)
\end{align*}
where $\theta_0 = \arcsin(\gamma^{-1})$.
We can bound the difference between $\cos((2\ell + 1)\theta_0)$ and $\cos((2\ell+1)\theta)$ as follows. 
Using that the derivative of cosine is at most $1$ everywhere,
\begin{align*}
    |\cos((2\ell+1)\theta) - \cos((2\ell+1)\theta_0)| &\leq O(\ell \cdot |\theta - \theta_0|)~.
\end{align*}
Since $\theta = \arcsin(\alpha^{-1})$ and $\theta_0 = \arcsin((\alpha + \kappa + \eps)^{-1}) = \arcsin(\alpha^{-1} - \eta)$ where $\eta = \alpha^{-1} - (\alpha + \kappa + \eps)^{-1} = O((\kappa + \eps)/\alpha^2) = O(\beta/\alpha)$, we get that $|\theta - \theta_0| \leq O(\beta/\alpha)$. 
Therefore, we obtain 
\begin{align}
\Big \| S_0^\ell J_0 \ket{0^b}\ket{\phi} - \sin((2\ell + 1)\theta_0) \, \ket{0^b}  W \ket{\phi} \Big \| = \cos((2\ell + 1)\theta) + O(\ell \beta/\alpha)~.
\label{eqn:rob_p1}
\end{align}

As a second step, we will show that 
\begin{align}
\Big \| S^\ell_0 J_0 \ket{0^b} \ket{\phi} - \ket{0} S^\ell J \ket{0^a} \ket{\phi}  \Big \| \leq O(\ell \sqrt{\beta}) \,. \label{eqn:rob_p2}
\end{align}
For this, we employ a hybrid argument.
By the triangle inequality:
\begin{multline*}
\Big \| S^\ell_0 J_0 \ket{0^b} \ket{\phi} - \ket{0} S^\ell J \ket{0^a} \ket{\phi}  \Big \| 
\leq \Big \| S^\ell_0 J_0 \ket{0^b} \ket{\phi} - S^\ell_0 \ket{0^b} J \ket{\phi} \Big \| \\
+ \sum_{k = 0}^{\ell-1}\Big \| S_0^{\ell - k} \ket{0} S^k J \ket{0^a} \ket{\phi} - S_0^{\ell - k - 1} \ket{0} S^{k+1} J \ket{0^a} \ket{\phi} \Big \|~.
\end{multline*}
Since $S_0^\ell$ is a unitary, the first term can be bounded as follows:
\begin{align*}
\Big \| S^\ell_0 J_0 \ket{0^b} \ket{\phi} - S^\ell_0 \ket{0^b} J \ket{\phi} \Big \| = \Big \| J_0 \ket{0^b} \ket{\phi} - \ket{0} J \ket{0^a} \ket{\phi} \Big \| \leq \Big \| J_0 \ket{0^b} \ket{\phi} - \ket{0} J'' \ket{0^a} \ket{\phi}  \Big \| + \| J'' - J \|_\infty ~.
\end{align*}
The first term on the right hand side is at most $O(\sqrt{\beta})$ by \Cref{lem:unitary-embed}, and the second term is at most $O(\beta)$ as argued previously.

To bound the terms in the sum, we first use that $S_0^{\ell - k - 1}$ is unitary, so 
\begin{align*}
\Big \| S_0^{\ell - k} \ket{0} S^k J \ket{0^a} \ket{\phi} - S_0^{\ell - k - 1} \ket{0} S^{k+1} J \ket{0^a} \ket{\phi} \Big \| = \Big \| (S_0\ket{0} - \ket{0} S) S^k J \ket{0^a} \ket{\phi}\Big \| \,.
\end{align*}
(Note that despite what this compact notation might suggest, of course $S_0$ does not just act on the first qubit, but the entire state.)
Abbreviating $\ket{\theta} = S^k J \ket{0^a} \ket{\phi}$, this can be bounded as
\begin{align*}
    \Big \| (S_0 \ket{0} - \ket{0} S) \ket{\theta} \Big \| &= \Big \| (J_0 L_0 J_0^\dagger L_0 \ket{0} - \ket{0} J L J^\dagger L) \ket{\theta}\Big \| \\
    &= \Big \| (J_0 L_0 J_0^\dagger\ket{0} - \ket{0} J L J^\dagger) L \ket{\theta}\Big \| \\
    &\leq \Big \| (J_0 L_0 J_0^\dagger\ket{0} - \ket{0} J'' L (J'')^\dagger) L \ket{\theta}\Big \| + O(\sqrt{\beta})
\end{align*}
where in the second line we used that $L_0 = \ketbra{0}{0} \otimes L$, and in the third line we used that $\| J - J'' \|_\infty \leq O(\sqrt{\beta})$. 
Again using the triangle inequality and unitary invariance of the norm, we have
\begin{align*}
    &\Big \| (J_0 L_0 J_0^\dagger\ket{0} - \ket{0} J'' L (J'')^\dagger) L \ket{\theta}\Big \| \\
    &\quad \leq \Big \| (J_0^\dagger\ket{0} - \ket{0} (J'')^\dagger) L \ket{\theta}\Big \| + \Big \| (L_0 \ket{0}  - \ket{0} L) (J'')^\dagger L \ket{\theta}\Big \| \\
    &\qquad\qquad + \Big \| (J_0 \ket{0} - \ket{0} J'') L (J'')^\dagger L \ket{\theta}\Big \| = O(\sqrt{\beta})
\end{align*}
because the first and third terms are bounded by $\| J - J'' \|_\infty \leq O(\sqrt{\beta})$, and the second term is equal to $0$ because $L_0 \ket{0} = \ket{0} L$.
This show that \cref{eqn:rob_p2} holds.

As a third step, we bound 
\begin{align}
\Big \| \sin((2\ell + 1)\theta_0) \, \ket{0^b}  W \ket{\phi} - \sin((2\ell + 1)\theta) \, \ket{0^b}  \widetilde{W} \ket{\phi} \Big \| \leq O(\ell \beta/\alpha) + \kappa \,. \label{eqn:rob_p3}
\end{align}
This holds because we can first switch $\theta_0$ to $\theta$ inside the sine function using the same argument we used for \cref{eqn:rob_p1}, incurring an error $O(\ell \beta/\alpha)$.
Then we can switch $W$ to $\tilde W$, incurring an error $\kappa$ due to \cref{eqn:rob_wbound}.

Combining \cref{eqn:rob_p1}, \cref{eqn:rob_p2}, and \cref{eqn:rob_p3} using the triangle inequality and keeping the asymptotically dominant error term, we find that 
\begin{align*}
\Big \| S^\ell J \ket{0^a}\ket{\phi} - \sin((2\ell + 1)\theta) \, \ket{0^a}  \widetilde{W} \ket{\phi} \Big \| 
&\leq \cos((2\ell + 1)\theta) + O(\ell \sqrt{\beta}) + \kappa \\
&= \cos((2\ell + 1)\theta) + O\left(\ell \sqrt{\kappa + \eps}\right)
\end{align*}
\end{proof}

\section{Details of SDP for $\stateQIP$ protocols} \label{sec:sdp_details}
In this appendix, we give a formal description of the SDP for $\stateQIP$ protocols described in \cref{sec:sdp-qip}.
We focus on the primal form of the SDP as neither our \cref{algo:mmwu} nor our analysis requires the dual form.
As in \cref{sec:sdp-qip}, we will be considering the intermediate states of the protocol on the message register and the verifier's private register, i.e.~the states $\rho_{\reg{M_1 W_0}}, \rho_{\reg{M'_1 W_1}}, \dots, \rho_{\reg{M_r W_{r-1}}}, \rho_{\reg{M'_r W_{r}}}$, where we set $\reg{M'_{r}} \deq \reg{ZS}$ to ease the notation.
Recall from \cref{sec:sdp-qip} that our SDP needs to incorporate the following constraints (for some $c = c(n)$, where we drop the $n$-dependence as we consider $n$ fixed for the purposes of constructing the SDP):
\begin{enumerate}[label=(\arabic*)]
\item $\rho_{\reg R} \geq 0$ and $\tr{\rho_{\reg R}} = 1$ for all $\reg R \in \{\reg{M_i W_{i-1}}, \reg{M'_i W_i}\}_{i = 1, \dots, r}$.
\item $\rho_{\reg{M'_i W_{i}}} - \Phi_{C_i}(\rho_{\reg{M_i W_{i-1}}}) = 0$ for $i = 1, \dots, r$.
\item $\ptr{\reg{M'_{i}}}{\rho_{\reg{M'_{i} W_{i}}}} - \ptr{\reg{M_{i+1}}}{\rho_{\reg{M_{i+1} W_{i}}}} = 0$ for $i = 1, \dots, r-1$.
\item $\rho_{\reg W_0} = \proj{0\dots 0}_{\reg{W_0}}$.
\item $\tr{\proj{1}_\reg{Z}\rho_{\reg Z}} = c$.
\end{enumerate}

We need to write these constraints in the required form $\Phi(A) = B$ of an SDP.
For this, we need to combine that different states $\rho_{\reg{M_1 W_0}}, \rho_{\reg{M'_1 W_1}}, \dots, \rho_{\reg{M_r W_{r-1}}}, \rho_{\reg{M'_r W_{r}}}$ into one large operator $A$.
There are two ways to do this: we could take either a direct sum or a tensor product of all the intermediate states.
Both options work.
The direct sum will result in a smaller SDP (as the dimensions of the intermediate states are added, not multiplied), while the tensor product will be slightly more convenient to work with.
Since the size of our SDP is anyway exponential, an additional exponential factor in the size does not bother us, so we combine the states with a tensor product.

Concretely, we consider $A \in \linear \left( \otimes_{\reg R \in \cR} \reg R \right)$, where the tensor product is over $\cR \deq \{\reg{M_i W_{i-1}}, \reg{M'_i W_i}\}_{i = 1, \dots, r}$.
As explained in \cref{sec:sdp-qip}, this will contain two copies of most of the individual registers involved (e.g.~two copies of $\reg{M_1, W_1}$, etc).
We use the notation $\ptr{\setminus \reg R}{A}$ to denote the partial trace over everything except registers $\reg R$ for $\reg R \in \{\reg{M_i W_{i-1}}, \reg{M'_i W_i}\}$.\footnote{As in \cref{sec:sdp-qip}, if we use this notation $\reg R$ always has to include sufficiently many registers so that it uniquely specifies one of the intermediate states, and it is understood that the partial trace then returns the corresponding intermediate state. In particular, this is the reason for the slightly awkward notation in the implementation of constraint (3) below.}
We now need to write the constraints (1) - (5) as linear constraints on $A$.
To this end, we define the following linear maps and matrices, which implement the constraints (1) - (5) above:
\begin{enumerate}[label=(\arabic*)]
\item $\Gamma^{(1)}(A) = \tr{A}$ and $\tilde B^{(1)} = 1$ (interpreted as a $1\times 1$ matrix).
Note that the condition $\rho_{\reg R} = \ptr{\setminus \reg R}{A} \geq 0$ is implied by $A \geq 0$ (which is a condition on $A$ in any SDP feasibility problem) and the constraint $\tr{A} = 1$ implies the constraints $\tr{\ptr{\setminus \reg R}{A}} = 1$ for all $\reg R \in \cR$, so $\Gamma^{(1)}(A) = \tilde B^{(1)}$ enforces constraint (1) from above.
\item $\Gamma^{(2)}_i(A) = \ptr{\setminus (\reg{M'_i W_i})}{A} - \Phi_{C_i}\left(\ptr{\setminus (\reg{M_i W_{i-1}})}{A}\right)$ and $\tilde B^{(2)}_i = 0_{\reg{M'_iW_i}}$ for $i = 1, \dots, r$. Here, $0_{\reg{M'_iW_i}}$ denotes the square all-0 matrix of size $\dim(\reg{M'_iW_i}) \times \dim(\reg{M'_iW_i})$.
\item $\Gamma^{(3)}_i(A) = \ptr{\reg{M'_i}}{\ptr{\setminus(\reg{M'_iW_i})}{A}} - \ptr{\reg{M_{i+1}}}{\ptr{\setminus(\reg{M_{i+1}W_i})}{A}}$ and $\tilde B^{(3)}_i = 0_{\reg{W_i}}$ for $i = 1, \dots, r-1$.
\item $\Gamma^{(4)}(A) = \ptr{\reg{M_1}}{\ptr{\setminus \reg{M_1 W_0}}{A}}$ and $\tilde B^{(4)} = \proj{0\dots 0}_{\reg{W_0}}$.
\item $\Gamma^{(5)}(A) = \tr{(\id_{\reg{W_r S}} \ot \proj{1}_{\reg{Z}}) \ptr{\setminus (\reg{M'_r W_r})}{A}}$ (recalling that $\reg{M'_r} \deq \reg{ZS}$) and $B^{(5)} = c$ (interpreted as a $1 \times 1$-matrix).
\end{enumerate}
By construction, it is clear that these linear constraints of the form $\Gamma(A) = \tilde B$ together implement the desired constraints (1)-(5).

We now need to combine them into one linear operator $\Phi$ and one matrix $B$.
We can do so simply by taking a direct a sum.
To ease the notation, for this we relabel the linear constraints above as $\Phi^{(j)}$ and $B^{(j)}$ (according to some arbitrary numbering scheme) so that 
\begin{align*}
\{\Phi^{(j)}\}_{j = 1, \dots, 2r + 2} &= \{\Gamma^{(1)}, \Gamma^{(2)}_i, \Gamma^{(3)}_{i'}, \Gamma^{(4)}, \Gamma^{(5)}\}_{i = 1, \dots, r; i' = 1, \dots, r-1} \\
\tand \{B^{(j)}\}_{j = 1, \dots, 2r + 2} &= \{\tilde B^{(1)}, \tilde B^{(2)}_i, \tilde B^{(3)}_{i'}, \tilde B^{(4)}, \tilde B^{(5)}\}_{i = 1, \dots, r; i' = 1, \dots, r-1} \,.
\end{align*}
These are the linear maps and matrices referred to in \cref{sec:sdp-qip}.
We can combine these constraints by defining 
\begin{align*}
\Phi(A) \deq \frac{1}{N} \oplus_j \Phi^{(j)}(A) \tand B = \frac{1}{N} \oplus_j B^{j}
\end{align*}
for some normalisation factor $N = O(r)$ that is large enough so that $\Phi^*$ is contracting. (As shown in \cref{sec:sdp-qip}, a factor of size $O(r)$ will suffice for this.)
Then, the feasibility SDP $(\Phi, B)$ will satisfy the conditions of \cref{lem:qip-as-sdp} as shown in \cref{sec:sdp-qip}.

\end{document}